\definecolor{shadecolor}{rgb}{.95,.95,.7}
\begin{document} 

\title{Data-Oblivious Stream Productivity}
   
\author{J\"{o}rg Endrullis\inst{1} \and Clemens Grabmayer\inst{2} \and Dimitri Hendriks\inst{1}}
\institute{
    Vrije Universiteit Amsterdam,
    Department of Computer Science\\
    De Boelelaan 1081a,
    1081 HV Amsterdam,
    The Netherlands\\
    \email{joerg@few.vu.nl}
      \mbox{ }
    \email{diem@cs.vu.nl}
  \and
    Universiteit Utrecht,
    Department of Philosophy\\
    Heidelberglaan 8,
    3584 CS Utrecht,
    The Netherlands\\
    \email{clemens@phil.uu.nl}
}
\maketitle

\begin{abstract}
  We are concerned with demonstrating productivity of specifications of
  infinite streams of data, based on orthogonal rewrite rules. 
  In general, this property is undecidable, but for restricted formats
  computable sufficient conditions can be obtained. 
  The usual analysis, also adopted here, disregards the identity
  of data, thus leading to approaches that we call data-oblivious. 
  We present a method 
  that is provably optimal among
  all such data-oblivious approaches. This means that in order to improve on our algorithm 
  one has to proceed in a data-aware
  fashion.\footnote{%
  This research has been partially funded by 
  the Netherlands Organisation for Scientific Research (NWO) 
  under FOCUS/BRICKS grant number~642.000.502.}
\end{abstract}

\section{Introduction}\label{sec:intro}
For programming with infinite structures, productivity is
what termination is for programming with finite structures.
Productivity captures the intuitive notion of unlimited progress,
of `working' programs producing defined values indefinitely.
In functional languages, usage of infinite structures
is common practice.
For the correctness of programs dealing with such structures
one must guarantee that every finite part of the infinite structure can be evaluated,
that is, the specification of the infinite structure must be productive.

\begin{wrapfigure}[11]{r}{60mm}
  \vspace{-4.5ex}
  \scalebox{.8}{\includegraphics{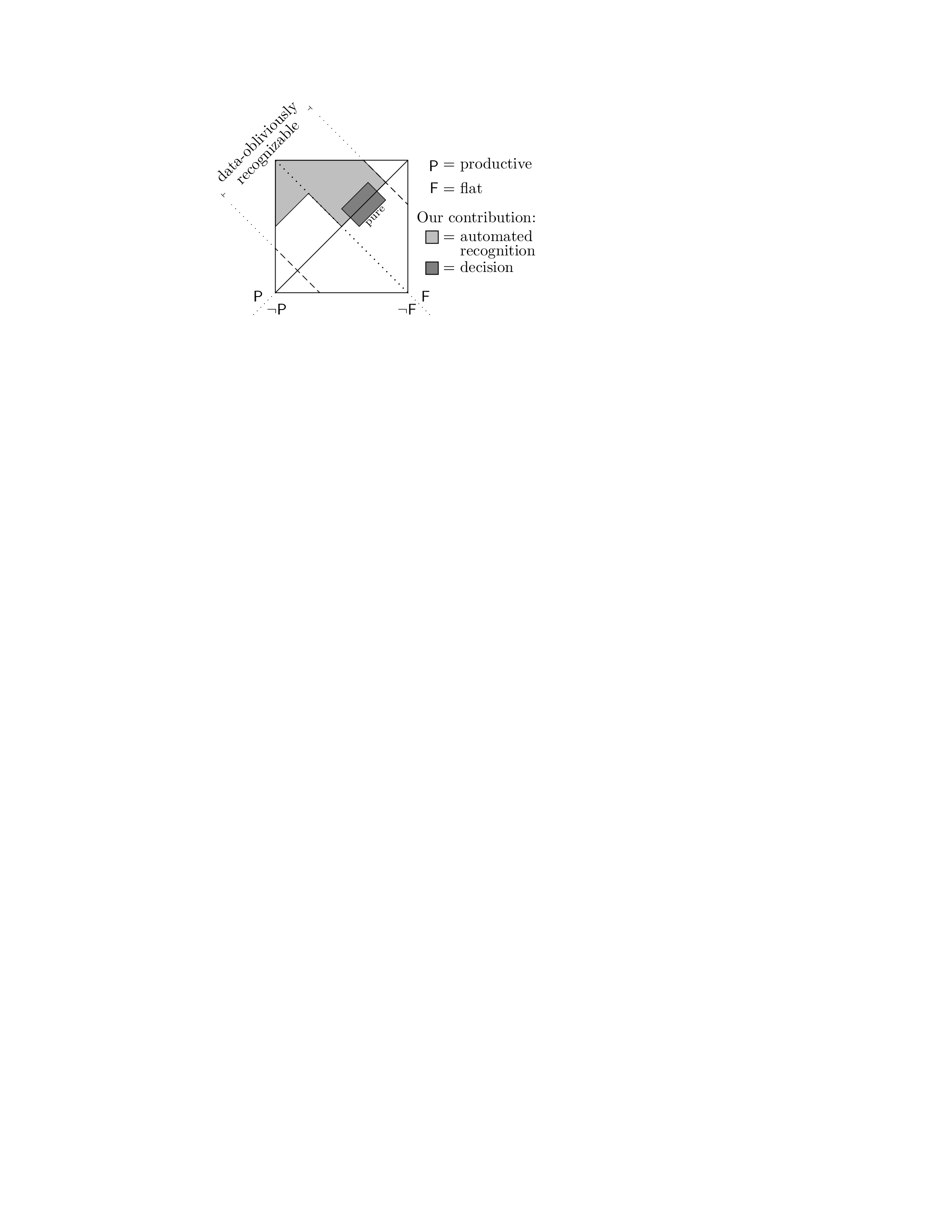}}
  \vspace{-2.5ex}
  \caption{Map of stream specifications}
  \label{fig:pnp}
\end{wrapfigure}
We investigate this notion for stream specifications,
formalized as orthogonal term rewriting systems.
Common to all previous approaches for recognizing productivity
is a quantitative analysis that abstracts away from the concrete values of stream elements.
We formalize this by a notion of `data-oblivious' rewriting, %
and introduce the concept of data-oblivious productivity.
Data-oblivious
(non-)\discretionary{}{}{}productivity implies
(non-)\discretionary{}{}{}productivity,
but neither of the converse implications holds.
Fig.~\ref{fig:pnp}
shows a Venn diagram of stream specifications,
highlighting the subset of `data-obliviously recognizable'
specifications
\pagebreak
where (non-)\discretionary{}{}{}productivity
can be recognized by a data-oblivious analysis.

We identify two syntactical classes of stream specifications:
`flat' and `pure' specifications, see the description below.
For the first we devise a decision algorithm for data-oblivious (\daob) productivity.
This gives rise to a computable, \daobly{} optimal,
criterion for productivity:
every flat stream specification that can be established
to be productive by whatever \daob{} argument
is recognized as productive by this criterion (see Fig.~\ref{fig:pnp}).
For the subclass of pure specifications,
we establish that \daob{} productivity coincides with productivity,
and thereby obtain a decision algorithm for productivity of this class.
Additionally, we extend our criterion
beyond the class of flat stream specifications,
allowing for `friendly nesting' in the specification of stream functions;
here \daob{} optimality is not preserved.

In defining the different formats of stream specifications,
we distinguish between rules for stream constants,
and rules for stream functions.
Only the latter are subjected to syntactic restrictions.
In flat stream specifications the defining rules for the stream functions
do not have nesting of stream function symbols;
however, in defining rules for stream constants
nesting of stream function symbols \emph{is} allowed.
This format makes use of exhaustive pattern matching on data
to define stream functions, allowing for
multiple defining rules for an individual stream function symbol.
Since the quantitative consumption/production behaviour of
a symbol $\astrfun$ might differ among its defining rules,
in a \daob{} analysis one has to settle for the use of lower bounds
when trying to recognize productivity.
If for all stream function symbols $\astrfun$ in a flat specification~$\atrs$
the defining rules for $\astrfun$ coincide,
disregarding the identity of data-elements, then $\atrs$ is called pure.

Our decision algorithm for \daob{} productivity
determines the tight \daob{} lower bound
on the production behaviour of every stream function,
and uses these bounds to calculate the \daob{} production of stream constants.
We briefly explain both aspects.
Consider the stream specification
$\strcf{A} \red \strcns{\datf{0}}{\funap{\strff{f}}{\strcf{A}}}$
together with the rules
$\funap{\strff{f}}{\strcns{\datf{0}}{\astr}} \to \strcns{\datf{1}}{\strcns{\datf{0}}{\strcns{\datf{1}}{\funap{\strff{f}}{\astr}}}}$, and
$\funap{\strff{f}}{\strcns{\datf{1}}{\astr}} \to \strcns{\datf{0}}{\funap{\strff{f}}{\astr}}$,
defining the stream
$\strcns{\datf{0}}{\strcns{\datf{1}}{\strcns{\datf{0}}{\strcns{\datf{1}}{\ldots}}}}$ of alternating bits.
The tight \daob{} lower bound for $\strff{f}$ is the function $\mit{id}$: $n \mapsto n$.
Further note that $\mit{suc}$:
$n\mapsto n+1$ %
captures the quantitative behaviour of the function prepending a data element to a stream term.
Therefore the \daob{} production of $\strcf{A}$
can be computed as $\lfp{\mit{suc}\circ\mit{id}} = \conattop$,
where $\lfp{f}$ is the least fixed point of $f\funin\conat\to\conat$ and
$\conat\defdby\nat\cup\{\infty\}$;
hence $\strcf{A}$ is productive.
As a comparison, only a `data-aware' approach
is able to establish productivity of %
$\strcf{B} \red \strcns{\datf{0}}{\funap{\strff{g}}{\strcf{B}}}$
with
$\funap{\strff{g}}{\strcns{\datf{0}}{\astr}} \to \strcns{\datf{1}}{\strcns{\datf{0}}{\funap{\strff{g}}{\astr}}}$, and
$\funap{\strff{g}}{\strcns{\datf{1}}{\astr}} \to \funap{\strff{g}}{\astr}$.
The \daob{} lower bound of $\strff{g}$ is $n \mapsto 0$, due to the latter rule.
This makes it impossible for any conceivable \daob{} approach
to recognize productivity of $\strcf{B}$.

We obtain the following results:
\begin{enumerate}
  \item \label{item:summary:flat}
    For the class of flat stream specifications
    we give a computable, \daobly{} optimal,
    sufficient condition for productivity.
  \item\label{item:summary:pure}
    We show decidability of productivity for
    the class of pure stream specifications,
    an extension of the format in~\cite{endr:grab:hend:isih:klop:2007}.
  \item\label{item:summary:friendly}
    Disregarding \daob{} optimality,
    we extend~\ref{item:summary:flat}
    to the bigger class of friendly nesting stream specifications.
  \item
    A tool automating~\ref{item:summary:flat}, \ref{item:summary:pure},
    and~\ref{item:summary:friendly},
    which can be downloaded from, and used via a web interface at:
    \small{\url{http://infinity.few.vu.nl/productivity}}. %
\end{enumerate}

\paragraph{Related work.}
Previous approaches~\cite{sijt:1989,hugh:pare:sabr:1996,telf:turn:1997,buch:2005}
employed \daob{} reasoning (without using this name for it)
to find sufficient criteria ensuring productivity,
but did not aim at optimality.
The \daob{} production behaviour of a stream function $\astrfun$
is bounded from below by a `modulus of production' $\sprdmod{\astrfun} \funin \nat^k\to\nat$
with the property that the first $\prdmod{\astrfun}{n_1,\ldots,n_k}$ elements of
$\funap{\astrfun}{t_1,\dots,t_k}$ can be computed whenever the first $n_i$ elements
of $t_i$ are defined.
\mbox{Sijtsma} develops an approach
allowing arbitrary production moduli
$\ssprdmod\funin\nat^k\to\nat$,
which, while providing an adequate mathematical description,
are less amenable to automation.
Telford and Turner~\cite{telf:turn:1997} employ production moduli of the form
$\funap{\ssprdmod}{n} = n + a$ with $a \in \mathbb{Z}$.
Hughes, Pareto and Sabry~\cite{hugh:pare:sabr:1996} use
$\funap{\ssprdmod}{n} = \max \{c \cdot x + d \where x \in \nat,\, n \ge a \cdot x + b\} \cup \{0\}$
with $a,b,c,d \in \nat$.
Both classes of production moduli are strictly contained in
the class of `periodically increasing' functions
which we employ in our analysis.
We show that the set of \daob{} lower bounds of flat stream function specifications
is exactly the set of periodically increasing functions.
Buchholz~\cite{buch:2005} presents a type system for productivity,
using unrestricted production moduli.
To obtain an automatable method for a restricted class of stream specifications,
he devises a syntactic criterion to ensure productivity.
This criterion easily handles all the examples of~\cite{telf:turn:1997},
but fails to deal with functions that have a negative effect like~$\sstrod$
defined by $\strod{\strcns{x}{\strcns{y}{\astr}}}\to\strcns{y}{\strod{\astr}}$
with a (periodically increasing)
modulus $\prdmod{\!\sstrod}{n} = \lfloor\frac{n}{2}\rfloor$.

\paragraph{Overview.}
In Sec.~\ref{sec:class} we define the notion of stream specification,
and the syntactic format of flat and pure specifications.
In Sec.~\ref{sec:quantitative} we formalize the notion of \daob{} rewriting.
In Sec.~\ref{sec:nets} we introduce a production calculus
as a means to compute the production of the data-abstracted
stream specifications,
based on the set of periodically increasing functions.
A translation of stream specifications into production terms
is defined in Sec.~\ref{sec:translation}.
Our main results, mentioned above, are collected in Sec.~\ref{sec:results}.
We conclude and discuss some future research topics in Sec.~\ref{sec:conclusion}.

\section{Stream Specifications}\label{sec:class}

We introduce the notion of stream specification.
An example is given in Fig.~\ref{fig:pascal},
\begin{figure}[htb]
  \begin{center}
  \scalebox{1}{\includegraphics{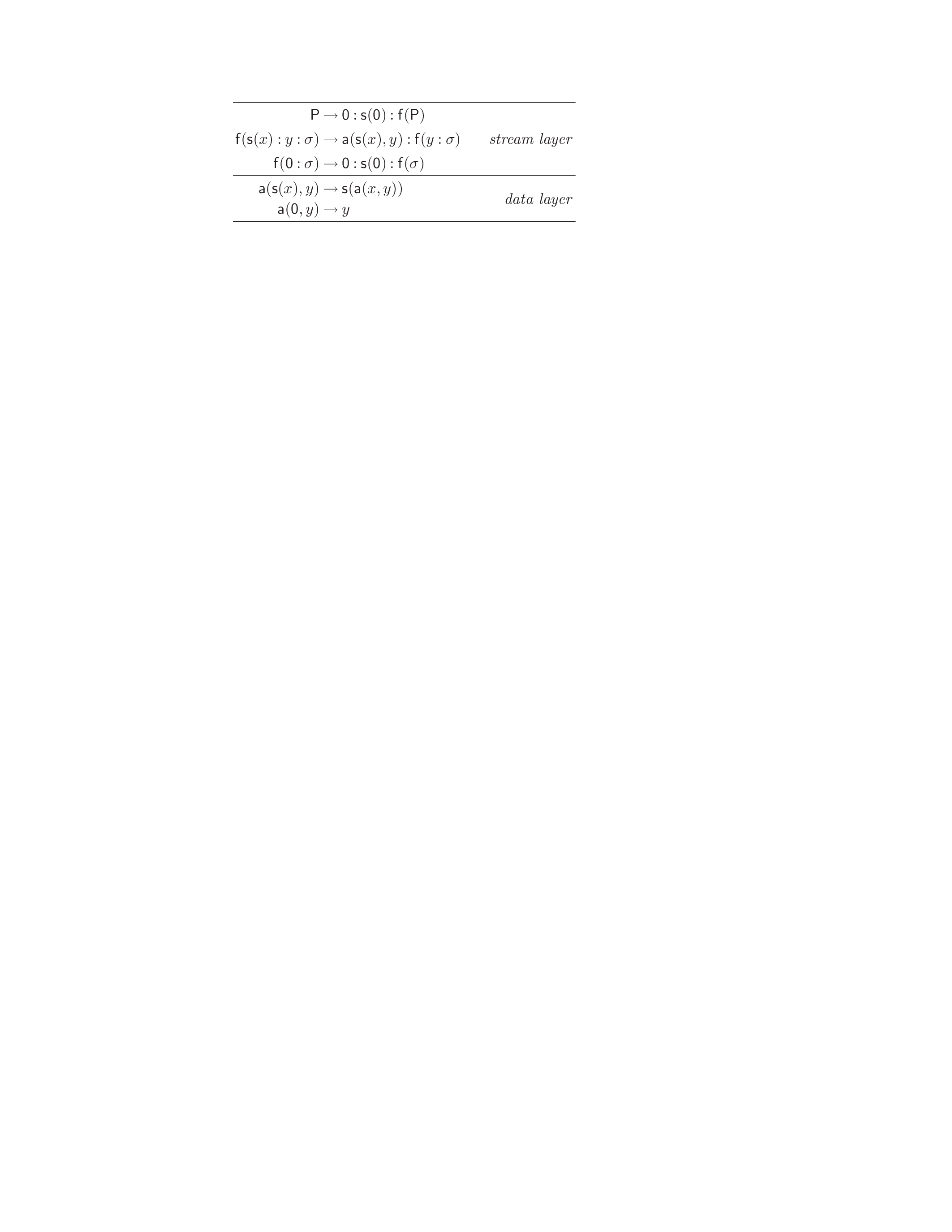}}
  \end{center}
  \vspace{-2ex}
  \caption{Example of a flat stream specification.}
  \label{fig:pascal}
\end{figure}
a productive specification of
Pascal's triangle where the rows are separated by zeros.
Indeed, evaluating this specification, we get:
\[%
  \strcf{P}
  \infred
  \strcns{\trmrep{0}}{\strcns{\trmrep{1}}{\strcns{\trmrep{0}}{
  \strcns{\trmrep{1}}{\strcns{\trmrep{1}}{\strcns{\trmrep{0}}{
  \strcns{\trmrep{1}}{\strcns{\trmrep{2}}{\strcns{\trmrep{1}}{
  \strcns{\trmrep{0}}{\strcns{\trmrep{1}}{\strcns{\trmrep{3}}{
  \strcns{\trmrep{3}}{\strcns{\trmrep{1}}{\ldots}}}}}}}}}}}}}}
  \punc,
\]
where, for $n\in\nat$, $\trmrep{n}$ is the \emph{numeral} for $n$,
defined by $\trmrep{n} \defdby \numsucn{n}{\numzer}$.

Stream specifications consist of
a \emph{stream layer} (top) where stream constants and functions are specified
and a \emph{data layer} (bottom).

The hierarchical setup of stream specifications is motivated as follows.
In order to abstract from the termination problem when investigating productivity,
the data layer is a term rewriting system on its own,
and is required to be strongly normalizing.
Moreover, an isolated data layer prevents the use of stream symbols
by data rules.
The stream layer may use symbols of the data layer, but not vice-versa.
Stream dependent data functions, like
$\strhd{\strcns{x}{\astr}} \to x$,
might cause the output of undefined data terms.
Consider the following examples\label{page:example:sijtsma} from~\cite{sijt:1989}:
\begin{align*}
   \strcf{S} &\to \strcns{\datf{0}}{\strcns{\strnth{\strcf{S}}{6}}{\strcf{S}}}
   &
   \strcf{T} &\to \strcns{\datf{0}}{\strcns{\strnth{\strcf{T}}{7}}{\strcf{T}}}
   \punc,
\end{align*}
where $\strnth{\strcf{\astr}}{n} \defdby \strhd{\strtln{n}{\strcf{\astr}}}$;
here $\strcf{S}$ is well-defined, whereas $\strcf{T}$ is not.
Note that this is not a severe restriction,
since stream dependent data functions usually
can be replaced by pattern matching:
$\bfunap{\astrfun}{\astr}{\bstr} \red \strcns{(\strhd{\astr} + \strhd{\bstr})}{\bfunap{\astrfun}{\strtl{\astr}}{\strtl{\bstr}}}$,
for example,
can be replaced by the better readable
$\bfunap{\astrfun}{\strcns{x}{\astr}}{\strcns{y}{\bstr}} \red \strcns{(x + y)}{\bfunap{\astrfun}{\astr}{\bstr}}$.

Stream specifications are formalized as many-sorted,
orthogonal, constructor term rewriting systems~\cite{terese:2003}.
We distinguish between \emph{stream terms} and \emph{data terms}.
For the sake of simplicity we consider only one sort $\sortS$ for stream terms
and one sort $\sortD$ for data terms.
Without any complication, our results
extend to stream specifications with multiple sorts for data terms and for stream terms.

Let $U$ be a finite set of \emph{sorts}.
A \emph{\mbox{$U\!$-sorted} set~$A$} is a family of sets $\{A_u\}_{u\in U}\,$;
for $V \subseteq U$ we define $A_V \defdby \bigcup_{v \in V} A_v$.
A \emph{\mbox{$U\!$-sorted} signature~$\asig$} is a \mbox{$U\!$-sorted} set of function symbols $f$,
each equipped with an arity
$\arity{f} = \pair{u_1\cdots u_n}{u} \in {U^{\ast}\times U}$
where $u$ is the sort of $f$;
we write $u_1\times\ldots\times u_n \to u$ for $\pair{u_1\cdots u_n}{u}$.
Let $X$ be a \mbox{$U\!$-sorted} set of variables.
The \emph{\mbox{$U\!$-sorted} set of terms $\ter{\asig,X}$}
is inductively defined by:
for all $u \in U$, $X_u \subseteq \ter{\asig,X}_u$, and
$\funap{f}{t_1,\ldots,t_n} \in \ter{\asig,X}_u$
if $f \in \asig$, $\arity{f} = u_1\times\ldots\times u_n \to u$, and $t_i \in \ter{\asig,X}_{u_i}$.
$\iter{\asig,X}$ denotes the set of
\emph{(possibly) infinite terms over $\asig$ and $X$} (see \cite{terese:2003}).
Usually we keep the set of variables implicit and write
$\ter{\asig}$ and $\iter{\asig}$.
A \emph{\mbox{$U\!$-sorted} term rewriting system (TRS)} is a pair
$\pair{\asig}{R}$ consisting of a \mbox{$U$-sorted} signature $\asig$
and a \mbox{$U\!$-sorted} set $R$ of \emph{rules}
that satisfy well-sortedness, for all $u\in U$:
$R_u \subseteq \ter{\asig,X}_u \times \ter{\asig,X}_u$,
as well as the standard TRS requirements.

Let $\atrs = \pair{\asig}{R}$ be a $U\!$-sorted TRS.
For a term $t\in\ter{\asig}_u$ where $u\in U$
we denote the root symbol of $t$ by $\rootsymb{t}$.
We say that two occurrences of symbols in a term are \emph{nested} if
the position \cite[p.29]{terese:2003} of one is a prefix of the position of the other.
We define $\defsymb{\asig} \defdby \{\rootsymb{l} \where l \to r \in R\}$,
the set of \emph{defined symbols},
and $\cnssymb{\asig} \defdby \asig \setminus \defsymb{\asig}$,
the set of \emph{constructor symbols}.
Then $\atrs$ is called a \emph{constructor TRS}
if for every rewrite rule $\rho \in R$,
the left-hand side is of the form $f(t_1,\ldots,t_n)$
with $t_i\in\ter{\cnssymb{\asig}}$;
then $\rho$ is called a \emph{defining rule for $f$}.
We call $\atrs$ \emph{exhaustive} for $f \in \asig$ if every term
$\funap{f}{t_1,\ldots,t_{n}}$
with closed constructor terms $t_i$ is a redex.

A \emph{stream TRS} is a
finite $\{\sortS,\sortD\}$-sorted, orthogonal, constructor TRS $\pair{\asig}{R}$
such that $\text{`$\sstrcns$'} \in \Ss$, the \emph{stream constructor symbol},
with arity $\sortD\times\sortS \to \sortS$
is the single constructor symbol in $\Ss$.
Elements of $\Sd$ and $\Ss$ are called the
\emph{data symbols} and the \emph{stream symbols}, respectively.
We let $\Ssmincns \defdby \Ss \setminus \{\text{`$\sstrcns$'}\}$,
and, for all $\astrfun \in \Ssmincns$,
we assume w.l.o.g.\ %
that the stream arguments are in front:
$\arity{\astrfun} \in \sortS^{\arityS{\astrfun}}\times\sortD^{\arityD{\astrfun}} \to \sortS$,
where $\arityS{\astrfun}$ and $\arityD{\astrfun} \in \nat$
are called the \emph{stream arity} and the \emph{data arity} of $\astrfun$, respectively.
By $\Ssc$ we denote the set of symbols in $\Ssmincns$ with stream arity $0$,
called the \emph{stream constant symbols},
and $\Ssf \defdby \Ssmincns\setminus\Ssc$
the set of symbols in $\Ssmincns$ with stream arity unequal to $0$,
called the \emph{stream function symbols}.
By $\Rsc$ we mean the defining rules for the symbols in $\Ssc$.

We repeat that the restriction to a single data sort $\sortD$ is solely for keeping the presentation simple;
all of the definitions and results in the sequel generalize to multiple data and stream sorts.
For stream TRSs we assume non-emptyness of data sorts, that is,
for every data sort there exists a finite, closed, contructor term of this sort.
In case there is only one data sort $\sortD$, 
then the requirement boils down to the existence of a nullary constructor symbol of this sort.

We come to the definition of stream specifications.
\begin{definition}\normalfont\label{def:scs}
  A \emph{stream specification} $\atrs$ is a stream TRS $\atrs = \pair{\asig}{R}$
  such that the following conditions hold:
  \begin{enumerate}
    \item There is a designated symbol $\rootsc\in\Ssc$
      with $\arityD{\rootsc} = 0$,
      the \emph{root of\/ $\atrs$}.
    \item\label{def:scs:data}
      $\pair{\Sd}{\Rd}$ is a terminating, $\sortD$-sorted TRS;
      $\Rd$ is called the \emph{data-layer} of\/ $\atrs$.
    \item\label{def:scs:exhaustive}
      $\atrs$ is exhaustive (for all defined symbols in $\asig = \Ss \uplus \Sd$).
  \end{enumerate}
\end{definition}
In the sequel we restrict to stream specifications
in which all stream constants in $\Ssc$
are reachable from the root:
$\astrcon\in\Ssc$ is \emph{reachable} if there is a term $\astrtrm$
such that $\rootsc\mred \astrtrm$ and $\astrcon$ occurs in~$\astrtrm$.
Note that reachability of stream constants is decidable,
and that unreachable symbols may be neglected for investigating
(non-)productivity.

Note that Def.~\ref{def:scs} indeed imposes a hierarchical setup;
in particular, stream dependent data functions are excluded
by item~\ref{def:scs:data}.
Exhaustivity for $\Sd$ together with strong normalization of $\Rd$
guarantees that closed data terms rewrite to constructor normal forms,
a property known as sufficient completeness~\cite{kapu:nare:rose:zhan:1991}.

We are interested in productivity of recursive stream specifications
that make use of a library of `manageable' stream functions.
By this we mean a class of stream functions
defined by a syntactic format
with the property that their \daob{} lower bounds are computable
and contained in a set of production moduli
that is effectively closed under composition,
pointwise infimum and where least fixed points can be computed.
As such a format we define the class of flat stream specifications
(Def.~\ref{def:flat}) for which \daob{} lower bounds are precisely
the set of `periodically increasing' functions (see Sec.~\ref{sec:nets}).
Thus only the stream function rules are subject to syntactic restrictions.
No condition other than well-sortedness is imposed on
the defining rules of stream constant symbols.

In the sequel let $\atrs = \pair{\asig}{R}$ be a stream specification.
We define the relation $\sdependson$ on rules in $\Rs$:
for all $\rho_1,\rho_2\in\Rs$,
$ \rho_1  \dependson  \rho_2 $
($\rho_1$ \emph{depends on} $\rho_2$)
holds if and only if $\rho_2$ is the defining rule of a stream function
symbol on the right-hand side of $\rho_1$.
Furthermore,
for a binary relation $\sred\subseteq A\times A$ on a set $A$
we define $\relsucc{\red}{a} \defdby \{ b\in A \where a\red b\}$ for all $a\in A$,
and we denote by $\stcred$ and $\srtcred$ the \emph{transitive closure} and
the \emph{reflexive and transitive closure} of $\red$, respectively.

\begin{definition}\normalfont\label{def:flat}
  A rule $\rho \in \Rs$ is called \emph{nesting}
  if its right-hand side contains nested occurrences of stream symbols from $\Ssmincns$.
  We use $\Rnest$ to denote the subset of nesting rules of $R$
  and define $\Rnnest \defdby \Rs \setminus \Rnest$, the set of \emph{non-nesting rules}.

  A rule $\rho \in \Rs$ is called \emph{flat} if all rules
  in $\relsucc{\mdependson}{\rho}$ are non-nesting.
  A symbol $\astrfun \in \Ssmincns$ is called \emph{flat} if all defining rules of $\astrfun$ are flat;
  the set of flat symbols is denoted $\Sflat$.
  A stream specification $\atrs$ is called \emph{flat} if $\Ssmincns \subseteq \Sflat \cup \Ssc$, that is,
  all symbols in $\Ssmincns$ are either flat or stream constant symbols.
\end{definition}
The specification given in Fig.~\ref{fig:pascal} is an example of a flat specification.
A second example %
is given in Fig.~\ref{fig:ternary_morse_flat}.
\begin{figure}[htb]
  \begin{center}
  \scalebox{1}{\includegraphics{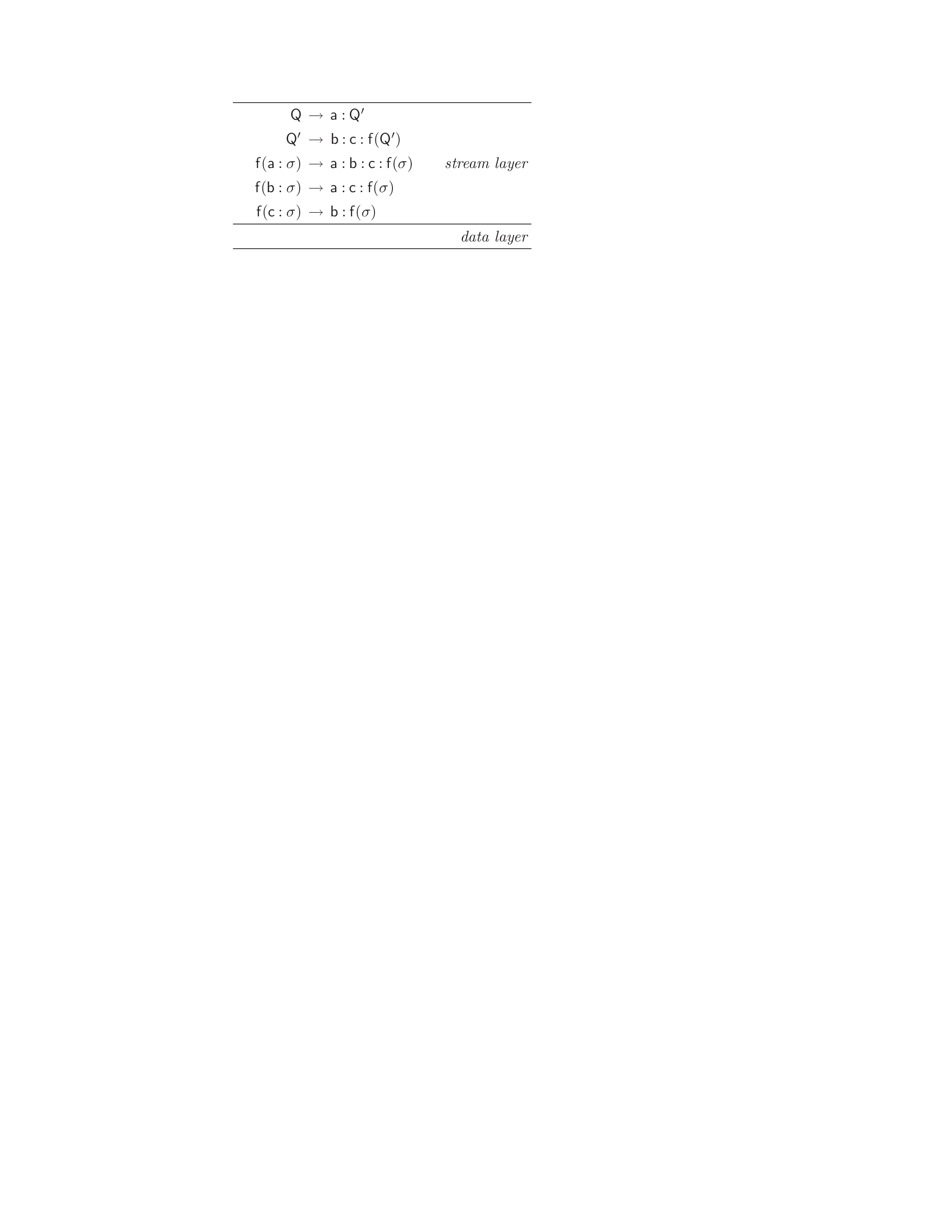}}
  \end{center}
  \vspace{-2ex}
  \caption{Example of a flat stream specification.}
  \label{fig:ternary_morse_flat}
\end{figure}
This is a productive specification that defines
the ternary Thue--Morse sequence, a square-free word over $\{\datf{a},\datf{b},\datf{c}\}$;
see, e.g.,~\cite{salo:1981}.
Indeed, evaluating this specification, we get:
\(
  \strcf{Q}
  \infred
  \strcns{\datf{a}}{\strcns{\datf{b}}{\strcns{\datf{c}}{
  \strcns{\datf{a}}{\strcns{\datf{c}}{\strcns{\datf{b}}{
  \strcns{\datf{a}}{\strcns{\datf{b}}{\strcns{\datf{c}}{
  \strcns{\datf{b}}{\strcns{\datf{a}}{\strcns{\datf{c}}{
  \ldots}}}}}}}}}}}}
\).

As the basis of \daob{} rewriting (see Def.~\ref{def:d-o-rewriting})
we define the data abstraction of terms as the results of replacing
all data-subterms by the symbol~$\trspeb$.
\begin{definition}\normalfont
  Let $\databstr{\asig} \defdby \{ \trspeb \} \uplus \Sigma_{\sortS}$.
  For stream terms $s \in \ter{\asig}_{\sortS}$,
  the \emph{data abstraction~$\databstr{s} \in \ter{\databstr{\asig}}_{\sortS}$}
  is defined by:
  \begin{align*}
    \databstr{\astr} & = \astr \punc, \\
    \databstr{\strcns{\adattrm}{\astrtrm}} &= \strcns{\trspeb}{\databstr{\astrtrm}} \punc, \\
    \databstr{\funap{\strff{f}}{\astrtrm_1,\ldots,\astrtrm_n,\adattrm_1,\ldots,\adattrm_m}}
      &= \funap{\strff{f}}{\databstr{\astrtrm_1},\ldots,\databstr{\astrtrm_n},\trspeb,\ldots,\trspeb}
    \punc.
  \end{align*}
\end{definition}
Based on this definition of data abstracted terms,
we define the class of pure stream specifications, an extension of
the equally named class in~\cite{endr:grab:hend:isih:klop:2007}.
\begin{definition}\normalfont\label{def:pure}
  A stream specification $\atrs$ is called \emph{pure} if it is flat and
  if for every %
  symbol $\astrfun \in \Ssmincns$
  the data abstractions $\databstr{\ell} \to \databstr{r}$
  of the defining rules ${\ell \to r}$ of $\astrfun$ coincide (modulo renaming of variables).
\end{definition}
\begin{figure}[htb]
  \begin{center}
  \scalebox{1}{\includegraphics{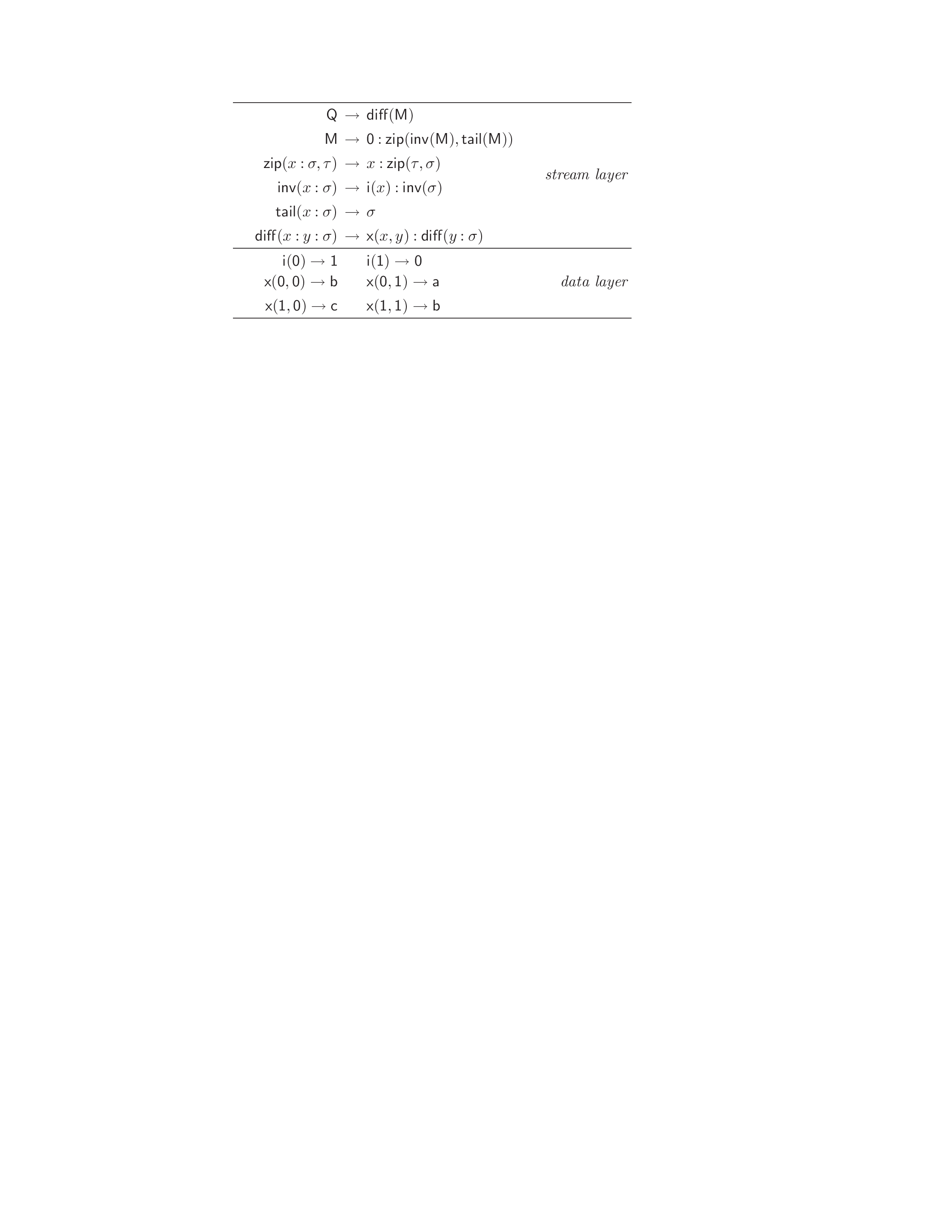}}
  \end{center}
  \vspace{-2ex}
  \caption{Example of a pure stream specification.}
  \label{fig:ternary_morse_pure}
\end{figure}
Fig.~\ref{fig:ternary_morse_pure} shows an alternative specification of
the ternary Thue--Morse sequence, this time constructed from the binary
Thue--Morse sequence specified by $\strcf{M}$.
This specification belongs to the subclass of pure specifications,
which is easily inferred by the shape of the stream layer:
for each symbol in $\Ssf = \{\sstrzip, \sstrinv, \sstrtl, \sstrdiff\}$
there is precisely one defining rule.

Another example of a pure stream specification is given in Fig.~\ref{fig:morse_dol}.
\begin{figure}[htb]
  \begin{center}
  \scalebox{1}{\includegraphics{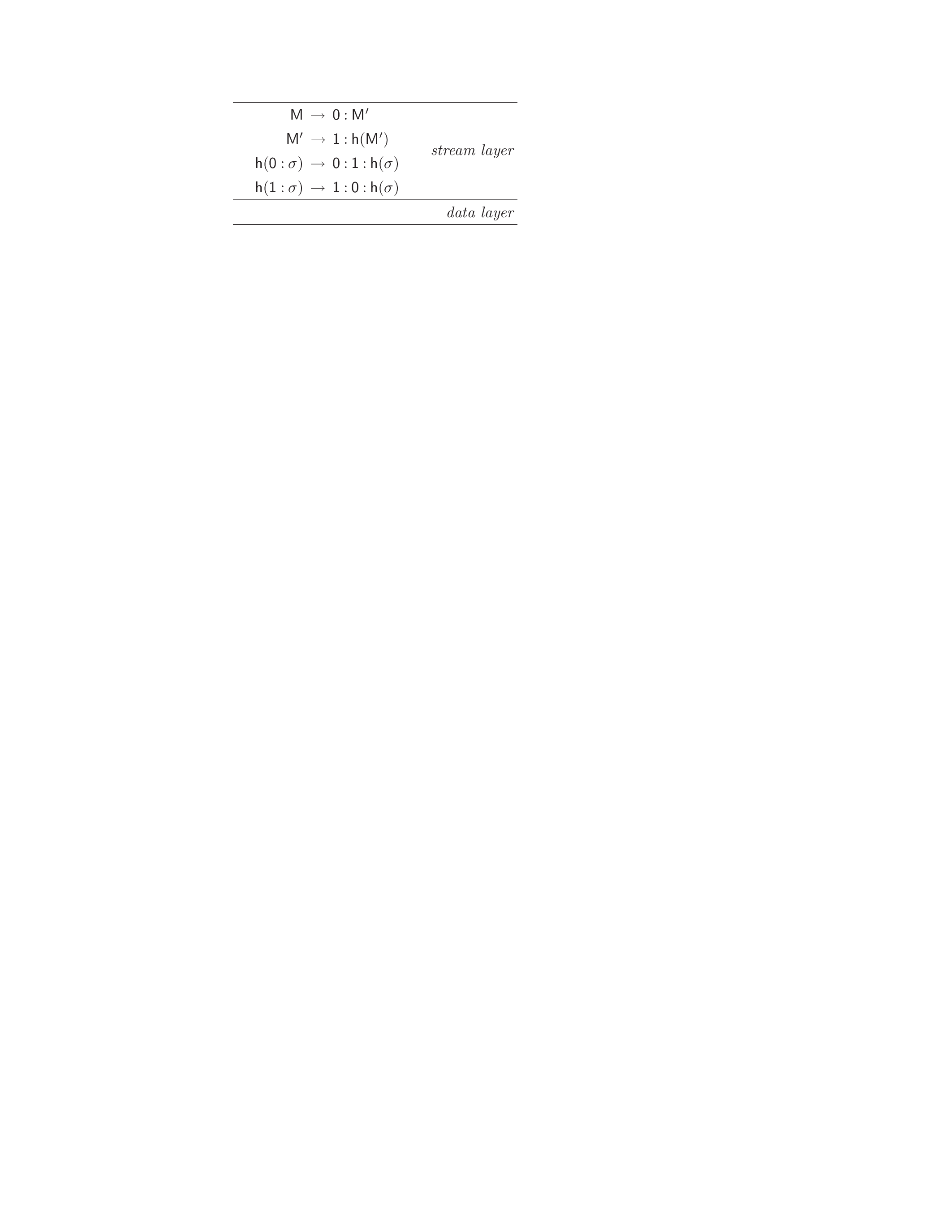}}
  \end{center}
  \vspace{-2ex}
  \caption{Example of a pure stream specification.}
  \label{fig:morse_dol}
\end{figure}
Both defining rules for $\strff{h}$ consume one, and produce two stream elements,
that is, their data abstractions coincide.

Def.~\ref{def:pure} generalizes the specifications called `pure'
in~\cite{endr:grab:hend:isih:klop:2007} in four ways concerning
the defining rules of stream functions:
First, the requirement of right-linearity of stream variables is dropped,
allowing for rules like $\strfunap{f}{\astr} \to \strfunap{g}{\astr,\astr}$.
Second, `additional supply' to the stream arguments is allowed.
For instance, in the defining rule for the function $\sstrdiff$ in Fig.~\ref{fig:ternary_morse_pure},
the variable $y$ is `supplied' to the recursive call of $\sstrdiff$.
Third, the use of non-productive stream functions is allowed now,
relaxing an earlier requirement on stream function symbols to be `weakly guarded'.
Finally, defining rules for stream function symbols may use a
restricted form of pattern matching as long as, for every stream function
$\strff{f}$, the \daob{} consumption/production behaviour (see Sec.~\ref{sec:quantitative})
of all defining rules for $\strff{f}$ is the same.

Next we define friendly nesting stream specifications,
an extension of the class of flat stream specifications.
\begin{definition}\normalfont\label{def:sfs:friendly}
  A rule $\rho \in \Rs$ is called \emph{friendly} if for all rules 
  $\gamma \in \relsucc{\mdependson}{\rho}$ we have:
  (1)~$\gamma$ consumes in each argument at most one stream element, and
  (2)~it produces at least one.
  The set of \emph{friendly nesting rules $\Rfnest$}
  is the largest extension of the set of friendly rules
  by non-nesting rules from $\Rs$ that is closed under $\sdependson$.
  A symbol $\astrfun \in \Ssmincns$ is \emph{friendly nesting}
  if all defining rules of $\astrfun$ are friendly nesting.
  A stream specification $\atrs$ is called \emph{friendly nesting} 
  if $\Ssmincns \subseteq \Sfnest \cup \Ssc$, that is,
  all symbols in $\Ssmincns$ are either friendly nesting or stream constant symbols.
\end{definition}
An example of a friendly nesting stream specification
is given in Fig.~\ref{fig:convolution}.
\begin{figure}[htb]
  \begin{center}
  \scalebox{1}{\includegraphics{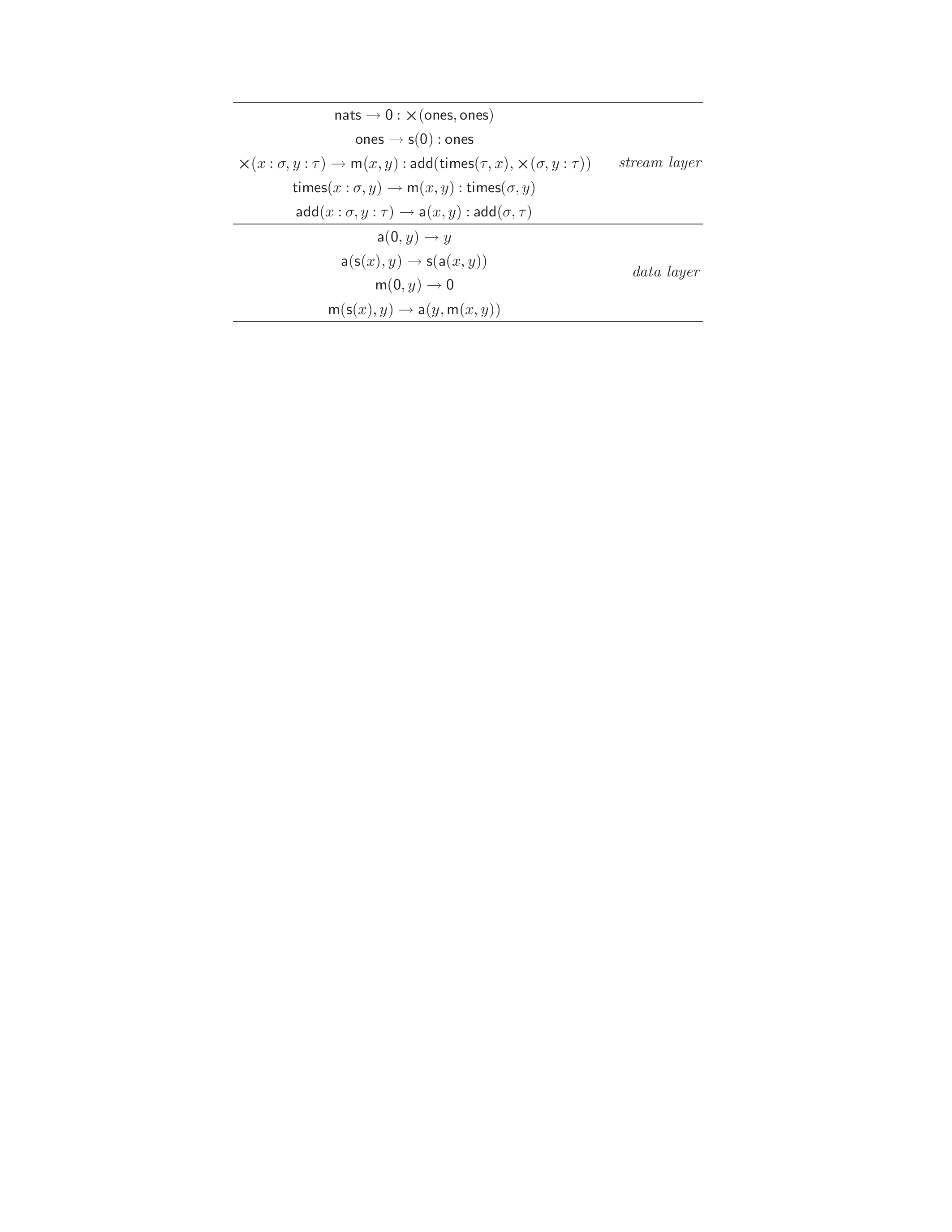}}
  \end{center}
  \vspace{-2ex}
  \caption{Example of a friendly nesting stream specification.}
  \label{fig:convolution}
\end{figure}
The root of this specification, the constant $\nats$, evaluates to the stream
$\strcns{\trmrep{0}}{\strcns{\trmrep{1}}{\strcns{\trmrep{2}}{\ldots}}}$.
The stream layer specifies a parameterized stream function $\sstrtimes$,
which multiplies every element of a stream with the parameter,
and the binary stream functions $\sstradd$ for componentwise addition,
and $\sconvprod$, the \emph{convolution product} of streams (see~\cite{rutt:2003}),
mathematically defined as an operation $\pair{\astr}{\bstr} \mapsto \astr\times\bstr$\punc:
\begin{align*}
  \strnth{({\astr}\times{\bstr})}{i}
  & = \sum_{j=0}^i \strnth{\astr}{j}\cdot\strnth{\bstr}{i-j}
  & \text{(for all $i\in\nat$).}
\end{align*}
The rewrite rule for $\sconvprod$ is nesting,
because its right-hand has nested stream function symbols,
both $\sstrtimes$ and $\sconvprod$ are nested within $\sstradd$.
Because of the presence of a nesting rule in the stream layer,
which is not a defining rule of a stream constant symbol,
this stream specification is not flat.
However, the defining rules for $\sstradd$, $\sstrtimes$, and $\sconvprod$ are friendly.
For instance for the rule for $\sconvprod$ we check:
in both arguments it consumes (at most) one stream element ($x$ and $y$),
it produces (at least) one stream element ($\nummult{x}{y}$),
and the defining rules of the stream function symbols in the right-hand side
($\sstradd$, $\sstrtimes$, and $\sconvprod$) are again friendly.
Thus the function stream layer consists of
one friendly nesting rule, two flat (and friendly) rules,
and one defining rule for a stream constant.
Therefore this stream specification is friendly nesting.

\begin{definition}\normalfont
  Let $\aars = \pair{\ter{\asig}_S}{{\to}}$
  be an abstract reduction system~(ARS) on the set
  of terms over a stream TRS signature $\asig$.
  The \emph{production function $\sterprd{\aars} \funin \ter{\asig}_S \to \conat$}
  of $\aars$ is defined for all
  $\astrtrm \in \ter{\asig}_S $~by:
  \[
    \terprd{\aars}{\astrtrm}
    \defdby
    \sup\, \{\, n\in\nat \where \astrtrm \irtcred{\aars} \strcns{\adattrm_1}{\strcns{\ldots}{\strcns{\adattrm_n}{\bstrtrm}}} \,\}
    \punc.
  \]
  We call \emph{$\aars$ productive for a stream term $\astrtrm$}
  if $\terprd{\aars}{\astrtrm} = \infty$.
  A stream specification $\atrs$ is called \emph{productive}
  if $\atrs$ is productive for its root $\rootsc$.
\end{definition}

The following proposition justifies this definition of productivity
by stating an easy consequence:
the root of a productive stream specification can be evaluated
continually in such a way that a uniquely determined stream in constructor
normal form is obtained as the limit.
This follows easily from the fact that a stream specification is an
orthogonal TRS, and hence has a confluent rewrite relation
that enjoys the property $\UNinf$ (uniqueness of infinite normal form)
\cite{klop:vrij:2005}.

\begin{proposition}
  A stream specification is productive if and only if\/
  its root has an infinite constructor term of the form
  $ \strcns{\adattrm_1}{\strcns{\adattrm_2}{\strcns{\adattrm_3}{\ldots}}} $
  as its unique infinite normal form.
\end{proposition}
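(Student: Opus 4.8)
The plan is to derive the equivalence from two facts about stream specifications: they are orthogonal constructor TRSs, and orthogonal TRSs enjoy $\UNinf$, uniqueness of infinite normal forms, as recalled just above the statement. I would argue both directions, with the non-trivial content sitting in the ``only if'' direction.

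\emph{($\Leftarrow$)} Suppose the root $\rootsc$ has an infinite constructor term $s = \strcns{\adattrm_1}{\strcns{\adattrm_2}{\strcns{\adattrm_3}{\ldots}}}$ as its unique infinite normal form. In particular $\rootsc \infred s$, so for every $n\in\nat$ there is a finite reduction $\rootsc \mred \strcns{\adattrm_1}{\strcns{\ldots}{\strcns{\adattrm_n}{\bstrtrm_n}}}$ (the prefix of length $n$ is produced after finitely many steps along the infinite reduction, since the stream constructor $\sstrcns$ is the only stream constructor and infinite normal forms of sort $\sortS$ must have this shape at every finite depth). Hence $\sup\{n \where \rootsc \mred \strcns{\adattrm_1}{\strcns{\ldots}{\strcns{\adattrm_n}{\bstrtrm}}}\} = \infty$, i.e.\ $\terprd{\atrs}{\rootsc} = \infty$, so $\atrs$ is productive.

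\emph{($\Rightarrow$)} Suppose $\atrs$ is productive, i.e.\ $\terprd{\atrs}{\rootsc} = \infty$. First I would observe that this means: for each $n$ there is a reduct $\strcns{\adattrm_1^{(n)}}{\strcns{\ldots}{\strcns{\adattrm_n^{(n)}}{\bstrtrm_n}}}$ of $\rootsc$. Using orthogonality (hence confluence, in particular the property $\iCR$ for infinite reductions in the sense of \cite{klop:vrij:2005}), these finite prefixes are compatible: the data term $\adattrm_i^{(n)}$ rewrites to a common constructor normal form independent of $n$ (here the hierarchical setup matters — the data layer is a terminating, exhaustive, hence sufficiently complete TRS, so each $\adattrm_i^{(n)}$ has a unique constructor normal form $\trmrep{d_i}$, and confluence of the whole system forces agreement as $n$ grows). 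Stitching these prefixes together yields a single infinite constructor term $s = \strcns{\trmrep{d_1}}{\strcns{\trmrep{d_2}}{\ldots}}$ with $\rootsc \infred s$. Since $s$ is a constructor term it is in normal form; and by $\UNinf$ for the orthogonal TRS $\atrs$, it is the \emph{unique} infinite normal form of $\rootsc$.

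\emph{Main obstacle.} The delicate step is the ``stitching'' in the forward direction: turning the family of finite, a priori unrelated reductions witnessing $\terprd{\atrs}{\rootsc}=\infty$ into one coherent infinite reduction $\rootsc\infred s$, and simultaneously pinning down the data elements $\adattrm_i$ up to the unique constructor normal form. This is where confluence ($\iCR$), the strong normalization and exhaustivity of the data layer (giving sufficient completeness, so every closed data subterm reaches a unique constructor normal form), and the non-emptiness of data sorts all come together. Once the infinite constructor reduct exists, uniqueness is immediate from $\UNinf$, so the bulk of the work — and essentially the only part beyond unwinding definitions — is this compatibility argument, which I expect to dispatch by a standard projection/compression argument on orthogonal infinitary rewriting as in \cite{klop:vrij:2005,terese:2003}.
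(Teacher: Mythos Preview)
Your proposal is correct and follows exactly the route the paper indicates: the paper does not give a detailed proof but merely remarks that the proposition ``follows easily from the fact that a stream specification is an orthogonal TRS, and hence has a confluent rewrite relation that enjoys the property $\UNinf$.'' Your argument unpacks precisely this, using confluence to reconcile the finite prefix reducts and $\UNinf$ to conclude uniqueness, with the data-layer's termination and exhaustivity supplying constructor normal forms for the elements --- so there is nothing to add.
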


\section{Data-Oblivious Analysis}\label{sec:quantitative}
We formalize the notion of \daob{} rewriting
and introduce the concept of \daob{} productivity.
The idea is a quantitative reasoning
where all knowledge about the concrete values of data elements
during an evaluation sequence is ignored.
For example, consider the following stream specification:
\begin{align}
  \strcf{M} &\red \funap{\strff{f}}{\strcns{\datf{0}}{\strcns{\datf{1}}{\strcf{M}}}}
  \notag
  &
  (1)\;\,
  \strfunap{f}{ \strcns{\datf{0}}{\strcns{x}{\astr}} }
  &\red \strcns{\datf{0}}{\strcns{\datf{1}}{\strfunap{f}{\astr}}}
  &
  (2)\;\,
  \strfunap{f}{\strcns{\datf{1}}{\strcns{x}{\astr}}}
  &\red \strcns{x}{\strfunap{f}{\astr}}
\end{align}
\jlabel{ex:do1}{1}%
\jlabel{ex:do2}{2}%
The specification of $\strcf{M}$ is productive:
$\strcf{M}
 \red^2 \strcns{\datf{0}}{\strcns{\datf{1}}{\funap{\strff{f}}{\strcf{M}}}}
 \red^3 \strcns{\datf{0}}{\strcns{\datf{1}}{\strcns{\datf{0}}{\strcns{\datf{1}}{\funap{\strff{f}}{\funap{\strff{f}}{\strcf{M}}}}}}}
 \mred \ldots\punc.$
During the rewrite sequence \eqref{ex:do2} is never applied.
Disregarding the identity of data, however,
\eqref{ex:do2} becomes applicable and allows for the rewrite sequence:
$$\strcf{M}
 \red \funap{\strff{f}}{\strcns{\trspeb}{\strcns{\trspeb}{\strcf{M}}}}
 \red^{\eqref{ex:do2}} \strcns{\trspeb}{\funap{\strff{f}}{\strcf{M}}}
 \mred \strcns{\trspeb}{\funap{\strff{f}}{\strcf{\strcns{\trspeb}{\funap{\strff{f}}{\strcf{\strcns{\trspeb}{\funap{\strff{f}}{\strcf{\ldots}}}}}}}}}
  \punc,$$
producing only one element.
Hence $\strcf{M}$ is not \daobly{} productive.

\begin{figure}[b!]
  \begin{center}
  \scalebox{1}{\includegraphics{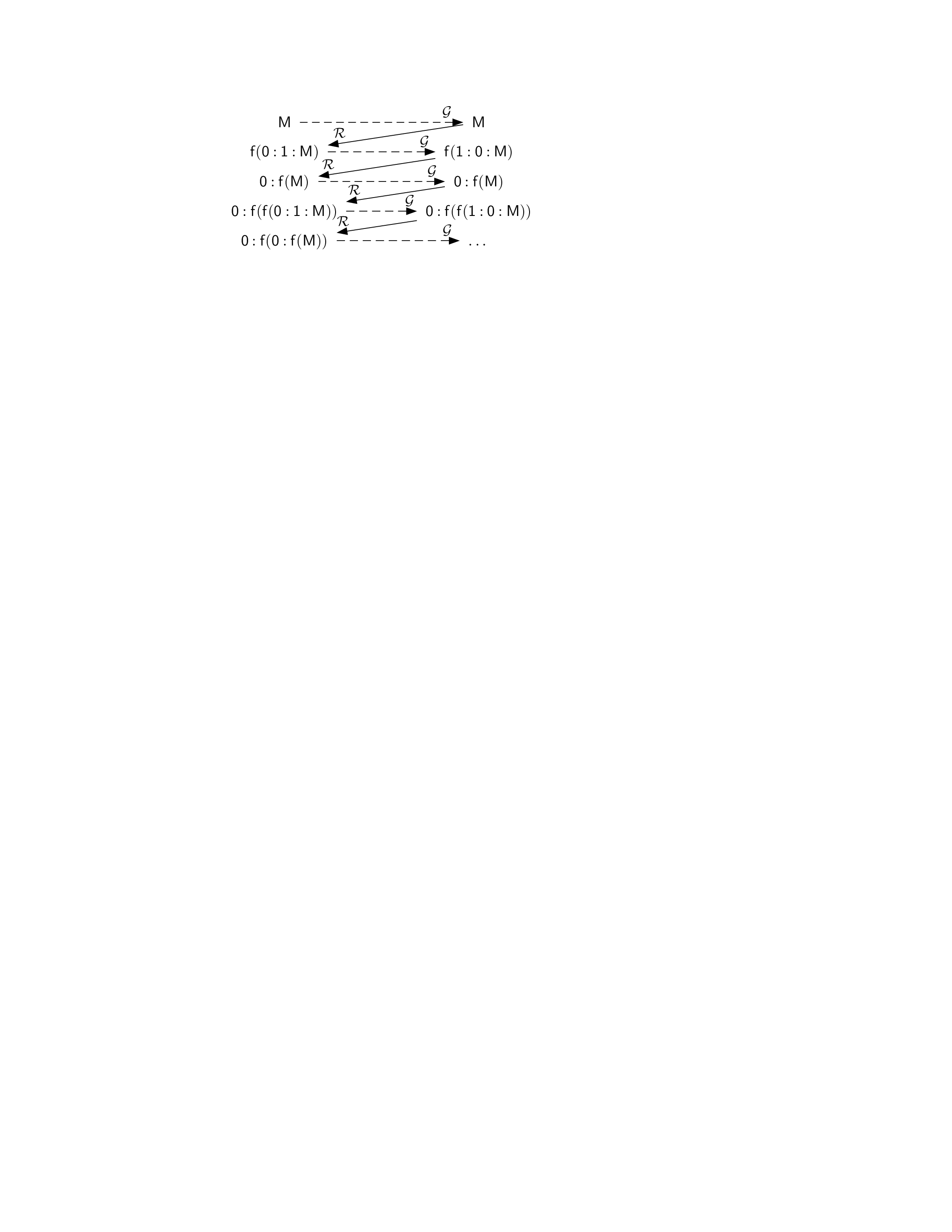}}
  \end{center}
  \vspace{-2ex}
  \caption{Data-oblivious rewriting}
  \label{fig:dorewriting}
\end{figure}
\Daob{} term rewriting can be thought of as a two-player game
between a \emph{rewrite player} $\mathcal{R}$ which performs the usual term rewriting,
and an \emph{opponent} $\sdg$
which before every rewrite step
is allowed to arbitrarily exchange data elements
for (sort-respecting) data terms in constructor normal form.
The opponent can either handicap or support the rewrite player.
Respectively,
the \daob{} lower (upper) bound on the production of a stream term $\astrtrm$
is the infimum (supremum) of the production of $\astrtrm$ with respect to
all possible strategies for the opponent $\sdg$.
Fig.~\ref{fig:dorewriting}
depicts \daob{} rewriting of the above stream specification $\strcf{M}$;
by exchanging data elements, the opponent $\sdg$ enforces
the application of \eqref{ex:do2}.

\begin{remark}
  At first glance it might appear natural to model the opponent
  using a function $\sdg \funin \ter{\Sd} \to \ter{\Sd}$ from data terms to data terms.
  However, such a per-element deterministic exchange strategy
  preserves equality of data elements.
  Then the following specification of $\strcf{W}$\,:
  \begin{align*}
  \strcf{W}&\to \funap{\strff{g}}{\strcf{Z},\strcf{Z}} %
  &
  \strcf{Z}&\to\strcns{\datf{0}}{\strcf{Z}} %
  \\
  \funap{\strff{g}}{\strcns{\datf{0}}{\astr},\strcns{\datf{0}}{\bstr}}
  &\to\strcns{\datf{0}}{\funap{\strff{g}}{\astr,\bstr}} %
  &
  \funap{\strff{g}}{\strcns{\datf{0}}{\astr},\strcns{\datf{1}}{\bstr}}
  &\to\funap{\strff{g}}{\astr,\bstr} %
  \\
  \funap{\strff{g}}{\strcns{\datf{1}}{\astr},\strcns{\datf{1}}{\bstr}}
  &\to\strcns{\datf{0}}{\funap{\strff{g}}{\astr,\bstr}} %
  &
  \funap{\strff{g}}{\strcns{\datf{1}}{\astr},\strcns{\datf{0}}{\bstr}}
  &\to\funap{\strff{g}}{\astr,\bstr} %
  \end{align*}
  would be productive for every such $\sdg$,
  which is clearly not what one would expect of a \daob{} analysis.
\end{remark}

The opponent can be modelled by an operation on stream terms: %
$\ter{\asig}_\sortS \to \ter{\asig}_\sortS$.
However, it can be shown that it is sufficient to quantify over strategies for $\sdg$
for which $\dg{\astrtrm}$ is invariant under exchange of data elements in $\astrtrm$ for all terms $\astrtrm$.
Therefore we first abstract from the data elements in favour of symbols $\trspeb$
and then define the opponent $\sdg$ on the abstracted terms, $\sdg \funin \ter{\databstr{\asig}}_{\sortS} \to \ter{\asig}_{\sortS}$.

\begin{definition}\normalfont
  Let $\atrs = \pair{\asig}{R}$ be a stream specification.
  A \emph{data-exchange function on $\atrs$}
  is a function $\sdg \funin \ter{\databstr{\asig}}_{\sortS} \to \ter{\asig}_{\sortS}$
  such that $\forall \cstrtrm \in \ter{\databstr{\asig}}_{\sortS}$ it holds:
  $\databstr{\funap{\sdg}{\cstrtrm}} = \cstrtrm$,
  and $\funap{\sdg}{\cstrtrm}$ is in data-constructor normal form.
\end{definition}

\begin{definition}\normalfont\label{def:d-o-rewriting}
  We define the ARS
  $\dgars{\atrs}{\sdg}
   \subseteq \ter{\databstr{\asig}}_{\sortS} \times \ter{\databstr{\asig}}_{\sortS}$
  for every data-exchange function $\sdg$,
  as follows:
  \begin{align*}
    \dgars{\atrs}{\sdg}
    \defdby \{ s \to \databstr{t} \mid \: s \in \ter{\databstr{\asig}},\;
                  t \in \ter{\asig} \text{ with } \dg{s} \to_\atrs t\} \; .
  \end{align*}
  The \emph{\daob{} production range $\doRng{\atrs}{\astrtrm}$}
  of a data abstracted stream term $\astrtrm \in \ter{\databstr{\asig}}_{\sortS}$ is defined as follows:
  \begin{align*}
  \doRng{\atrs}{\astrtrm} \defdby  \{ \terprd{\dgars{\atrs}{\sdg}}{\astrtrm} \where {}
     &\sdg \text{ a data-exchange function on } \atrs \} \punc .
  \end{align*}
  The \emph{\daob{} lower} and \emph{upper bound on the production of $s$}
  are defined by
  \begin{align*}
  \doLow{\atrs}{\astrtrm} &\defdby \inf(\doRng{\atrs}{\astrtrm})
  \punc,
  &\text{ and}
  &&
  \doUp{\atrs}{\astrtrm} &\defdby \sup(\doRng{\atrs}{\astrtrm})
  \punc,
  \end{align*}
  respectively.
  These notions carry over to stream terms $s \in \ter{\asig}_{\sortS}$
  by taking their data abstraction $\databstr{s}$.

  A stream specification $\atrs$ is
  \emph{\daobly{} productive} (\emph{\daobly{} non-productive})
  if $\doLow{\atrs}{\rootsc} = \infty$  (if $\doUp{\atrs}{\rootsc} < \infty$)
  holds.
\end{definition}
\begin{proposition}\label{prop:doprod}
  For $\atrs = \pair{\asig}{R}$
  a stream specification and $\astrtrm \in \ter{\asig}_{\sortS}$:
  \[
    \doLow{\atrs}{\astrtrm} \le \terprd{\atrs}{\astrtrm} \le \doUp{\atrs}{\astrtrm}\punc.
  \]
  Hence \daob{} productivity implies productivity
  and \daob{} non-productivity implies non-productivity.
\end{proposition}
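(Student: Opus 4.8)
The plan is to establish the two inequalities $\doLow{\atrs}{\astrtrm} \le \terprd{\atrs}{\astrtrm}$ and $\terprd{\atrs}{\astrtrm} \le \doUp{\atrs}{\astrtrm}$ separately, and in each case the argument amounts to exhibiting one particular data-exchange function $\sdg$ whose associated production matches $\terprd{\atrs}{\astrtrm}$; then the inequalities follow since $\doLow{}$ is an infimum and $\doUp{}$ a supremum over the production range $\doRng{\atrs}{\astrtrm}$, which by construction contains $\terprd{\dgars{\atrs}{\sdg}}{\astrtrm}$ for every $\sdg$. The consequences about (non-)productivity then drop out immediately: if $\doLow{\atrs}{\rootsc} = \infty$ then by the left inequality $\terprd{\atrs}{\rootsc} = \infty$, and if $\doUp{\atrs}{\rootsc} < \infty$ then by the right inequality $\terprd{\atrs}{\rootsc} < \infty$.

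For both bounds the natural candidate is the \emph{faithful} data-exchange function $\sdg_0$, obtained as follows. Fix an ordinary reduction of $\astrtrm$ in $\atrs$ witnessing (part of) $\terprd{\atrs}{\astrtrm}$; since $\atrs$ is a stream specification, hence an orthogonal TRS with $\UNinf$, by the Proposition just above we may take the reduction all the way to the unique infinite constructor normal form $\strcns{\adattrm_1}{\strcns{\adattrm_2}{\cdots}}$. The key observation is that the data abstraction $\databstr{\cdot}$ commutes with $\atrs$-rewriting in the sense needed: for any $\atrs$-step $u \to_\atrs v$ on a stream term, there is a corresponding step (or the reduct is already $\databstr{v}$) turning $\databstr{u}$ into $\databstr{v}$ once the original data values have been \emph{restored} at the redex pattern. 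Concretely, define $\sdg_0$ on a data-abstracted term $s$ reachable along the abstracted image of our chosen reduction to re-insert exactly the data terms that occurred at the corresponding positions in the concrete reduction (and on all other inputs let $\sdg_0$ re-insert the fixed nullary data constructor, which exists by non-emptyness of data sorts). By construction $\databstr{\sdg_0(s)} = s$ and $\sdg_0(s)$ is in data-constructor normal form, so $\sdg_0$ is a legitimate data-exchange function, and the abstracted reduction mirrors the concrete one step for step; hence it produces at least as many constructor prefixes, giving $\terprd{\dgars{\atrs}{\sdg_0}}{\databstr{\astrtrm}} \ge \terprd{\atrs}{\astrtrm}$ and therefore $\doUp{\atrs}{\astrtrm} \ge \terprd{\atrs}{\astrtrm}$.

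For the lower-bound inequality one argues in the other direction: for \emph{every} data-exchange function $\sdg$, any $\dgars{\atrs}{\sdg}$-reduction of $\databstr{\astrtrm}$ can be \emph{lifted} to a genuine $\atrs$-reduction of $\astrtrm$. Indeed, a $\dgars{\atrs}{\sdg}$-step $s \to \databstr{t}$ is by definition backed by a concrete step $\sdg(s) \to_\atrs t$; since $\databstr{\sdg(s)} = s$, the term $\sdg(s)$ is just $s$ with concrete data plugged in at the $\trspeb$-positions, so replaying these concrete steps (threading through the substituted data, which only ever sits in data-subterms and hence cannot block any stream-layer redex) yields an $\atrs$-reduction of $\astrtrm$ producing the same number of constructor prefixes — possibly interleaved with additional data-layer steps, which converge by strong normalization of $\Rd$ and do not affect the stream prefix. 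Hence $\terprd{\atrs}{\astrtrm} \ge \terprd{\dgars{\atrs}{\sdg}}{\databstr{\astrtrm}}$ for all $\sdg$, so $\terprd{\atrs}{\astrtrm} \ge \inf \doRng{\atrs}{\astrtrm} = \doLow{\atrs}{\astrtrm}$.

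The main obstacle I anticipate is making the simulation lemmas fully precise. The delicate point is that a single $\atrs$-step need not correspond to a single $\dgars{\atrs}{\sdg}$-step and vice versa, because of the normalization built into the definition $\dgars{\atrs}{\sdg} = \{s \to \databstr{t} \mid \dg{s}\to_\atrs t\}$ (the right-hand side is abstracted again) and because data-layer steps in $\atrs$ become invisible after abstraction. One must check carefully that (i) data-layer reductions, which are strongly normalizing and confluent on the isolated data layer, never create or destroy stream-constructor prefixes and never interact with the stream-layer redexes that matter; (ii) orthogonality guarantees that the positions of stream-layer redexes are preserved under data abstraction, so the redex pattern a rule matches on depends on data only through which defining rule (of an exhaustive symbol) fires — and that is precisely the information the opponent $\sdg$ gets to influence. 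Once these bookkeeping facts are in place, both inequalities are routine, and the two implications about productivity are immediate.
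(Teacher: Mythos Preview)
Your argument for the lower bound $\doLow{\atrs}{\astrtrm} \le \terprd{\atrs}{\astrtrm}$ proves a false statement. You claim that for \emph{every} $\sdg$, every $\dgars{\atrs}{\sdg}$-reduction of $\databstr{\astrtrm}$ lifts to a genuine $\atrs$-reduction of $\astrtrm$, concluding $\terprd{\atrs}{\astrtrm} \ge \terprd{\dgars{\atrs}{\sdg}}{\databstr{\astrtrm}}$ for all $\sdg$. But that would force $\terprd{\atrs}{\astrtrm} \ge \doUp{\atrs}{\astrtrm}$ and hence $\terprd{\atrs}{\astrtrm} = \doUp{\atrs}{\astrtrm}$ in general, which fails whenever the opponent is \emph{helpful}: with $\strcf{M} \to \strcns{\datf{1}}{\funap{\strff{g}}{\strcf{M}}}$, $\funap{\strff{g}}{\strcns{\datf{0}}{\astr}} \to \strcns{\datf{1}}{\strcns{\datf{0}}{\funap{\strff{g}}{\astr}}}$, $\funap{\strff{g}}{\strcns{\datf{1}}{\astr}} \to \funap{\strff{g}}{\astr}$, one has $\terprd{\atrs}{\strcf{M}} = 1$ but $\terprd{\dgars{\atrs}{\sdg}}{\strcf{M}} = \infty$ for the $\sdg$ that inserts $\datf{0}$ everywhere. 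The flaw is the sentence ``the substituted data \ldots\ cannot block any stream-layer redex'': under pattern matching on data constructors the data \emph{determines which defining rule fires}, so the concrete step $\sdg(s) \to_\atrs t$ need not be replayable from any other concretization of $s$, in particular not from the one that actually occurs along the $\atrs$-reduction of $\astrtrm$.

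Your upper-bound half is fine in spirit. For the lower bound, the correct route is not a universal lifting but, just as for the upper bound, the exhibition of \emph{one} $\sdg$ with $\terprd{\dgars{\atrs}{\sdg}}{\databstr{\astrtrm}} \le \terprd{\atrs}{\astrtrm}$; in fact the same faithful $\sdg_0$ you construct should give equality $\terprd{\dgars{\atrs}{\sdg_0}}{\databstr{\astrtrm}} = \terprd{\atrs}{\astrtrm}$, placing $\terprd{\atrs}{\astrtrm}$ inside $\doRng{\atrs}{\astrtrm}$ and yielding both inequalities at once. The paper states the proposition without proof; what makes this equality precise---in particular, coping with the history-freeness of $\sdg_0$ that you rightly flag as an obstacle---is the history-aware machinery developed immediately afterwards.
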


We define lower and upper bounds
on the \daob{} consumption/production behaviour of stream functions.
These bounds are used to reason about \daob{}
(non-) productivity of stream constants, see Sec.~\ref{sec:results}.
\begin{definition}\normalfont\label{def:dorange}
  Let $\atrs = \pair{\asig}{R}$ be a stream specification,
  $\strff{g} \in \Ssmincns$, $k = \arityS{\strff{g}}$, and $\ell = \arityD{\strff{g}}$.
  The \emph{\daob{} production range
  $\doRng{\atrs}{\strff{g}} \funin \nat^{k} \to \powerset{\conat}$
  of $\strff{g}$} is:
  \[
    \funap{\doRng{\atrs}{\strff{g}}}{n_1,\ldots,n_{k}}
    \defdby
    \doRng{\atrs}{\,\funap{\strff{g}}{
      (\strcns{\trspeb^{n_1}}{\astr}),\ldots,(\strcns{\trspeb^{n_{k}}}{\astr}),
      \underbrace{\trspeb,\ldots,\trspeb}_{\text{$\ell$-times}}}\,}
    \punc,
    \]
    \\[-3ex]
  where $\strcns{\trspeb^{m}}{\astr} \defdby \overbrace{\trspeb\xstrcns\ldots\xstrcns\trspeb\,\sstrcns}^{\text{$m$ times}}\,\astr$.
  The \emph{\daob{} lower and upper bounds on the production of $\strff{g}$}
  are defined by $\doLow{\atrs}{\strff{g}} \defdby \inf(\doRng{\atrs}{\strff{g}})$ and
  $\doUp{\atrs}{\strff{g}} \defdby \sup(\doRng{\atrs}{\strff{g}})$, respectively.
\end{definition}

Even simple stream function specifications
can exhibit a complex \daob{} behaviour.
For example, consider the flat function specification:
\begin{align*}
  \label{example:traces}
\funap{\strff{f}}{\astr}
&\red \bfunap{\strff{g}}{\astr}{\astr}
\\
\bfunap{\strff{g}}{\strcns{\datf{0}}{\strcns{y}{\astr}}}{\strcns{x}{\bstr}}
&\red \strcns{\datf{0}}{\strcns{\datf{0}}{\bfunap{\strff{g}}{\astr}{\bstr}}}
\\
\bfunap{\strff{g}}{\strcns{\datf{1}}{\astr}}{\strcns{x_1}{\strcns{x_2}{\strcns{x_3}{\strcns{x_4}{\bstr}}}}}
&\red \strcns{\datf{0}}{\strcns{\datf{0}}{\strcns{\datf{0}}{\strcns{\datf{0}}{\strcns{\datf{0}}{\bfunap{\strff{g}}{\astr}{\bstr}}}}}}
\end{align*}
Fig.~\ref{fig:dobounds} (left)
\begin{figure}[htb]
  \begin{center}
  \scalebox{1}{\includegraphics{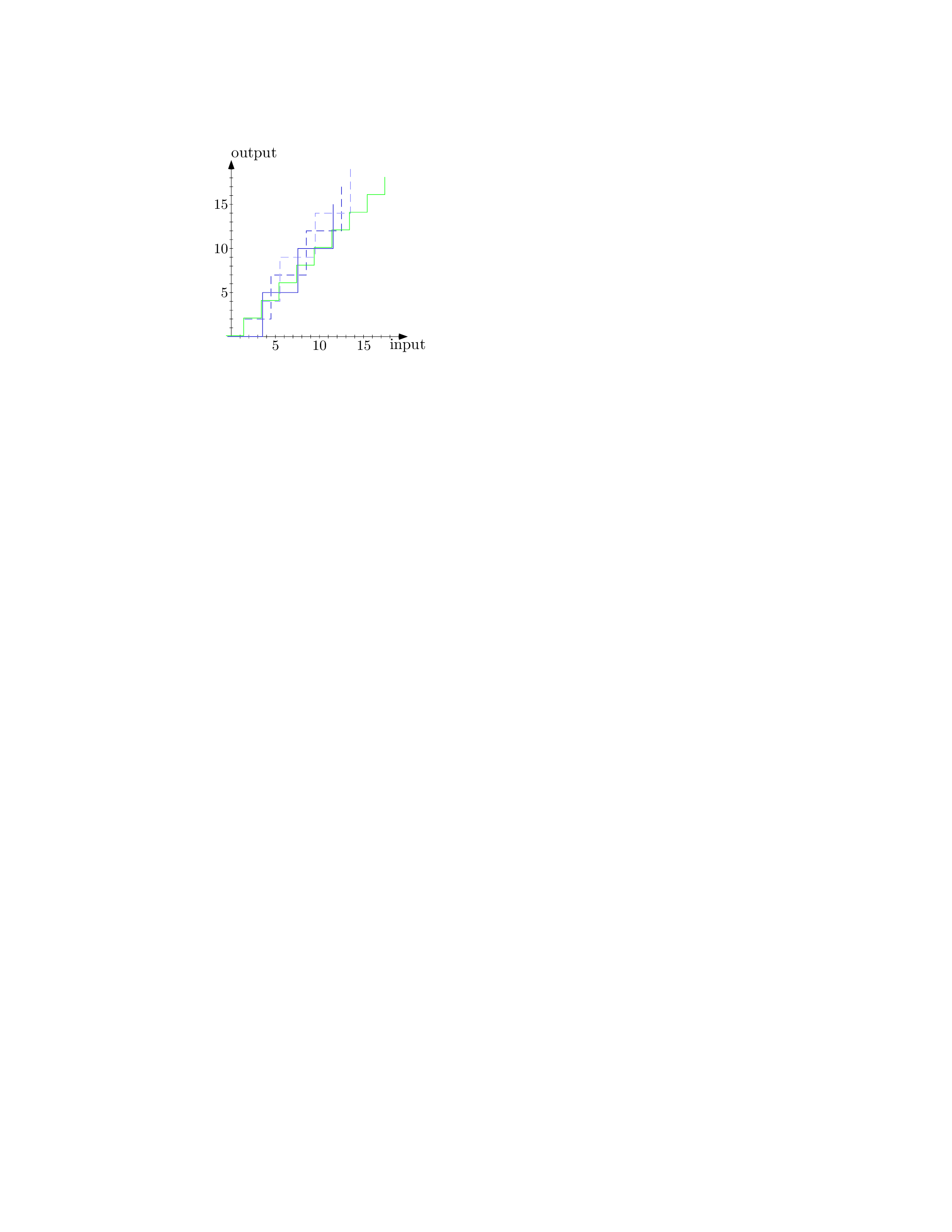}}
  \hspace{5em}
  \scalebox{1}{\includegraphics{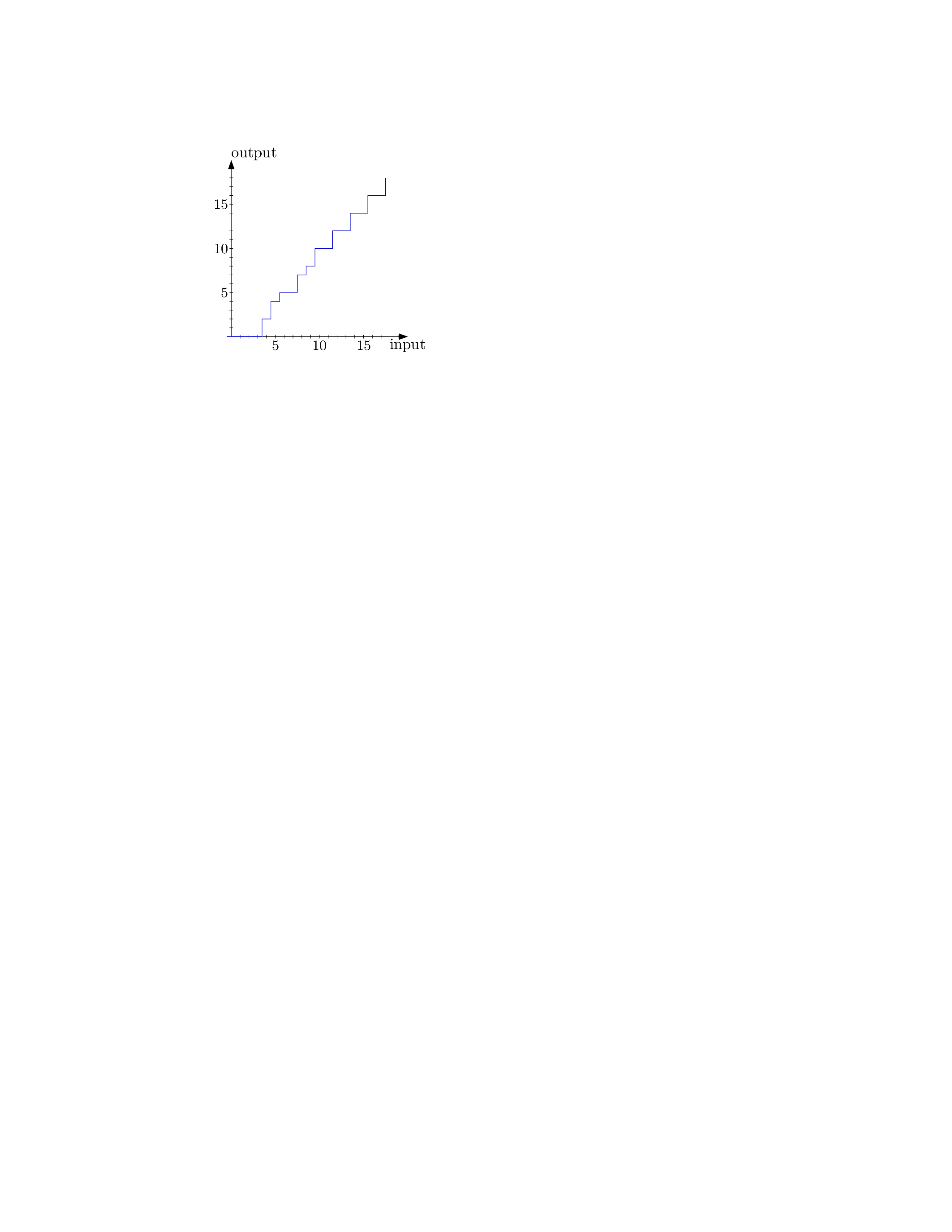}}
  \end{center}
\vspace{-2ex}
	\caption{Traces}
\label{fig:dobounds}
\end{figure}
shows a (small) selection of the possible function-call traces for $\strff{f}$.
In particular, it depicts the traces that contribute
to the \daob{} lower bound $\doLow{\atrs}{\strff{f}}$.
The lower bound $\doLow{\atrs}{\strff{f}}$, shown on the right,
is a superposition of multiple traces of $\strff{f}$.
In general $\doLow{\atrs}{\strff{f}}$ can even be a superposition of infinitely many traces.

\subsection*{First Observations on Data-Oblivious Rewriting}

For a stream function symbol $\bstrfun$,
we define its optimal production modulus $\sprdmod{\strff{g}}$,
the data-aware, quantitative lower bound on the production of $\strff{g}$,
and compare it to its \daob{} lower bound $\doLow{\atrs}{\strff{g}}$.
\begin{definition}\normalfont\label{def:dorange:1}
  Let $\atrs = \pair{\asig}{R}$ be a stream specification.
  We define the set $S_n$ of \emph{$n$-defined} stream terms,
  i.e.\ finite stream terms with a constructor-prefix of length $n\in\nat$:
  \begin{gather*}
    S_n \defdby
      \{\strcns{\adattrm_1}{\strcns{\ldots}{\strcns{\adattrm_n}{\astr}}}
        \where \adattrm_1,\ldots,\adattrm_n \in \ter{\cnssymb{\Sd},\setemp}\}\punc,
  \end{gather*}
  Moreover, let $\strff{g} \in \Ssf$ with %
  $k = \arityS{\bstrfun}$ and $\ell = \arityD{\bstrfun}$,
  and define, for all $\vec{n} = n_1,\ldots,n_k$,
  the set $G_{\vec{n}}$
  of applications of $\strff{g}$ to $n_i$-defined arguments:
  \[
    G_{\vec{n}} %
    \defdby
    \{\,
      \strfunap{g}{\vec{\astrtrm},\vec{\bdattrm}}
      \where
      \myall{i}{\astrtrm_i \in S_{n_i}} ,\: \myall{j}{\bdattrm_j \in \ter{\cnssymb{\Sd},\setemp}}
    \,\}
    \punc,
  \]
  where $\vec{\astrtrm} = \astrtrm_1,\ldots,\astrtrm_k$
  and $\vec{\bdattrm} = \bdattrm_1,\ldots,\bdattrm_\ell$.
  Then, the
  \emph{optimal production modulus $\sprdmod{\strff{g}}\funin\conat^k\to\conat$ of $\strff{g}$}\/
  is defined by:
  \[
    \prdmod{\strff{g}}{\vec{n}}
    \defdby
    \inf\, \{ \terprd{\atrs}{\bstrtrm} \where \bstrtrm \in G_{\vec{n}} \}
    \punc.
  \]
\end{definition}

To illustrate the difference between the optimal production modulus $\sprdmod{\strff{h}}$
and the \daob{} lower bound $\doLow{\atrs}{\strff{h}}$,
consider the following stream function specification:
\begin{align}
  \strfunap{h}{ \strcns{\datf{0}}{\strcns{x}{\astr}} }
  &\red \strcns{x}{\strcns{x}{\strfunap{h}{\strcns{\datf{0}}{\astr}}}}
  \tag{$\rho_{\strff{h}_\datf{0}}$}\label{ex:quant1}
  \\
  \strfunap{h}{\strcns{\datf{1}}{\strcns{x}{\astr}}}
  &\red \strcns{x}{\strfunap{h}{\strcns{\datf{0}}{\astr}}}
  \tag{$\rho_{\strff{h}_\datf{1}}$}\label{ex:quant2}
\end{align}
with $\Sd = \{\datf{0},\datf{1}\}$.
Then $\sprdmod{\strff{h}}(n) = \cosubtr{2n}{3}$ is
the optimal production modulus of the stream function $\strff{h}$.
To obtain this bound one has to take into account
that the data element $\datf{0}$ is supplied to the recursive call
and conclude that \eqref{ex:quant2}
is only applicable in the first step
of a rewrite sequence
$\funap{\strff{h}}{\strcns{\adattrm_1}{\strcns{\ldots}{\strcns{\adattrm_n}{\sigma}}}} \to \ldots$.
However, the \daob{} lower bound is $\doLow{\atrs}{\strff{h}}(n) = \cosubtr{n}{1}$,
derived from rule~\eqref{ex:quant2}.

The following lemmas state an observation about the role of the opponent and the rewrite player.
Basically, the opponent $\sdg$ can select the rule which is applicable
for each position in the term; the rewrite player $\mathcal{R}$ can choose which position to rewrite.
We use subscripts for pebbles $\trspeb$, for example $\trspeb_w$,
to introduce `names' for referring to these pebbles.

\begin{definition}[Instantiation with respect to a rule $\rho$]\normalfont\label{def:inst:rho}
  For $s \in \ter{\databstr{\asig}}_{\sortS}$ with
  $s \equiv
  \funap{\astrfun}{s_1,\ldots,s_{\arityS{\astrfun}},\pebble_1,\ldots,\pebble_{\arityD{\astrfun}}}$
  and $\rho \in R$ a defining rule of $\astrfun$,
  we define a data-exchange function $\sdgrho{s}{\rho}$ as follows.
  Note that $\rho$ is of the form:
  \[\rho :
   \funap{\astrfun}{
      \strcns{\vec{\adattrm}_1}{\astr_1},
      \ldots,
      \strcns{\vec{\adattrm}_n}{\astr_n},
      \bdattrm_1,\ldots,\bdattrm_{\arityD{\strff{f}}}
    }
   \red r \in R \punc.\]
  For $i = 1,\ldots,\arityS{\astrfun}$ let $n_i \in \nat$ be maximal
  such that $s_i \equiv \strcns{\trspeb_{i,1}}{\ldots \strcns{\trspeb_{i,n_i}}{s'_i}}$.

  Let $\sdgrho{s}{\rho}$ 
  for all $1 \le i \le \arityD{\astrfun}$ instantiate the pebbles $\pebble_i$
  with closed instances  of the data patters $\bdattrm_i$,
  and for all $1 \le i \le \arityS{f}$, $1 \le j \le \min (n_i, \lstlength{\vec{\adattrm}_i})$ instantiate the pebble $\trspeb_{i,j}$
  with closed instances of the data pattern $\vec{u_i}_j$, respectively.
\end{definition}

\begin{lemma}
  Let $s \in \ter{\databstr{\asig}}_{\sortS}$, 
  $s \equiv \funap{\astrfun}{s_1,\ldots,s_{\arityS{\astrfun}},\pebble_1,\ldots,\pebble_{\arityD{\astrfun}}}$
  and $\rho \in R$ a defining rule of $\astrfun$; we use the notation from Def.~\ref{def:inst:rho}.
  Then $\dgrho{s}{\rho}{s}$ is a redex if and only if for all $1 \le i \le \arityS{f}$ we have: $\vec{\adattrm}_i \le n_i$,
  that is, there is `enough supply for the application of $\rho$'.
  Furthermore if $\dgrho{s}{\rho}{s}$ is a redex, then it is a redex with respect to $\rho$
  (and no other rule from $R$ by orthogonality).
\end{lemma}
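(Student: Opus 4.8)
The first step is to record the shape of $s' \defdby \dgrho{s}{\rho}{s}$; recall that a term is a redex exactly when it is an instance of a left-hand side. A data-exchange function only replaces pebbles by closed data terms in constructor normal form and leaves the stream skeleton untouched, so $\databstr{s'} \equiv s$ and $s' \equiv \funap{\astrfun}{t_1,\ldots,t_{\arityS{\astrfun}},w_1,\ldots,w_{\arityD{\astrfun}}}$, where each $w_j$ is a closed instance of the data pattern $\bdattrm_j$ of $\rho$ (so $w_j$ matches $\bdattrm_j$), and each $t_i$ has the form $\strcns{c_{i,1}}{\strcns{\ldots}{\strcns{c_{i,n_i}}{t'_i}}}$ with: (a) because the head of every $\sstrcns$ in the data-abstracted argument $s_i$ is a pebble and $n_i$ is maximal, $s'_i$, and hence $t'_i$, is not itself a $\sstrcns$-application, so that $t_i$ carries \emph{exactly} $n_i$ leading occurrences of $\sstrcns$; and (b) by Def.~\ref{def:inst:rho}, $c_{i,j}$ is a closed instance of the $j$-th component of $\vec{\adattrm}_i$ whenever $1 \le j \le \min(n_i,\lstlength{\vec{\adattrm}_i})$.

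\emph{Sufficiency.} Suppose $\lstlength{\vec{\adattrm}_i} \le n_i$ for all $i$. Then $\min(n_i,\lstlength{\vec{\adattrm}_i}) = \lstlength{\vec{\adattrm}_i}$, so $t_i$ has at least $\lstlength{\vec{\adattrm}_i}$ leading $\sstrcns$'s whose first $\lstlength{\vec{\adattrm}_i}$ data elements are closed instances of the components of $\vec{\adattrm}_i$; hence $t_i$ matches the pattern $\strcns{\vec{\adattrm}_i}{\astr_i}$, binding $\astr_i$ to the remaining tail, while $w_j$ matches $\bdattrm_j$. So $s'$ is an instance of the left-hand side of $\rho$, i.e.\ a redex with respect to $\rho$. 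Since $\atrs$ is orthogonal, the left-hand sides of two distinct defining rules of $\astrfun$ are non-unifiable and hence share no common instance, so $s'$ is a redex with respect to $\rho$ and to no other rule.

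\emph{Necessity, and the ``furthermore'' part.} Assume $s'$ is a redex. As $s'$ is rooted at the defined symbol $\astrfun$, it is an instance of the left-hand side of some defining rule $\rho'$ of $\astrfun$, with stream patterns $\strcns{\vec{\adattrm}'_i}{\astr'_i}$, say. Because $t_i$ has exactly $n_i$ leading $\sstrcns$'s, matching forces $\lstlength{\vec{\adattrm}'_i} \le n_i$ for every $i$. If $\rho' = \rho$, this already says $\lstlength{\vec{\adattrm}_i} \le n_i$ for all $i$, which is ``enough supply''. If $\rho' \ne \rho$, I would derive a contradiction with orthogonality by exhibiting a common instance $t^{*} \equiv \funap{\astrfun}{t^{*}_1,\ldots,t^{*}_{\arityS{\astrfun}},w_1,\ldots,w_{\arityD{\astrfun}}}$ of the left-hand sides of $\rho$ and $\rho'$: let $t^{*}_i$ consist of $\max(\lstlength{\vec{\adattrm}_i},\lstlength{\vec{\adattrm}'_i})$ occurrences of $\sstrcns$ closed off by a fresh variable, with $j$-th data element $c_{i,j}$ for $j \le n_i$ and a closed instance of the $j$-th component of $\vec{\adattrm}_i$ for $n_i < j \le \lstlength{\vec{\adattrm}_i}$ (such instances exist by non-emptiness of the data sort). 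Then every data position that $\rho$ inspects carries a closed instance of the corresponding component of $\vec{\adattrm}_i$; every position that $\rho'$ inspects lies at depth $\le \lstlength{\vec{\adattrm}'_i} \le n_i$ and hence equals some $c_{i,j}$, which matches the corresponding component of $\vec{\adattrm}'_i$ because $s'$ itself matches $\rho'$ there; and $w_j$ matches both $\bdattrm_j$ and the $j$-th data pattern of $\rho'$. So $t^{*}$ is a common instance of the left-hand sides of the distinct defining rules $\rho$ and $\rho'$, contradicting orthogonality. Hence $\rho' = \rho$, which yields necessity and, with the previous paragraph, the ``furthermore'' part.

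The crux is the subcase $\rho' \ne \rho$: one must rule out that, although the stream supply is too short for $\rho$, the opponent's data choices accidentally activate a \emph{different} defining rule of $\astrfun$. Setting up the common instance takes a little care with the three quantities $n_i$, $\lstlength{\vec{\adattrm}_i}$ and $\lstlength{\vec{\adattrm}'_i}$, but needs no idea beyond the non-overlap property of orthogonal systems; everything else is routine pattern-matching bookkeeping.
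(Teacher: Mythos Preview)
Your proof is correct and follows the same underlying approach as the paper: both arguments rest on $\atrs$ being an orthogonal constructor TRS. The paper's proof is a one-liner (``A consequence of the fact that $\atrs$ is an orthogonal constructor TRS''), whereas you have carefully unpacked what this buys---in particular, you explicitly construct the common instance $t^{*}$ that witnesses an overlap in the case $\rho' \neq \rho$, which is exactly the detail the paper leaves implicit.
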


\begin{proof}
  A consequence of the fact that $\atrs$ is an orthogonal constructor TRS. \qed
\end{proof}

\begin{lemma}\label{lem:gforce}
  Let $\sdg$ be a data-exchange function,
  $t \in \ter{\databstr{\asig}}$ a term
  and define $\mathcal{P} = \{ p \where p \in \pos{t},\; \rootsymb{t|_p} \in \Sigma\}$.
  For every $\varsigma : \mathcal{P} \to R$ such that
  $\funap{\varsigma}{p}$ is a defining rule for $\rootsymb{t|_p}$ for all $p \in \mathcal{P}$,
  there is a data-exchange function $\sdgalt{t}{\varsigma}$
  such that for all $p \in \mathcal{P}$
  if the term $\dgalt{t}{\varsigma}{t}|p$ is a redex, then it is an $\funap{\varsigma}{p}$-redex,
  and $\dgalt{t}{\varsigma}{s} \equiv \dg{s}$ for all $s \not \equiv t$.
\end{lemma}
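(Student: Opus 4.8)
The plan is to define the data-exchange function $\sdgalt{t}{\varsigma}$ directly: it agrees with $\sdg$ on every argument $s\not\equiv t$, and on $t$ it replaces each pebble of $t$ by a suitably chosen closed data term in constructor normal form. The only non-routine point is to show that the requirements which the various positions $p\in\mathcal{P}$ impose --- each $p$ wanting certain pebbles of $t$ instantiated exactly as in Def.~\ref{def:inst:rho} so that $\funap{\varsigma}{p}$ becomes applicable at $p$ --- never conflict, because every pebble occurrence of $t$ is ``owned'' by at most one such $p$.

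To make ownership precise, fix $p\in\mathcal{P}$. Then $\rootsymb{t|_p}$ is a defined stream symbol (a stream function or stream constant symbol: data symbols do not occur in data-abstracted terms, and the stream constructor $\sstrcns$ has no defining rule), so $t|_p\equiv\funap{\astrfun}{s_1,\ldots,s_{\arityS{\astrfun}},\pebble_1,\ldots,\pebble_{\arityD{\astrfun}}}$; let $n_i$ be the maximal number of leading pebbles $\trspeb_{i,1},\ldots,\trspeb_{i,n_i}$ of $s_i$, and let $\strcns{\vec{\adattrm}_i}{\astr_i}$ ($1\le i\le\arityS{\astrfun}$) be the stream patterns of $\funap{\varsigma}{p}$. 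Declare $p$ to \emph{own} the pebbles $\pebble_1,\ldots,\pebble_{\arityD{\astrfun}}$ together with the pebbles $\trspeb_{i,j}$ for $1\le i\le\arityS{\astrfun}$ and $1\le j\le\min(n_i,\lstlength{\vec{\adattrm}_i})$ --- precisely the pebbles that $\sdgrho{t|_p}{\funap{\varsigma}{p}}$ instantiates. I would then check disjointness of ownership by a short argument on positions in $t$: an owned data-argument pebble has a defined symbol as its immediate parent, whereas an owned leading pebble $\trspeb_{i,j}$ is the data head of a $\sstrcns$-node, and walking upward from it one passes only through $\sstrcns$-nodes until reaching a node that is a stream argument of a uniquely determined defined symbol, or until reaching the root of $t$ --- which is then not a $\mathcal{P}$-position, so the pebble is unowned. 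In either case the pebble has at most one owner; hence distinct $p,p'\in\mathcal{P}$ own disjoint sets of pebbles.

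Granting this, I would define $\dgalt{t}{\varsigma}{s}\equiv\dg{s}$ for all $s\not\equiv t$, and let $\dgalt{t}{\varsigma}{t}$ be $t$ with every pebble owned by some $p\in\mathcal{P}$ replaced by a closed constructor-normal-form instance of the corresponding data pattern of $\funap{\varsigma}{p}$ (exactly as in Def.~\ref{def:inst:rho}), and every unowned pebble replaced by a fixed closed data constructor normal form, which exists since data sorts are non-empty. As only pebbles are changed, and only into closed data constructor normal forms, $\databstr{\dgalt{t}{\varsigma}{t}}=t$ and $\dgalt{t}{\varsigma}{t}$ is data-constructor normal; so $\sdgalt{t}{\varsigma}$ is a data-exchange function, and $\dgalt{t}{\varsigma}{s}\equiv\dg{s}$ for $s\not\equiv t$ holds by construction. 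Finally, for each $p\in\mathcal{P}$, disjointness of ownership implies that the pebbles which $\funap{\varsigma}{p}$ inspects in $t|_p$ --- its data arguments and the leading stream pebbles up to its pattern depth --- are instantiated in $\dgalt{t}{\varsigma}{t}$ just as $\sdgrho{t|_p}{\funap{\varsigma}{p}}$ instantiates them in $t|_p$. Since redex-hood at the root of $t|_p$, and the rule it matches, depend only on these positions, the preceding lemma, applied with $s:=t|_p$ and $\rho:=\funap{\varsigma}{p}$ and using orthogonality to rule out any other rule, yields: if $\dgalt{t}{\varsigma}{t}|_p$ is a redex, then it is a $\funap{\varsigma}{p}$-redex. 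This establishes the claim.

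I expect the disjointness-of-ownership step to be the main obstacle. It is intuitively evident, but a careful proof must reason about positions in the constructor term $t$ and about the two roles a pebble can play --- data argument of a defined symbol, or leading element of one of its stream arguments --- and show that these assignments cannot overlap. Once this is in place, the rest is bookkeeping together with an appeal to the preceding lemma.
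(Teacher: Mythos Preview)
Your proposal is correct and follows essentially the same route as the paper: construct $\sdgalt{t}{\varsigma}$ by instantiating, for each $p\in\mathcal{P}$, the pebbles that $\funap{\varsigma}{p}$'s left-hand side inspects with closed instances of the corresponding data patterns, and conclude via orthogonality. Your explicit ``ownership'' formulation and the disjointness argument make precise what the paper's short proof leaves implicit (its side condition on the path from $p$ to the instantiated position plays the same role), so the two proofs differ only in level of detail, not in idea.

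One small imprecision: the preceding lemma is stated for the specific instantiation $\sdgrho{s}{\rho}$, so it does not literally apply to $\dgalt{t}{\varsigma}{t}|_p$, whose unowned pebbles may be filled differently. Your claim that ``the rule it matches depends only on these positions'' is not quite true as stated (a rule with a deeper pattern inspects more), but the conclusion still holds: any other rule $\rho'$ must, by orthogonality, clash with $\funap{\varsigma}{p}$ at a position that is non-variable in \emph{both} left-hand sides, hence inside $\funap{\varsigma}{p}$'s pattern and therefore covered by an owned pebble (or beyond the available leading pebbles, so $\rho'$ cannot match anyway). That is the argument you should spell out rather than invoking the preceding lemma directly.
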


\begin{proof}
  For all $p \in \mathcal{P}$ and $\funap{\varsigma}{p} \equiv \ell \to r$
  we alter $\sdg$ to obtain a data-guess function $\sdgalt{t}{\varsigma}$ as follows.
  If $q \in \pos{\ell}$ such that $\ell|_q$ is a data term
  and none of the $t(q\cdot p')$ with $p'$ a non empty prefix of $p$ is a symbol from $\Ssf$,
  then instantiate the data term at position $q\cdot p$ in $t$
  with an instance of the data pattern $\ell|_q$.
  Then if $\dgalt{t}{\varsigma}{t}$ is a redex, then by orthogonality of $\atrs$ it can only be a $\ell \to r$-redex.
  \qed
\end{proof}

\subsection*{History Aware Data-Exchange}

Above we have modelled the opponent using history free
data-exchange strategies and the rewrite player was omniscient,
that is, she always chooses the best possible rewrite sequence,
which produces the maximum possible number of elements.

Now we investigate the robustness of our definition.
We strengthen the opponent, allowing for history aware data-exchange strategies,
and we weaken the rewrite player, dropping omniscience,
assuming only an outermost-fair rewrite strategy.
However it turns out that
these changes do not affect the \daob{} production range $\doRng{\atrs}{s}$,
in this way
providing evidence for the robustness of our first definition.

\begin{definition}\normalfont\label{def:history}
  Let $\atrs = \pair{\asig}{R}$ be a stream specification.
  A \emph{history in $\atrs$} is a finite list in
  $ %
    (\ter{\asig}_{\sortS} \times R \times \nat^* \times \ter{\asig}_{\sortS})^*$
  of the form:
  \begin{align*}
    \quadruple{s_0}{\ell_0 \to r_0}{p_0}{t_0}\quadruple{s_1}{\ell_1 \to r_1}{p_1}{t_1}\ldots\quadruple{s_n}{\ell_n \to r_n}{p_n}{t_n}
  \end{align*}
  such that $\forall 0 \le i \le n$: $s_i \red t_i$
  by an application of rule $\ell_i \to r_i$ at position $p_i$.
  We use $\hist{\atrs}$ to denote the set of all histories of $\atrs$.

  A function $\sdg \funin \hist{\atrs} \to
                   (\ter{\databstr{\asig}}_{\sortS} \to \ter{\asig}_{\sortS})$
  from histories in $\atrs$ to data-exchange functions on $\atrs$
  is called a \emph{history-aware data-exchange function on $\atrs$}.
  For such a function we write
  $\sdgh{h}$ as shorthand for $\funap{\sdg}{h}$
  for all $h\in\hist{\atrs}$.
\end{definition}

\begin{definition}\normalfont\label{def:d-o-rewriting-history}
  Let $\atrs = \pair{\asig}{R}$ be a stream specification.
  Let $\sdg$ be a history-aware data-exchange function.
  We define the ARS
  $\dgarsh{\atrs}{\sdg} \subseteq
  (\ter{\databstr{\asig}}_{\sortS} \times \hist{\atrs}) \times (\ter{\databstr{\asig}}_{\sortS} \times \hist{\atrs})$
  as follows:
  \begin{align*}
    \dgarsh{\atrs}{\sdg} \defdby
    \{\pair{\databstr{s}}{h} &\to \pair{\databstr{t}}{h\quadruple{s}{\ell \red r}{p}{t}} \mid\\
      & s,t \in \ter{\asig}\text,\; h \in \hist{\atrs} \text{, with }
                \dgh{h}{\databstr{s}} = s
                \text{ and } s \to_\atrs t\} \punc .
  \end{align*}
  For $\bars \subseteq \dgarsh{\atrs}{\sdg}$ we define the \emph{production function $\sterprd{\bars} \funin \ter{\asig}_S \to \conat$} by:
  \[
    \terprd{\bars}{\astrtrm}
    \defdby
    \sup\, \{\, n\in\nat \where \pair{\astrtrm}{\lstemp}
             \mred_{\bars} \pair{\strcns{\adattrm_1}{\strcns{\ldots}{\strcns{\adattrm_n}{\bstrtrm}}}}{h} \,\}
    \punc,
  \]
  that is, the production of $\astrtrm$ starting with empty history $\lstemp$.
\end{definition}

In an ARS~$\dgarsh{\atrs}{\sdg}$ we allow to write
$s_0 \red_{\dgarsh{\atrs}{\sdg}} s_1 \red_{\dgarsh{\atrs}{\sdg}}
 s_2 \red_{\dgarsh{\atrs}{\sdg}} \ldots \;$
when we will actually mean a rewrite sequence in $\dgarsh{\atrs}{\sdg}$
of the form
$ \pair{s_0}{h_0} \red_{\dgarsh{\atrs}{\sdg}}
  \pair{s_1}{h_1} \red_{\dgarsh{\atrs}{\sdg}}
  \pair{s_2}{h_2} \red_{\dgarsh{\atrs}{\sdg}} \ldots \;$
with some histories $h_0,h_1,h_2,\ldots\in\hist{\atrs}$.

The notion of \emph{positions of rewrite steps and redexes}
extends to the ARS $\dgarsh{\atrs}{\sdg}$ in the obvious way,
since the steps in $\dgarsh{\atrs}{\sdg}$ are projections of steps in $\atrs$.
Note that the opponent can change the data elements in $\dgarsh{\atrs}{\sdg}$ at any time,
therefore the rules might change with respect to which a certain position is a redex.
For this reason we adapt the notion of outermost-fairness for our purposes.
Intiutively, a rewrite sequence is outermost-fair
if every position which is always eventually an outermost redex position, is rewritten infinitely often.
Formally, we define a rewrite sequence $\pair{s_1}{h_1} \to \pair{s_2}{h_2} \to \ldots$ in $\dgarsh{\atrs}{\sdg}$
to be \emph{outermost-fair}
if it is finite, or it is infinite and
whenever $p$ is an outermost redex position for infinitely many $s_{n}$, $n \in \nat$,
then there exist infinitely many rewrite steps $s_m \to s_{m+1}$, $m \in \nat$, at position $p$.
Moreover, a strategy $\bars \subseteq \dgarsh{\atrs}{\sdg}$ is \emph{outermost-fair}
if every rewrite sequence with respect to this strategy is outermost-fair.

\begin{proposition}\label{prop:omfair:strat}
  For every stream specification $\atrs = \pair{\asig}{R}$
  there exists a computable, deterministic om-fair strategy
  on $\dgarsh{\atrs}{\sdg}$.
\end{proposition}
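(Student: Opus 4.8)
The plan is to build such a strategy explicitly as a deterministic \emph{fair scheduler} over redex positions, making essential use of the fact that a state of $\dgarsh{\atrs}{\sdg}$ is a pair $\pair{\databstr{s}}{h}$ carrying the \emph{entire} history $h$. This means a strategy $\bars \subseteq \dgarsh{\atrs}{\sdg}$ may base its single choice of successor step on any auxiliary data that is computable from $h$; in particular, by Def.~\ref{def:d-o-rewriting-history} one recovers from $h$ the concrete term $s = \dgh{h}{\databstr{s}}$ currently presented to the rewrite player, and, since $\atrs$ is orthogonal, every redex position of $s$ determines a unique applicable rule. So it suffices to specify, as a function of $\pair{\databstr{s}}{h}$, one outermost-redex position of $s$ to contract whenever $s$ is not an $\atrs$-normal form.

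The scheduler I would use maintains (as a function of $h$) a queue $Q$ of positions, and for each position in $Q$ a \emph{waiting time}, namely the number of earlier states along the reduction at which it was an outermost-redex position but was not contracted. At state $\pair{\databstr{s}}{h}$, let $s$ be the concrete term $\dgh{h}{\databstr{s}}$; if $s$ is an $\atrs$-normal form the $\bars$-reduction halts here, and a finite reduction is outermost-fair by definition. Otherwise: (i) append to the back of $Q$, in a fixed canonical order (say breadth-first, left-to-right), every outermost-redex position of $s$ that does not already occur in $Q$; (ii) among the members of $Q$ that are outermost-redex positions of $s$, pick one of maximal waiting time, breaking ties by a fixed well-order on $\nat^{\ast}$; (iii) contract the unique redex at that position, remove it from $Q$, and leave all other members of $Q$ in place --- crucially, positions that have only \emph{temporarily} ceased to be outermost-redex positions are not discarded, because the opponent $\sdg$ may turn them back into redexes. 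This choice is deterministic and, given $\sdg$, computable; by Def.~\ref{def:d-o-rewriting-history} the resulting step is a step of $\dgarsh{\atrs}{\sdg}$; and step~(i) ensures that from any non-normal-form state $Q$ is nonempty in step~(ii), so $\bars$ blocks only at genuine normal forms. Hence $\bars$ is a deterministic, computable strategy on $\dgarsh{\atrs}{\sdg}$.

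It remains --- and this is the main obstacle --- to prove that every $\bars$-reduction is outermost-fair: if a position $p$ is an outermost-redex position of $s_n$ for infinitely many $n$, then $p$ is contracted infinitely often. The delicate point is exactly that the opponent $\sdg$ may alter the set of redex positions between consecutive steps (it controls which defining rule fires, and, when a head symbol has only partially-defined arguments, even whether a redex is present at all), so $p$ can be ``available'' precisely at the moments when it is not at the front of $Q$. I would argue by contradiction. Assume $p$ is contracted only finitely often, say last at step $N$ (or never; take $N = 0$). Because each step removes at least the front element of $Q$ while new entries appear only at the back, every queued position reaches the front after finitely many steps; together with the fact that $p$ is re-appended to $Q$ whenever it is an outermost-redex position and not already in $Q$, this forces $p$ to be enqueued, and hence to reach the front of $Q$, infinitely often after step $N$. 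At each such moment $p$ is not contracted (it would contradict the choice of $N$), so it is \emph{skipped}, meaning some position of strictly larger waiting time is selected instead. But after step $N$ the waiting time of $p$ tends to infinity, while every position selected in its place has its waiting time reset to $0$. Tracking waiting times then yields a contradiction: a position repeatedly preferred over $p$ would itself need to re-accumulate unboundedly large waiting time between its own selections, hence be an outermost-redex position unboundedly often without being contracted in between, and iterating this observation over the --- at every stage finitely many --- members of $Q$ produces an impossible infinite regress in the waiting-time order. Therefore $p$ is contracted infinitely often.

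Putting the pieces together, $\bars$ is a deterministic, computable (relative to $\sdg$) strategy all of whose reductions are outermost-fair, which is the claim. The step I expect to require real care is the outermost-fairness argument just sketched, where the adversarial toggling of redex-hood by $\sdg$ must be tamed by the waiting-time priority; in a full write-up I would replace the informal ``infinite regress'' by an explicit potential/counting argument on waiting times, and I would also need to check carefully that the queue and waiting-time bookkeeping are genuinely reconstructible, and computable, from the history $h$ recorded in the state.
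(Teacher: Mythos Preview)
The paper states this proposition without proof, so there is no authors' argument to compare against; it is treated as a routine fact about fair scheduling in the presence of the history component of $\dgarsh{\atrs}{\sdg}$.

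Your construction is on the right track --- exploiting that the state $\pair{\databstr{s}}{h}$ carries the full history $h$, hence lets the strategy maintain arbitrary bookkeeping --- and your candidate scheduler is a reasonable one. But the fairness argument, as you yourself flag, does not close as written, and I think the specific ``maximal waiting time'' priority you chose is what makes it hard. The regress you sketch (``a position repeatedly preferred over $p$ would itself need unbounded waiting time, hence some other position beats it, \ldots'') is not over a fixed finite set of positions but over moments in time, and at each moment the competing positions may be different; meanwhile the opponent can keep the term growing so that the number of outermost-redex positions per step is unbounded. The global counting you allude to then only yields $\sum_{m\le t_J}\lvert\text{omr}(s_m)\rvert \ge \Theta(J^2)$ for the $J$-th visit of $p$, which is not a contradiction.

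A small change fixes this cleanly: select by \emph{insertion time} (pure FIFO among the currently outermost-redex members of $Q$) rather than by accumulated waiting time. If $p$ is omr infinitely often but never contracted after some step $N$, then $p$ sits in $Q$ from some step $T>N$ on with fixed insertion time $T$. At every later step where $p$ is omr but skipped, the selected position has insertion time $\le T$ and is removed; since positions re-enter $Q$ only with a fresh insertion time $>T$, the finite set $\{q\in Q:\text{ins-time}(q)\le T,\ q\neq p\}$ strictly shrinks at each such step and is never replenished. After finitely many visits $p$ must be selected --- contradiction. This gives exactly the deterministic, computable, om-fair strategy the proposition asserts, with a two-line fairness proof instead of a delicate potential argument.
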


\begin{definition}\normalfont
  The \emph{history aware \daob{} production range $\doRngh{\atrs}{\astrtrm}$}
  of $s \in \ter{\databstr{\asig}}_{\sortS}$ is defined as follows:
  \begin{align*}
  \doRngh{\atrs}{\astrtrm} \defdby  \bigl\{ \terprd{\bars}{\astrtrm} \mid{}
     &\sdg \text{ a history-aware data-exchange function on } \atrs,\\
     &\bars \subseteq \dgarsh{\atrs}{\sdg} \text{ an outermost-fair strategy} \bigr\} \punc .
  \end{align*}
  For $s \in \ter{\asig}_{\sortS}$ define $\doRngh{\atrs}{\astrtrm} \defdby \doRngh{\atrs}{\databstr{\astrtrm}}$.
\end{definition}

Note that also the strategy of the rewrite player $\bars \subseteq \dgarsh{\atrs}{\sdg}$ is history aware
since the elements of the ARS $\dgarsh{\atrs}{\sdg}$ have history annotations.

\begin{lemma}\label{lem:do:lowh:ind}
  Let $\atrs = \pair{\asig}{R}$ be a stream specification.
  Let $\astrfun\in\Ssf$ with
  $\arityS{\astrfun} = n$ and $\arity{\astrfun} = n'$,
  and let $s_1,\ldots,s_n\in\ter{\asig}_{\sortS}$,
  $u_1,\ldots,u_{n'}\in\ter{\asig}_{\sortD}$.
  Then it holds:
  \begin{align}\label{eq1:lem:do:lowh:id}
    \doLowh{\atrs}{\funap{\astrfun}{s_1,\ldots,s_n, u_1,\ldots,u_{n'}}} &=
    \doLowh{\atrs}{\funap{\astrfun}{\strcns{\trspeb^{m_1}}{\astr},\ldots,\strcns{\trspeb^{m_n}}{\astr},u_1,\ldots,u_{n'}}}
    \punc,
  \end{align}
  where $m_i \defdby \doLowh{\atrs}{s_i}$ for all $i\in\{1,\ldots,n\}$.
\end{lemma}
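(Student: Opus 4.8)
The plan is to establish the two inequalities $\doLowh{\atrs}{t} \le \doLowh{\atrs}{t^{\circ}}$ and $\doLowh{\atrs}{t} \ge \doLowh{\atrs}{t^{\circ}}$ separately, where $t \defdby \funap{\astrfun}{s_1,\ldots,s_n,u_1,\ldots,u_{n'}}$ and $t^{\circ} \defdby \funap{\astrfun}{c_1,\ldots,c_n,u_1,\ldots,u_{n'}}$ with $c_i \defdby \strcns{\trspeb^{m_i}}{\astr}$ (interpreting $c_i$ as the infinite normal form $\str{\trspeb}$ when $m_i = \conattop$). Two preliminary observations carry most of the routine bookkeeping. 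First, every set $\doRngh{\atrs}{\cdot}$ is a nonempty subset of $\conat = \nat \cup \{\conattop\}$, so the infimum defining $\doLowh{\atrs}{\cdot}$ is \emph{attained}; hence there are witnessing data-exchange functions and outermost-fair strategies realising $\doLowh{\atrs}{t^{\circ}}$ and each $\doLowh{\atrs}{s_i}$. Second, $c_i$ is already in data-abstracted normal form and no data-exchange function can turn it into a redex, so $\doRngh{\atrs}{c_i} = \{m_i\}$ and in particular $\doLowh{\atrs}{c_i} = m_i = \doLowh{\atrs}{s_i}$. (Iterating one stream argument at a time, it would in fact suffice to treat the replacement of a single $s_i$ with the remaining arguments sitting unchanged in the surrounding context, but the simultaneous version is no harder.)

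For $\doLowh{\atrs}{t} \le \doLowh{\atrs}{t^{\circ}}$ I would use a \emph{gluing} construction. Fix a data-exchange function $\sdg$ and an outermost-fair strategy $\bars$ on $\dgarsh{\atrs}{\sdg}$ realising $\doLowh{\atrs}{t^{\circ}}$, and for each $i$ a pair $(\sdg_i,\bars_i)$ realising $\doLowh{\atrs}{s_i} = m_i$. Since $\sstrcns$ is a constructor and is therefore never rewritten, constructor prefixes are non-decreasing along every reduction, so under $(\sdg_i,\bars_i)$ the subterm $s_i$ exposes exactly $m_i$ stream elements, reaching this number after finitely many steps when $m_i$ is finite. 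I then build a data-exchange function and strategy on $t$ that run $\bars$ on the $\astrfun$-unfolding part of the term, run $\bars_i$ inside every copy of $s_i$, and interleave all active components in round-robin fashion (with the data exchange acting as $\sdg$ on unfolding positions and as $\sdg_i$ inside copies of $s_i$, which the history annotations allow one to distinguish). A round-robin interleaving of outermost-fair strategies is again outermost-fair, since any position that is outermost infinitely often belongs to one component and is visited infinitely often there. As the canonical arguments $c_i$ of $t^{\circ}$ are normal forms, $\bars$ never reduces inside them, so every demand of the $\astrfun$-unfolding on argument $i$ concerns one of its first $m_i$ elements; consequently the unfolding part of the glued reduction evolves exactly as in the $t^{\circ}$-run, only stalling occasionally until $\bars_i$ has produced enough. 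Its production therefore equals $\terprd{\bars}{t^{\circ}} = \doLowh{\atrs}{t^{\circ}}$, which yields the inequality.

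For $\doLowh{\atrs}{t} \ge \doLowh{\atrs}{t^{\circ}}$ I would run the simulation backwards. Fix a data-exchange function and outermost-fair strategy on $t$ realising $\doLowh{\atrs}{t} = p$, and transport its restriction to the $\astrfun$-unfolding verbatim onto $t^{\circ}$. If along the transported reduction the unfolding ever demands more than $m_i$ elements of some argument $i$, then it is blocked at a normal form (the $\astrfun$-redex cannot fire and the $c_i$ are normal forms), hence vacuously outermost-fair, with production at most $p$; otherwise it is faithful to the $t$-run and has production exactly $p$. Either way $\doLowh{\atrs}{t^{\circ}} \le p = \doLowh{\atrs}{t}$. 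The step I expect to be the main obstacle is excluding the bad case in which the $t$-run leaves a copy of some $s_i$ stuck below $m_i$ elements while the $\astrfun$-unfolding nevertheless waits for more than $m_i$ of that argument: if the stuck copy had a redex it would eventually be an outermost redex (the $\astrfun$-applications above it cannot fire without consuming a longer prefix than is available), and outermost-fairness of the $t$-run would force it, contradicting stuckness; and if the stuck copy had no redex it would be a normal form reachable from $s_i$ with fewer than $m_i$ elements, contradicting $m_i = \doLowh{\atrs}{s_i}$, since the reduction of $s_i$ to that normal form is itself outermost-fair. Making this precise — isolating a genuine outermost-fair reduction of $s_i$ from within the $t$-run and tracking the histories carried by $\dgarsh{\atrs}{\sdg}$ — is where the care is needed, and Lemma~\ref{lem:gforce} together with Proposition~\ref{prop:omfair:strat} are the tools I would invoke for it.
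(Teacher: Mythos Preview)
Your proposal follows essentially the same route as the paper: both directions are proved by constructing witnessing outermost-fair reductions, via gluing a $t^\circ$-realizer with $s_i$-realizers for ``$\le$'' and via projecting the $\astrfun$-unfolding steps of a $t$-realizer onto $t^\circ$ for ``$\ge$''. The paper glues sequentially (first run the $s_i$-realizers in context until enough prefix is exposed, then run the $t^\circ$-realizer, then interleave tails) rather than round-robin, and likewise extends the projected sequence to a deterministic outermost-fair strategy via Proposition~\ref{prop:omfair:strat}; these differences are cosmetic.

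One small correction to your ``main obstacle'': the case you spell out --- a copy of $s_i$ stuck below $m_i$ while the $\astrfun$-unfolding waits for \emph{more than $m_i$} --- is harmless, since then the transported reduction on $t^\circ$ also blocks at a normal form. The case that actually needs care is $s_i$ stuck at some $k_i < m_i$ while the $\astrfun$-unfolding waits for $k$ with $k_i < k \le m_i$: here the $t$-run is blocked but the transport on $t^\circ$ ends at a non-normal-form, so it is not automatically outermost-fair and must be extended. Your exclusion argument (an $s_i$-redex would become outermost and be forced; an $s_i$ in normal form with fewer than $m_i$ guards would yield an outermost-fair reduction witnessing $\doLowh{\atrs}{s_i} < m_i$) handles this corrected case just as well. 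The paper's own proof, incidentally, does not isolate this point --- it simply asserts that ``due to the definition of the numbers $m_i$, always enough pebble-supply is guaranteed'' --- so your treatment is, if anything, more scrupulous than the original.
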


\begin{proof}
  Let $\atrs = \pair{\asig}{R}$ be a stream specification.
  Let $\astrfun\in\Ssf$ with $\arityS{\astrfun}=n$,
  and $s_1,\ldots,s_n\in\ter{\asig}_{\sortS}$.
  We assume that the data arity of $\astrfun$
  is zero; in the presence of data arguments the proof
  proceeds analogously. %
  Furthermore we let
  $m_i \defdby \doLowh{\atrs}{s_i}$ for $i\in\{1,\ldots,n\}$.
  We show \eqref{eq1:lem:do:lowh:id} by demonstrating the
  inequalities ``$\le$'' and ``$\ge$'' in this equation.
  \begin{description}
    \item[$\ge$:]
      Suppose that
      $ \doLowh{\atrs}{\funap{\astrfun}{s_1,\ldots,s_n}}
        = \terprd{\bars}{\databstr{\funap{\astrfun}{{s_1},\ldots,{s_n}}}} = N\in\nat $
      for some om-fair strategy $\bars\subseteq\dgarsh{\atrs}{\sdg}$
      and some history-aware data-exchange function $\sdg$ on $\atrs$.
      (In case that
       $\doLowh{\atrs}{\funap{\astrfun}{s_1,\ldots,s_n}} = \infty$,
       nothing remains to be shown.)
      Then there exists an om-fair rewrite sequence
      \begin{multline*}
        \xi \funin
        \databstr{\funap{\astrfun}{{s_1},\ldots,{s_n}}}
          \red_{\dgarsh{\atrs}{\sdg}}  u_1
          \red_{\dgarsh{\atrs}{\sdg}}  u_2
          \red_{\dgarsh{\atrs}{\sdg}}  \ldots
          \\
          \ldots
          \red_{\dgarsh{\atrs}{\sdg}}  u_l
          \red_{\dgarsh{\atrs}{\sdg}}  u_{l+1}
          \red_{\dgarsh{\atrs}{\sdg}}  \ldots
      \end{multline*}
      in $\dgarsh{\atrs}{\sdg}$
      with $\lim_{l\to\infty} \nguards{u_l} = N$.
      Due to the definition of the numbers $m_i$,
      ah history-aware data-exchange function $\sdg'$ from $\hist{\atrs}$ 
      to data-exchange functions
      can be defined that enables a rewrite sequence
      \begin{multline*}
        \xi' \funin
        \funap{\astrfun}{\strcns{\trspeb^{m_1}}{\astr},\ldots,\strcns{\trspeb^{m_n}}{\astr}}
          \red^=_{\dgarsh{\atrs}{\sdg'}}  u'_1
          \red^=_{\dgarsh{\atrs}{\sdg'}}  u'_2
          \red^=_{\dgarsh{\atrs}{\sdg'}}  \ldots
          \\
          \ldots
          \red^=_{\dgarsh{\atrs}{\sdg'}}  u'_l
          \red^=_{\dgarsh{\atrs}{\sdg'}}  u'_{l+1}
          \red^=_{\dgarsh{\atrs}{\sdg'}}  \ldots 
      \end{multline*}
      with the properties that
      $\nguards{u_l} = \nguards{u'_l}$ holds for all $l\in\nat$,
      and that $\xi'$ is the projection of $\xi$ to
      a rewrite sequence with source
      $\funap{\astrfun}{\strcns{\trspeb^{m_1}}{\astr},\ldots,\strcns{\trspeb^{m_n}}{\astr}}$.
      More precisely, $\xi'$ arises as follows:
      Steps $ u_l \red_{\dgarsh{\atrs}{\sdg}}  u_{l+1} $ in $\xi$
      that contract redexes in descendants of the outermost symbol
      $\astrfun$ in the source of $\xi$ give rise
      to steps $ u'_l \red_{\dgarsh{\atrs}{\sdg'}}  u'_{l+1} $
      that contract redexes at the same position and apply the same rule.
      This is possible because, due to the definition of the numbers
      $m_i$, always enough pebble-supply is guaranteed to carry out
      steps in $\xi'$ corresponding to such steps in $\xi$.
      Steps $ u_l \red_{\dgarsh{\atrs}{\sdg}}  u_{l+1} $
      that contract redexes in descendants of one of the subterms
      $s_1$, \ldots, $s_n$ in the source of $\xi$ project to
      empty steps $ u'_l = u'_{l+1} $
      in $\xi'$.
      (On histories $h\in\hist{\atrs}$ 
       and terms $r\in\ter{\databstr{\asig}}_{\sortS}$ occurring in $\xi$, 
       $\sdg'$ is defined in such a way as to make this projection possible.
       On histories $h$ and terms $r$ that do not occur in $\xi'$,
       $\sdg'$ is defined arbitrarily with the only restriction
       $\databstr{\funap{\sdgh{h}}{r}}$
       ensuring that $\sdg'$ behaves as a history-aware data-exchange function
       on these terms.)

      By its construction, $\xi'$ is again om-fair.
      Now let $\bars'$ be the extension
      of the sub-ARS of $\dgarsh{\atrs}{\sdg}$ induced by $\xi'$
      to a deterministic om-fair strategy for $\dgarsh{\atrs}{\sdg'}$.
      (Choose an arbitrary deterministic om-fair strategy~$\tilde{\bars}$
       on $\dgarsh{\atrs}{\sdg}$, which is possible by
       Proposition~\ref{prop:omfair:strat}.
       On term-history pairs that do not occur in $\xi'$,
       define $\bars'$ according to $\tilde{\bars}$.)
      Then $\xi'$ is also a rewrite sequence in $\bars'$
      that witnesses
      $\terprd{\bars}{\funap{\astrfun}{\strcns{\trspeb^{m_1}}{\astr},\ldots,\strcns{\trspeb^{m_n}}{\astr}}}
        = \lim_{l\to\infty} \nguards{u'_l}
        = \lim_{l\to\infty} \nguards{u_l}
        = N
        = \terprd{\bars}{\databstr{\funap{\astrfun}{{s_1},\ldots,{s_n}}}}
        $.
      Now it follows:
      \begin{multline*}
        \doLowh{\atrs}{\funap{\astrfun}{\strcns{\trspeb^{m_1}}{\astr},\ldots,\strcns{\trspeb^{m_n}}{\astr}}}
        \le
          \terprd{\bars'}{
            \funap{\astrfun}{\strcns{\trspeb^{m_1}}{\astr},\ldots,\strcns{\trspeb^{m_n}}{\astr}}
                          }
        \\
        = \terprd{\bars}{\databstr{\funap{\astrfun}{{s_1},\ldots,{s_n}}}}
        = \doLowh{\atrs}{\funap{\astrfun}{s_1,\ldots,s_n}} \; ,
      \end{multline*}
      establishing ``$\ge$'' in \eqref{eq1:lem:do:lowh:id}.
      \vspace*{1ex}
    \item[$\le$:]
      Suppose that
      $ \doLowh{\atrs}{\funap{\astrfun}{\strcns{\trspeb^{m_1}}{\astr},\ldots,\strcns{\trspeb^{m_n}}{\astr}}}
        = \terprd{\bars}{\funap{\astrfun}{\strcns{\trspeb^{m_1}}{\astr},\ldots,\strcns{\trspeb^{m_n}}{\astr}}} = N$
      for some $N \in\nat $, 
      and an om-fair strategy $\bars\subseteq\dgarsh{\atrs}{\sdg}$
      and a history-aware data-exchange function $\sdg$ on $\atrs$.
      (In case that
       $\doLowh{\atrs}{\funap{\astrfun}{\strcns{\trspeb^{m_1}}{\astr},\ldots,\strcns{\trspeb^{m_n}}{\astr}}} = \infty$,
       nothing remains to be shown.)
      Then there exists an om-fair rewrite sequence
      \begin{multline*}
        \xi : \;
        \funap{\astrfun}{\strcns{\trspeb^{m_1}}{\astr},\ldots,\strcns{\trspeb^{m_n}}{\astr}}
          \\
          \red_{\dgarsh{\atrs}{\sdg}}  u_1
          \red_{\dgarsh{\atrs}{\sdg}}  u_2
          \red_{\dgarsh{\atrs}{\sdg}}  \ldots
          \red_{\dgarsh{\atrs}{\sdg}}  u_l
          \red_{\dgarsh{\atrs}{\sdg}}  u_{l+1}
          \red_{\dgarsh{\atrs}{\sdg}}  \ldots
      \end{multline*}
      in $\dgarsh{\atrs}{\sdg}$
      with $\lim_{l\to\infty} \nguards{u_l} = N$.
      Let $l_0\in\nat$ be minimal such that $\nguards{u_{l_0}} = N$.
      Furthermore,
      let, for all $i\in\{1,\ldots,n\}$,
      $m'_i \le m_i$ be minimal such that each of the topmost
      $m'_i$ symbols ``$\pebble$''
      of the subterm $\strcns{\trspeb^{m_i}}{\astr}$ of
      the source
      $ \funap{\astrfun}{\strcns{\trspeb^{m_1}}{\astr},\ldots,\strcns{\trspeb^{m_n}}{\astr}} $
      of $\xi$ is part of a redex pattern at some step among the
      first $l_0$ steps of $\xi$.

    Since by assumption, we have
    $m_i \defdby \doLowh{\atrs}{s_i}$ for $i\in\{1,\ldots,n\}$,
    there exist, for each $i\in\{1,\ldots,n\}$,
    history-aware data-guess functions  $\sdg_i$,
    and om-fair strategies $\bars_i\subseteq\dgarsh{\atrs}{\sdg_i}$
    such that $\terprd{\bars_i}{\databstr{s_i}} = m_i $;
    let all $\bars_i$ and $\sdg_i$ be chosen as such.
    Then there exist,
    for each $i$, om-fair rewrite sequences of the form:
    \begin{equation*}
      \xi_i : \;
        \databstr{s_i}
          \red_{\dgarsh{\atrs}{\sdg_i}}  u_{i,1}
          \red_{\dgarsh{\atrs}{\sdg_i}}  \ldots
          \red_{\dgarsh{\atrs}{\sdg_i}}  u_{i,l_i}
          = \strcns{\trspeb^{m'_i}}{\tilde{u}_{i,l_i}}
          \red_{\dgarsh{\atrs}{\sdg_i}} \ldots
    \end{equation*}
    in $\dgarsh{\atrs}{\sdg}$
    with $\lim_{l\to\infty}{\nguards{u_{i,l}}} = m_i$.

    Now it is possible to combine the history-aware data-exchange functions
    $\sdg$, $\sdg_1$, \ldots, $\sdg_n$ into a function $\sdg'$
    that makes it possible to combine
    the rewrite sequences $\xi$ and $\xi_1$, \ldots, $\xi_n$
    into a rewrite sequence of the form:
    \begin{align*}
      \xi' : \;
      &
      \databstr{\funap{\astrfun}{{s_1},\ldots,{s_n}}}
        \\
      & \hspace*{1.5ex}
      \mred_{\dgarsh{\atrs}{\sdg'}} \;
        \funap{\astrfun}{\strcns{\trspeb^{m'_1}}{u_{1,l_1}},\ldots,\strcns{\trspeb^{m'_n}}{u_{n,l_n}}}
        \\
      & \hspace*{5ex}
        \text{(carry out the first $l_i$ steps in $\xi_i$ in the context
               $\funap{\astrfun}{\porti{1},\ldots,\porti{n}}$)}
        \displaybreak[0]\\
      & \hspace*{1.5ex}
      \red_{\dgarsh{\atrs}{\sdg'}}  u'_1
      \red_{\dgarsh{\atrs}{\sdg'}}  u'_2
      \red_{\dgarsh{\atrs}{\sdg'}}  \ldots
      \red_{\dgarsh{\atrs}{\sdg'}}  u'_{l_0}
      = \strcns{\trspeb^{N}}{\tilde{u}'_{l_0}}
        \\
      & \hspace*{5ex}
        \text{(parallel to the first $l_0$ steps of $\xi$)}
        \displaybreak[0]\\
      & \hspace*{1.5ex}
      \red_{\dgarsh{\atrs}{\sdg'}} w_1
      \red_{\dgarsh{\atrs}{\sdg'}} w_2
      \red_{\dgarsh{\atrs}{\sdg'}} \ldots
        \\
      & \hspace*{5ex}
       \text{(fair interleaving of the rests of the $\xi_i$ put in context}
        \\[-0.5ex]
      & \hspace*{6ex}
       \text{and of steps parallel to the rest of $\xi'$)}
    \end{align*}
    By its construction, $\xi'$ is again om-fair
    and it holds:
    $ \lim_{l\to\infty} w_l = N$.
    Now let $\bars'$ be the extension
    of the sub-ARS of $\dgarsh{\atrs}{\sdg}$ induced by $\xi'$
    to a deterministic om-fair strategy for $\dgarsh{\atrs}{\sdg'}$.
    (Again choose an arbitrary deterministic
     om-fair strategy~$\tilde{\bars}$
     on $\dgarsh{\atrs}{\sdg}$, which is possible by
     Proposition~\ref{prop:omfair:strat}.
     On term-history pairs that do not occur in $\xi'$,
     define $\bars'$ according to $\tilde{\bars}$.)
    Then $\xi'$ is also a rewrite sequence in $\bars'$
    that witnesses
    $\terprd{\bars}{\databstr{\funap{\astrfun}{{s_1},\ldots,{s_n}}}}
        = \lim_{l\to\infty} \nguards{w_l}
        = N
        = \terprd{\bars}{\funap{\astrfun}{\strcns{\trspeb^{m_1}}{\astr},\ldots,\strcns{\trspeb^{m_n}}{\astr}}}
        $.
      Now it follows:
      \begin{multline*}
        \doLowh{\atrs}{\funap{\astrfun}{s_1,\ldots,s_n}}
        \le
        \terprd{\bars}{\databstr{\funap{\astrfun}{{s_1},\ldots,{s_n}}}}
        \\
        =
        \terprd{\bars'}{
            \funap{\astrfun}{\strcns{\trspeb^{m_1}}{\astr},\ldots,\strcns{\trspeb^{m_n}}{\astr}}
                        }
        =
        \doLowh{\atrs}{\funap{\astrfun}{\strcns{\trspeb^{m_1}}{\astr},\ldots,\strcns{\trspeb^{m_n}}{\astr}}} \; ,
      \end{multline*}
      establishing ``$\le$'' in \eqref{eq1:lem:do:lowh:id}.
      \qed
  \end{description}
\end{proof}

\begin{lemma}\label{lem:do:lowh:ind:contexts}
  Let $s \in \ter{\databstr{\asig}}_{\sortS}$ be a stream term
  and $\asubst \funin \var_{\sortS} \to \ter{\databstr{\asig}}_{\sortS}$
  a substitution of stream variables. Then
  \[\doLowh{\atrs}{\subst{s}{\asubst}} = \doLowh{\atrs}{\subst{s}{\bsubst}}\]
  where $\bsubst \funin \var_{\sortS} \to \ter{\databstr{\asig}}_{\sortS}$
  is defined by $\funap{\bsubst}{\bstr} = \strcns{\trspeb^{m_\bstr}}{\astr}$,
  and $m_{\bstr} \defdby \doLowh{\atrs}{\funap{\asubst}{\bstr}}$.
\end{lemma}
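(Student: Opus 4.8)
The plan is to prove $\doLowh{\atrs}{\subst{s}{\asubst}} = \doLowh{\atrs}{\subst{s}{\bsubst}}$ by structural induction on the finite stream term $s\in\ter{\databstr{\asig}}_{\sortS}$, with Lemma~\ref{lem:do:lowh:ind} providing the essential step. Since $\sstrcns$ is the only stream constructor and every data subterm of a term in $\ter{\databstr{\asig}}_{\sortS}$ is the constant $\trspeb$, a case distinction on the root of $s$ yields exactly four cases: $s$ is a stream variable; $s\equiv\strcns{\trspeb}{s'}$; $s\equiv\funap{\astrcon}{\trspeb,\ldots,\trspeb}$ with $\astrcon\in\Ssc$; or $s\equiv\funap{\astrfun}{s_1,\ldots,s_n,\trspeb,\ldots,\trspeb}$ with $\astrfun\in\Ssf$ and $n=\arityS{\astrfun}\ge 1$, where each $s_i$ is a strict subterm of $s$.

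Before the induction I would record two auxiliary observations. First, $\doLowh{\atrs}{\strcns{\trspeb^m}{\astr}} = m$ for every $m\in\nat$: the term $\strcns{\trspeb^m}{\astr}$ is a normal form with a constructor-prefix of length exactly $m$, so its production equals $m$ for every history-aware data-exchange function and every outermost-fair strategy, hence so does the infimum. Second, $\doLowh{\atrs}{\strcns{\trspeb}{t}} = 1 + \doLowh{\atrs}{t}$ for every stream term $t$: the leading $\sstrcns$ is a constructor, hence never the root of a redex, so in any $\dgarsh{\atrs}{\sdg}$ every reduct of $\strcns{\trspeb}{t}$ has the form $\strcns{\trspeb}{t'}$ with $t$ reducing to $t'$ (after re-abstraction the head data position is $\trspeb$ again); thus opponent strategies and outermost-fair rewrite strategies for $\strcns{\trspeb}{t}$ correspond to those for $t$, with the production and outermost-redex positions shifted by the context in the obvious way. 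Both observations are of the same flavour as the bookkeeping already carried out inside the proof of Lemma~\ref{lem:do:lowh:ind}.

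With these in hand the first three cases are routine. If $s$ is a stream variable, say $\bstr$, then $\subst{s}{\asubst} = \funap{\asubst}{\bstr}$ and $\subst{s}{\bsubst} = \strcns{\trspeb^{m_\bstr}}{\astr}$ with $m_\bstr = \doLowh{\atrs}{\funap{\asubst}{\bstr}}$, so the claim is precisely the first auxiliary observation. If $s$ is a stream constant symbol applied only to pebbles, then $\subst{s}{\asubst}\equiv s\equiv\subst{s}{\bsubst}$ and there is nothing to show. If $s\equiv\strcns{\trspeb}{s'}$, then by the second observation together with the induction hypothesis for $s'$ we get $\doLowh{\atrs}{\subst{s}{\asubst}} = 1 + \doLowh{\atrs}{\subst{s'}{\asubst}} = 1 + \doLowh{\atrs}{\subst{s'}{\bsubst}} = \doLowh{\atrs}{\subst{s}{\bsubst}}$.

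The main case is $s\equiv\funap{\astrfun}{s_1,\ldots,s_n,\trspeb,\ldots,\trspeb}$ with $\astrfun\in\Ssf$. Put $k_i\defdby\doLowh{\atrs}{\subst{s_i}{\asubst}}$; by the induction hypothesis for each $s_i$ we also have $k_i = \doLowh{\atrs}{\subst{s_i}{\bsubst}}$. Applying Lemma~\ref{lem:do:lowh:ind} once to the stream arguments $\subst{s_i}{\asubst}$ and once to the stream arguments $\subst{s_i}{\bsubst}$ (the lemma, although phrased for terms over $\asig$, applies verbatim to data-abstracted terms, as $\doLowh{\atrs}{\cdot}$ ignores data and the data arguments here are irrelevant pebbles — formally one passes through concrete data-term representatives, which exist by non-emptiness of data sorts) yields
\[
  \doLowh{\atrs}{\subst{s}{\asubst}}
  = \doLowh{\atrs}{\funap{\astrfun}{\strcns{\trspeb^{k_1}}{\astr},\ldots,\strcns{\trspeb^{k_n}}{\astr},\trspeb,\ldots,\trspeb}}
  = \doLowh{\atrs}{\subst{s}{\bsubst}}
  \punc,
\]
completing the induction. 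The only nontrivial ingredients are Lemma~\ref{lem:do:lowh:ind} and the two auxiliary observations about constructor contexts; I expect the latter — in particular checking that outermost-fairness and history-awareness restrict cleanly along a leading $\strcns{\trspeb}{\cdot}$ — to be the only place demanding real care, and everything else is straightforward bookkeeping.
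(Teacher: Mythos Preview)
Your proposal is correct and is precisely the induction the paper has in mind: the paper's own proof reads in its entirety ``Induction using Lem.~\ref{lem:do:lowh:ind}.'' You have faithfully unfolded that one-line proof, supplying the auxiliary observations about $\doLowh{\atrs}{\strcns{\trspeb^m}{\astr}}$ and $\doLowh{\atrs}{\strcns{\trspeb}{t}}$ that make the variable and cons cases go through, and correctly identifying Lemma~\ref{lem:do:lowh:ind} as the engine for the $\astrfun\in\Ssf$ case.
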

\begin{proof}
  Induction using Lem.~\ref{lem:do:lowh:ind}.
  \qed
\end{proof}

\begin{corollary}\label{lem:do:lowh:subst}
  Let $s \in \ter{\databstr{\asig}}_{\sortS}$
  and $\asubst, \bsubst \funin \var_{\sortS} \to \ter{\databstr{\asig}}_{\sortS}$
  substitutions of stream variables 
  such that $\doLowh{\atrs}{\funap{\asubst}{\bstr}} = \doLowh{\atrs}{\funap{\bsubst}{\bstr}}$ for all $\bstr \in \var_{\sortS}$.
   Then
  \[\doLowh{\atrs}{\subst{s}{\asubst}} = \doLowh{\atrs}{\subst{s}{\bsubst}}\punc.\]
\end{corollary}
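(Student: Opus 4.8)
The plan is to obtain the statement as an immediate consequence of Lemma~\ref{lem:do:lowh:ind:contexts}, applied twice, by reducing both substitutions to a common ``canonical pebble substitution''.

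First I would apply Lemma~\ref{lem:do:lowh:ind:contexts} to the substitution $\asubst$. Writing $\hat{\asubst} \funin \var_{\sortS} \to \ter{\databstr{\asig}}_{\sortS}$ for the substitution defined by $\funap{\hat{\asubst}}{\bstr} = \strcns{\trspeb^{m_{\bstr}}}{\astr}$ with $m_{\bstr} \defdby \doLowh{\atrs}{\funap{\asubst}{\bstr}}$, the lemma yields $\doLowh{\atrs}{\subst{s}{\asubst}} = \doLowh{\atrs}{\subst{s}{\hat{\asubst}}}$. Symmetrically, applying the lemma to $\bsubst$ produces a substitution $\hat{\bsubst}$ with $\funap{\hat{\bsubst}}{\bstr} = \strcns{\trspeb^{m'_{\bstr}}}{\astr}$, where $m'_{\bstr} \defdby \doLowh{\atrs}{\funap{\bsubst}{\bstr}}$, and $\doLowh{\atrs}{\subst{s}{\bsubst}} = \doLowh{\atrs}{\subst{s}{\hat{\bsubst}}}$. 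The key observation is then that the hypothesis $\doLowh{\atrs}{\funap{\asubst}{\bstr}} = \doLowh{\atrs}{\funap{\bsubst}{\bstr}}$ for all $\bstr \in \var_{\sortS}$ says exactly that $m_{\bstr} = m'_{\bstr}$ for all $\bstr$ (both quantities ranging over $\conat$, so the case $m_{\bstr} = \conattop$ requires no separate treatment). Hence $\hat{\asubst}$ and $\hat{\bsubst}$ are one and the same substitution, so $\subst{s}{\hat{\asubst}} \equiv \subst{s}{\hat{\bsubst}}$, and therefore the middle terms of the two chains agree. Chaining the three equalities gives $\doLowh{\atrs}{\subst{s}{\asubst}} = \doLowh{\atrs}{\subst{s}{\bsubst}}$, as required.

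I do not expect any genuine obstacle here: all the substantive work — the combination of om-fair rewrite sequences together with history-aware data-exchange functions, handled argument-wise — has already been carried out in Lemma~\ref{lem:do:lowh:ind} and lifted to arbitrary contexts in Lemma~\ref{lem:do:lowh:ind:contexts}. The only point that warrants a word of care is that the canonical pebble substitutions $\hat{\asubst}$ and $\hat{\bsubst}$ depend on $\asubst$ and $\bsubst$ solely through the values $\doLowh{\atrs}{\funap{\asubst}{\bstr}}$, so that equality of the \daob{} lower bounds on the images forces $\hat{\asubst}$ and $\hat{\bsubst}$ to coincide literally, not merely up to equal production. This makes the corollary a clean ``transitivity'' statement layered on top of Lemma~\ref{lem:do:lowh:ind:contexts}.
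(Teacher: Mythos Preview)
Your proposal is correct and matches the paper's approach exactly: the paper's proof reads simply ``Two applications of Lem.~\ref{lem:do:lowh:ind:contexts},'' which is precisely the argument you spell out.
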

\begin{proof}
  Two applications of Lem.~\ref{lem:do:lowh:ind:contexts}.
  \qed
\end{proof}

We define a history free data-exchange function $\sdgw{\atrs}$
such that no reduction with respect to $\sdgw{\atrs}$
produces more than $\doLowh{\atrs}{s}$ elements.
\begin{definition}\normalfont\label{def:do:worst}
  Let a stream specification $\atrs = \pair{\asig}{R}$ be given, and let $>$ be a well-founded order on $R$.
  We define the \emph{lower bound data-exchange function
  $\sdgw{\atrs} \funin \ter{\databstr{\asig}}_{\sortS} \to \ter{\asig}_{\sortS}$} as follows.

  Let $s \in \ter{\databstr{\asig}}_{\sortS}$ and
  $\funap{\astrfun}{s_1,\ldots,s_{\arityS{\astrfun}},\pebble_1,\ldots,\pebble_{\arityD{\astrfun}}}$
  a subterm occurrence of $s$.
  For $i = 1,\ldots,\arityS{\astrfun}$ let $n_i \in \nat$ be maximal
  such that $s_i \equiv \strcns{\trspeb_{i,1}}{\ldots \strcns{\trspeb_{i,n_i}}{s'_i}}$.
  We define $m_i \defdby \doLowh{\atrs}{s_i}$ for $1 \le i \le n$,
  then we have by Lem.~\ref{lem:do:lowh:ind}:
  \begin{gather*}
    \doLowh{\atrs}{t \defdby \funap{\astrfun}{s_1,\ldots,s_{\arityS{\astrfun}}, \pebble_1,\ldots,\pebble_{\arityD{\astrfun}}}} = \hspace{3cm}\\
    \hspace{3cm}
    \doLowh{\atrs}{t' \defdby \funap{\astrfun}{\strcns{\trspeb^{m_1}}{\astr},\ldots,\strcns{\trspeb^{m_{\arityS{\astrfun}}}}{\astr},\pebble_1,\ldots,\pebble_{\arityD{\astrfun}}}}
    \punc.
  \end{gather*}
  We choose a defining rule $\rho$ of $\astrfun$ as follows.
  In case there exists a rule
  \[\gamma :
   \funap{\astrfun}{
      \strcns{\vec{\adattrm}_1}{\astr_1},
      \ldots,
      \strcns{\vec{\adattrm}_n}{\astr_n},
      \bdattrm_1,\ldots,\bdattrm_{\arityD{\strff{f}}}
    }
   \red t \in R\]
  such that $m_i < \lstlength{\vec{\adattrm}_i}$ for some $i \in \{1,\ldots,n\}$, then let $\rho \defdby \gamma$.
  In case there are multiple possible choices for $\gamma$, we pick the minimal $\gamma$ with respect to $>$.

  Otherwise
  in exists a history-aware data-exchange function $\sdg$, and an outermost-fair rewrite sequence 
  $\aseq : t' \red_{\dgarsh{\atrs}{\sdg}} \ldots$ in $\dgarsh{\atrs}{\sdg}$ producing only $\doLowh{\atrs}{t'}$ elements.
  From exhaustivity of $\atrs$ we get $\dg{t'}$ is not a normal form, since all rules have enough supply.
  Moreover, by orthogonality exactly one defining rule $\gamma$ of $\astrfun$ is applicable, let $\rho \defdby \gamma$.
  Again, in case there are multiple possible choices for $\gamma$, due to freedom in the choice of the data-exchange function $\sdg$,
  we take the minimal $\gamma$ with respect to $>$.

  We define $\sdgw{\atrs}$ to instantiate:
  \begin{itemize}
    \item for $1 \le i \le \arityD{\astrfun}$ the occurrences of pebbles $\trspeb_i$ in $s$, and
    \item for $1 \le i \le \arityS{\astrfun}$, $1 \le j \le n_i$ the occurrences of pebbles $\trspeb_{i,j}$ in $s$
  \end{itemize}
  with respect to the rule $\rho$ (Def.~\ref{def:inst:rho}).
\end{definition}

Rewrite steps with respect to the lower bound data-exchange function $\sdgw{\atrs}$
do not change the \daob{} lower bound of the production of a term.
\begin{lemma}\label{lem:lowdg}
  For all $\astrtrm,\astrtrm \in \ter{\databstr{\asig}}_{\sortS}$: 
  $\astrtrm \red_{\dgars{\atrs}{\sdgw{\atrs}}} \bstrtrm$
  implies $\doLowh{\atrs}{\astrtrm} = \doLowh{\btrs}{\bstrtrm}$.
\end{lemma}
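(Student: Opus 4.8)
I would prove the two inequalities $\doLowh{\atrs}{s}\le\doLowh{\atrs}{t}$ and $\doLowh{\atrs}{t}\le\doLowh{\atrs}{s}$ separately, by induction on the size of the finite term $s$, after unfolding the step. By Def.~\ref{def:d-o-rewriting} and Def.~\ref{def:do:worst}, $s\red_{\dgars{\atrs}{\sdgw{\atrs}}} t$ means that $\sdgw{\atrs}(s)$ has a redex at some position $p$, fired by the rule $\rho$ of the stream symbol at $p$ that $\sdgw{\atrs}$ selects (the applicable defining rule minimal in the chosen well-founded order $>$ among those forcing the least production), and $t=\databstr{s'}$ for the contractum $\sdgw{\atrs}(s)\to_\atrs s'$. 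First I would dispatch the case that $p$ is \emph{not} an outermost redex position of $\sdgw{\atrs}(s)$: then $p$ lies properly inside some stream argument $\tau$ of a redex located strictly above it, so the step restricts to a step $\tau\red_{\dgars{\atrs}{\sdgw{\atrs}}}\tau'$ inside $\tau$ (the action of $\sdgw{\atrs}$ on a subterm occurrence depends only on that occurrence, so it agrees with $\sdgw{\atrs}$ applied to $\tau$ alone). As $\tau$ is a proper subterm of $s$, the induction hypothesis gives $\doLowh{\atrs}{\tau}=\doLowh{\atrs}{\tau'}$; writing $s=\cxtap{s_0}{\tau}$ and $t=\cxtap{s_0}{\tau'}$ with a fresh stream variable marking the hole, Cor.~\ref{lem:do:lowh:subst} then yields $\doLowh{\atrs}{s}=\doLowh{\atrs}{t}$.

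It then remains to treat the case that $p$ is an outermost redex position. For $\doLowh{\atrs}{s}\le\doLowh{\atrs}{t}$ I would prepend the forced step: given any history-aware data-exchange function $\sdg$ and outermost-fair strategy $\bars\subseteq\dgarsh{\atrs}{\sdg}$ witnessing an element of $\doRngh{\atrs}{t}$, define $\sdg'$ to act as (the history-lift of) $\sdgw{\atrs}$ on the empty history and as $\sdg$ on histories beginning with the recorded step $s\red t$, and let $\bars'$ perform $s\red t$ first and thereafter follow $\bars$, filling in an arbitrary deterministic outermost-fair strategy (Prop.~\ref{prop:omfair:strat}) on the remaining term--history pairs. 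Then $\bars'$ is outermost-fair, and since contracting a redex never shortens the head constructor prefix and $\terprd{\bars}{t}\ge\nguards{t}\ge\nguards{s}$, we obtain $\terprd{\bars'}{s}\le\terprd{\bars}{t}$; taking the infimum over $\sdg,\bars$ gives the inequality. This is essentially the gluing construction already carried out in the proof of Lem.~\ref{lem:do:lowh:ind}.

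For $\doLowh{\atrs}{t}\le\doLowh{\atrs}{s}$ I would start from a conspiracy $(\sdg,\bars)$ for $s$ whose production $\terprd{\bars}{s}$ is as close as desired to $\doLowh{\atrs}{s}$. By Lem.~\ref{lem:gforce} we may assume that $\sdg$ instantiates the redex pattern at $p$ by the same rule $\rho$ as $\sdgw{\atrs}$: replacing whatever rule $\sdg$ would fire at $p$ with the $\sdgw{\atrs}$-minimal one $\rho$ leaves the rest of the data-exchange unchanged and can only lower the production, so near-optimality is preserved. Since the redex at $p$ is outermost in $\sdgw{\atrs}(s)$ and reductions at other positions merely add supply -- hence neither destroy this redex nor create a redex above $p$ -- outermost-fairness forces $\bars$ to contract $p$ via $\rho$ at some step. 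Using orthogonality of $\atrs$ I would permute that step to the front (the preceding steps take place at positions disjoint from, or below, $p$, and survive as residuals), obtaining an outermost-fair conspiracy for $s$ of the shape $s\red_{p,\rho} t\red\cdots$ that produces the same number of stream elements (invoking Lem.~\ref{lem:do:lowh:ind} and Cor.~\ref{lem:do:lowh:subst} to normalise the residual stream subterms to pebble-towers where that simplifies the bookkeeping). Its tail is then a conspiracy for $t$ with the same production, so $\doLowh{\atrs}{t}\le\terprd{\bars}{s}$; taking the infimum over conspiracies for $s$ gives the inequality, and hence equality.

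\textbf{Main obstacle.} The hard part is the permutation in the last paragraph: reorganising an arbitrary near-optimal outermost-fair conspiracy for $s$ so that it begins with the $\sdgw{\atrs}$-forced step $s\red t$, while simultaneously (i) keeping outermost-fairness, (ii) preserving the produced-element count, and (iii) reconciling the fact that residual theory for the orthogonal TRS $\atrs$ manipulates fully data-instantiated terms whereas the data-oblivious bounds only see the pebble counts of stream subterms. Point (iii) is precisely where Lem.~\ref{lem:do:lowh:ind} and Cor.~\ref{lem:do:lowh:subst} do real work, and the interplay of the permutation with the minimality built into $\sdgw{\atrs}$ (Def.~\ref{def:do:worst}, including the well-founded tie-break) is what must be handled with care.
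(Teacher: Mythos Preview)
Your plan is workable in spirit but takes a detour that the paper avoids, and the $\ge$ direction of your outermost case has a genuine gap.

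The paper reduces to \emph{root} steps, not outermost ones. Since $\sdgw{\atrs}$ instantiates each stream-function occurrence by looking only at that occurrence and its arguments, a step at $p\neq\posemp$ restricts to a $\sdgw{\atrs}$-step inside the immediate stream argument of the root containing $p$; Cor.~\ref{lem:do:lowh:subst} then lifts the induction hypothesis. For the root case the paper simply unfolds Def.~\ref{def:do:worst}. With $s\equiv\funap{\astrfun}{s_1,\ldots,s_n,\pebble,\ldots}$ and $t'\equiv\funap{\astrfun}{\strcns{\pebble^{m_1}}{\sigma},\ldots,\strcns{\pebble^{m_n}}{\sigma},\pebble,\ldots}$ where $m_i=\doLowh{\atrs}{s_i}$, Lem.~\ref{lem:do:lowh:ind} gives $\doLowh{\atrs}{s}=\doLowh{\atrs}{t'}$. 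The crucial observation is that every argument of $t'$ is a normal form, so any conspiracy for $t'$ must fire at the root as its \emph{first} step; the optimal conspiracy singled out in Case~2 of Def.~\ref{def:do:worst} does so with rule $\rho$ (this is precisely how $\rho$ is chosen). Hence $t'\to_\rho t''$ and the tail witnesses $\doLowh{\atrs}{t''}\le\doLowh{\atrs}{t'}$; your prepending argument gives the reverse inequality. Finally $\doLowh{\atrs}{t}=\doLowh{\atrs}{t''}$ by Cor.~\ref{lem:do:lowh:subst}, because consuming $k$ head pebbles from $s_i$ lowers its $\doLowh{\atrs}$-value by exactly $k$. No permutation of steps is needed; Case~1 of Def.~\ref{def:do:worst} does not arise for root steps, since $m_i<\lstlength{\vec{u}_i}$ forces $n_i<\lstlength{\vec{u}_i}$.

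Your gap is the claim that ``replacing whatever rule $\sdg$ would fire at $p$ with the $\sdgw{\atrs}$-minimal one $\rho$ \ldots\ can only lower the production''. The order $>$ in Def.~\ref{def:do:worst} is merely a tie-break among rules that \emph{all} occur as the first root step of \emph{some} optimal conspiracy for $t'$; it does not assert that $\rho$ minimises production among all defining rules of $\astrfun$. Moreover, ``replacing the rule'' is not a well-defined operation on conspiracies: after firing a different rule at $p$ the resulting term changes, and the remaining steps of $\bars$ need not apply. What you actually need is that \emph{some} optimal conspiracy fires $\rho$ at $p$ first---and that is exactly what the construction above delivers via $t'$, where the root step is forced to be first. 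Your ``main obstacle'' (the permutation) is thus self-imposed; the paper sidesteps it entirely by passing to the pebble-normalised term $t'$.
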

\begin{proof}
  We use the notions introduced in Def.~\ref{def:do:worst}.
  Note that the lower bound data-exchange function $\sdgw{\atrs}$
  is defined in a way that the pebbles are instantiated independent of 
  the context above the stream function symbol.
  Therefore it is sufficient to consider rewrite steps 
  $\astrtrm \red_{\dgars{\atrs}{\sdgw{\atrs}}} \bstrtrm$ at the root,
  closure under contexts follows from Cor.~\ref{lem:do:lowh:subst}. 
  \qed
\end{proof}

\begin{corollary}\label{cor:lowdg:prod}
  For all $\astrtrm \in \ter{\databstr{\asig}}_{\sortS}$ we have: 
  $\terprd{\dgars{\atrs}{\sdgw{\atrs}}}{\astrtrm} = \doLowh{\atrs}{\astrtrm}$.
\end{corollary}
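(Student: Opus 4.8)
The plan is to prove the two inequalities $\terprd{\dgars{\atrs}{\sdgw{\atrs}}}{\astrtrm} \le \doLowh{\atrs}{\astrtrm}$ and $\terprd{\dgars{\atrs}{\sdgw{\atrs}}}{\astrtrm} \ge \doLowh{\atrs}{\astrtrm}$ separately. The first is the substantive direction and follows quickly from Lemma~\ref{lem:lowdg}; the second is essentially bookkeeping around the fact that $\doLowh{\atrs}{\astrtrm}$ is by definition an infimum over all history-aware data-exchange functions together with all om-fair strategies, of which $\sdgw{\atrs}$ --- read as the constant history-aware function --- provides one admissible instance.

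For ``$\le$'' I would first lift Lemma~\ref{lem:lowdg} from single steps to $\mred_{\dgars{\atrs}{\sdgw{\atrs}}}$ by an immediate induction on the length of the reduction, so that $\doLowh{\atrs}{\cdot}$ is invariant along every reduction in $\dgars{\atrs}{\sdgw{\atrs}}$. Next I would record the auxiliary observation $\doLowh{\atrs}{\strcns{\trspeb^{n}}{\bstrtrm}} \ge n$: a data-exchange function never alters an existing constructor-prefix, since by definition it only replaces pebble-subterms by closed data-constructor terms, so the empty reduction already exhibits a prefix of length $n$, whence every om-fair strategy over every history-aware data-exchange function --- and therefore also their infimum --- produces at least $n$ elements from $\strcns{\trspeb^{n}}{\bstrtrm}$. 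Combining the two: whenever $\astrtrm \mred_{\dgars{\atrs}{\sdgw{\atrs}}} \strcns{\trspeb^{n}}{\bstrtrm}$ we have $n \le \doLowh{\atrs}{\strcns{\trspeb^{n}}{\bstrtrm}} = \doLowh{\atrs}{\astrtrm}$, and taking the supremum over all such $n$ gives $\terprd{\dgars{\atrs}{\sdgw{\atrs}}}{\astrtrm} \le \doLowh{\atrs}{\astrtrm}$ (the case $\doLowh{\atrs}{\astrtrm} = \infty$ being vacuous).

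For ``$\ge$'', by Proposition~\ref{prop:omfair:strat} there is an om-fair strategy $\bars \subseteq \dgarsh{\atrs}{\sdgw{\atrs}}$, where $\sdgw{\atrs}$ is taken as the constant history-aware data-exchange function; since this instance is admissible in the infimum defining $\doLowh{\atrs}{\astrtrm}$, we obtain $\doLowh{\atrs}{\astrtrm} \le \terprd{\bars}{\astrtrm}$. Every reduction of $\bars$ projects, by forgetting the history annotations, to a reduction in the ARS $\dgars{\atrs}{\sdgw{\atrs}}$ contracting the same redexes in the same order, so any reduction in $\bars$ witnessing production $n$ induces one in $\dgars{\atrs}{\sdgw{\atrs}}$ witnessing production at least $n$, whence $\terprd{\bars}{\astrtrm} \le \terprd{\dgars{\atrs}{\sdgw{\atrs}}}{\astrtrm}$; chaining the two inequalities completes this direction. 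The only point requiring care --- and the step I expect to be the main obstacle, although it is not deep --- is staying precise about the history-free versus history-aware formulations when passing between $\dgars{\atrs}{\sdgw{\atrs}}$, $\dgarsh{\atrs}{\sdgw{\atrs}}$, and the infimum defining $\doLowh{\atrs}{\cdot}$, together with the small argument for $\doLowh{\atrs}{\strcns{\trspeb^{n}}{\bstrtrm}} \ge n$; no ingredients beyond Lemma~\ref{lem:lowdg} and the compositionality facts it rests on (e.g.\ Cor.~\ref{lem:do:lowh:subst}) are needed.
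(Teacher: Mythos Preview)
Your proof is correct and is essentially a careful elaboration of the paper's one-line justification ``Direct consequence of Lem.~\ref{lem:lowdg}.'' Your ``$\le$'' direction is exactly that direct consequence, and your ``$\ge$'' direction makes explicit the natural ingredient the paper glosses over: that $\sdgw{\atrs}$, read as a constant history-aware data-exchange function together with an om-fair strategy furnished by Proposition~\ref{prop:omfair:strat}, is one admissible instance in the infimum defining $\doLowh{\atrs}{\astrtrm}$.
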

\begin{proof}
  Direct consequence of Lem.~\ref{lem:lowdg}.
  \qed
\end{proof}

Hence, as a consequence of Cor.~\ref{cor:lowdg:prod}, the lower bound of the history-free \daob{} production
conincides with the lower bound of the history-aware \daob{} production.
\begin{lemma}\label{do:low:le:lowh}
  For all $\astrtrm \in \ter{\asig}_{\sortS}$
  we have: $\doLow{\atrs}{\astrtrm} = \doLowh{\atrs}{\astrtrm}$.
\end{lemma}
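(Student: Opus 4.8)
The plan is to prove the two inequalities $\doLow{\atrs}{s} \le \doLowh{\atrs}{s}$ and $\doLowh{\atrs}{s} \le \doLow{\atrs}{s}$ separately, in both cases reducing the claim to results already established --- namely Corollary~\ref{cor:lowdg:prod} and Proposition~\ref{prop:omfair:strat}. Since both notions carry over from stream terms to their data abstractions, it suffices to work with $\databstr{s}$; recall moreover that $\doLow{\atrs}{s} = \inf(\doRng{\atrs}{s})$, while $\doLowh{\atrs}{s}$ is the infimum of the history-aware production range $\doRngh{\atrs}{s}$.

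For $\doLow{\atrs}{s} \le \doLowh{\atrs}{s}$ I would simply observe that the lower bound data-exchange function $\sdgw{\atrs}$ of Definition~\ref{def:do:worst} is, in particular, an ordinary history-free data-exchange function on $\atrs$. Hence $\terprd{\dgars{\atrs}{\sdgw{\atrs}}}{\databstr{s}}$ is a member of $\doRng{\atrs}{s}$, so $\doLow{\atrs}{s}$ is at most this value; and by Corollary~\ref{cor:lowdg:prod} this value equals $\doLowh{\atrs}{s}$, which gives the inequality.

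For the reverse inequality $\doLowh{\atrs}{s} \le \doLow{\atrs}{s}$ I would take an arbitrary history-free data-exchange function $\sdg$ on $\atrs$ and regard it as a history-aware one $\sdg'$ by setting $\funap{\sdg'}{h} \defdby \sdg$ for every $h \in \hist{\atrs}$. Forgetting the history annotations, the steps of the ARS $\dgarsh{\atrs}{\sdg'}$ of Definition~\ref{def:d-o-rewriting-history} are then precisely the steps of $\dgars{\atrs}{\sdg}$, so that $\terprd{\dgarsh{\atrs}{\sdg'}}{\databstr{s}} = \terprd{\dgars{\atrs}{\sdg}}{\databstr{s}}$. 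By Proposition~\ref{prop:omfair:strat} there is an outermost-fair strategy $\bars \subseteq \dgarsh{\atrs}{\sdg'}$; as $\bars$ is a sub-ARS, every rewrite sequence in $\bars$ is one in $\dgarsh{\atrs}{\sdg'}$, whence $\terprd{\bars}{\databstr{s}} \le \terprd{\dgarsh{\atrs}{\sdg'}}{\databstr{s}} = \terprd{\dgars{\atrs}{\sdg}}{\databstr{s}}$. Since $\terprd{\bars}{\databstr{s}} \in \doRngh{\atrs}{s}$, this yields $\doLowh{\atrs}{s} \le \terprd{\dgars{\atrs}{\sdg}}{\databstr{s}}$; taking the infimum over all history-free $\sdg$ gives $\doLowh{\atrs}{s} \le \inf(\doRng{\atrs}{s}) = \doLow{\atrs}{s}$.

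The substantive content of the lemma is delegated to the two results invoked: Corollary~\ref{cor:lowdg:prod} --- which itself rests on Lemmas~\ref{lem:do:lowh:ind} and \ref{lem:lowdg} and on the careful construction of $\sdgw{\atrs}$ --- and Proposition~\ref{prop:omfair:strat}; both are used here as black boxes. The only point in the present argument that needs attention, and the closest thing to an obstacle, is the identification of the rewrite sequences of $\dgars{\atrs}{\sdg}$ with those of $\dgarsh{\atrs}{\sdg'}$: one checks that when $\sdg'$ does not depend on the history, the history component of $\dgarsh{\atrs}{\sdg'}$ is a passive record that constrains no step, so that the two production functions agree; combined with the elementary remark that a sub-ARS can never produce more stream elements than the ambient ARS, the proof is then immediate.
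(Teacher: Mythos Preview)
Your proof is correct and follows the same approach as the paper: the inequality $\doLow{\atrs}{s} \le \doLowh{\atrs}{s}$ is obtained exactly as in the paper via Corollary~\ref{cor:lowdg:prod} using the history-free function $\sdgw{\atrs}$, and the reverse inequality---which the paper dismisses as ``trivial''---you spell out correctly by embedding history-free data-exchange functions into history-aware ones and invoking Proposition~\ref{prop:omfair:strat} to obtain an outermost-fair strategy whose production is bounded by that of the ambient ARS.
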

\begin{proof}
  The direction $\doLow{\atrs}{\astrtrm} \ge \doLowh{\atrs}{\astrtrm}$ is trivial, and
  $\doLow{\atrs}{\astrtrm} \le \doLowh{\atrs}{\astrtrm}$ follows from Cor.~\ref{cor:lowdg:prod},
  that is, using the history-free data-exchange function $\sdgw{\atrs}$
  no reduction produces more than $\doLowh{\atrs}{\astrtrm}$ elements.
  \qed
\end{proof}

\begin{lemma}
  For all $s \in \ter{\asig}_{\sortS}$
  we have: $\doRng{\atrs}{\astrtrm} = \doRngh{\atrs}{\astrtrm}$.
\end{lemma}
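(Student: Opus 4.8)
The goal is the set equality $\doRng{\atrs}{\astrtrm}=\doRngh{\atrs}{\astrtrm}$, and the plan is to prove the two inclusions separately; since both ranges of a stream term are defined via its data abstraction, it suffices to work with $\astrtrm\in\ter{\databstr{\asig}}_{\sortS}$. The running tool is monotonicity of production: along any reduction in $\dgars{\atrs}{\sdg}$ or $\dgarsh{\atrs}{\sdg}$ the already-produced spine constructors sit strictly above all redexes and never occur in a redex pattern, so the count of produced elements is non-decreasing with supremum $\terprd$; combined with finite branching of these ARSs (a finite term has finitely many redexes) this handles the bookkeeping, e.g.\ that a finite prefix does not affect outermost-fairness and that reaching a $v$-defined term in finitely many steps already certifies production $v$ when $v<\infty$. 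The one genuinely non-routine ingredient, needed in both directions, is \emph{production-optimality of outermost-fair reduction}: for a \emph{fixed} data-exchange function (history-free, or a history-aware function restricted to a single history line) orthogonality of $\atrs$ makes the rule applicable at every redex unique, and then some outermost-fair reduction attains the supremal production. I expect establishing this --- the analogue, for abstracted terms with re-instantiation of pebbles after each step, of ``outermost-fair reduction computes the infinite normal form in orthogonal rewriting'' --- to be the main obstacle of the proof.

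For the inclusion $\doRng{\atrs}{\astrtrm}\subseteq\doRngh{\atrs}{\astrtrm}$, take $v=\terprd{\dgars{\atrs}{\sdg}}{\astrtrm}$ for a history-free $\sdg$. Reading $\sdg$ as a constant history-aware function, $\dgarsh{\atrs}{\sdg}$ projects onto $\dgars{\atrs}{\sdg}$ by forgetting histories, so it remains to find an outermost-fair strategy $\bars\subseteq\dgarsh{\atrs}{\sdg}$ with $\terprd{\bars}{\astrtrm}=v$. When $v<\infty$, lift a finite reduction $\astrtrm\mred_{\dgars{\atrs}{\sdg}}\bstrtrm$ with $\bstrtrm$ $v$-defined and complete it to a deterministic outermost-fair strategy by Prop.~\ref{prop:omfair:strat}; monotonicity together with $v$ being the supremum over \emph{all} reductions from $\astrtrm$ forces $\terprd{\bars}{\astrtrm}=v$. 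When $v=\infty$, the production-optimality ingredient supplies a single infinite outermost-fair reduction of unbounded, hence infinite, production, which is again completed to a deterministic outermost-fair $\bars$ by Prop.~\ref{prop:omfair:strat}.

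For the reverse inclusion $\doRngh{\atrs}{\astrtrm}\subseteq\doRng{\atrs}{\astrtrm}$, let $v\in\doRngh{\atrs}{\astrtrm}$ be witnessed by a history-aware $\sdg$ and an outermost-fair strategy $\bars$; one may take $\bars$ deterministic (restrict to a deterministic sub-strategy through an optimal reduction --- for $v=\infty$, through an infinite reduction of unbounded production, which exists by finite branching and outermost-fairness --- still outermost-fair, still attaining $v$), so that $\bars$ determines a unique, loop-free (loops excised without loss of production) reduction $\pi$ of production $v$. The plan is then to define a history-free $\sdg'$ which along $\pi$ instantiates pebbles exactly as $\sdg$ does at the matching history, and everywhere off $\pi$ instantiates pebbles as the lower-bound data-exchange function $\sdgw{\atrs}$ of Def.~\ref{def:do:worst} does; this is a well-defined history-free data-exchange function because the pebble choices of $\sdgw{\atrs}$ depend only on the local subterm at a stream-function occurrence, cf.\ Cor.~\ref{lem:do:lowh:subst}. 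The projection of $\pi$ gives $\terprd{\dgars{\atrs}{\sdg'}}{\astrtrm}\ge v$; for ``$\le v$'' one tracks a $\dgars{\atrs}{\sdg'}$-reduction from $\astrtrm$ that leaves $\pi$ at some term $t$, and here the crucial point --- which I expect to need care --- is to argue $\doLowh{\atrs}{t}\le v$ from the facts that $t$ is $\dgarsh{\atrs}{\sdg}$-reachable from $\astrtrm$ and that $\pi$ was production-optimal; by Cor.~\ref{cor:lowdg:prod} the $\sdgw{\atrs}$-part of $\sdg'$ then caps the total production of that reduction at $\doLowh{\atrs}{t}\le v$ (re-joining $\pi$ again caps it at $v$). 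The two inclusions yield $\doRng{\atrs}{\astrtrm}=\doRngh{\atrs}{\astrtrm}$, and the statement for $\astrtrm\in\ter{\asig}_{\sortS}$ follows by data abstraction.
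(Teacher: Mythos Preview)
Your two directions are organized like the paper's, and for $\doRng{\atrs}{\astrtrm}\subseteq\doRngh{\atrs}{\astrtrm}$ your argument is essentially the paper's: view the history-free $\sdg$ as constant in the history, run a finite prefix realizing the production, then complete outermost-fairly via Prop.~\ref{prop:omfair:strat}. (The paper treats the case $v<\infty$; your explicit handling of $v=\infty$ is a reasonable addendum.)

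For the converse inclusion your construction \emph{differs} from the paper's. Given the witnessing om-fair sequence $\sigma$ (with possible repetitions), the paper defines the history-free $\sdg'$ term-by-term: if $r$ occurs in $\sigma$, take the instantiation $\sigma$ used at the \emph{last} occurrence of $r$ (or the \emph{first} occurrence if $r$ recurs infinitely often). The point of the last-occurrence choice is that a $\sdg'$-reduction can always progress forward through $\sigma$, and any infinitely-recurring term already carries the full $n$ leading constructors; no loop-excision and no appeal to $\sdgw{\atrs}$ are needed. Your construction instead first excises loops to get a simple $\pi$, then patches with $\sdgw{\atrs}$ off $\pi$ and relies on Cor.~\ref{cor:lowdg:prod}.

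The problematic step in your route is the claim ``$\doLowh{\atrs}{t}\le v$'' for the first off-$\pi$ term $t$, which you reduce to ``$\pi$ was production-optimal'', i.e.\ that every outermost-fair reduction in $\dgarsh{\atrs}{\sdg}$ from $\langle \astrtrm,\lstemp\rangle$ has production $v$. This is exactly where history-awareness breaks the orthogonal-rewriting intuition you invoke: once the rewrite player deviates, the \emph{history} changes, hence so does the opponent's instantiation, and the induced ARS on term--history pairs is neither a TRS nor confluent in any usable sense. Two outermost-fair reductions under the same history-aware $\sdg$ can therefore see different applicable rules and need not have the same production. Your ``analogue of om-fair computes the infinite normal form'' holds for a fixed \emph{history-free} $\sdg$ (there the induced ARS does inherit confluence from~$\atrs$), but not in the history-aware setting you actually need it. The paper's last/first-occurrence trick sidesteps this: it works directly with the single sequence $\sigma$ and never needs to compare different om-fair reductions under the history-aware $\sdg$.
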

\begin{proof}
\begin{itemize}
\item [$\subseteq$]
  Let $n \in \doRng{\atrs}{\astrtrm}$. Then there exists a data-exchange function $\sdg$
  such that $\terprd{\dgars{\atrs}{\sdg}}{\astrtrm} = n$, and
  a rewrite sequence $\sigma : s_0 \to_{\dgars{\atrs}{\sdg}} \ldots \to_{\dgars{\atrs}{\sdg}} s_m$
  with $s_m \equiv \strcns{\adattrm_1}{\strcns{\ldots}{\strcns{\adattrm_{n}}{\bstrtrm}}}$.
  We define for all $h \in \mathcal{H}_\atrs$ a data-exchange function $\sdgh{h}$ by
  setting $\dgh{h}{s} \defdby \dg{s}$ for all $s \in \ter{\databstr{\asig}}_{\sortS}$.
  Moreover, we define the strategy
  $\bars \subseteq \dgarsh{\atrs}{\sdg}$
  to execute $\sigma$ and continue outermost-fair afterwards.
  Since $s_m$ does not produce more than $n$ elements,
  every maximal rewrite sequence with respect to $\bars$ will produce exactly $n$ elements.
  Hence $n \in \doRngh{\atrs}{\astrtrm}$.
\item [$\supseteq$]
  Let $n \in \doRngh{\atrs}{\astrtrm}$. Then there exist
  data-exchange functions $\sdgh{h}$ for every $h \in \mathcal{H}_\atrs$,
  and an outermost-fair strategy $\bars \subseteq \dgarsh{\atrs}{\sdg}$
  such that $\terprd{\bars}{\astrtrm} = n$.
  There exists a (possibly infinite) maximal, outermost-fair rewrite sequence
  \[\sigma : \pair{s}{\lstemp} \equiv \pair{s_0}{h_0} \to_\bars \ldots \to_\bars \pair{s_m}{h_m} \to_\bars \ldots\]
  such that $s_m \equiv \strcns{\adattrm_1}{\strcns{\ldots}{\strcns{\adattrm_{n}}{\bstrtrm}}}$.
  Since the elements of $\dgarsh{\atrs}{\sdg}$ are annotated with their history,
  we do not encounter repetitions during the rewrite sequence.
  Hence, w.l.o.g.\ we can assume that $\bars$ is a deterministic strategy,
  admitting only the maximal rewrite sequence $\sigma$.
  Disregarding the history annotations there might exist repetitions,
  that is, $i < j \in \nat$ with $s_i \equiv s_j$.
  Nevertheless, from $\sigma$ we can construct a history free data-exchange function $\sdg$ as follows.
  For all $s \in \ter{\databstr{\asig}}_{\sortS}$ let $S \defdby \{i \in \nat \mid s_i \equiv s\}$ and define:
  \begin{align*}
    \funap{\sdg'}{s} \defdby
    \begin{cases}
      \dgh{h_j}{s_j} &\text{if } j \defdby \sup S  < \conattop\\
      \dgh{h_k}{s_k} &\text{if } \sup S = \conattop,\; k \defdby \inf S\\
      \text{arbitrary} &\text{if } \neg \myex{i \in \nat}{s_i \equiv s}
    \end{cases}
  \end{align*}
  \qed
  
\end{itemize}
\end{proof}

\section{The Production Calculus}\label{sec:nets}
As a means to compute the \daob{} 
production behaviour of stream specifications,
we introduce a `production calculus'
with periodically increasing functions as its
central ingredient.
This calculus is the term equivalent of the calculus
of pebbleflow nets that was introduced in~\cite{endr:grab:hend:isih:klop:2007},
see Sec.~\ref{ssec:pebbleflow}.

\subsection{Periodically Increasing Functions}

We use $\conat \defdby \nat \uplus \{\conattop\}$,
the \emph{extended natural numbers}, with
the usual $\le$, $+$, and we define 
$\conattop - \conattop \defdby 0$.
An infinite sequence $\aseq \in \str{X}$ is \emph{eventually periodic}
if $\aseq = \alst\blst\blst\blst\ldots$ for 
some $\alst\in\lst{X}$ and $\blst\in\nelst{X}$.
A function $f\funin\setfun{\nat}{\conat}$ is
\emph{eventually periodic} if the sequence
$\langle \funap{f}{0}, \funap{f}{1}, \funap{f}{2}, \ldots \rangle$
is eventually periodic.
\begin{definition}\normalfont
  A function  $g \funin \setfun{\nat}{\conat}$ is called
  \emph{periodically increasing} if it is increasing, 
  i.e.\ $\myall{n}{\funap{g}{n}\le\funap{g}{n+1}}$,
  and the \emph{derivative of $g$},
  $n\mapsto \funap{g}{n+1} - \funap{g}{n}$, 
  is eventually periodic.
  A function  $h \funin \setfun{\conat}{\conat}$ is called
  \emph{periodically increasing} if its restriction to $\nat$ is
  periodically increasing and if 
  $ h(\conattop) = \lim_{n\to\infty} h(n) $.
  Finally, a $k$-ary function $i \funin \setfun{(\conat)^k}{\conat}$ 
  is called \emph{periodically increasing} if 
  $\funap{i}{n_1,...,n_k} = \funap{\min}{\funap{i_1}{n_1}, \ldots ,\funap{i_k}{n_k}}$
  for some unary periodically increasing functions
  $i_1,\ldots,i_k$.
\end{definition}

Periodically increasing (\pein) functions can be denoted by their value at 0 
followed by a representation of their derivative.
For example, $03\overline{12}$ denotes 
the \pein{} function $f \funin \setfun{\nat}{\nat}$ with values
$ 0, 3, 4, 6, 7, 9, \ldots $.
However, we use a finer and more flexible notation over the alphabet $\io$
that will be useful for computing several operations on \pein~functions. 
For instance, we represent $f$ as above by the `\ioseq' 
\begin{equation*}
  \ios{-+++}{-+-++}
  =
  \iosi{-+++}{\,\iosi{-+-++}{\,\iosi{-+-++}{\,\iosi{-+-++}{\,\ldots}}}}
  \punc,
\end{equation*}
in turn abbreviated by the `\ioterm'
$\pair{\iosi{-+++}{\nix}}{\iosi{-+-++}{\nix}}$.

\begin{definition}\normalfont\label{def:iosq}
  Let $\pm \defdby \io$.
  An \emph{\ioseq} is a finite ($\in\iolst$) 
  or infinite sequence ($\in\iostr$) over $\pm$,
  the set of \ioseq{s} is denoted by 
  $\iosq \defdby \setunion{\iolst}{\iostr}$.
  We let $\iosqemp$ denote the empty \ioseq,
  and use it as an explicit end marker.
  An \emph{\ioterm} is a pair $\pair{\aiolst}{\biolst}$
  of finite \ioseq{s} $\aiolst,\biolst\in\iolst$
  with $\biolst\ne\iosqemp$.
  The set of \ioterm{s} is denoted by $\iotrm$, 
  and we use $\aiotrm,\biotrm$ to range over \ioterm{s}.
  For an \ioseq~$\aiosq\in\iostr$
  and an \ioterm~$\aiotrm = \pair{\aiolst}{\biolst}\in\iotrm$
  we say that \emph{$\aiotrm$ denotes $\aiosq$} %
  if $\aiosq = \aiolst\lstcyc{\biolst}$
  where $\lstcyc{\biolst}$ stands for the infinite sequence 
  $\biolst\biolst\biolst\ldots$.
  A sequence $\aiosq\in\iosq$ is \emph{rational}
  if it is finite or if it is denoted %
  by an \ioterm.
  The set of rational \ioseq{s} is denoted by $\iosqrat$. 
  An infinite sequence $\aiosq\in\iostr$ is \emph{productive} 
  if it contains infinitely many $\ioout$'s:
  \[
    \text{$\aiosq\in\iostr$ is \emph{productive}}
    \,\,\,\Longleftrightarrow\,\,\,
    \myall{n}{\myex{m\geq n}{\strnth{\aiosq}{m}=\ioout}}
    \punc.
  \]
  We let $\prodiostr$ denote the set of productive sequences 
  and define $\prodiosq\defdby\setunion{\iolst}{\prodiostr}$.
\end{definition}
The reason for having both \ioterm{s} and \ioseq{s}
is that, depending on the purpose at hand, 
one or the other is more convenient to work with. 
Operations are often easier to understand on \ioseq{s} 
than on \ioterm{s}, whereas we need finite representations 
to be able to compute these operations.

\begin{definition}\normalfont\label{def:iosq:prd}
  We define $\siosqprd{\aiosq} \funin \setfun{\nat}{\conat}$,
  the \emph{interpretation} of $\aiosq\in\iosq$, by:
  \begin{align}
    \iosqprd{\iosqemp}{n} &= 0
    \label{eq:iosqemp:def:iosq:prd}
    \\
    \iosqprd{\iosqcns{\ioout}{\aiosq}}{n} 
    &= 1 + \iosqprd{\aiosq}{n}
    \label{eq:ioout:def:iosq:prd}
    \\
    \iosqprd{\iosqcns{\ioin}{\aiosq}}{0} &= 0
    \label{eq:ioin:zero:def:iosq:prd}
    \\
    \iosqprd{\iosqcns{\ioin}{\aiosq}}{n+1}
    &= \iosqprd{\aiosq}{n} 
    \label{eq:ioin:succ:def:iosq:prd}
  \end{align}
  for all $n\in\nat$, and extend it to $\conat\to\conat$
  by defining $\iosqprd{\aiosq}{\conattop} = \lim_{n\to\infty}{\iosqprd{\aiosq}{n}}$.
  We say that $\siosqprd{\aiosq}$ \emph{interprets} $\aiosq$, and,
  conversely, that $\aiosq$ \emph{represents} $\siosqprd{\aiosq}$.
  We overload notation and define $\ssiotrmprd\funin\iotrm\to(\conat\to\conat)$ by
  $\siotrmprd{\pair{\aiolst}{\biolst}}\defdby\siosqprd{\iosqcat{\aiolst}{\iosqcyc{\biolst}}}$.
\end{definition}
It is easy to verify that, for every $\aiosq\in\iosq$,
the function $\siosqprd{\aiosq}$ is increasing,
and that, for every $\aiotrm\in\iotrm$, the function %
$\siotrmprd{\aiotrm}$ is periodically increasing.
Furthermore, every increasing function is represented by an \ioseq,
and every \pein{} function is denoted by an \ioterm.

Subsequently, for an eventually constant function $f\funin\conat\to\conat$,
we write $\iosqrep{f}$ for the shortest finite \ioseq\ representing $f$
(trailing $\ioin$'s can be removed). 
For an always eventually strictly increasing function $f\funin\conat\to\conat$,
i.e.\ $\myall{n}{\myex{m>n}{\funap{f}{m} > \funap{f}{n}}}$,
we write $\iosqrep{f}$ for the unique \ioseq\ representing $f$;
note that then $\iosqrep{f}\in\prodiostr$ follows.
For a periodically increasing function $f\funin\conat\to\conat$, 
we write $\iotrmrep{f}$ for the \ioterm\ 
$\aiotrm = \pair{\aiolst}{\biolst}$
such that $\siotrmprd{\aiotrm} = f$ 
and $\lstlength{\aiolst} + \lstlength{\biolst}$ is minimal; 
uniqueness follows from the following lemma:
\begin{lemma}\label{lem:shortest:ioterm}
  For all \pein\ functions $f\funin\conat\to\conat$,
  the term $\iotrmrep{f}\in\iotrm$ is unique. 
\end{lemma}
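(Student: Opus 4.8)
The plan is to exploit the dichotomy behind the two notations $\iosqrep{\cdot}$ introduced just above the lemma: a periodically increasing $f\funin\conat\to\conat$ is either \emph{always eventually strictly increasing} (equivalently, its derivative is eventually periodic but not eventually $0$) or \emph{eventually constant}, say with value $c\in\conat$; these two cases are exhaustive and disjoint. The basic tool in both cases is an analysis of $\siosqprd{\cdot}$ via ``gap sequences'': if an \ioseq{} has the form $\ioout^{g_0}\ioin\,\ioout^{g_1}\ioin\,\ioout^{g_2}\ioin\cdots$ then its interpretation at $n$ equals $g_0+\cdots+g_n$ as long as there are at least $n+1$ occurrences of $\ioin$, so $g_0$ is the value at $0$ and $g_{n+1}$ is the derivative of the interpretation at $n$. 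Hence the interpretation of an \ioseq{} determines the \ioseq{} itself up to what happens after the last $\ioin$ (if there is one), and the gap sequence is eventually periodic exactly when the interpreted function is.

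\emph{Case 1: $f$ always eventually strictly increasing.} By the remark preceding the lemma there is then a \emph{unique} \ioseq{} $\aiosq\in\prodiostr$ with $\siosqprd{\aiosq}=f$; since $f\to\conattop$ and is not eventually constant, $\aiosq$ has infinitely many $\ioin$'s, and its gap sequence being eventually periodic, $\aiosq$ is ultimately periodic as a word. Every \ioterm{} $\pair{\aiolst}{\biolst}$ with $\siotrmprd{\pair{\aiolst}{\biolst}}=f$ therefore satisfies $\aiolst\,\lstcyc{\biolst}=\aiosq$, so the claim reduces to: an ultimately periodic \ioseq{} has a unique \ioterm{} representation of minimal total length $\lstlength{\aiolst}+\lstlength{\biolst}$. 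Here I would invoke the standard fact (a consequence of the theorem of Fine and Wilf applied to the periodic tail) that the eventual periods of $\aiosq$ are exactly the multiples of its least period $p_0$: for any representation, the tail of $\aiosq$ from position $\lstlength{\aiolst}$ is $\lstlength{\biolst}$-periodic, so $\lstlength{\biolst}=\ell p_0$ and $\lstlength{\aiolst}\ge m_0$, where $m_0$ is the least preperiod for period $p_0$; moreover $\lstlength{\aiolst}\ge m_0$ persists even for $\ell\ge2$ (the symbols of $\aiosq$ at positions $m_0-1+\ell p_0$ and $m_0-1+p_0$ agree, but neither equals the symbol at $m_0-1$, by minimality of $m_0$). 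Thus the minimum $m_0+p_0$ is attained only by the canonical decomposition with $\lstlength{\aiolst}=m_0$ and $\lstlength{\biolst}=p_0$, which is uniquely determined.

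\emph{Case 2: $f$ eventually constant with value $c$.} If the periodic part $\biolst$ of a representing \ioterm{} contained both a $\ioout$ and a $\ioin$, then $\lstcyc{\biolst}$ would contain infinitely many $\ioout$'s strictly between consecutive $\ioin$'s, so the gap sequence of $\aiolst\,\lstcyc{\biolst}$ — hence the derivative of $f$ — would not be eventually $0$, contradicting eventual constancy. So $\biolst$ is a power of a single letter; it must be $\ioin^{j}$ when $c<\conattop$ and $\ioout^{j}$ when $c=\conattop$, since a block of $\ioout$'s forces the value $\conattop$ and a block of $\ioin$'s forces a finite value. Since $\aiolst\,\lstcyc{\ioin^{j}}=\aiolst\,\ioin^{\omega}$ and $\aiolst\,\lstcyc{\ioout^{j}}=\aiolst\,\ioout^{\omega}$ do not depend on $j$, taking $j=1$ minimizes $\lstlength{\biolst}$ without altering the admissible $\aiolst$, and it remains to show that the shortest finite $\aiolst$ with $\siosqprd{\aiolst\,\lstcyc{\biolst}}=f$ is unique. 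For $\biolst=\ioin$ this $\aiolst$ is the shortest finite \ioseq{} representing $f$, i.e.\ $\iosqrep{f}$, and the gap analysis shows that a finite \ioseq{} representing $f$ is determined by its number of $\ioin$'s, which cannot be smaller than the index from which $f$ is constant; so the shortest is obtained for exactly that index and is unique. For $\biolst=\ioout$ the analysis is the same, with $\aiolst$ forced to contain precisely that many $\ioin$'s and the numbers of intervening $\ioout$'s read off from the values of $f$.

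The step I expect to cost the most care is Case~2: ruling out ``mixed'' periodic blocks, and, above all, turning the gap analysis into a clean statement that a finite \ioseq{} representing a given eventually-constant $f$ is uniquely determined once its number of $\ioin$'s is fixed. It is this statement that, together with minimality of total length, forces both $\aiolst$ and the number of $\ioin$'s, and it must be phrased carefully enough to also cover the subcase $c=\conattop$, where no finite \ioseq{} represents $f$ at all and one instead argues directly about the structure of $\aiolst$ in front of the final $\ioout^{\omega}$.
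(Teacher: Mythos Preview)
Your approach is correct and genuinely different from the paper's. The paper introduces a two-rule ``compression'' TRS on io-terms --- roll the last symbol of the prefix into the period, and collapse $\biolst^p$ to $\biolst$ --- shows it terminating (each rule strictly shortens) and locally confluent by checking the critical pairs, hence confluent by Newman's Lemma; it then observes that two minimal-length io-terms for the same $f$ denote the same infinite io-sequence (your gap analysis actually supplies the argument the paper leaves as ``easy to see''), builds a common compression-ancestor, and concludes both equal the unique normal form. Your route replaces the rewriting machinery by direct combinatorics on words: in the unbounded case you reduce to uniqueness of the minimal preperiod/period decomposition of an ultimately periodic word, and in the eventually-constant case you pin down $\biolst$ as a power of a single letter and then determine the prefix by counting gaps. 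The paper's proof is more uniform and fits its rewriting theme; yours requires a case split but is self-contained and avoids critical-pair and confluence arguments. Two small remarks: the ``Fine--Wilf'' you invoke for infinite words is really just the elementary fact (via B\'ezout) that periods $p$ and $q$ of an infinite word force period $\gcd(p,q)$; and in Case~2 with $c=\infty$ the free trailing block of $+$'s after the last $-$ in $\aiolst$ must be set to~$0$, which your final sentence anticipates but does not quite spell out.
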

\begin{proof}
  Consider the `compression' TRS consisting of the rules:
  \begin{align}
    \pair{\aiolst x}{\biolst x} &\red \pair{\aiolst}{x\biolst} 
    & (x\in\pm)
    \label{rule:roll}
    \\
    \pair{\aiolst}{\biolst^p} &\red \pair{\aiolst}{\biolst} 
    & (n > 1)
    \label{rule:fold}
  \end{align}
  Clearly, $\aiotrm\red\biotrm$ implies 
  $\iotrmlength{\aiotrm} > \iotrmlength{\biotrm}$
  where, for $\aiotrm = \pair{\aiolst}{\biolst}$,
  $\iotrmlength{\aiotrm} \defdby \lstlength{\aiolst}+\lstlength{\biolst}$.
  Hence, compression is terminating.
  
  There are two critical pairs due to an overlap 
  of rule~\eqref{rule:fold} with itself, and
  in rules~\eqref{rule:roll} and~\eqref{rule:fold}:
  $\pair{\pair{\aiolst}{\biolst^n}}{\pair{\aiolst}{\biolst^m}}$
  originating from a term of the form $\pair{\aiolst}{\biolst^{n\cdot m}}$,
  and
  $\pair{\pair{\aiolst}{(x \biolst)^n}}{\pair{\aiolst x}{\biolst x}}$
  originating from a term of the form 
  $\pair{\aiolst x}{(\biolst x)^n}$,
  respectively.
  Both pairs are easily joinable in one step:
  the first to $\pair{\aiolst}{\biolst}$,
  and the second to $\pair{\aiolst}{x\biolst}$.
  Hence, the system is locally confluent. 
  Therefore, by Newman's Lemma, it is also confluent, and normal forms are unique.

  Finally, assume that two \ioterm{s} 
  $\pair{\iaiolst{1}}{\ibiolst{1}},\pair{\iaiolst{2}}{\ibiolst{2}}\in\iotrm$ 
  each of minimal length represent the same \pein\ function $f$.
  It is easy to  see that then 
  $\iaiolst{1}\iosqcyc{\ibiolst{1}} = \iaiolst{2}\iosqcyc{\ibiolst{2}}$.
  It follows that there exist $\pair{\aiolst}{\biolst}$ such that 
  $\pair{\aiolst}{\biolst}\red^\ast \pair{\iaiolst{1}}{\ibiolst{1}}$
  and
  $\pair{\aiolst}{\biolst}\red^\ast \pair{\iaiolst{2}}{\ibiolst{2}}$.
  By confluence and termination of compression 
  it follows that they have the same normal form 
  $\pair{\iaiolst{0}}{\ibiolst{0}}$.
  Since compression reduces the length 
  of \ioterm{s} it must follow that
  $\pair{\iaiolst{1}}{\ibiolst{1}} = \pair{\iaiolst{0}}{\ibiolst{0}} = \pair{\iaiolst{2}}{\ibiolst{2}}$.
  \qed
\end{proof}

Note that we have that $\siosqprd{\iosqrep{f}} = f$ for all increasing functions $f$, 
and $\siotrmprd{\iotrmrep{f}} = f$ for all \pein\ functions $f$.
As an example one can check that the interpretation of the aforementioned \ioterm\ 
$\trmrep{f} = \pair{\iosi{-+++}{\nix}}{\iosi{-+-++}{\nix}}$
indeed has the sequence $0,3,4,6,7,9,\ldots$ as its graph.

\begin{remark}\label{rem:iosq:prd:wd}
  Note that Def.~\ref{def:iosq:prd} is well-defined due to 
  the productivity requirement on infinite \ioseq{s}. 
  This can be seen as follows: starting on $\iosqprd{\aiosq}{n}$, 
  after finitely many, say $m\in\nat$, $\ioin$'s, 
  we either arrive at rule~\eqref{eq:ioout:def:iosq:prd} (if $m\leq n$) 
  or at rule~\eqref{eq:ioin:zero:def:iosq:prd} (if $m > n$).
  Had we not required infinite sequences $\aiosq\in\iostr$ 
  to contain infinitely many $\ioout$'s, then, e.g., the computation of
  $\iosqprd{\iosqcns{\ioin}{\iosqcns{\ioin}{\iosqcns{\ioin}{\ldots}}}}{\conattop}$ 
  would consist of applications of rule~\eqref{eq:ioin:succ:def:iosq:prd} only,
  and hence would not lead to an infinite normal form.
\end{remark}

\begin{proposition}
  Unary periodically increasing functions are closed under
  composition and pointwise infimum. 
\end{proposition}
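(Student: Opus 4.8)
The plan is to reduce unary periodic increasingness to a \emph{linear recurrence normal form} and then run a short finite case analysis. First I would record the reformulation: for an increasing $f\funin\nat\to\conat$, the derivative $n\mapsto f(n+1)-f(n)$ is eventually periodic if and only if there are $N_f\in\nat$ and $p_f\geq 1$ with $f(n+p_f)=f(n)+d_f$ for all $n\geq N_f$, where $d_f\defdby f(N_f+p_f)-f(N_f)$ is a \emph{constant} lying in $\nat$ (one direction sums the derivative over one period; the other observes that $f(n+p)=f(n)+d$ forces $\delta_f(n+p)=\delta_f(n)$). Here $d_f=0$ exactly when $f$ is eventually constant, and $d_f$ may be taken in $\nat$ because an increasing function with eventually periodic derivative that attains $\conattop$ is eventually constant equal to $\conattop$, so one picks $N_f$ beyond that point. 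The recurring tool is the immediate consequence: if $h$ is increasing and $h(n+P)=h(n)+D$ for all large $n$ (with $P\geq 1$, $D\in\nat$), then $\delta_h$ has period $P$ eventually; so, together with monotonicity and the value at $\conattop$ (checked separately), $h$ is periodically increasing.

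For \textbf{composition}, let $f,g$ be unary periodically increasing and $h\defdby f\circ g$. Monotonicity is clear, and $h(\conattop)=f(g(\conattop))=\lim_n f(g(n))=\lim_n h(n)$ because $g(\conattop)=\lim_n g(n)$ and $f$ is continuous at that argument (both when $g$ is eventually constant and when $g$ is unbounded). If $f$ or $g$ is eventually constant, then $h$ is eventually constant — directly when $g$ is, and when only $f$ is using that then $g(n)\to\conattop$ — hence periodically increasing. Otherwise both are unbounded and $\nat$-valued; take $N,p$ with $g(n+p)=g(n)+d_g$ for $n\geq N$ and $q$ with $f(m+q)=f(m)+d_f$ for all large $m$, and set $P\defdby p\cdot q$. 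Then $g(n+P)=g(n)+q\,d_g$ for $n\geq N$, and since $q\,d_g$ is a multiple of $q$ we get $f(g(n)+q\,d_g)=f(g(n))+d_g\,d_f$ once $g(n)$ is large enough; hence $h(n+P)=h(n)+d_g\,d_f$ for all large $n$, and the tool above applies.

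For \textbf{pointwise infimum}, let $h\defdby n\mapsto\min(f(n),g(n))$. Again $h$ is increasing, and $h(\conattop)=\min(f(\conattop),g(\conattop))=\min(\lim_n f(n),\lim_n g(n))=\lim_n h(n)$ by the standard fact that $\min$ commutes with limits of monotone sequences in $\conat$. For the derivative I would distinguish: (a) if $f$ (or, symmetrically, $g$) is eventually $\conattop$, then $h$ agrees with $g$ (resp.\ $f$) on a cofinite set, so is periodically increasing; (b) if $f$ (or $g$) is eventually constant $c\in\nat$, then eventually $h(n)=\min(c,g(n))$, which is eventually constant — either $g$ is eventually constant, or $g$ is unbounded and then $g(n)>c$ eventually; (c) otherwise both are unbounded and $\nat$-valued, so take a common period $P$ with $f(n+P)=f(n)+d_f$ and $g(n+P)=g(n)+d_g$ for $n\geq N$, where $d_f,d_g\geq 1$. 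If $d_f=d_g=:d$, then $h(n+P)=\min(f(n)+d,g(n)+d)=\min(f(n),g(n))+d=h(n)+d$ eventually, so the tool applies; if $d_f<d_g$ (symmetrically $d_f>d_g$), then on each of the finitely many residue classes modulo $P$ the values of $g$ increase with strictly larger slope than those of $f$, so $g(m)>f(m)$ for all large $m$, whence $h$ agrees with $f$ on a cofinite set.

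The conceptual content is light; the work sits in the bookkeeping. The part I expect to demand the most care is not a single case but the uniform treatment of $\conat$-arithmetic — keeping the conventions $\conattop-\conattop=0$ and $\conattop+0=\conattop$ consistent, and justifying that $d_f,d_g$ can be chosen finite — alongside case (c) of the infimum, where the identity $\min(a+d,b+d)=\min(a,b)+d$ is precisely what lets the argument go through without having to track which of $f,g$ realizes the minimum at each point.
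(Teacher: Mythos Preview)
Your argument is correct. The linear-recurrence normal form $f(n+p_f)=f(n)+d_f$ for $n\ge N_f$ is exactly the right reformulation, and your case analyses for both composition and pointwise infimum go through without gaps. The handling of $\conat$-values is careful enough: once an increasing function hits $\conattop$ it stays there, so it is eventually constant and can be folded into the $d=0$ case; and in the ``both unbounded'' cases you are entitled to work entirely in $\nat$.

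The paper takes a different and less direct route. It does not prove the proposition by manipulating functions at all; instead it represents every periodically increasing function by a rational \ioseq\ (equivalently, an \ioterm), defines the operations $\circ$ and $\wedge$ syntactically on \ioseq{s} (Definitions~\ref{def:iosq:cmp} and~\ref{def:iosq:inf}), proves by coinduction that these syntactic operations realise function composition and pointwise infimum on interpretations (Propositions~\ref{prop:iosq:cmp:interpret} and~\ref{prop:iosq:inf:interpret}), and shows that they preserve rationality (Lemma~\ref{lem:iosq:cmp:rat} for composition, and an analogous pigeonhole argument for infimum). Closure of periodically increasing functions under $\circ$ and $\wedge$ then falls out as a corollary of closure of rational \ioseq{s}. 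What the paper's approach buys is computability: it yields concrete algorithms for producing an \ioterm\ for $f\circ g$ and $f\wedge g$ from \ioterm{s} for $f$ and $g$, which is what the production calculus in Section~\ref{sec:nets} needs. What your approach buys is self-containedness: it establishes the closure property with no auxiliary machinery, and it makes the eventual period and increment of the result explicit ($P=p_f p_g$, $D=d_f d_g$ for composition; $P=\lcm(p_f,p_g)$, $D=\min(d_f,d_g)$ for infimum in the generic case).
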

We want to constructively define the operations of composition, pointwise infimum, 
and least fixed point calculation of increasing functions (\pein\ functions) 
on their \ioseq\ (\ioterm) representations.
We first define these operations on \ioseq{s} by means of coinductive clauses. %
The operations on \ioterm{s} are more involved, 
because they require `loop checking' on top.
We will proceed by showing that the operations 
of composition and pointwise infimum on \ioseq{s}  %
preserve rationality. This %
will then give rise to the needed algorithms 
for computing the operations on \ioterm{s}.

\begin{definition}\normalfont\label{def:iosq:cmp}
  The operation 
  \emph{composition
    $\siosqcmp\funin\setfun{\prodiosq\times\prodiosq}{\prodiosq}$,
    $\pair{\aiosq}{\biosq}\mapsto\iosqcmp{\aiosq}{\biosq}$
    of (finite or productive) \ioseq{s}%
  }
  is defined coinductively as follows:
  \begin{align}
    \iosqcmp{\iosqemp}{\biosq}
    &= \iosqemp
    \label{def:iosqcmp:clause:leftemp}
    \\
    \iosqcmp{\iosqcns{\ioout}{\aiosq}}{\biosq}
    &= \iosqcns{\ioout}{(\iosqcmp{\aiosq}{\biosq})} 
    \label{def:iosqcmp:clause:output}
    \\
    \iosqcmp{\iosqcns{\ioin}{\aiosq}}{\iosqemp}
    &= \iosqemp 
    \label{def:iosqcmp:clause:rightemp}
    \\
    \iosqcmp{\iosqcns{\ioin}{\aiosq}}{\iosqcns{\ioout}{\biosq}}
    &= \iosqcmp{\aiosq}{\biosq} 
    \label{def:iosqcmp:clause:internal}
    \\
    \iosqcmp{\iosqcns{\ioin}{\aiosq}}{\iosqcns{\ioin}{\biosq}} 
    &= \iosqcns{\ioin}{(\iosqcmp{\iosqcns{\ioin}{\aiosq}}{\biosq})}
    \label{def:iosqcmp:clause:input}
  \end{align}
\end{definition}
An argument similar to Rem.~\ref{rem:iosq:prd:wd} concerning well-definedness 
applies for Def.~\ref{def:iosq:cmp}.
\begin{remark}\label{rem:iosqcmp:well-defined}
  Note that the defining 
  rules~\eqref{def:iosqcmp:clause:leftemp}--\eqref{def:iosqcmp:clause:input}
  are orthogonal and exhaust all cases.
  As it stands, it is not immediately clear that this definition 
  is well-defined (productive!), 
  the problematic case being \eqref{def:iosqcmp:clause:internal}.
  Rule~\eqref{def:iosqcmp:clause:internal} is to be thought of as 
  an internal communication between components $\aiosq$ and $\biosq$, 
  as a silent step in the black box~$\iosqcmp{\aiosq}{\biosq}$.
  How to guarantee that always,
  after finitely many internal steps,
  either the process ends or 
  there will be an external step, in the form of
  output or a requirement for input?
  The recursive call in~\eqref{def:iosqcmp:clause:internal}
  is not guarded. However, well-definedness of composition 
  can be justified as follows:
  
  \noindent
  Consider arbitrary sequences $\aiosq,\biosq\in\prodiosq$.
  By definition of $\prodiosq$, there exists $m\in\nat$ such that
  $\aiosq = \iosqcat{\ioin^m}{\iosqcns{\ioout}{\aiosq'}}$ 
  for some $\aiosq'\in\prodiosq$
  or $\aiosq = \iosqcat{\ioin^m}{\iosqemp}$; 
  likewise there exists $n\in\nat$
  such that $\biosq = \iosqcat{\ioin^n}{\iosqcns{\ioout}{\biosq'}}$ 
  for some $\biosq'\in\prodiosq$,
  or $\biosq = \iosqcns{\ioin^n}{\iosqemp}$.
  Rules~\eqref{def:iosqcmp:clause:internal} and~\eqref{def:iosqcmp:clause:input}
  are decreasing with respect to the lexicographic order on $(m,n)$.
  After finitely many applications of~\eqref{def:iosqcmp:clause:internal} 
  and~\eqref{def:iosqcmp:clause:input}, one of the 
  rules~\eqref{def:iosqcmp:clause:leftemp}--\eqref{def:iosqcmp:clause:rightemp}
  must be applied.
  Hence composition is well-defined and the sequence produced is an element of $\prodiosq$,
  i.e.\ either it is a finite sequence %
  or it contains an infinite number of $\ioout$'s.
\end{remark}

\begin{lemma}\label{lem:iosq:cmp:assoc}
  Composition of \ioseqs\ is associative.
\end{lemma}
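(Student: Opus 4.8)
The plan is to prove $(\iosqcmp{\aiosq}{\biosq})\iosqcmp\ciosq = \aiosq\iosqcmp(\iosqcmp{\biosq}{\ciosq})$ for all $\aiosq,\biosq,\ciosq \in \prodiosq$ by coinduction, i.e.\ by exhibiting a bisimulation-style relation between the two sides and checking that it is preserved by the defining clauses of $\siosqcmp$. Since $\siosqcmp$ is defined coinductively and all its defining clauses are orthogonal and exhaustive (Rem.~\ref{rem:iosqcmp:well-defined}), two io-sequences are equal iff they have the same head and equal tails; so it suffices to show that, for every triple $\aiosq,\biosq,\ciosq$, the leftmost symbol produced by each side agrees, and the continuations are again of the form ``LHS$=$RHS for some (possibly different) triple''. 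Concretely I would define the relation $\mathcal{R} = \{\,\pair{(\iosqcmp{\aiosq}{\biosq})\iosqcmp\ciosq}{\aiosq\iosqcmp(\iosqcmp{\biosq}{\ciosq})} \mid \aiosq,\biosq,\ciosq \in \prodiosq\,\}$ together with the diagonal, and show $\mathcal{R}$ is a congruence closed under the coinductive rules.

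The core of the argument is a case analysis on the head of $\aiosq$ (and, in the $\ioin$-case, on the heads of $\biosq$ and $\ciosq$). First, if $\aiosq = \iosqemp$ or $\aiosq = \iosqcns{\ioout}{\aiosq'}$, both clauses~\eqref{def:iosqcmp:clause:leftemp} and~\eqref{def:iosqcmp:clause:output} push the decision to $\aiosq$ alone: in the empty case both sides reduce to $\iosqemp$; in the $\ioout$ case both sides emit $\ioout$ and the continuations are $(\iosqcmp{\aiosq'}{\biosq})\iosqcmp\ciosq$ and $\aiosq'\iosqcmp(\iosqcmp{\biosq}{\ciosq})$, again related by $\mathcal{R}$. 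If $\aiosq = \iosqcns{\ioin}{\aiosq'}$, I inspect $\biosq$: if $\biosq = \iosqemp$ then $\iosqcmp{\aiosq}{\biosq} = \iosqemp$ by~\eqref{def:iosqcmp:clause:rightemp}, so the LHS is $\iosqemp\iosqcmp\ciosq = \iosqemp$, while on the RHS $\iosqcmp{\biosq}{\ciosq}=\iosqemp$ as well, so $\aiosq\iosqcmp\iosqemp=\iosqemp$; if $\biosq = \iosqcns{\ioout}{\biosq'}$ then by~\eqref{def:iosqcmp:clause:internal} the LHS steps (silently) to $(\iosqcmp{\aiosq'}{\biosq'})\iosqcmp\ciosq$ and the RHS's inner composition $\iosqcmp{\biosq}{\ciosq}=\iosqcns{\ioout}{(\iosqcmp{\biosq'}{\ciosq})}$ so the RHS steps by~\eqref{def:iosqcmp:clause:internal} to $\aiosq'\iosqcmp(\iosqcmp{\biosq'}{\ciosq})$; again related. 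Finally if $\biosq = \iosqcns{\ioin}{\biosq'}$, I inspect $\ciosq$: the subcases $\ciosq=\iosqemp$, $\ciosq=\iosqcns{\ioout}{\ciosq'}$, $\ciosq=\iosqcns{\ioin}{\ciosq'}$ are handled by tracing both sides through~\eqref{def:iosqcmp:clause:rightemp}--\eqref{def:iosqcmp:clause:input}; in the doubly-$\ioin$ subcase both sides emit $\ioin$ and reduce to a triple still in $\mathcal{R}$, and in the $\ioin/\ioout$ subcase the internal communications on the two sides line up.

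The main obstacle is well-definedness: several of the transitions above are silent (clause~\eqref{def:iosqcmp:clause:internal}), so one must argue that the coinductive relation-chasing is legitimate, i.e.\ that only finitely many silent steps occur before either termination or an external step, so that ``equal heads, related tails'' really determines equality. This is exactly the productivity argument of Rem.~\ref{rem:iosqcmp:well-defined}, now to be run simultaneously on both sides: using that each of $\aiosq,\biosq,\ciosq$ is in $\prodiosq$, write $\aiosq=\iosqcat{\ioin^{k}}{\cdots}$, $\biosq=\iosqcat{\ioin^{l}}{\cdots}$, $\ciosq=\iosqcat{\ioin^{m}}{\cdots}$ and observe that every silent step strictly decreases the lexicographic measure $(k,l,m)$ (or terminates). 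Hence after finitely many silent steps on either side an external symbol is forced, and the heads and reachable continuations can be compared; bisimilarity then gives equality of the two (finite-or-productive) io-sequences. Associativity for io-\emph{terms} then follows for free via $\siotrmprd{\pair{\aiolst}{\biolst}} = \siosqprd{\aiolst\iosqcyc{\biolst}}$, but as stated the lemma is only about io-sequences, so I would stop here.
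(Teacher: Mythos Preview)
Your proposal is correct and follows essentially the same approach as the paper: define the relation $\mathcal{R}$ of left/right-associated triples, show it is a bisimulation by case analysis on the heads of $\aiosq$, $\biosq$, $\ciosq$, and handle the silent internal-communication steps (clause~\eqref{def:iosqcmp:clause:internal}) by a well-founded descent on the leading $\ioin$-counts. The paper uses a nested induction on the pair $(n,m)$ of leading $\ioin$'s of $\aiosq$ and $\biosq$ rather than your lexicographic triple $(k,l,m)$, and does not add the diagonal to $R$, but these are cosmetic differences.
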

 \begin{proof}
  Let
  \(
    R =
    \{
      \pair
        {\iosqcmp{\aiosq}{(\iosqcmp{\biosq}{\ciosq})}}
        {\iosqcmp{(\iosqcmp{\aiosq}{\biosq})}{\ciosq}}
      \where
      \aiosq,\biosq,\ciosq\in\prodiosq
    \}
  \).
  To show that $R$ is a bisimulation,
  we prove that, for all $\aiosq,\biosq,\ciosq,\diosq_1,\diosq_2\in\prodiosq$,
  if $\diosq_1 = \iosqcmp{\aiosq}{(\iosqcmp{\biosq}{\ciosq})}$ and 
  $\diosq_2 = \iosqcmp{(\iosqcmp{\aiosq}{\biosq})}{\ciosq}$,
  then either $\diosq_1 = \iosqemp = \diosq_2$
  or $\seqhd{\diosq_1} = \seqhd{\diosq_2}$
  and $\pair{\seqtl{\diosq_1}}{\seqtl{\diosq_2}}\in R$,
  by induction on the number $n\in\nat$ of 
  leading $\ioin$'s of $\aiosq$\footnote{%
    By definition of $\prodiosq$ the number of leading 
    $\ioin$'s of any sequence in $\prodiosq$ is finite:
    for all $\aiosq\in\prodiosq$ there exists $n\in\nat$ such that
    either $\aiosq = \iosqcat{\ioin^{n}}{\iosqemp}$ 
    or $\aiosq = \iosqcat{\ioin^{n}}{\iosqcns{\ioout}{\aiosq'}}$.
  }, and a sub-induction on the number $m\in\nat$ 
  of leading $\ioin$'s of $\biosq$.

  If $n = 0$ and $\aiosq = \iosqemp$,
  then $\diosq_1 = \iosqemp = \diosq_2$.
  If %
  $\aiosq = \iosqcns{\ioout}{\aiosq'}$, we have 
  $\diosq_1 = \iosqcns{\ioout}{(\iosqcmp{\aiosq'}{(\iosqcmp{\biosq}{\ciosq})})}$, 
  and
  $\diosq_2 = \iosqcns{\ioout}{(\iosqcmp{(\iosqcmp{\aiosq'}{\biosq})}{\ciosq})}$,
  and so $\pair{\seqtl{\diosq_1}}{\seqtl{\diosq_2}}\in R$.

  If $n > 0$, then $\aiosq = \iosqcns{\ioin}{\aiosq'}$,
  and we proceed by sub-induction on $m$.
  If $m = 0$ and $\biosq = \iosqemp$,
  then $\diosq_1 = \iosqemp = \diosq_2$.
  If %
  $\biosq = \iosqcns{\ioout}{\biosq'}$,
  we compute
  $\diosq_1 = \iosqcmp{\aiosq'}{(\iosqcmp{\biosq'}{\ciosq})}$, and
  $\diosq_2 = \iosqcmp{(\iosqcmp{\aiosq'}{\biosq'})}{\ciosq}$,
  and conclude by IH.

  If $m > 0$, then $\biosq = \iosqcns{\ioin}{\biosq'}$,
  and we proceed by case distinction on $\ciosq$.
  If $\ciosq = \iosqemp$, 
  then $\diosq_1 = \iosqemp = \diosq_2$.
  If $\ciosq = \iosqcns{\ioout}{\ciosq'}$, 
  we compute 
  $\diosq_1 = \iosqcmp{\aiosq}{(\iosqcmp{\biosq'}{\ciosq'})}$, and
  $\diosq_2 = \iosqcmp{(\iosqcmp{\aiosq}{\biosq'})}{\ciosq'}$,
  and conclude by sub-IH.
  Finally, if $\ciosq = \iosqcns{\ioin}{\ciosq'}$, then 
  $\diosq_1 = \iosqcns{\ioin}{(\iosqcmp{\aiosq}{(\iosqcmp{\biosq}{\ciosq'})})}$, and
  $\diosq_2 = \iosqcns{\ioin}{(\iosqcmp{(\iosqcmp{\aiosq}{\biosq})}{\ciosq'})}$.
  Clearly $\pair{\seqtl{\diosq_1}}{\seqtl{\diosq_2}}\in R$.
  \qed
\end{proof}

\begin{remark}\label{rem:iosq:prd:cmp}
  If we allow to use extended natural numbers at places
  where an \ioseq{} is expected, using a coercion
  $n \mapsto \iosqcns{\ioout}{}^n$, with 
  $\iosqcns{\ioout}{}^\numzer = \iosqemp$,
  one observes that the interpretation of an \ioseq\
  (Def.~\ref{def:iosq:prd}) is just a special case of composition:
  \[
    \iosqprd{\aiosq}{n} = \iosqcmp{\aiosq}{n}
    \punc.
  \]
\end{remark}

\begin{proposition}\label{prop:iosq:cmp:interpret}
  For all increasing functions $f,g\funin\conat\to\conat$:\:
  \(
    \siosqprd{\iosqcmp{\trmrep{f}}{\trmrep{g}}} 
    = \funcmp{\siosqprd{\trmrep{f}}}{\siosqprd{\trmrep{g}}}
  \).
\end{proposition}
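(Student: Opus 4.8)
The plan is to prove the slightly more general statement that
$\siosqprd{\iosqcmp{\aiosq}{\biosq}} = \funcmp{\siosqprd{\aiosq}}{\siosqprd{\biosq}}$
holds for all \ioseq{s} $\aiosq,\biosq\in\prodiosq$; the proposition is then the instance $\aiosq \defdby \trmrep f$, $\biosq \defdby \trmrep g$ (both of which lie in $\prodiosq$, being finite, respectively productive, \ioseq{s}, and for which $\siosqprd{\trmrep f}\circ\siosqprd{\trmrep g} = f\circ g$). Concretely, I would establish the pointwise identity
\begin{equation*}
  \iosqprd{\iosqcmp{\aiosq}{\biosq}}{n} \;=\; \iosqprd{\aiosq}{\iosqprd{\biosq}{n}}
  \qquad (\aiosq,\biosq\in\prodiosq,\; n\in\conat)
\end{equation*}
by combining associativity of composition of \ioseq{s} (Lemma~\ref{lem:iosq:cmp:assoc}) with Remark~\ref{rem:iosq:prd:cmp}, which identifies the interpretation of an \ioseq\ with a composition.

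First I would record an auxiliary fact: composing any $\aiosq\in\prodiosq$ with a pure-output \ioseq\ again yields a pure-output \ioseq. Indeed, on an argument of the form $\iosqcns{\ioout}{}^{k}$ (with $k\in\conat$, using the coercion $k\mapsto\iosqcns{\ioout}{}^{k}$ of Remark~\ref{rem:iosq:prd:cmp}, so that $\conattop$ is sent to the all-$\ioout$ stream) clause~\eqref{def:iosqcmp:clause:input} of Def.~\ref{def:iosq:cmp} can never fire, whence $\iosqcmp{\aiosq}{\iosqcns{\ioout}{}^{k}} = \iosqcns{\ioout}{}^{m}$ for some $m\in\conat$; comparing the remaining clauses of Def.~\ref{def:iosq:cmp} with those of Def.~\ref{def:iosq:prd} shows $m = \iosqprd{\aiosq}{k}$. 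This is exactly the content of Remark~\ref{rem:iosq:prd:cmp}, which I will invoke three times; should one prefer not to rely on a remark, the equality $\iosqcmp{\aiosq}{\iosqcns{\ioout}{}^{k}} = \iosqcns{\ioout}{}^{\iosqprd{\aiosq}{k}}$ is also provable directly by case analysis on $\aiosq$ and induction on $k$. Note that every composition term below has both arguments in $\prodiosq$ — finite or productive \ioseq{s}, together with $\iosqcns{\ioout}{}^{k}\in\prodiosq$ for each $k\in\conat$ — so by Remark~\ref{rem:iosqcmp:well-defined} all of them are well-defined.

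The pointwise identity then follows from the chain
\begin{equation*}
  \iosqprd{\iosqcmp{\aiosq}{\biosq}}{n}
  \;=\; \iosqcmp{(\iosqcmp{\aiosq}{\biosq})}{n}
  \;=\; \iosqcmp{\aiosq}{(\iosqcmp{\biosq}{n})}
  \;=\; \iosqcmp{\aiosq}{\iosqprd{\biosq}{n}}
  \;=\; \iosqprd{\aiosq}{\iosqprd{\biosq}{n}}
  \; ,
\end{equation*}
in which the first, third, and fourth equalities are instances of Remark~\ref{rem:iosq:prd:cmp} — the third one additionally using the auxiliary fact to recognise $\iosqcmp{\biosq}{n}$ as the pure-output \ioseq\ coding the extended natural $\iosqprd{\biosq}{n}$ — and the middle equality is associativity of composition (Lemma~\ref{lem:iosq:cmp:assoc}) applied to the triple $\aiosq,\biosq,\iosqcns{\ioout}{}^{n}$. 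Specialising to $\aiosq = \trmrep f$, $\biosq = \trmrep g$ then gives $\siosqprd{\iosqcmp{\trmrep f}{\trmrep g}} = \funcmp{\siosqprd{\trmrep f}}{\siosqprd{\trmrep g}}$, as required.

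I do not anticipate a genuine obstacle here: once associativity of composition is available, the proof is essentially bookkeeping with the coercion $n\mapsto\iosqcns{\ioout}{}^{n}$. The single point that deserves a little care is the value at $n=\conattop$, i.e.\ checking that the coercion and Remark~\ref{rem:iosq:prd:cmp} are indeed meant to cover this case. If one wishes to sidestep it, one proves the identity for $n\in\nat$ by the chain above and then extends it to $\conattop$ using that the three functions $\siosqprd{\aiosq}$, $\siosqprd{\biosq}$, $\siosqprd{\iosqcmp{\aiosq}{\biosq}}$ are increasing and take at $\conattop$ the limit of their values on $\nat$, together with the fact that $\siosqprd{\aiosq}$ preserves suprema of increasing chains in $\conat$ (immediate from its definition), whence $\lim_{n\to\infty}\iosqprd{\aiosq}{\iosqprd{\biosq}{n}} = \iosqprd{\aiosq}{\lim_{n\to\infty}\iosqprd{\biosq}{n}} = \iosqprd{\aiosq}{\iosqprd{\biosq}{\conattop}}$. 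Alternatively the identity could be obtained by a direct induction on $n$ with a case analysis on the leading symbols of $\aiosq$ and $\biosq$ mirroring the coinductive clauses of Def.~\ref{def:iosq:cmp}, but that is strictly more laborious than the route above.
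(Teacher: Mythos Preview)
Your proposal is correct and takes essentially the same approach as the paper, which simply cites Remark~\ref{rem:iosq:prd:cmp} and Lemma~\ref{lem:iosq:cmp:assoc} as making the statement immediate. Your chain of equalities spells out exactly the argument the paper leaves implicit, and your added care about the $n=\conattop$ case and the auxiliary fact that $\iosqcmp{\biosq}{n}$ is again a pure-output sequence are reasonable elaborations of details the paper glosses over.
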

\begin{proof}
  Immediate from Rem.~\ref{rem:iosq:prd:cmp},
  and Lem.~\ref{lem:iosq:cmp:assoc}.
  \qed
\end{proof}

Next, we show that composition of \ioseq{s} preserves rationality.
\begin{lemma}%
  \label{lem:iosq:cmp:rat}
  If $\aiosq,\biosq\in\iosqrat$, then $\iosqcmp{\aiosq}{\biosq}\in\iosqrat$.
\end{lemma}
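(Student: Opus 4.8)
The plan is to read the coinductive definition of composition operationally, as a deterministic process whose state is a \emph{configuration} $\pair{\aiosq'}{\biosq'}$, with $\aiosq'$ a suffix (tail) of $\aiosq$ and $\biosq'$ a suffix of $\biosq$. One step of the process inspects the heads of $\aiosq'$ and $\biosq'$ and, by the unique applicable clause of Def.~\ref{def:iosq:cmp} (the clauses are orthogonal and exhaustive, cf.\ Rem.~\ref{rem:iosqcmp:well-defined}), either halts --- clauses~\eqref{def:iosqcmp:clause:leftemp} and~\eqref{def:iosqcmp:clause:rightemp} --- or emits one symbol and moves to a successor configuration: clause~\eqref{def:iosqcmp:clause:output} emits $\ioout$ and passes from $\pair{\iosqcns{\ioout}{\aiosq''}}{\biosq'}$ to $\pair{\aiosq''}{\biosq'}$; clause~\eqref{def:iosqcmp:clause:internal} emits nothing and passes from $\pair{\iosqcns{\ioin}{\aiosq''}}{\iosqcns{\ioout}{\biosq''}}$ to $\pair{\aiosq''}{\biosq''}$; clause~\eqref{def:iosqcmp:clause:input} emits $\ioin$ and passes from $\pair{\iosqcns{\ioin}{\aiosq''}}{\iosqcns{\ioin}{\biosq''}}$ to $\pair{\iosqcns{\ioin}{\aiosq''}}{\biosq''}$. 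In every case the first component stays a suffix of $\aiosq$ and the second a suffix of $\biosq$, so by induction every configuration reachable from the initial one $\pair{\aiosq}{\biosq}$ is of this shape; and unfolding the coinductive definition shows that $\iosqcmp{\aiosq}{\biosq}$ is exactly the concatenation of the symbols emitted along the (finite or infinite) run started from $\pair{\aiosq}{\biosq}$.

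The engine of the argument is the observation that a \emph{rational} io-sequence has only finitely many distinct suffixes: a finite sequence of length $k$ has $k+1$ of them, and $\aiolst\iosqcyc{\biolst}$ has at most $\lstlength{\aiolst}+\lstlength{\biolst}$ (the proper tails of $\aiolst$ followed by $\iosqcyc{\biolst}$, and the cyclic rotations of $\biolst$ followed by $\iosqcyc{\biolst}$). Since $\aiosq,\biosq\in\iosqrat$, there are therefore only finitely many reachable configurations. If the run from $\pair{\aiosq}{\biosq}$ is finite, then $\iosqcmp{\aiosq}{\biosq}$ is a finite io-sequence, hence in $\iosqrat$. If the run is infinite, then, the process being deterministic and visiting only finitely many configurations, some configuration recurs; and from the first recurrence onward the run --- and hence the stream of emitted symbols --- is periodic. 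So $\iosqcmp{\aiosq}{\biosq} = \iolstcns{\aiolst_0}{\iosqcyc{\aiolst_1}}$, where $\aiolst_0$ is the finite block emitted before the first occurrence of the recurring configuration and $\aiolst_1$ the finite block emitted between its first two occurrences.

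To conclude that $\pair{\aiolst_0}{\aiolst_1}$ is a genuine io-term denoting $\iosqcmp{\aiosq}{\biosq}$, it remains to check that the recurring block is non-trivial, i.e.\ $\aiolst_1\ne\iosqemp$. This is precisely the point settled by the well-definedness argument of Rem.~\ref{rem:iosqcmp:well-defined}: a block emitting no symbol consists solely of applications of clauses~\eqref{def:iosqcmp:clause:internal} and~\eqref{def:iosqcmp:clause:input}, which strictly decrease the lexicographically ordered pair $(m,n)$ of numbers of leading $\ioin$'s of the two components; a block leading from a configuration back to itself cannot strictly decrease that measure, so it must use clause~\eqref{def:iosqcmp:clause:output} at least once, whence $\aiolst_1$ contains an $\ioout$ and is non-empty. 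Therefore $\iosqcmp{\aiosq}{\biosq}\in\iosqrat$.

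I expect the only real work in the formal write-up to be bookkeeping rather than insight: setting up the configuration relation together with its emitted-symbol trace so that it visibly coincides with the unfolding of Def.~\ref{def:iosq:cmp} --- in particular, for infinite runs, arguing that all the emitted symbols are actually delivered, which again reduces to the productivity observation of Rem.~\ref{rem:iosqcmp:well-defined} --- and noting that the degenerate cases in which $\aiosq$ or $\biosq$ is finite need no separate treatment, since finite sequences also have only finitely many suffixes. As a by-product the construction is effective: the finite configuration graph can be computed from io-term representations of $\aiosq$ and $\biosq$, which is the algorithm for composition on io-terms announced just before the lemma.
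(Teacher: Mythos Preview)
Your proposal is correct and follows essentially the same approach as the paper: both view the defining clauses as a deterministic transition system on pairs of suffixes, observe that rational io-sequences have only finitely many suffixes, and apply the pigeonhole principle to obtain an eventually periodic output. One minor slip: where you write ``a block emitting no symbol consists solely of applications of clauses~\eqref{def:iosqcmp:clause:internal} and~\eqref{def:iosqcmp:clause:input}'' you presumably mean ``emitting no $\ioout$'', since clause~\eqref{def:iosqcmp:clause:input} does emit $\ioin$; with that reading your decreasing-measure argument for $\aiolst_1\ne\iosqemp$ goes through, and in fact makes explicit a point the paper's proof simply asserts.
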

\begin{proof}
  Let $\aiosq,\biosq\in\iosqrat$ %
  and set $\aiosq_0 \defdby \aiosq$ and $\biosq_0 \defdby \biosq$.
  In the rewrite sequence starting with $\iosqcmp{\aiosq_0}{\biosq_0}$ 
  each of the steps is either of the form:
  \begin{equation}\label{thm:rat:composition:eq:a}
    \iosqcmp{\aiosq_{n}}{\biosq_{n}} 
    \to C_{n}[\iosqcmp{\aiosq_{n+1}}{\biosq_{n+1}}]\punc,
  \end{equation}
  where
  $\aiosq_{n+1} \in \{\aiosq_n, \iosqtl{\aiosq_n}\}$, 
  $\biosq_{n+1} \in \{\biosq_n, \iosqtl{\biosq_n}\}$
  and $C_n \in \{\square, \iosqexplcns{\ioout}{\square}, \iosqexplcns{\ioin}{\square}\}$,
  or the rewrite sequence ends with a step of the form:
  \begin{equation*}
    \iosqcmp{\aiosq_{n}}{\biosq_{n}} \to \iosqemp \punc.
  \end{equation*}
  In the latter case, 
  $\iosqcmp{\aiosq}{\biosq}$ results in a finite and hence rational \ioseq.
  
  \renewcommand{\blst}{\theta}
  Otherwise the rewrite sequence is infinite, 
  consisting of steps of form~\eqref{thm:rat:composition:eq:a} only.
  For $\blst \in \iolst$ we define the context $C_{\blst}$ inductively: 
  $ C_{\blst} \equiv \square $, for $ \blst \equiv \bot $;
  and
  $ C_{+\blst} \equiv + C_{\blst} $ 
  as well as
  $ C_{-\blst} \equiv - C_{\blst} $, 
  for all $ \blst\in\iolst$.
  Because the sequences $\aiosq,\biosq$ are rational,
  the sets $\{\aiosq_n \where n \in \nat\}$ 
  and $\{\biosq_n \where n \in \nat\}$ are finite.
  Then, the pigeonhole principle implies the existence of 
  $i,j \in \nat$ such that $i < j$, $\aiosq_i = \aiosq_j$ and $\biosq_i = \biosq_j$.
  Now let $\alst, \clst \in \iolst$ with $\clst \neq \iosqemp$
  such that
  $C_{\alst} \equiv C_0[ \ldots C_{i-1}[C_i] \ldots ]$,
  $C_{\clst} \equiv C_{i+1}[ \ldots C_{j-1}[C_j] \ldots ]$,
  and:
  \begin{align*}
    \iosqcmp{\aiosq_0}{\biosq_0} 
    &\mred 
    C_{\alst} [ \iosqcmp{\aiosq_i}{\biosq_i} ] \:\text{, and}
    \\
    \iosqcmp{\aiosq_i}{\biosq_i}  
    &\mred
    C_{\clst} [ \iosqcmp{\aiosq_j}{\biosq_j} ]
    =  %
    C_{\clst} [ \iosqcmp{\aiosq_i}{\biosq_i} ] 
    \:\text{.}
  \end{align*}
  Then we find an eventually `spiralling' rewrite sequence:
  \begin{align*}
    \iosqcmp{\aiosq_0}{\biosq_0}   
    & \mred  
    C_{\alst} [ \iosqcmp{\aiosq_i}{\biosq_i} ] 
    = %
    \iosqcat{\alst}{(\iosqcmp{\aiosq_i}{\biosq_i})}
    \\
    &\mred 
    C_{\alst} [ C_{\clst}[ \iosqcmp{\aiosq_i}{\biosq_i} ] ] 
    = %
    \iosqcat{\iosqcat{\alst}{\clst}}{(\iosqcmp{\aiosq_i}{\biosq_i})}
    \\
    &\mred 
    C_{\alst} [ C_{\clst}[ C_{\clst} [ \iosqcmp{\aiosq_i}{\biosq_i} ] ] ]
    = %
    \iosqcat{\iosqcat{\iosqcat{\alst}{\clst}}{\clst}}{(\iosqcmp{\aiosq_i}{\biosq_i})} 
    \:\text{,}
  \end{align*}
  and therefore $\iosqcmp{\aiosq_0}{\biosq_0} \infred \iosqcat{\alst}{\lstcyc{\clst}}$.
  This shows that $\iosqcmp{\aiosq}{\biosq}$ is rational.
  \qed
\end{proof}

Next, we define the operation of pointwise infimum of increasing functions on \ioseqs.
\begin{definition}\normalfont\label{def:iosq:inf}
  The operation 
  \emph{pointwise infimum
    $\siosqinf\funin\iosq\times\iosq\to\iosq$,
    $\pair{\aiosq}{\biosq}\mapsto\iosqinf{\aiosq}{\biosq}$
    of \ioseq{s}%
  } 
  is defined coinductively by the following equations:
  \begin{align*}
    \iosqinf{\iosqemp}{\biosq}
    &= \iosqemp
    &
    \iosqrmioin{\iosqemp}
    &= \iosqemp
    \\
    \iosqinf{\aiosq}{\iosqemp}
    &= \iosqemp
    &
    \iosqrmioin{\iosqcns{\ioout}{\aiosq}}
    &= \iosqcns{\ioout}{\iosqrmioin{\aiosq}}
    \\
    \iosqinf{\iosqcns{\ioout}{\aiosq}}{\iosqcns{\ioout}{\biosq}}
    &= \iosqcns{\ioout}{(\iosqinf{\aiosq}{\biosq})}
    &
    \iosqrmioin{\iosqcns{\ioin}{\aiosq}}
    &= \aiosq
    \\
    \iosqinf{\iosqcns{\ioin}{\aiosq}}{\biosq}
    &= \iosqcns{\ioin}{(\iosqinf{\aiosq}{\iosqrmioin{\biosq}})}
    \dquad\text{($\biosq \neq \iosqemp$)}
    \\
    \iosqinf{\aiosq}{\iosqcns{\ioin}{\biosq}} 
    &= \iosqcns{\ioin}{(\iosqinf{\iosqrmioin{\aiosq}}{\biosq})}
    \dquad\text{($\aiosq \neq \iosqemp$)}
  \end{align*}
  where $\iosqrmioin{\aiosq}$ removes the first requirement of $\aiosq\in\iosq$ (if any).
\end{definition}
We let $\siosqcmp$ bind stronger than $\siosqinf$.

Requirement removal distributes over pointwise infimum:
\begin{lemma}\label{lem:rmioin:infm:distr}
  For all $\aiosq,\biosq\in\prodiosq$, it holds that:\:
  \(
    \iosqrmioin{\iosqinf{\aiosq}{\biosq}} 
    = \iosqinf{\iosqrmioin{\aiosq}}{\iosqrmioin{\biosq}}
  \).
\end{lemma}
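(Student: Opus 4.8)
The plan is to prove the identity by coinduction, exactly in the style of the proof of Lem.~\ref{lem:iosq:cmp:assoc}. Since bisimilarity and equality coincide on \ioseq{s}, it suffices to exhibit a bisimulation relating the two sides for all inputs. I would take
\[
  R \defdby
  \bigl\{ \pair{\iosqrmioin{\iosqinf{\aiosq}{\biosq}}}{\iosqinf{\iosqrmioin{\aiosq}}{\iosqrmioin{\biosq}}} \where \aiosq,\biosq\in\prodiosq \bigr\}
  \cup
  \bigl\{ \pair{\aseq}{\aseq} \where \aseq\in\iosq \bigr\} \punc.
\]
The diagonal part of $R$ is trivially a bisimulation, so the work is to check the pairs of the first kind, by a case analysis on the leading symbols of $\aiosq$ and $\biosq$, using that every $\aiosq\in\prodiosq$ equals $\iosqemp$, or has the form $\iosqcns{\ioout}{\aiosq'}$, or the form $\iosqcns{\ioin}{\aiosq'}$, with $\aiosq'\in\prodiosq$ again (and likewise for $\biosq$).

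First I would dispose of the case $\aiosq=\iosqemp$ or $\biosq=\iosqemp$: then both components reduce to $\iosqemp$, by the first two clauses of Def.~\ref{def:iosq:inf} together with $\iosqrmioin{\iosqemp}=\iosqemp$. Next, if $\aiosq=\iosqcns{\ioout}{\aiosq'}$ and $\biosq=\iosqcns{\ioout}{\biosq'}$, then using the $\ioout$-clauses of Def.~\ref{def:iosq:inf} and of $\siosqrmioin$ both components unfold to $\iosqcns{\ioout}{\cdots}$ with respective tails $\iosqrmioin{\iosqinf{\aiosq'}{\biosq'}}$ and $\iosqinf{\iosqrmioin{\aiosq'}}{\iosqrmioin{\biosq'}}$, i.e.\ again a pair of the first kind in $R$ (with $\aiosq',\biosq'\in\prodiosq$); this is the only genuinely coinductive step, and it is already in bisimulation shape, so no auxiliary invariant is needed. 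Finally, if $\aiosq=\iosqcns{\ioin}{\aiosq'}$ (so $\biosq\neq\iosqemp$ by the previous case), the left component unfolds by the leading-$\ioin$ clause of Def.~\ref{def:iosq:inf} followed by $\iosqrmioin{\iosqcns{\ioin}{\aiosq'}}=\aiosq'$ literally to $\iosqinf{\aiosq'}{\iosqrmioin{\biosq}}$, and the right component equals $\iosqinf{\aiosq'}{\iosqrmioin{\biosq}}$ as well, again because $\iosqrmioin{\iosqcns{\ioin}{\aiosq'}}=\aiosq'$; so the two components coincide and the pair lies in the diagonal part of $R$. The symmetric case $\biosq=\iosqcns{\ioin}{\biosq'}$ with $\aiosq$ non-empty is handled the same way via the last clause of Def.~\ref{def:iosq:inf}. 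These cases are exhaustive (the two leading-$\ioin$ subcases overlap when both sequences start with $\ioin$, but both routes yield the same literal equality, consistently with the fact that the two $\ioin$-clauses of Def.~\ref{def:iosq:inf} agree).

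The computations are routine and there is no deep obstacle. The one point deserving a sentence of care is that $R$ must be closed under the diagonal, since the leading-$\ioin$ cases do not produce a strictly ``smaller'' pair but a literal equality; and, relatedly, one must observe that $\siosqrmioin$ and $\siosqinf$ map $\prodiosq$, respectively $\prodiosq\times\prodiosq$, back into $\prodiosq$ (by the same counting-of-leading-$\ioin$'s argument as in Rem.~\ref{rem:iosqcmp:well-defined}), so that all sub-sequences occurring in the case analysis are legitimate inputs of the defining clauses used.
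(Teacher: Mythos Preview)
Your proposal is correct and follows essentially the same approach as the paper: the paper's proof simply states that the relation you call $R$ (the pairs $\pair{\iosqrmioin{\iosqinf{\aiosq}{\biosq}}}{\iosqinf{\iosqrmioin{\aiosq}}{\iosqrmioin{\biosq}}}$ together with the diagonal) is a bisimulation, and your case analysis is exactly the verification of that claim.
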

\begin{proof}
  Check that 
  \(
    \{
      \pair{
        \iosqrmioin{\iosqinf{\aiosq}{\biosq}}}{
        \iosqinf{\iosqrmioin{\aiosq}}{\iosqrmioin{\biosq}}
      }
      \where
      \aiosq,\biosq\in\iosq
    \}
    \cup
    \{ 
      \pair{\aiosq}{\aiosq} \where \aiosq\in\iosq 
    \}
  \)
  is a bisimulation.
  \qed
\end{proof}

\begin{lemma}\label{lem:infm:idemp:comm:assoc}
  Infimum is idempotent, commutative, and associative.
  \end{lemma}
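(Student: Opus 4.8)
The plan is to prove each of the three laws by exhibiting an explicit bisimulation on \ioseq{s}, recalling that bisimilar \ioseq{s} are equal. Before doing so one should note, just as for composition, that $\siosqinf$ is well-defined: each defining clause either terminates, producing $\iosqemp$, or emits one symbol and recurses, so no silent, non-productive loop is possible; moreover the two clauses that are simultaneously applicable when both arguments start with $\ioin$ agree, since applying either of them to $\iosqcns{\ioin}{\aiosq}$ and $\iosqcns{\ioin}{\biosq}$ gives $\iosqcns{\ioin}{(\iosqinf{\aiosq}{\biosq})}$, using $\iosqrmioin{\iosqcns{\ioin}{\aiosq}} = \aiosq$. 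Hence $\iosqinf{\aiosq}{\biosq}\in\iosq$ is uniquely determined for all $\aiosq,\biosq\in\iosq$, and in the verifications below we may apply whichever defining clause is convenient.

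For \emph{idempotency} I would take $R_{\mathrm{id}} = \{\,\pair{\iosqinf{\aiosq}{\aiosq}}{\aiosq} \where \aiosq\in\iosq\,\}$. A split on the leading symbol of $\aiosq$ settles it immediately: for $\aiosq = \iosqemp$ both sides are $\iosqemp$; for $\aiosq = \iosqcns{\ioout}{\aiosq'}$ one gets $\iosqinf{\aiosq}{\aiosq} = \iosqcns{\ioout}{(\iosqinf{\aiosq'}{\aiosq'})}$; and for $\aiosq = \iosqcns{\ioin}{\aiosq'}$ the clause for a left $\ioin$ gives $\iosqinf{\aiosq}{\aiosq} = \iosqcns{\ioin}{(\iosqinf{\aiosq'}{\aiosq'})}$ after simplifying $\iosqrmioin{\iosqcns{\ioin}{\aiosq'}}$ to $\aiosq'$; in the two non-trivial cases the tail pair is again in $R_{\mathrm{id}}$. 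For \emph{commutativity} I would take $R_{\mathrm{comm}} = \{\,\pair{\iosqinf{\aiosq}{\biosq}}{\iosqinf{\biosq}{\aiosq}} \where \aiosq,\biosq\in\iosq\,\}$. The defining clauses of $\siosqinf$ are symmetric apart from the order of the two clauses handling a leading $\ioin$, so the only points needing care are: (i)~one argument starts with $\ioin$ and the other is nonempty and starts with $\ioout$ --- applying the clause for a left $\ioin$ on one component and the clause for a right $\ioin$ on the other produces the common head $\ioin$ and tails of the shape $\iosqinf{\cdot}{\cdot}$ with the two arguments swapped; (ii)~both arguments start with $\ioin$ --- then both clauses apply on each component, and choosing the left one on the left component and the right one on the right component yields tails $\iosqinf{\aiosq'}{\biosq'}$ and $\iosqinf{\biosq'}{\aiosq'}$; in all cases the tail pair lies in $R_{\mathrm{comm}}$.

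\emph{Associativity} is the main work. I would take $R_{\mathrm{assoc}} = \{\,\pair{\iosqinf{\aiosq}{(\iosqinf{\biosq}{\ciosq})}}{\iosqinf{(\iosqinf{\aiosq}{\biosq})}{\ciosq}} \where \aiosq,\biosq,\ciosq\in\iosq\,\}$ and show it is a bisimulation by a case analysis on the leading symbols of $\aiosq$, $\biosq$, $\ciosq$. If any of the three is $\iosqemp$, both components reduce to $\iosqemp$. If all three start with $\ioout$, three applications of the clause for two leading $\ioout$'s pull out a common $\ioout$ and leave tails $\iosqinf{\aiosq'}{(\iosqinf{\biosq'}{\ciosq'})}$ and $\iosqinf{(\iosqinf{\aiosq'}{\biosq'})}{\ciosq'}$, an $R_{\mathrm{assoc}}$-pair. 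In every remaining case at least one argument starts with $\ioin$, and then both components emit the head $\ioin$; the crucial observation is that whenever, after unfolding, a $\iosqrmioin$ comes to stand in front of an inner $\iosqinf$, Lem.~\ref{lem:rmioin:infm:distr} rewrites $\iosqrmioin{\iosqinf{\aiosq}{\biosq}}$ into $\iosqinf{\iosqrmioin{\aiosq}}{\iosqrmioin{\biosq}}$, whereupon the clauses of $\siosqinf$ and of $\siosqrmioin$ --- notably $\iosqrmioin{\iosqcns{\ioout}{\aiosq}} = \iosqcns{\ioout}{\iosqrmioin{\aiosq}}$ together with the clause for two leading $\ioout$'s --- rebracket the two tails into the shapes $\iosqinf{\cdot}{(\iosqinf{\cdot}{\cdot})}$ and $\iosqinf{(\iosqinf{\cdot}{\cdot})}{\cdot}$ for one common triple of \ioseq{s}, so that the tail pair is again in $R_{\mathrm{assoc}}$. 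Where more than one argument starts with $\ioin$ at once, several clauses are simultaneously applicable, but by well-definedness all choices agree, so it suffices to treat one representative subcase per combination.

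I expect the only real obstacle to be the bookkeeping in the associativity case: arranging the case split to be exhaustive, and checking in each $\ioin$-case that the two tails genuinely reduce to the \emph{same} triple of \ioseq{s}. All these reductions are short and mechanical once Lem.~\ref{lem:rmioin:infm:distr} is available --- that lemma is precisely what lets the occurrences of $\iosqrmioin$ injected by the two $\ioin$-handling clauses commute past the inner infima --- so I anticipate no difficulty beyond careful unfolding.
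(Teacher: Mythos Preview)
Your proposal is correct and follows essentially the same approach as the paper: coinduction via explicit bisimulations, invoking Lem.~\ref{lem:rmioin:infm:distr} to handle the associativity case. The paper's own proof is the single line ``By coinduction, with the use of Lem.~\ref{lem:rmioin:infm:distr} in case of associativity,'' which you have faithfully unpacked.
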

\begin{proof}
  By coinduction, with the use of Lem.~\ref{lem:rmioin:infm:distr} in case of associativity.
  \qed
\end{proof}

In a composition requirements come from the second component:
\begin{lemma}\label{lem:iosq:rmioin:cmp}
  For all $\aiosq,\biosq\in\prodiosq$, it holds that:\:
  \(
    \iosqrmioin{\iosqcmp{\aiosq}{\biosq}}
    =
    \iosqcmp{\aiosq}{\iosqrmioin{\biosq}}
  \).
\end{lemma}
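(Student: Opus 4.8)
The plan is to prove the identity coinductively, in the same style as Lem.~\ref{lem:rmioin:infm:distr} and Lem.~\ref{lem:iosq:cmp:assoc}: exhibit a bisimulation and use that equality of \ioseq{s} coincides with bisimilarity. Concretely, I would set
\[
  R \defdby
  \{\, \pair{\iosqrmioin{\iosqcmp{\aiosq}{\biosq}}}{\iosqcmp{\aiosq}{\iosqrmioin{\biosq}}} \where \aiosq,\biosq\in\prodiosq \,\}
  \;\cup\;
  \{\, \pair{\aiosq}{\aiosq} \where \aiosq\in\prodiosq \,\}
\]
(the diagonal part absorbing the cases where the two sides collapse to a common sequence) and show that $R$ is a bisimulation. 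Note first that both sides are well-defined: $\iosqrmioin$ preserves membership in $\prodiosq$ (dropping one leading $\ioin$ keeps a sequence finite or productive), and composition maps $\prodiosq\times\prodiosq$ into $\prodiosq$, so $\iosqrmioin{\iosqcmp{\aiosq}{\biosq}}$ and $\iosqcmp{\aiosq}{\iosqrmioin{\biosq}}$ both make sense.

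For a pair of the first form I would run the bisimulation check by an auxiliary induction on the lexicographically ordered pair $(m,n)$, where $m$ is the number of leading $\ioin$'s of $\aiosq$ and $n$ that of $\biosq$ (both finite by the defining property of $\prodiosq$). The case split follows the defining clauses of $\siosqcmp$. If $\aiosq=\iosqemp$, or $\aiosq = \iosqcns{\ioin}{\aiosq'}$ with $\biosq=\iosqemp$, both sides reduce to $\iosqemp$ (using~\eqref{def:iosqcmp:clause:leftemp},~\eqref{def:iosqcmp:clause:rightemp} and $\iosqrmioin{\iosqemp}=\iosqemp$). If $\aiosq=\iosqcns{\ioout}{\aiosq'}$, then by~\eqref{def:iosqcmp:clause:output} and $\iosqrmioin{\iosqcns{\ioout}{\cdot}}=\iosqcns{\ioout}{\iosqrmioin{\cdot}}$ both sides have head $\ioout$ and respective tails $\iosqrmioin{\iosqcmp{\aiosq'}{\biosq}}$ and $\iosqcmp{\aiosq'}{\iosqrmioin{\biosq}}$, again a pair of $R$. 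If $\aiosq=\iosqcns{\ioin}{\aiosq'}$ and $\biosq=\iosqcns{\ioin}{\biosq'}$, then by~\eqref{def:iosqcmp:clause:input} the left side is $\iosqrmioin{\iosqcns{\ioin}{\iosqcmp{\aiosq}{\biosq'}}}=\iosqcmp{\aiosq}{\biosq'}$, and since $\iosqrmioin{\iosqcns{\ioin}{\biosq'}}=\biosq'$ the right side is also $\iosqcmp{\aiosq}{\biosq'}$, i.e.\ a diagonal pair. The one remaining case, $\aiosq=\iosqcns{\ioin}{\aiosq'}$ and $\biosq=\iosqcns{\ioout}{\biosq'}$, is the crux: by~\eqref{def:iosqcmp:clause:internal} the left side equals $\iosqrmioin{\iosqcmp{\aiosq'}{\biosq'}}$ and the right side equals $\iosqcmp{\aiosq'}{\iosqrmioin{\biosq'}}$ (using $\iosqrmioin{\iosqcns{\ioout}{\biosq'}}=\iosqcns{\ioout}{\iosqrmioin{\biosq'}}$ and again~\eqref{def:iosqcmp:clause:internal}); that is, the pair has literally become the instance of the statement for $(\aiosq',\biosq')$, whose measure $(m{-}1,n')$ is strictly smaller, so the induction hypothesis delivers the bisimulation condition for it — hence for the original pair.

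The main obstacle is exactly this last case: the recursive call in clause~\eqref{def:iosqcmp:clause:internal} is a non-guarded ``silent'' step on both sides, so a naive coinductive argument would fail productivity. It is dealt with precisely as the well-definedness of composition in Rem.~\ref{rem:iosqcmp:well-defined} (and as in the proof of Lem.~\ref{lem:iosq:cmp:assoc}): the lexicographic measure on the leading-$\ioin$ counts is well-founded, so only finitely many consecutive internal steps can occur before one of the other clauses fires, at which point the bisimulation requirement (both $\iosqemp$, or matching head with tails in $R$) is met directly. The remaining verifications — that $R$ restricted to the diagonal is trivially a bisimulation, and the one-step behaviour of $\iosqrmioin$ used above — are immediate from Def.~\ref{def:iosq:inf}.
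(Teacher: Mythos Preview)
The paper states this lemma without proof, so there is nothing to compare against directly. Your argument is correct and is exactly in the style the paper uses for the surrounding results (Lem.~\ref{lem:rmioin:infm:distr}, Lem.~\ref{lem:iosq:cmp:assoc}, Lem.~\ref{lem:iosq:cmp:rmioin}): a bisimulation relation augmented by the diagonal, with a well-founded auxiliary induction on the count of leading $\ioin$'s to handle the unguarded internal-communication clause~\eqref{def:iosqcmp:clause:internal}. Each case is checked accurately; in particular, your observation that in the $\aiosq=\iosqcns{\ioin}{\aiosq'}$, $\biosq=\iosqcns{\ioin}{\biosq'}$ case both sides literally coincide (landing on the diagonal) and that in the $\aiosq=\iosqcns{\ioin}{\aiosq'}$, $\biosq=\iosqcns{\ioout}{\biosq'}$ case the pair re-emerges as the instance for $(\aiosq',\biosq')$ with strictly smaller first component of the measure, are both right.
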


Composition distributes both left and right over pointwise infimum:
\begin{lemma}\label{lem:iosq:cmp:infm:distr:left}
  For all $\aiosq,\biosq,\ciosq\in\prodiosq$, it holds that:\:
    \(
      \iosqcmp{\aiosq}{(\iosqinf{\biosq}{\ciosq})}
      = \iosqinf{\iosqcmp{\aiosq}{\biosq}}{\iosqcmp{\aiosq}{\ciosq}}
    \).
\end{lemma}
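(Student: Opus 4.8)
The plan is to argue coinductively, following the template of the proof of Lem.~\ref{lem:iosq:cmp:assoc}. I would show that
\[
  R \defdby
  \{\,
    \pair{\iosqcmp{\aiosq}{(\iosqinf{\biosq}{\ciosq})}}
         {\iosqinf{\iosqcmp{\aiosq}{\biosq}}{\iosqcmp{\aiosq}{\ciosq}}}
    \where
    \aiosq,\biosq,\ciosq\in\prodiosq
  \,\}
  \punc,
\]
together with the diagonal on $\prodiosq$, is a bisimulation (equality of io-sequences being understood up to $\sbis$, so that trailing requirements are immaterial); the claimed identity then follows by coinduction. So, fixing $\aiosq,\biosq,\ciosq$ and writing $\diosq_1,\diosq_2$ for the two sides, I must check that $\diosq_1 = \iosqemp = \diosq_2$, or $\seqhd{\diosq_1} = \seqhd{\diosq_2}$ and $\pair{\seqtl{\diosq_1}}{\seqtl{\diosq_2}}$ is again in $R$ (or on the diagonal).

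First I would run a case analysis on the leading symbols of $\aiosq,\biosq,\ciosq$, unfolding Defs.~\ref{def:iosq:cmp} and~\ref{def:iosq:inf}; here commutativity of $\siosqinf$ (Lem.~\ref{lem:infm:idemp:comm:assoc}) merges the subcases for $\biosq$ and $\ciosq$, and Lems.~\ref{lem:rmioin:infm:distr} and~\ref{lem:iosq:rmioin:cmp} let me push the requirement removal introduced by the $\ioin$-clauses of $\siosqinf$ past $\siosqinf$ and $\siosqcmp$. The routine cases: if $\aiosq = \iosqemp$, or $\biosq = \iosqemp$, or $\ciosq = \iosqemp$, then both sides are $\iosqemp$; if $\aiosq = \iosqcns{\ioout}{\aiosq'}$, clause~\eqref{def:iosqcmp:clause:output} makes both sides emit $\ioout$ with tails the $R$-pair for $\aiosq',\biosq,\ciosq$. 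For $\aiosq = \iosqcns{\ioin}{\aiosq'}$ with $\biosq,\ciosq\neq\iosqemp$: if at least one of $\biosq,\ciosq$ starts with $\ioin$, both sides emit one $\ioin$ and the tails again form an $R$-pair; if both $\biosq = \iosqcns{\ioout}{\biosq'}$ and $\ciosq = \iosqcns{\ioout}{\ciosq'}$, the left-hand infimum yields $\iosqcns{\ioout}{(\iosqinf{\biosq'}{\ciosq'})}$ and clause~\eqref{def:iosqcmp:clause:internal} fires on the left, while on the right clause~\eqref{def:iosqcmp:clause:internal} fires inside both compositions before the infimum; in both cases nothing is emitted and we land back on the $R$-pair for $\aiosq',\biosq',\ciosq'$.

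The delicate point is that clause~\eqref{def:iosqcmp:clause:internal} of $\siosqcmp$ and the requirement removal in the $\ioin$-clauses of $\siosqinf$ are not guarded, so unfolding a pair in $R$ need not expose a head in one step. I would handle this exactly as well-definedness of $\siosqcmp$ is handled in Rem.~\ref{rem:iosqcmp:well-defined} and in Lem.~\ref{lem:iosq:cmp:assoc}: interleave the coinduction with a finitary induction on the length of the leading $\ioin$-block of $\aiosq$, which strictly decreases at every non-emitting step (it passes from $\iosqcns{\ioin}{\aiosq'}$ to $\aiosq'$), so that after finitely many steps a symbol is emitted or $\iosqemp$ is reached. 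I expect this interplay of the two unguarded operations, together with the bookkeeping of the requirement removal $\siosqrmioin$ (and the trailing-requirement mismatches in the run-out cases), to be the only real obstacle; the rest is mechanical unfolding. A shorter, fully semantic alternative: from Rem.~\ref{rem:iosq:prd:cmp} and Lem.~\ref{lem:iosq:cmp:assoc} one gets $\siosqprd{\iosqcmp{\aiosq}{\biosq}} = \funcmp{\siosqprd{\aiosq}}{\siosqprd{\biosq}}$, an easy induction on $n\in\nat$ gives $\siosqprd{\iosqinf{\aiosq}{\biosq}} = \min(\siosqprd{\aiosq},\siosqprd{\biosq})$ pointwise, and the identity then reduces to $\funcmp{\siosqprd{\aiosq}}{\min(g,h)} = \min(\funcmp{\siosqprd{\aiosq}}{g},\funcmp{\siosqprd{\aiosq}}{h})$ for increasing $g,h$, which is immediate from monotonicity of $\siosqprd{\aiosq}$.
\qed
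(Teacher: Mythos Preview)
Your main coinductive argument is essentially the paper's proof: the same bisimulation candidate, the same induction on the length of the leading $\ioin$-block of $\aiosq$, the same case split on the heads of $\biosq$ and $\ciosq$, and the same use of Lem.~\ref{lem:iosq:rmioin:cmp} to rewrite $\iosqrmioin{\iosqcmp{\aiosq}{\biosq}}$ as $\iosqcmp{\aiosq}{\iosqrmioin{\biosq}}$ in the mixed $\ioout/\ioin$ case. Two small remarks: you do not need the diagonal, and there are no ``trailing-requirement mismatches''---the equality is literal on $\prodiosq$, so the caveat about working ``up to $\sbis$'' is unnecessary.

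Your semantic alternative, however, does \emph{not} prove the lemma as stated. Passing to interpretations via $\ssiosqprd$ only shows that both sides represent the same function $\conat\to\conat$, and $\ssiosqprd$ is not injective (e.g.\ $\iosqemp$ and $\iosqcns{\ioin}{\iosqemp}$ have the same interpretation). The lemma asserts equality of the \ioseq{s} themselves, which is what the paper needs when reasoning coinductively about further operations on these sequences; for that you really do need the bisimulation argument.
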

\begin{proof}
  By coinduction; let 
  \(
    L =
    \{
      \pair{
        \iosqcmp{\aiosq}{(\iosqinf{\biosq}{\ciosq})}}{
        \iosqinf{\iosqcmp{\aiosq}{\biosq}}{\iosqcmp{\aiosq}{\ciosq}}
      }
      \where
      \aiosq,\biosq,\ciosq \in \prodiosq
    \}
  \).
  To show that $L$ is a bisimulation, we prove that,
  for all $\aiosq,\biosq,\ciosq,\diosq_1,\diosq_2\in\prodiosq$,
  if $\diosq_1 = \iosqcmp{\aiosq}{(\iosqinf{\biosq}{\ciosq})}$
  and $\diosq_2 = \iosqinf{\iosqcmp{\aiosq}{\biosq}}{\iosqcmp{\aiosq}{\ciosq}}$,
  then either $\diosq_1 = \iosqemp = \diosq_2$,
  or $\seqhd{\diosq_1} = \seqhd{\diosq_2}$ and
  $\pair{\seqtl{\diosq_1}}{\seqtl{\diosq_2}}\in L$,
  by induction on the number $n\in\nat$ of leading $\ioin$'s of $\aiosq$.

  If $n=0$ and $\aiosq=\iosqemp$, then $\diosq_1=\iosqemp=\diosq_2$.
  
  If $n=0$ and $\aiosq=\iosqcns{\ioout}{\aiosq'}$, then:
  $\diosq_1 = \iosqcns{\ioout}{(\iosqcmp{\aiosq'}{(\iosqinf{\biosq}{\ciosq})})}$, 
  and
  $\diosq_2 = \iosqcns{\ioout}{(\iosqinf{\iosqcmp{\aiosq'}{\biosq}}{\iosqcmp{\aiosq'}{\ciosq}})}$,
  and so $\pair{\seqtl{\diosq_1}}{\seqtl{\diosq_2}}\in L$.
  
  If $n>0$ and $\aiosq = \iosqcns{\ioin}{\aiosq'}$,
  we proceed by case analysis of $\biosq$ and $\ciosq$.
  
  If one of $\biosq$, $\ciosq$ is empty, then $\diosq_1 =\iosqemp =\diosq_2$.

  If $\biosq = \iosqcns{\ioout}{\biosq'}$
  and $\ciosq = \iosqcns{\ioout}{\ciosq'}$,
  then 
  $\diosq_1 = \iosqcmp{\aiosq'}{(\iosqinf{\biosq'}{\ciosq'})}$
  and
  $\diosq_2 = \iosqinf{\iosqcmp{\aiosq}{\biosq}}{\iosqcmp{\aiosq}{\ciosq}}$,
  and we conclude by the induction hypothesis.

  If $\biosq = \iosqcns{\ioout}{\biosq'}$
  and $\ciosq = \iosqcns{\ioin}{\ciosq'}$,
  then
  \( 
    \diosq_1 
    = \iosqcns{\ioin}{(
        \iosqcmp{\aiosq}{
          (\iosqinf{\iosqrmioin{\biosq}}{\ciosq'})
        }
      )}
  \)
  and
  \(
    \diosq_2 
    = \iosqcns{\ioin}{(
        \iosqinf{
          \iosqrmioin{\iosqcmp{\aiosq}{\biosq}}}{
          \iosqcmp{\aiosq}{\ciosq'}
        }
      )}
    = \iosqcns{\ioin}{(
        \iosqinf{
          \iosqcmp{\aiosq}{\iosqrmioin{\biosq}}}{
          \iosqcmp{\aiosq}{\ciosq'}
        }
      )}
  \)
  by Lem.~\ref{lem:iosq:rmioin:cmp}.
  Thus $\pair{\seqtl{\diosq_1}}{\seqtl{\diosq_2}}\in L$.

  The case $\biosq = \iosqcns{\ioin}{\biosq'}$,
  $\ciosq = \iosqcns{\ioout}{\ciosq'}$
  is proved similarly.

  Finally,
  if $\biosq = \iosqcns{\ioin}{\biosq'}$
  and $\ciosq = \iosqcns{\ioin}{\ciosq'}$,
  we compute 
  $\diosq_1 = \iosqcns{\ioin}{(\iosqcmp{\aiosq}{(\iosqinf{\biosq'}{\ciosq'})})}$
  and $\diosq_2 = \iosqcns{\ioin}{(\iosqinf{\iosqcmp{\aiosq}{\biosq'}}{\iosqcmp{\aiosq}{\ciosq'}})}$,
  and conclude $\pair{\seqtl{\diosq_1}}{\seqtl{\diosq_2}}\in L$.
  \qed
\end{proof}
  
\begin{lemma}\label{lem:iosq:cmp:infm:distr:right}
  For all $\aiosq,\biosq,\ciosq\in\prodiosq$, it holds that:\:
    \(
      \iosqcmp{(\iosqinf{\biosq}{\ciosq})}{\aiosq}
      = \iosqinf{\iosqcmp{\biosq}{\aiosq}}{\iosqcmp{\ciosq}{\aiosq}}
    \).
\end{lemma}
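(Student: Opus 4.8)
The plan is to prove this right-distributivity law coinductively, mirroring the proof of the left-distributivity law (Lemma~\ref{lem:iosq:cmp:infm:distr:left}): show that
\[
  R = \{\,\pair{\iosqcmp{(\iosqinf{\biosq}{\ciosq})}{\aiosq}}{\iosqinf{\iosqcmp{\biosq}{\aiosq}}{\iosqcmp{\ciosq}{\aiosq}}} \mid \aiosq,\biosq,\ciosq\in\prodiosq\,\}
\]
is a bisimulation. The real difference from the left-hand case is that composition is not symmetric in its two arguments: here $\iosqinf{\biosq}{\ciosq}$ is the \emph{driving} component, and precomposing an infimum with $\aiosq$ does not commute through the defining clauses as cleanly as postcomposition did. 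The hard part will be the case in which one of $\biosq,\ciosq$ starts with $\ioout$ and the other with $\ioin$: composing with $\aiosq$ then desynchronises the two copies of $\aiosq$ that occur on the right, and one must resynchronise before the resulting pair is again visibly an instance of $R$.

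First I would isolate the bookkeeping identity that performs this resynchronisation, namely
\[
  \iosqcmp{\iosqrmioin{\ciosq}}{\aiosq} = \iosqcmp{\ciosq}{\iosqcns{\ioout}{\aiosq}} \qquad (\aiosq,\ciosq\in\prodiosq) \punc,
\]
i.e.\ removing a requirement from the driving component of a composition has the same effect as prepending one $\ioout$ to the driven component. This is proved by a short, essentially guarded coinduction: for $\ciosq=\iosqemp$ both sides are $\iosqemp$; for $\ciosq=\iosqcns{\ioin}{\ciosq'}$ both sides equal $\iosqcmp{\ciosq'}{\aiosq}$ (the right one by the internal-communication clause~\eqref{def:iosqcmp:clause:internal}); and for $\ciosq=\iosqcns{\ioout}{\ciosq'}$ both sides emit $\ioout$ and the tails again form such a pair. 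One checks along the way that $\iosqrmioin{\ciosq}\in\prodiosq$ whenever $\ciosq\in\prodiosq$, since requirement removal deletes at most one $\ioin$ and no $\ioout$.

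For the main statement, since $\siosqinf$ is commutative (Lemma~\ref{lem:infm:idemp:comm:assoc}) both components of a pair in $R$ are symmetric in $\biosq$ and $\ciosq$, so one may assume w.l.o.g.\ that $\biosq$ starts with $\ioin$ whenever at least one of $\biosq,\ciosq$ does, and then argue by induction on the maximum of the numbers of leading $\ioin$'s of $\biosq$ and of $\ciosq$ --- which is exactly the number of leading $\ioin$'s of the driver $\iosqinf{\biosq}{\ciosq}$. The case split is: (i) some argument is $\iosqemp$, and then both sides are $\iosqemp$; (ii) both $\biosq,\ciosq$ start with $\ioout$, and then both sides emit $\ioout$ with their tails forming an instance of $R$ (a guarded step); (iii) $\biosq$ starts with $\ioin$, so the driver does, with a subcase on $\aiosq$. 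If $\aiosq=\iosqemp$ both sides collapse to $\iosqemp$. If $\aiosq$ starts with $\ioin$, both sides emit $\ioin$ and, after unfolding the relevant clauses and applying Lemma~\ref{lem:iosq:rmioin:cmp} (the requirements of a composition come from its driven component; if convenient, also Lemma~\ref{lem:rmioin:infm:distr}), the tails form an instance of $R$ --- again a guarded step. The only non-guarded step is $\aiosq=\iosqcns{\ioout}{\aiosq'}$: clause~\eqref{def:iosqcmp:clause:internal} then fires on the left, and on the right it fires inside the $\biosq$-copy, while the $\ciosq$-copy either fires it too (when $\ciosq$ starts with $\ioin$, giving the $R$-instance with parameters $(\biosq',\ciosq',\aiosq')$) or instead emits an $\ioout$ (when $\ciosq$ starts with $\ioout$); in the latter case the auxiliary identity rewrites $\iosqcmp{\ciosq}{\aiosq}$ as $\iosqcmp{\iosqrmioin{\ciosq}}{\aiosq'}$, so that the pair reduces to the $R$-instance with parameters $(\biosq',\iosqrmioin{\ciosq},\aiosq')$. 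In both non-guarded subcases the first parameter of the new instance has strictly fewer leading $\ioin$'s, so the induction hypothesis closes the step; in the guarded cases no appeal to the hypothesis is needed.

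The main obstacle is precisely this last subcase (driver starting with $\ioin$, $\ciosq$ starting with $\ioout$, $\aiosq$ starting with $\ioout$): without the auxiliary identity the right-hand side is stuck in the ``half-stepped'' shape $\iosqinf{\iosqcmp{\biosq'}{\aiosq'}}{\iosqcns{\ioout}{(\iosqcmp{\ciosq'}{\aiosq})}}$, which is not literally of the form $\iosqinf{\iosqcmp{\cdot}{\cdot}}{\iosqcmp{\cdot}{\cdot}}$ required for $R$. The secondary point to get right is that every internal-communication step strictly decreases the chosen measure, so that the induction nested inside the coinduction is well-founded; this is why the measure is the \emph{maximum} of the two leading-$\ioin$ counts --- being symmetric it is invariant under the w.l.o.g.\ swap, and it drops by one across each firing of clause~\eqref{def:iosqcmp:clause:internal}.
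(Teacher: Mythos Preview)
Your proposal is correct and follows the same coinductive bisimulation approach that the paper intends by ``analogous to the proof of Lem.~\ref{lem:iosq:cmp:infm:distr:left}''; your measure (the maximum of the leading-$\ioin$ counts of $\biosq$ and $\ciosq$) is exactly the number of leading $\ioin$'s of the driver $\iosqinf{\biosq}{\ciosq}$, so the induction mirrors the left-hand case precisely. One remark: the ``bookkeeping identity'' $\iosqcmp{\iosqrmioin{\ciosq}}{\aiosq} = \iosqcmp{\ciosq}{\iosqcns{\ioout}{\aiosq}}$ that you isolate and re-prove is already Lemma~\ref{lem:iosq:cmp:rmioin} in the paper, so you can simply cite it rather than redo the coinduction.
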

\begin{proof}
  Analogous to the proof of Lem.~\ref{lem:iosq:cmp:infm:distr:left}.
  \qed
\end{proof}

The operation of pointwise infimum of (periodically) increasing functions
is defined on their \ioseq\ (\ioterm) representations.
\begin{proposition}\label{prop:iosq:inf:interpret}
  For all increasing functions $f,g\funin\conat\to\conat$:\:
  \(
    \siosqprd{\iosqinf{\trmrep{f}}{\trmrep{g}}} 
    = \funinf{\siosqprd{\trmrep{f}}}{\siosqprd{\trmrep{g}}}
  \).
\end{proposition}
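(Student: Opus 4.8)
The plan is to mimic the proof of Proposition~\ref{prop:iosq:cmp:interpret}, with right-distributivity of composition over pointwise infimum (Lemma~\ref{lem:iosq:cmp:infm:distr:right}) taking over the role that associativity of composition played there. It suffices to prove
\[
  \siosqprd{\iosqinf{\aiosq}{\biosq}} = \funinf{\siosqprd{\aiosq}}{\siosqprd{\biosq}}
\]
for \emph{all} $\aiosq,\biosq\in\prodiosq$; the proposition is then the instance $\aiosq\defdby\trmrep{f}$, $\biosq\defdby\trmrep{g}$ (both lie in $\prodiosq$, the \ioseq\ representation of an increasing function being finite or productive), together with $\siosqprd{\trmrep{f}}=f$ and $\siosqprd{\trmrep{g}}=g$ as recorded after Lemma~\ref{lem:shortest:ioterm}, which turn the right-hand side into $\funinf{f}{g}$.

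First I would record the elementary identity
\[
  \iosqinf{\iosqcns{\ioout}{}^{k}}{\iosqcns{\ioout}{}^{m}} = \iosqcns{\ioout}{}^{\funap{\min}{k,m}}
  \qquad(k,m\in\conat)
  \punc,
\]
read straight off the clauses of Definition~\ref{def:iosq:inf}: while both arguments begin with $\ioout$ the third clause emits an $\ioout$ and recurses, and the first argument to become $\iosqemp$, after exactly $\funap{\min}{k,m}$ emissions, ends the output via one of the first two clauses. Now fix $\aiosq,\biosq\in\prodiosq$ and $n\in\conat$ and write $a\defdby\iosqprd{\aiosq}{n}$, $b\defdby\iosqprd{\biosq}{n}$. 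Using Remark~\ref{rem:iosq:prd:cmp}, which presents the interpretation of an \ioseq\ as the special case $\iosqprd{\ciosq}{n}=\iosqcmp{\ciosq}{\iosqcns{\ioout}{}^{n}}$ of composition (for $n=\conattop$ the coercion $\iosqcns{\ioout}{}^{\conattop}$ is itself productive), then Lemma~\ref{lem:iosq:cmp:infm:distr:right} on the factors $\aiosq,\biosq,\iosqcns{\ioout}{}^{n}\in\prodiosq$, then Remark~\ref{rem:iosq:prd:cmp} once more, and finally the identity above, I would compute
\begin{align*}
  \iosqprd{\iosqinf{\aiosq}{\biosq}}{n}
  &= \iosqcmp{(\iosqinf{\aiosq}{\biosq})}{\iosqcns{\ioout}{}^{n}}
   = \iosqinf{(\iosqcmp{\aiosq}{\iosqcns{\ioout}{}^{n}})}{(\iosqcmp{\biosq}{\iosqcns{\ioout}{}^{n}})}
  \\
  &= \iosqinf{\iosqcns{\ioout}{}^{a}}{\iosqcns{\ioout}{}^{b}}
   = \iosqcns{\ioout}{}^{\funap{\min}{a,b}}
  \punc,
\end{align*}
hence $\iosqprd{\iosqinf{\aiosq}{\biosq}}{n}=\funap{\min}{\iosqprd{\aiosq}{n},\iosqprd{\biosq}{n}}$ for every $n\in\conat$, which is precisely the claimed equality of functions.

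I do not expect a genuine obstacle: all the combinatorial content is already packaged in Lemma~\ref{lem:iosq:cmp:infm:distr:right}, and the computation above is essentially a transcription of the one for composition. The only points deserving a line of care are the side condition $\iosqcns{\ioout}{}^{n}\in\prodiosq$ that invoking that lemma requires (trivial: $\iosqcns{\ioout}{}^{n}$ is finite for $n\in\nat$ and productive for $n=\conattop$) and, if one prefers to read Remark~\ref{rem:iosq:prd:cmp} only over $\nat$, the point $n=\conattop$, which is then recovered from $\iosqprd{\ciosq}{\conattop}=\lim_{n\to\infty}\iosqprd{\ciosq}{n}$ (Definition~\ref{def:iosq:prd}) and the identity $\funap{\min}{\sup_n a_n,\sup_n b_n}=\sup_n\funap{\min}{a_n,b_n}$ for monotone $\conat$-sequences $a_n,b_n$. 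A route avoiding composition altogether is a direct induction on $n\in\nat$ establishing $\iosqprd{\iosqinf{\aiosq}{\biosq}}{n}=\funap{\min}{\iosqprd{\aiosq}{n},\iosqprd{\biosq}{n}}$ by case analysis on the heads of $\aiosq$ and $\biosq$; it needs the auxiliary equation $\iosqprd{\iosqrmioin{\ciosq}}{n}=\iosqprd{\ciosq}{n+1}$ (a brief induction on the number of leading $\ioout$'s of $\ciosq$) together with the $\conattop$-limit step, but is otherwise routine.
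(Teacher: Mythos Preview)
Your proposal is correct and takes essentially the same approach as the paper: the paper's proof reads in full ``Immediate from Rem.~\ref{rem:iosq:prd:cmp} and Lem.~\ref{lem:iosq:cmp:infm:distr:right}'', which is exactly your primary route. Your write-up simply unpacks the implicit step that under the coercion $n\mapsto\iosqcns{\ioout}{}^{n}$ the operation $\siosqinf$ on pure $\ioout$-sequences corresponds to $\min$ on $\conat$, and checks the side condition $\iosqcns{\ioout}{}^{n}\in\prodiosq$ needed to invoke the distributivity lemma; both are details the paper leaves tacit.
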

\begin{proof}
  Immediate from Rem.~\ref{rem:iosq:prd:cmp} and Lem.~\ref{lem:iosq:cmp:infm:distr:right}.
  \qed
\end{proof}

We give a coinductive definition of the calculation of the least fixed point of an \ioseq.
The operation $\siosqrmioin$ was defined in Def.~\ref{def:iosq:inf}.
\begin{definition}\normalfont\label{def:iosq:fix}
  The \emph{operation $\siosqfix\funin{\iosq\to\conat}$ 
  computing the least fixed point of a sequence $\aiosq\in\iosq$}, 
  is defined, for all $\aiosq\in\iosq$, by:
  \begin{align*}
  \iosqfix{\iosqemp} & = 0
  \\
  \iosqfix{\iosqcns{\ioout}{\aiosq}} & = 1 + \iosqfix{\iosqrmioin{\aiosq}}
  \\
  \iosqfix{\iosqcns{\ioin}{\aiosq}} & = 0
  \end{align*}
\end{definition}

Removal of a requirement and feeding an input have equal effect: 
\begin{lemma}\label{lem:iosq:cmp:rmioin}
  For all $\aiosq,\biosq\in\prodiosq$, it holds that:\:
  \(
    \iosqcmp{\iosqrmioin{\aiosq}}{\biosq}
    = \iosqcmp{\aiosq}{(\iosqcns{\ioout}{\biosq})}
  \).
\end{lemma}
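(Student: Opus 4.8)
The plan is to prove the identity by coinduction, exhibiting an explicit bisimulation. As candidate relation I would take
\[
  R \defdby
    \{ \pair{\iosqcmp{\iosqrmioin{\aiosq}}{\biosq}}{\iosqcmp{\aiosq}{(\iosqcns{\ioout}{\biosq})}}
       \where \aiosq,\biosq\in\prodiosq \}
    \cup
    \{ \pair{\aiosq}{\aiosq} \where \aiosq\in\prodiosq \}
\]
and show that $R$ is a bisimulation; since bisimilar \ioseq{s} coincide, the lemma follows by instantiating $\aiosq$ and $\biosq$ arbitrarily. Before the bisimulation check I would note that both sides are genuine elements of $\prodiosq$, so that $\siosqcmp$ is defined on them: $\iosqrmioin{\aiosq}\in\prodiosq$ because deleting the leading requirement of a productive sequence leaves infinitely many $\ioout$'s (or a finite sequence), and $\iosqcns{\ioout}{\biosq}\in\prodiosq$ trivially; hence composition applies in both cases by Rem.~\ref{rem:iosqcmp:well-defined}.

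The heart of the argument is a single case analysis on the head of $\aiosq$, unfolding Def.~\ref{def:iosq:inf} for $\siosqrmioin$ and Def.~\ref{def:iosq:cmp} for $\siosqcmp$ just once. If $\aiosq=\iosqemp$, both sides reduce to $\iosqemp$ by clause~\eqref{def:iosqcmp:clause:leftemp}. If $\aiosq=\iosqcns{\ioout}{\aiosq'}$, then $\iosqrmioin{\aiosq}=\iosqcns{\ioout}{\iosqrmioin{\aiosq'}}$, so by clause~\eqref{def:iosqcmp:clause:output} both sides have head $\ioout$ and tails $\iosqcmp{\iosqrmioin{\aiosq'}}{\biosq}$ and $\iosqcmp{\aiosq'}{(\iosqcns{\ioout}{\biosq})}$, which again form a pair of $R$. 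Finally, if $\aiosq=\iosqcns{\ioin}{\aiosq'}$, then $\iosqrmioin{\aiosq}=\aiosq'$, so the left-hand side equals $\iosqcmp{\aiosq'}{\biosq}$; and since $\iosqcns{\ioout}{\biosq}$ has head $\ioout$, clause~\eqref{def:iosqcmp:clause:internal} gives $\iosqcmp{\iosqcns{\ioin}{\aiosq'}}{(\iosqcns{\ioout}{\biosq})}=\iosqcmp{\aiosq'}{\biosq}$ as well, so the two sides are literally equal and the pair lies in the diagonal part of $R$ (the identity relation, which is trivially a bisimulation).

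The only place I expect to need care is this last case: clause~\eqref{def:iosqcmp:clause:internal} is an unguarded ``silent'' step, so a crude coinductive matching is momentarily blocked and one must argue that $R$ is a genuine bisimulation and not a vacuous one. This is handled exactly as in the well-definedness argument for $\siosqcmp$ in Rem.~\ref{rem:iosqcmp:well-defined}: only finitely many applications of clause~\eqref{def:iosqcmp:clause:internal} can occur between two observable, head-producing steps, so playing the bisimulation game always makes progress; together with the diagonal being closed, this yields that $R$ is a bisimulation. Apart from this, the proof needs nothing beyond one unfolding of the relevant defining clauses.
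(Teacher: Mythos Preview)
Your proof is correct and essentially identical to the paper's: the same bisimulation relation $R$ (the target relation plus the diagonal), the same case split on the head of $\aiosq$, with the $\iosqemp$ and $\iosqcns{\ioin}{\aiosq'}$ cases landing in the diagonal and the $\iosqcns{\ioout}{\aiosq'}$ case producing matching heads with tails back in $R$. Your extra worry about the silent step is unnecessary here, since in the $\iosqcns{\ioin}{\aiosq'}$ case the two sides are literally the same term and hence the pair lies in the diagonal, which is already closed; but this does no harm.
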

\begin{proof}
  We show that 
  \(
    R = 
    \{ 
      \pair{
        \iosqcmp{\iosqrmioin{\aiosq}}{\biosq}}{
        \iosqcmp{\aiosq}{(\iosqcns{\ioout}{\biosq})}
      }   
    \}
    \cup
    \{
      \pair{\aiosq}{\aiosq}
    \}
  \) 
  is a bisimulation, by case analysis of $\aiosq$.
  For $\aiosq = \iosqemp$, and $\aiosq = \iosqcns{\ioin}{\aiosq'}$,
  \(
    \pair{
      \iosqcmp{\iosqrmioin{\aiosq}}{\biosq}}{
      \iosqcmp{\aiosq}{(\iosqcns{\ioout}{\biosq})}
    }
    \in R
  \) follows by reflexivity. 
  If $\aiosq = \iosqcns{\ioout}{\aiosq'}$, 
  then $\iosqcmp{\iosqrmioin{\aiosq}}{\biosq} =
  \iosqcns{\ioout}{(\iosqcmp{\iosqrmioin{\aiosq'}}{\biosq})}$,
  and $\iosqcmp{\aiosq}{(\iosqcns{\ioout}{\biosq})} =
  \iosqcns{\ioout}{(\iosqcmp{\aiosq'}{(\iosqcns{\ioout}{\biosq})})}$.
  Hence, 
  \(
    \pair{
      \iosqtl{\iosqcmp{\iosqrmioin{\aiosq}}{\biosq}}
    }{
      \iosqtl{\iosqcmp{\aiosq}{(\iosqcns{\ioout}{\biosq})}}
    }\in R
  \).
  \qed
\end{proof}

The following proposition states that $\iosqfix{\aiosq}$ 
is a \fixp{} of $\siosqprd{\aiosq}$:
\begin{proposition}\label{prop:iosq:fix}
  For all $\aiosq\in\prodiosq$, it holds that:\:
  $\iosqprd{\aiosq}{\iosqfix{\aiosq}} = \iosqfix{\aiosq}$.
\end{proposition}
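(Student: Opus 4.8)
The plan is to distinguish whether $\iosqfix{\aiosq}$ is a natural number or equals $\conattop$. This split is forced on us because $\siosqfix$ (Def.~\ref{def:iosq:fix}) is defined corecursively and hence does not support a single induction on the structure of $\aiosq$.

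For the case $\iosqfix{\aiosq}\in\nat$ I would argue by induction on the value $\iosqfix{\aiosq}$. If $\iosqfix{\aiosq}=0$, then by Def.~\ref{def:iosq:fix} $\aiosq$ is $\iosqemp$ or of the form $\iosqcns{\ioin}{\aiosq'}$ --- it cannot be $\iosqcns{\ioout}{\aiosq'}$, since that would give $\iosqfix{\aiosq}\geq 1$ --- and in both cases $\iosqprd{\aiosq}{0}=0$ by clause~\eqref{eq:iosqemp:def:iosq:prd}, respectively~\eqref{eq:ioin:zero:def:iosq:prd}. If $\iosqfix{\aiosq}=m+1$, then necessarily $\aiosq=\iosqcns{\ioout}{\aiosq'}$ with $\iosqfix{\iosqrmioin{\aiosq'}}=m$; here both $\aiosq'$ and $\iosqrmioin{\aiosq'}$ stay in $\prodiosq$ (as $\iosqrmioin$ deletes at most one symbol), so the induction hypothesis applies to $\iosqrmioin{\aiosq'}$ and yields $\iosqprd{\iosqrmioin{\aiosq'}}{m}=m$. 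The key step is the identity $\iosqprd{\aiosq'}{m+1}=\iosqprd{\iosqrmioin{\aiosq'}}{m}$, which I would derive from Rem.~\ref{rem:iosq:prd:cmp} together with Lem.~\ref{lem:iosq:cmp:rmioin}: since the coercion of $m+1$ is $\iosqcns{\ioout}{(\iosqcns{\ioout}{}^m)}$, we get $\iosqprd{\aiosq'}{m+1}=\iosqcmp{\aiosq'}{(m+1)}=\iosqcmp{\iosqrmioin{\aiosq'}}{m}=\iosqprd{\iosqrmioin{\aiosq'}}{m}$. Combined with clause~\eqref{eq:ioout:def:iosq:prd} this gives $\iosqprd{\aiosq}{m+1}=1+\iosqprd{\aiosq'}{m+1}=1+m=\iosqfix{\aiosq}$.

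For the case $\iosqfix{\aiosq}=\conattop$ I would first note, by a trivial induction on length (using $\lstlength{\iosqrmioin{\blst}}\leq\lstlength{\blst}$ in the step $\aiosq=\iosqcns{\ioout}{\aiosq'}$), that $\iosqfix{\aiosq}\leq\lstlength{\aiosq}$ whenever $\aiosq$ is finite. Hence $\iosqfix{\aiosq}=\conattop$ forces $\aiosq$ to be infinite; as $\aiosq\in\prodiosq=\iolst\cup\prodiostr$, it is then productive, i.e.\ contains infinitely many occurrences of $\ioout$. This makes $\iosqprd{\aiosq}$ unbounded on $\nat$ --- feeding enough inputs to scan past the $k$-th $\ioout$ of $\aiosq$ already produces at least $k$ elements --- so, $\iosqprd{\aiosq}$ being increasing, $\iosqprd{\aiosq}{\conattop}=\lim_{n\to\infty}\iosqprd{\aiosq}{n}=\conattop=\iosqfix{\aiosq}$.

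I expect the $\conattop$-case to be the main obstacle, not because it is technically deep but because it is where the corecursive, non-well-founded nature of $\siosqfix$ shows: the value $\conattop$ has to be recognised as arising precisely from a non-terminating unfolding, and then linked back to the productivity hypothesis on $\aiosq$. Everything else is routine bookkeeping against the defining clauses of Def.~\ref{def:iosq:prd} and Def.~\ref{def:iosq:fix}, plus the already-established Lem.~\ref{lem:iosq:cmp:rmioin}.
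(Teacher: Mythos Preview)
Your proof is correct. The paper's proof takes a different, more uniform route: rather than splitting on whether $\iosqfix{\aiosq}$ is finite, it views the whole statement through Rem.~\ref{rem:iosq:prd:cmp} as the equality $\iosqfix{\aiosq} = \iosqcmp{\aiosq}{\iosqfix{\aiosq}}$ between (co)natural numbers represented as $\ioout$-sequences, and verifies it by a single coinductive case analysis on the head of $\aiosq$. The non-trivial case $\aiosq = \iosqcns{\ioout}{\aiosq'}$ reduces, exactly as in your inductive step and via the same Lem.~\ref{lem:iosq:cmp:rmioin}, to the same statement for $\iosqrmioin{\aiosq'}$; coinduction then closes the argument in one stroke, with no separate treatment of $\conattop$. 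Your decomposition is more elementary---you never need to reason about bisimulation of conaturals---at the price of an ad hoc unboundedness argument for the infinite case. Both approaches pivot on the same key identity (Lem.~\ref{lem:iosq:cmp:rmioin} via Rem.~\ref{rem:iosq:prd:cmp}); the difference is purely in how the (co)recursion is organised.
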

\begin{proof}
  We prove 
  \(
    \iosqfix{\aiosq}
    = \iosqcmp{\aiosq}{\iosqfix{\aiosq}}
  \)
  by case analysis and coinduction.

  If $\aiosq = \iosqemp$, then 
  $\iosqfix{\aiosq}  = \iosqemp = \iosqcmp{\aiosq}{\iosqfix{\aiosq}}$.
  
  If $\aiosq = \iosqcns{\ioin}{\aiosq'}$, then
  $\iosqfix{\aiosq} = \iosqemp$, and
  \(
    \iosqcmp{\aiosq}{\iosqfix{\aiosq}} 
    = \iosqcmp{(\iosqcns{\ioin}{\aiosq'})}{\iosqemp} 
    = \iosqemp
  \).
  
  If $\aiosq = \iosqcns{\ioout}{\aiosq'}$, then
  $\iosqfix{\aiosq} = \iosqcns{\ioout}{\iosqfix{\iosqrmioin{\aiosq'}}}$
  and
  \(
    \iosqcmp{\aiosq}{\iosqfix{\aiosq}} 
    = \iosqcns{\ioout}{
        (\iosqcmp{\aiosq'}{
        (\iosqcns{\ioout}{\iosqfix{\iosqrmioin{\aiosq'}}})})}
  \),
  and we have to prove that 
  \(
    \iosqfix{\iosqrmioin{\aiosq'}}
    = \iosqcmp{\aiosq'}{(\iosqcns{\ioout}{\iosqfix{\iosqrmioin{\aiosq'}}})}
  \)
  which follows by an instance of Lem.~\ref{lem:iosq:cmp:rmioin}:
  \(
    \iosqcmp{\aiosq'}{(\iosqcns{\ioout}{\iosqfix{\iosqrmioin{\aiosq'}}})}
    = \iosqcmp{\iosqrmioin{\aiosq'}}{\iosqfix{\iosqrmioin{\aiosq'}}}
  \).
  \qed
\end{proof}

\begin{lemma}\label{lem:iosq:fix:prd}
  For all $\aiosq\in\iosq$, it holds that:\: 
  $\lfp{\siosqprd{\aiosq}} = \iosqfix{\aiosq}$.
\end{lemma}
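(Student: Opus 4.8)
The plan is to prove the two inequalities $\lfp{\siosqprd{\aiosq}}\le\iosqfix{\aiosq}$ and $\iosqfix{\aiosq}\le\lfp{\siosqprd{\aiosq}}$ separately. Throughout I would use that $\conat$ is a complete lattice and that $\siosqprd{\aiosq}$ is monotone (it is increasing, as observed after Def.~\ref{def:iosq:prd}), so that $\lfp{\siosqprd{\aiosq}}$ --- the least fixed point of $\siosqprd{\aiosq}$ --- exists and, being a fixed point, is in particular a \emph{pre-fixed point}, i.e.\ an $x\in\conat$ with $\iosqprd{\aiosq}{x}\le x$.

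The first inequality is immediate from Proposition~\ref{prop:iosq:fix}: $\iosqfix{\aiosq}$ is a fixed point of $\siosqprd{\aiosq}$, hence $\lfp{\siosqprd{\aiosq}}\le\iosqfix{\aiosq}$. (Prop.~\ref{prop:iosq:fix} is stated for $\aiosq\in\prodiosq$; for a non-productive infinite $\aiosq$, which has only finitely many $\ioout$'s, so that $\siosqprd{\aiosq}$, $\iosqfix{\aiosq}$ and every fixed point are bounded by that number, the identity $\iosqprd{\aiosq}{\iosqfix{\aiosq}}=\iosqfix{\aiosq}$ is checked by the same computation; I would state this as a separate, routine case.)

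For the second inequality the key claim I would establish is: for all $k\in\nat$, all $\aiosq\in\iosq$ and all $x\in\conat$, if $\iosqfix{\aiosq}\ge k$ and $\iosqprd{\aiosq}{x}\le x$, then $x\ge k$. Granting it, since trivially $\iosqfix{\aiosq}=\sup\{k\in\nat\mid k\le\iosqfix{\aiosq}\}$, instantiating $x$ with the pre-fixed point $\lfp{\siosqprd{\aiosq}}$ gives $\iosqfix{\aiosq}\le\lfp{\siosqprd{\aiosq}}$. The claim is proved by induction on $k$, with the base case $k=0$ vacuous. For the step, assuming $\iosqfix{\aiosq}\ge k+1>0$, the defining equations of $\siosqfix$ (Def.~\ref{def:iosq:fix}) force $\aiosq=\iosqcns{\ioout}{\aiosq'}$ (the clauses for $\iosqemp$ and for $\iosqcns{\ioin}{\cdot}$ both give $0$), and then $\iosqfix{\iosqrmioin{\aiosq'}}=\iosqfix{\aiosq}-1\ge k$. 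From $\iosqprd{\aiosq}{x}=1+\iosqprd{\aiosq'}{x}\le x$ we obtain $x\ge 1$; if $x=\conattop$ we are done, and otherwise $x\in\nat$ and $\iosqprd{\aiosq'}{x}\le x-1$. I would then invoke the auxiliary identity $\iosqprd{\iosqrmioin{\gamma}}{n}=\iosqprd{\gamma}{n+1}$, valid for all $\gamma\in\iosq$, $n\in\nat$ --- it follows from Lemma~\ref{lem:iosq:cmp:rmioin} and Remark~\ref{rem:iosq:prd:cmp} via $\iosqprd{\iosqrmioin{\gamma}}{n}=\iosqcmp{\iosqrmioin{\gamma}}{\iosqcns{\ioout}{}^n}=\iosqcmp{\gamma}{\iosqcns{\ioout}{}^{n+1}}=\iosqprd{\gamma}{n+1}$, or directly by case analysis on $\gamma$ --- to get $\iosqprd{\iosqrmioin{\aiosq'}}{x-1}=\iosqprd{\aiosq'}{x}\le x-1$. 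The induction hypothesis applied to $\iosqrmioin{\aiosq'}$ and $x-1$, using $\iosqfix{\iosqrmioin{\aiosq'}}\ge k$, then yields $x-1\ge k$, i.e.\ $x\ge k+1$.

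I expect the main obstacle to be this second direction, and specifically the bookkeeping in the inductive step between the $\siosqfix$-recursion --- which consumes the leading $\ioout$ and then deletes, via $\siosqrmioin$, one later requirement --- and the behaviour of $\siosqprd{\aiosq}$ on a candidate pre-fixed point. The identity $\iosqprd{\iosqrmioin{\gamma}}{n}=\iosqprd{\gamma}{n+1}$ is the linchpin: it matches ``deleting one requirement'' against ``feeding one more $\ioout$'', and is exactly what makes the step close. The degenerate cases ($x=\conattop$, and $\aiosq$ empty, finite, or non-productive) are easily separated out, and everything else is routine.
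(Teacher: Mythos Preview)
The paper's own proof environment for this lemma is empty, so there is no argument to compare against; your proposal would in fact fill that gap.

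Your proof is correct. The first inequality is exactly the intended use of Proposition~\ref{prop:iosq:fix}. For the second inequality, your pre-fixed-point induction is clean and goes through: the key step is the identity $\iosqprd{\iosqrmioin{\gamma}}{n}=\iosqprd{\gamma}{n+1}$, which (as you note) follows from Lemma~\ref{lem:iosq:cmp:rmioin} together with Remark~\ref{rem:iosq:prd:cmp}, or by a direct case split on the head of $\gamma$ with a short induction on the number of leading $\ioout$'s. The latter route is preferable here, since Lemma~\ref{lem:iosq:cmp:rmioin} is stated only for $\prodiosq$, whereas you need the identity for arbitrary $\gamma\in\iosq$; the direct argument handles the non-productive case without extra work. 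Your treatment of the boundary cases ($x=\conattop$, and $\aiosq$ non-productive) is adequate.
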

\begin{proof}
\end{proof}

\subsection{Computing Production}

We introduce a term syntax for the production calculus
and rewrite rules for evaluating closed terms.
\begin{definition}\normalfont\label{def:net}
  Let $\nam$ be a set of recursion variables.
  The set of \emph{production terms} $\net$ is generated by:
  \[
    \anet
    \BNFis
    \netsrc{k}
    \BNFor
    \netvar{\avar}
    \BNFor
    \netpeb{\anet}
    \BNFor
    \netbox{\trmrep{f}}{\anet} %
    \BNFor
    \netrec{\anam}{\anet}
    \BNFor
    \netmeet{\anet}{\anet}
  \]
  where $\anam\in\nam$, 
  for $k\in\conat$, the symbol $\netsrc{k}$ is a \emph{numeral}
  (a term representation) for $k$, and, for a unary \pein~function 
  $f\funin\conat\to\conat$, $\trmrep{f}\in\iotrm$, 
  the \ioterm{} representing $f$.
  For every finite set $P = \{ \anet_1,\ldots,\anet_n \} \subseteq \net$, 
  we use $\netmeetn{n}{\anet_1,\ldots,\anet_n}$ and $\snetmeet\; P$
  as shorthands for the production term
  $\netmeet{\anet_1}{\netmeet{\anet_2}{\ldots,\netmeet{\anet_{n-1}}{\anet_n}}}$.
\end{definition}

The `production' $\netprd{\anet} \in \conat$
of a closed production term $\anet \in \net$ is defined by 
induction on the term structure,
interpreting $\snetrec$ as the least fixed point operator,
$\trmrep{f}$ as $f$,
$\netsrc{k}$ as $k$, and $\snetmeet$ as $\min$, 
as follows.
\begin{definition}\normalfont\label{def:netprd}
  The \emph{production} $\netprdr{\anet}{\aenv}\in\conat$ of a term $\anet\in\net$
  with respect to an \emph{assignment} $\aenv\funin{\nam\to\conat}$
  is defined inductively by:
  \begin{align*}
    \netprdr{\netsrc{k}}{\aenv} & = k
    &
    \netprdr{\netbox{\aiotrm}{\anet}}{\aenv} & = \iosqprd{\aiotrm}{\netprdr{\anet}{\aenv}}
    \\
    \netprdr{\netvar{\anam}}{\aenv} & = \funap{\aenv}{\anam}
    &
    \netprdr{\netrec{\anam}{\anet}}{\aenv} & =
    \lfp{\mylam{n}{\netprdr{\anet}{\envupd{\aenv}{\anam}{n}}}}
    \\
    \netprdr{\netpeb{\anet}}{\aenv} & = 1 + \netprdr{\anet}{\aenv}
    &
    \netprdr{\netmeet{\ianet{1}}{\ianet{2}}}{\aenv}
    & = \min(\netprdr{\ianet{1}}{\aenv},\netprdr{\ianet{2}}{\aenv})
  \end{align*}
  where %
  $\envupd{\aenv}{\anam}{n}$ denotes an \emph{update} of $\aenv$,
  defined by $\funap{\envupd{\aenv}{\anam}{n}}{\bnam} = n$
  if $\bnam = \anam$,
  and $\funap{\envupd{\aenv}{x}{n}}{y} = \funap{\aenv}{y}$ otherwise.
  
  Finally, we let $\netprd{\anet}\defdby\netprdr{\anet}{\aenv_0}$ with
  $\aenv_0$ defined by $\funap{\aenv_0}{\anam} = 0$ for all $\anam\in\nam$.
\end{definition}
As becomes clear from Def.~\ref{def:netprd}, 
we could have done without the clause $\netpeb{\anet}$
in the BNF grammar for production terms in Def.~\ref{def:net},
as $\snetpeb$ can be abbreviated to $\ios{+}{-+}$,
an \ioterm\ that denotes the successor function.
However, we take it up as a primitive constructor here 
in order to match with pebbleflow nets,
see Sec.~\ref{ssec:pebbleflow}.

For faithfully modelling the \daob{} lower bounds of stream functions with stream arity $r$,
we employ $r$-ary \pein{} functions, which we represent by `$r$-ary gates'.
\begin{definition}\normalfont\label{def:gate}
  An \emph{$r$-ary gate $\egate{k}{\iaiosq{1},\ldots,\iaiosq{r}}$}
  is defined as a production term context of the form:
  \[
    \egate{k}{\iaiosq{1},\ldots,\iaiosq{r}}
    \defdby
    \netmeetn{r+1}{
      \netsrc{k},
      \netbox{\iaiosq{1}}{\porti{1}},
      \ldots,
      \netbox{\iaiosq{r}}{\porti{r}}
    }
    \punc,
  \]
  where $k\in\conat$ and $\iaiosq{1},\ldots,\iaiosq{r}\in\iosq$.
  We use $\agate$ as a syntactic variable for gates.
  The \emph{interpretation} of a gate $\agate = \egate{k}{\iaiosq{1},\ldots,\iaiosq{r}}$
  is defined by:
  \[
    \iosqprd{\agate}{n_1,\ldots,n_r} 
    \defdby 
    \min(k,\iosqprd{\iaiosq{1}}{n_1},\ldots,\iosqprd{\iaiosq{r}}{n_r})
    \punc.
  \]
  In case $k=\conattop$, 
  we simplify $\egate{k}{\iaiosq{1},\ldots,\iaiosq{r}}$ to 
  \[
    \gate{\iaiosq{1},\ldots,\iaiosq{r}}
    \defdby
    \netmeetn{r}{
      \netbox{\iaiosq{1}}{\porti{1}},
      \ldots,
      \netbox{\iaiosq{r}}{\porti{r}}
    }
    \punc.
  \]
\end{definition}
It is possible to choose unique gate representations $\trmrep{f}$ 
of p-i functions $f$ that are efficiently computable from other 
gate representations, see Section~\ref{ssec:pebbleflow}.

Owing to the restriction to (term representations of)
periodically increasing functions in Def.~\ref{def:netprd} 
it is possible to calculate the production $\netprd{\anet}$ of terms $\anet\in\net$.
For that purpose, we define a rewrite system 
which reduces any closed term to a numeral $\trmrep{k}$.
This system makes use of the computable operations 
$\siotrmcmp$ and $\siosqfix$ on \ioterm{s} mentioned above.
\begin{definition}\normalfont
  \label{def:coltrs}
  \setcounter{equation}{0}
  \renewcommand{\theequation}{$\colrul${\arabic{equation}}}
  The \emph{rewrite relation $\scolred$ on production terms}
  is defined as the compatible closure of the following rules:
  \begin{align}
    \netpeb{\anet} &\red \netbox{\ios{+}{-+}}{\anet}
    \label{coltrs:peb}
    \\
    \netbox{\iaiotrm{1}}{\netbox{\iaiotrm{2}}{\anet}}
    & \red \netbox{\iotrmcmp{\iaiotrm{1}}{\iaiotrm{2}}}{\anet}
    \label{coltrs:boxbox}
    \\
    \netbox{\aiotrm}{\netmeet{\ianet{1}}{\ianet{2}}}
    & \red \netmeet{\netbox{\aiotrm}{\ianet{1}}}{\netbox{\aiotrm}{\ianet{2}}}
    \label{coltrs:boxmeet}
    \\
    \netrec{\anam}{\netmeet{\ianet{1}}{\ianet{2}}}
    & \red \netmeet{\netrec{\anam}{\ianet{1}}}{\netrec{\anam}{\ianet{2}}}
    \label{coltrs:recmeet}
    \\
    \netrec{\anam}{\anet}
    & \red \anet
    \dquad \text{if $\anam\not\in\netfv{\anet}$}
    \label{coltrs:recrm}
    \\
    \netrec{\anam}{\netbox{\aiotrm}{\anam}}
    & \red \netsrc{\iosqfix{\aiotrm}}
    \label{coltrs:recbox}
    \\
    \netmeet{\netsrc{k_1}}{\netsrc{k_2}}
    & \red \netsrc{\comin{k_1,k_2}}  
    \label{coltrs:meetsrc}
    \\
    \netbox{\aiotrm}{\netsrc{k}}
    &\red
    \netsrc{\iosqprd{\aiotrm}{k}}
    \label{coltrs:boxsrc}
    \\
    \netrec{\anam}{\netvar{\anam}}
    & \red \netsrc{0}
    \label{coltrs:recvar}
  \end{align}
  \setcounter{equation}{0}
  \renewcommand{\theequation}{\arabic{equation}}
\end{definition}
The following two lemmas establish the usefulness of 
the rewrite relation $\scolred$.
In order to compute the production 
$\netprd{\anet}$ of a production term $\anet$ it suffices 
to obtain a $\scolred$-normal form of $\anet$. 

\begin{lemma}
  \label{lem:collapse:pop}\label{lem:coll:nets}
  The rewrite relation $\scolred$ is production preserving:
  \[\anet \colred \anet' \implies \netprd{\anet} = \netprd{\anet'}\punc.\]
\end{lemma}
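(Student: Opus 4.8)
The plan is to proceed by a routine induction on the structure of the rewrite relation $\scolred$. Since $\scolred$ is defined as the \emph{compatible closure} of the nine rules \eqref{coltrs:peb}--\eqref{coltrs:recvar}, it suffices to check two things: (i) the \emph{base case}, that each individual rule $\ell \red r$ preserves production, i.e.\ $\netprd{\ell} = \netprd{r}$; and (ii) the \emph{congruence case}, that if $\netprd{\anet} = \netprd{\anet'}$ then $\netprd{C[\anet]} = \netprd{C[\anet']}$ for every one-hole production-term context $C$. For (ii) it is enough to treat the immediate contexts $\netpeb{\ahole}$, $\netbox{\aiotrm}{\ahole}$, $\netmeet{\ahole}{\anet_2}$, $\netmeet{\anet_1}{\ahole}$, and $\netrec{\anam}{\ahole}$; in each of these, Def.~\ref{def:netprd} computes $\netprdr{C[\cdot]}{\aenv}$ as a fixed function of $\netprdr{\cdot}{\aenv'}$ (for a suitably updated assignment $\aenv'$ in the $\snetrec$ case), so equality propagates. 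The only mild subtlety here is the $\snetrec$ context, where one uses that $\lfp{\mylam{n}{\netprdr{\anet}{\envupd{\aenv}{\anam}{n}}}}$ depends on $\anet$ only through the function $n \mapsto \netprdr{\anet}{\envupd{\aenv}{\anam}{n}}$, which is unchanged when $\anet$ is replaced by a production-equivalent term (applying the IH pointwise over all $n\in\conat$, not merely $n=0$).

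For the base case I would go rule by rule, each time just unfolding Def.~\ref{def:netprd} on both sides and invoking the already-established semantic facts about the operations on \ioseq{s}/\ioterm{s}. Concretely: \eqref{coltrs:peb} is immediate since $\siotrmprd{\ios{+}{-+}}$ is the successor function, matching $\netprdr{\netpeb{\anet}}{\aenv} = 1 + \netprdr{\anet}{\aenv}$. Rule \eqref{coltrs:boxbox} is exactly Prop.~\ref{prop:iosq:cmp:interpret} ($\siosqprd{\iotrmcmp{\iaiotrm{1}}{\iaiotrm{2}}} = \funcmp{\siosqprd{\iaiotrm{1}}}{\siosqprd{\iaiotrm{2}}}$), applied at the argument $\netprdr{\anet}{\aenv}$. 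Rule \eqref{coltrs:boxmeet} uses that $\siosqprd{\aiotrm}$ is increasing, hence commutes with $\min$: $\siosqprd{\aiotrm}(\min(a,b)) = \min(\siosqprd{\aiotrm}(a), \siosqprd{\aiotrm}(b))$. Rule \eqref{coltrs:recmeet} uses the analogous fact that the least-fixed-point operator distributes over pointwise $\min$ of monotone functions on $\conat$ (a standard Kleene/Knaster--Tarski argument on the chain-complete lattice $\conat$). Rule \eqref{coltrs:recrm} is just the fact that $\lfp$ of a constant function is that constant. Rule \eqref{coltrs:recbox} is Lem.~\ref{lem:iosq:fix:prd} combined with Prop.~\ref{prop:iosq:fix} (and the correctness of the $\siosqfix$ operation on \ioterm{s} as the least fixed point). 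Rules \eqref{coltrs:meetsrc}, \eqref{coltrs:boxsrc}, \eqref{coltrs:recvar} are immediate by unfolding ($\min$, application of $\siosqprd{\aiotrm}$ to a numeral, and $\lfp$ of the identity projection being $0$, respectively).

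I expect the main obstacle to be the two \emph{distributivity-over-$\snetrec$} points — rule \eqref{coltrs:recmeet} at the semantic level, and the $\snetrec$ congruence case — since both require a small order-theoretic lemma about least fixed points on $\conat$ rather than a one-line unfolding: for \eqref{coltrs:recmeet} one must show $\lfp(\lambda n.\min(f(n),g(n))) = \min(\lfp f, \lfp g)$ for monotone $f,g\funin\conat\to\conat$, which follows from $\conat$ being a complete lattice with $\min$ continuous, by verifying that $\min(\lfp f,\lfp g)$ is a fixed point of $\lambda n.\min(f(n),g(n))$ and below any other, using monotonicity of $f,g$. Everything else is bookkeeping against Def.~\ref{def:netprd} and the lemmas of Sec.~\ref{sec:nets}. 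I would state the base case as a displayed table of nine one-line verifications and handle the congruence case with the single remark above about $\snetrec$.
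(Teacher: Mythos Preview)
Your approach is essentially the same as the paper's: both check each rule at the root and then close under contexts, invoking Prop.~\ref{prop:iosq:cmp:interpret} for \eqref{coltrs:boxbox}, the monotone-function property $C_1$ for \eqref{coltrs:boxmeet}, the $\slfp$-over-$\min$ property $C_2$ for \eqref{coltrs:recmeet}, and Lem.~\ref{lem:iosq:fix:prd} for \eqref{coltrs:recbox}. The one presentational point where the paper is cleaner is that it states the strengthened invariant $\myall{\aenv}{\netprdr{\cdot}{\aenv} = \netprdr{\cdot}{\aenv}}$ \emph{up front} and does the induction on contexts with that as the hypothesis; you only surface this strengthening when you reach the $\snetrec$ congruence case (``applying the IH pointwise over all $n$''), but as written your IH is literally $\netprd{\anet}=\netprd{\anet'}$, which is too weak to push through $\netrec{\anam}{\ahole}$---so you should promote the all-assignments version to the statement you actually induct on.
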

\begin{proof}
  It suffices to prove:
  \(
    \cxtap{\sacxt}{\ell^{\asub}} \colred \cxtap{\sacxt}{r^{\asub}}
    \implies
    \myall{\aenv}{
      \netprdr{\cxtap{\sacxt}{\ell^{\asub}}}{\aenv} 
      = \netprdr{\cxtap{\sacxt}{r^{\asub}}}{\aenv}
    }
    \punc,
  \)
  where $\ell\red r$ is a rule of the TRS given in Def.~\ref{def:coltrs},
  and $\sacxt$ a unary context over $\net$. We proceed
  by induction on $\sacxt$. %
  For the base case, $\sacxt = \ahole$, 
  we give the essential proof steps only: %
  For rule~\eqref{coltrs:peb}, observe that $\siosqprd{\ios{}{-+}}$ 
  is the identity function on $\conat$.
  For rule~\eqref{coltrs:boxbox}, we apply Prop.~\ref{prop:iosq:cmp:interpret}.
  For rule~\eqref{coltrs:boxmeet} the desired equality follows from 
  $C_1$ on page~\pageref{page:mono:props}. 
  For rule~\eqref{coltrs:recmeet} we conclude by $C_2$ ibid.
  For rule~\eqref{coltrs:recbox} we use Lem.~\ref{lem:iosq:fix:prd}.
  For the remaining rules %
  the statement trivially holds.
  For the induction step, the statement easily follows from the induction hypotheses.
  \qed
\end{proof}

\begin{lemma}\label{lem:collapse:sn:cr}
  The rewrite relation $\scolred$ is terminating and confluent,
  and every closed $\anet \in \net$ has a numeral 
  $\netsrc{k}$ as its unique $\scolred$-normal form.
\end{lemma}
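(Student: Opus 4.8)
The plan is to establish the three assertions of Lemma~\ref{lem:collapse:sn:cr} in sequence: termination of $\scolred$, confluence, and the fact that normal forms of closed terms are numerals. Termination will be the principal source of work; confluence will follow from local confluence via Newman's Lemma together with termination; and the shape of closed normal forms will follow by an easy case analysis once we know normal forms exist.

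\textbf{Termination.} The plan is to exhibit a reduction order in which every rule of Def.~\ref{def:coltrs} strictly decreases a well-chosen measure. The difficulty is that the rules fall into two opposing groups: the `distributing' rules \eqref{coltrs:boxmeet} and~\eqref{coltrs:recmeet} push $\snetbox$ and $\snetrec$ inside $\snetmeet$ and thereby \emph{duplicate} subterms, increasing term size, while the `collapsing' rules \eqref{coltrs:boxbox}, \eqref{coltrs:recbox}, \eqref{coltrs:recrm}, \eqref{coltrs:meetsrc}, \eqref{coltrs:boxsrc}, \eqref{coltrs:recvar} shrink the term. I would handle this with a lexicographic combination of interpretations. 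As the primary component, interpret each production term as a natural number that counts (with suitable weights) the number of `$\snetmeet$-obstructed' occurrences of $\snetbox$- and $\snetrec$-nodes, i.e.\ pairs consisting of a $\snetbox$ or $\snetrec$ node lying strictly above a $\snetmeet$ node; rules~\eqref{coltrs:boxmeet} and~\eqref{coltrs:recmeet} strictly decrease this count (each such step removes one obstruction and the copies it creates are no worse), while all other rules do not increase it. As the secondary component, once no $\snetbox$ or $\snetrec$ sits above a $\snetmeet$, use an interpretation into $\conat$-valued monotone functions — or, more simply, a polynomial interpretation over $\nat$ assigning, say, $\netpeb{\anet}\mapsto \anet+1$, $\netbox{\aiotrm}{\anet}\mapsto \anet + \lstlength{\aiotrm} \cdot$ (something), $\netmeet{\anet_1}{\anet_2}\mapsto \anet_1+\anet_2+1$, $\netrec{\anam}{\anet}\mapsto 2\anet+1$, $\netsrc{k}\mapsto 1$ — chosen so that each of the remaining rules strictly decreases it and rules \eqref{coltrs:boxmeet}, \eqref{coltrs:recmeet} at worst preserve it. A little care is needed to make the duplications in \eqref{coltrs:boxmeet}/\eqref{coltrs:recmeet} compatible with the secondary interpretation; the standard trick is to make the primary component dominate so strictly that the secondary one is free to increase there. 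I expect pinning down these two interpretations so that \emph{every} rule is oriented to be the main obstacle of the proof.

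\textbf{Confluence.} Given termination, by Newman's Lemma it suffices to check local confluence, i.e.\ joinability of all critical pairs. The system in Def.~\ref{def:coltrs} is close to orthogonal: the left-hand sides overlap only in a handful of places, all of them arising from a $\snetbox$-, $\snetrec$-, or $\snetmeet$-redex nested inside another. The relevant overlaps are: \eqref{coltrs:boxbox} vs.\ \eqref{coltrs:boxmeet} and vs.\ \eqref{coltrs:boxsrc} (a $\snetbox{\,\cdot\,}$ whose body is itself a $\snetbox$, a $\snetmeet$, or a $\netsrc{k}$, under an outer $\snetbox$); \eqref{coltrs:boxmeet} vs.\ \eqref{coltrs:recmeet} / \eqref{coltrs:meetsrc} (a $\snetmeet$ whose argument pattern triggers more than one rule after one step); \eqref{coltrs:recmeet} vs.\ \eqref{coltrs:recbox}/\eqref{coltrs:recrm}/\eqref{coltrs:recvar}; \eqref{coltrs:recbox} vs.\ \eqref{coltrs:boxbox}; and the self-overlap of \eqref{coltrs:boxmeet} when the body is $\netmeet{(\netmeet{\cdot}{\cdot})}{\cdot}$, and similarly for \eqref{coltrs:recmeet}. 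In each case the plan is to reduce both contracta to a common term in a few steps — typically distributing fully and then either collapsing the $\snetbox$/$\snetrec$ or applying the same rule to the duplicated copies. This is routine but slightly tedious; I would present it as a short case list. (Alternatively one can appeal to Lem.~\ref{lem:collapse:pop}: since $\scolred$ is production preserving, all normal forms of a given term have equal production, which for numerals forces uniqueness — but that argument only gives $\UN$, not full confluence, and does not by itself show that a \emph{numeral} is reached, so I would still do the critical-pair check.)

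\textbf{Closed normal forms are numerals.} Finally, suppose $\anet\in\net$ is closed and in $\scolred$-normal form; I claim $\anet \equiv \netsrc{k}$ for some $k\in\conat$. Argue by induction on the structure of $\anet$. If $\anet \equiv \netvar{\anam}$, it is not closed, contradiction. If $\anet \equiv \netpeb{\anet'}$, then \eqref{coltrs:peb} applies, contradicting normality. If $\anet \equiv \netbox{\aiotrm}{\anet'}$, then by IH $\anet'$ is a numeral $\netsrc{k'}$ (note $\anet'$ is closed), so \eqref{coltrs:boxsrc} applies — contradiction. If $\anet \equiv \netmeet{\anet_1}{\anet_2}$, both $\anet_i$ are closed normal forms, hence numerals by IH, so \eqref{coltrs:meetsrc} applies — contradiction. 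If $\anet \equiv \netrec{\anam}{\anet'}$: the body $\anet'$ need not be closed (it may contain $\anam$), so one cannot directly invoke the IH; instead argue that a $\scolred$-normal term $\anet'$ whose only free variable is $\anam$ must be one of $\netsrc{k}$, $\netvar{\anam}$, or $\netbox{\aiotrm}{\anam}$ — any other shape ($\netpeb{\cdot}$, $\netmeet{\cdot}{\cdot}$, nested $\snetrec$, or $\netbox{\cdot}{\anet''}$ with $\anet''$ not itself $\anam$) either is itself a redex or lets \eqref{coltrs:recrm} fire because $\anam$ fails to occur — and in the three remaining shapes one of \eqref{coltrs:recbox}, \eqref{coltrs:recvar}, \eqref{coltrs:recrm} applies, again contradicting normality. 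Hence every closed normal form is a numeral, and by confluence it is unique. This completes the lemma.
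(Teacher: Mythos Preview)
Your overall structure—termination, then local confluence plus Newman's Lemma, then a case analysis on closed normal forms—matches the paper exactly. The confluence and normal-form parts are essentially what the paper does; for the latter the paper uses an ``innermost occurrence'' argument (first show no $\snetpeb$, then no innermost $\snetrec$, hence no variables, then no innermost $\snetbox$, then no $\snetmeet$) rather than structural induction, which neatly sidesteps the awkwardness you flag in the $\netrec{\anam}{\anet'}$ case, but the content is the same.

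Where you diverge is the termination argument, and here you are working much harder than necessary. A \emph{single} weight function $\mrm{w}\funin\net\to\nat$ suffices: take $\mrm{w}(\netvar{\anam})=\mrm{w}(\netsrc{k})=1$, $\mrm{w}(\netpeb{\anet})=2\,\mrm{w}(\anet)+1$, $\mrm{w}(\netbox{\aiotrm}{\anet})=\mrm{w}(\netrec{\anam}{\anet})=2\,\mrm{w}(\anet)$, and $\mrm{w}(\netmeet{\anet_1}{\anet_2})=\mrm{w}(\anet_1)+\mrm{w}(\anet_2)+1$. The point is that the duplicating rules \eqref{coltrs:boxmeet} and \eqref{coltrs:recmeet} send weight $2(\mrm{w}(\anet_1)+\mrm{w}(\anet_2)+1)$ to $2\,\mrm{w}(\anet_1)+2\,\mrm{w}(\anet_2)+1$, which is a strict decrease; all remaining rules are then trivially decreasing. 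No lexicographic combination is needed, and what you anticipated as ``the main obstacle of the proof'' dissolves.

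Your proposed primary component also has a small bug as stated: rule \eqref{coltrs:peb} replaces a $\snetpeb$ node (which you do not count) by a $\snetbox$ node (which you do), so if the argument contains any $\snetmeet$ the count \emph{increases}, contradicting your claim that ``all other rules do not increase it''. This is easily patched by counting $\snetpeb$ as well, but the single-weight approach above avoids the issue entirely and is what the paper does.
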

\begin{proof}
  \newcommand{\sweight}{\mrm{w}}
  \newcommand{\weight}{\funap{\sweight}} %
  To see that $\scolred$ is terminating, 
  let $\sweight\funin\net\to\nat$ be defined by:
  \begin{align*}
    \weight{\netvar{\anam}}&=1
    &\weight{\netpeb\anet}&=2\cdot\weight\anet+1
    &\weight{\netrec\anam\anet}&=2\cdot\weight\anet
    \\
    \weight{\netsrc{k}}&=1
    &\weight{\netbox\aiosq\anet}&=2\cdot\weight{\anet}
    &\weight{\netmeet{\ianet{1}}{\ianet{2}}}&=\weight{\ianet{1}}+\weight{\ianet{2}}+1
    \punc,
  \end{align*}
  and observe that $\anet\colred\bnet$ implies $\weight{\anet} > \weight{\bnet}$.
    
  Some of the rules of $\scolred$ overlap; e.g.\ rule~\eqref{coltrs:boxbox} with itself.
  For each of the five critical pairs we can find a common reduct  
  (the critical pair 
   $\pair{\iosqcmp{\aiosq}{(\iosqcmp{\biosq}{\ciosq})}}{\iosqcmp{(\iosqcmp{\aiosq}{\biosq})}{\ciosq}}$
   due to an \eqref{coltrs:boxbox}/\eqref{coltrs:boxbox}-overlap
   can be joined by Lem.~\ref{lem:iosq:cmp:assoc}),
  and hence $\scolred$ is locally confluent,
  by the Critical Pairs Lemma (cf.~\cite{terese:2003}).
  By Newman's Lemma, we obtain that $\scolred$ is confluent.
  Thus normal forms are unique.

  To show that every closed net normalises to a source, %
  let $\anet$ be an arbitrary normal form.
  Note that the set of free variables of a net is closed under $\scolred$,
  and hence $\anet$ is a closed net.
  Clearly, $\anet$ does not contain pebbles,
  otherwise \eqref{coltrs:peb} would be applicable. 
  To see that $\anet$ contains no subterms of the form $\netrec{\anam}{\bnet}$,
  suppose it does and consider the innermost such subterm, viz.\ $\bnet$ contains no $\snetrec$.
    If $\bnet\equiv\netsrc{k}$ or $\bnet\equiv\netvar{\anam}$, then \eqref{coltrs:recrm}, 
    resp.\ \eqref{coltrs:recvar} is applicable.
    If $\bnet\equiv\netbox{\aiosq}{\bnet'}$, we further distinguish four cases:
      if $\bnet'\equiv\netsrc{k}$ or $\bnet'\equiv\netvar{\anam}$,
      then \eqref{coltrs:boxsrc} resp.\ \eqref{coltrs:recbox} is applicable;
      if the root symbol of $\bnet'$ is one of $\snetbox,\snetmeet$,
      then $\bnet$ constitutes a redex w.r.t.\ \eqref{coltrs:boxbox}, \eqref{coltrs:boxmeet},
      respectively.
    If $\bnet\equiv{\netmeet{\ibnet{1}}{\ibnet{2}}}$, 
    we have a redex w.r.t.\ \eqref{coltrs:recmeet}.
  Thus, there are no subterms $\netrec{\anam}{\bnet}$ in $\anet$,
  and therefore, because $\anet$ is closed, also no variables $\netvar{x}$.
  To see that $\anet$ has no subterms of the form $\netbox{\aiosq}{\bnet}$,
  suppose it does and consider the innermost such subterm.
  Then, if $\bnet\equiv\netsrc{k}$ or $\bnet\equiv\netmeet{\ibnet{1}}{\ibnet{2}}$
  then \eqref{coltrs:boxsrc} resp.\ \eqref{coltrs:boxmeet} is applicable;
  other cases have been excluded above.
  Finally, $\anet$ does not contain subterms of the form $\netmeet{\ianet{1}}{\ianet{2}}$.
  For if it does, consider the innermost occurrence and note that, 
  since the other cases have been excluded already, 
  $\ianet{1}$ and $\ianet{2}$ have to be sources, 
  and so we have a redex w.r.t.\ \eqref{coltrs:meetsrc}.
  We conclude that $\anet\equiv\netsrc{k}$ for some $k\in\conat$.
  \qed
\end{proof}

\subsection{Pebbleflow Nets}\label{ssec:pebbleflow}
Production terms can be visualized by `pebbleflow nets'
introduced in~\cite{endr:grab:hend:isih:klop:2007},
and serve as a means to model the `data-oblivious'
consumption/production behaviour of stream specifications.
The idea is to abstract from the actual stream elements (data) in a stream term 
in favour of occurrences of the symbol $\pebble$, which we call `pebble'.
Thus, a stream term $\strcns{\adattrm}{\astrtrm}$ is translated to
$\trnsl{\strcns{\adattrm}{\astrtrm}} = \netpeb{\trnsl{\astrtrm}}$, 
see Section~\ref{sec:translation}.

We give an operational description of pebbleflow nets,
and define the production of a net as the number of pebbles 
a net is able to produce at its output port.
Then we prove that this definition of production 
coincides with Def.~\ref{def:netprd}.

Pebbleflow nets are networks built of pebble processing units
(fans, boxes, meets, sources) connected by wires.
We use the term syntax given in Def.~\ref{def:net} for nets 
and the rules governing the flow of pebbles through a net, 
and then give an operational meaning of the units a net is built of.

\begin{definition}\normalfont
  \label{def:pebbleflow}
  \setcounter{equation}{0}
  \renewcommand{\theequation}{$\pebrul${\arabic{equation}}}
  The \emph{pebbleflow rewrite relation $\spebred$}
  is defined as the compatible closure
  of the union of the following rules:
  \begin{align}
    \netmeet{\netpeb{\anet_1}}{\netpeb{\anet_2}}
    &\red \netpeb{\netmeet{\anet_1}{\anet_2}}
    \label{def:pebbleflow:rule:meet}
    \\
    \netrec{\anam}{\netpeb{\anet(\avar)}}
    &\red
    \netpeb{\netrec{\anam}{\anet(\netpeb{\netvar{\anam}})}}
    \label{def:pebbleflow:rule:feedback}
    \\
    \netbox{\iosqcns{\ioout}{\aiosq}}{\anet}
    &\red \netpeb{\netbox{\aiosq}{\anet}}
    \label{def:pebbleflow:rule:boxP}
    \\
    \netbox{\iosqcns{\ioin}{\aiosq}}{\netpeb{\anet}}
    &\red \netbox{\aiosq}{\anet}
    \label{def:pebbleflow:rule:boxM}
    \\
    \netsrc{1+k}
    &\red
    \netpeb{\netsrc{k}}
    \label{def:pebbleflow:rule:source}
  \end{align}
  \setcounter{equation}{0}
  \renewcommand{\theequation}{\arabic{equation}}
\end{definition}

Wires are unidirectional FIFO communication channels.
They are idealised in the sense that there is no upper bound on
the number of pebbles they can store; arbitrarily long queues are allowed.
Wires have no counterpart on the term level;
in this sense they are akin to the edges of a term tree.
Wires connect \emph{boxes}, \emph{meets}, \emph{fans},
and \emph{sources}, that we describe next.

A \emph{meet} is waiting for a pebble at each of its input ports
and only then produces one pebble at its output port, see Fig.~\ref{fig:meet}.
Put differently, the number of pebbles a meet produces equals the
minimum of the numbers of pebbles available at each of its input ports.
Meets enable explicit branching; they are used to model
stream functions of arity $>1$, as will be explained below.
A meet with an arbitrary number $n \geq 1$ of input ports
is implemented by using a single wire in case $n = 1$,
and if $n = k + 1$ with $k\geq 1$, by connecting the output port of
a `$k$-ary meet' to one of the input ports of a (binary) meet.
\begin{figure}[htb]
  \begin{minipage}[b]{.49\textwidth}
  \newcommand{\aaa}{$\snetmeetn{r}$}
  \newcommand{\bbb}{$\ianet{1}$}
  \newcommand{\ccc}{$\ianet{2}$}
  \begin{center}
  \fbox{\scalebox{.6}{\input{\figpath/meet.pstex_t}}}
  \end{center}
  \mycaption{
    Rule~\eqref{def:pebbleflow:rule:meet}.
  }
  \label{fig:meet}
  \end{minipage}
  \hfill
  \begin{minipage}[b]{.49\textwidth}
  \begin{center}
  \fbox{\scalebox{.6}{\input{\figpath/recfan.pstex_t}}}
  \end{center}
  \mycaption{
    Rule~\eqref{def:pebbleflow:rule:feedback}.
  }
  \label{fig:recfan}
  \end{minipage}
\end{figure}

The behaviour of a \emph{fan} is dual to that of a meet:
a pebble at its input port is duplicated along its output ports.
A fan can be seen as an explicit sharing device,
and thus enables the construction of cyclic nets.
More specifically, we use fans only to implement feedback when drawing nets;
there is no explicit term representation for the fan in our term calculus.
In Fig.~\ref{fig:recfan} a pebble is sent over the output wire of the net
and at the same time is fed back to the `recursion wire(s)'.
Turning a cyclic net into a term (tree) means to introduce
a notion of binding; certain nodes need to be labelled by a name ($\mu x$)
so that a wire pointing to that node is replaced by a name ($x$)
\emph{referring} to the labelled node.
In rule~\eqref{def:pebbleflow:rule:feedback} feedback is accomplished
by substituting $\netpeb{\avar}$ for all free occurrences $\avar$ of $\anet$.

A \emph{source} has an output port only,
contains a number $k\in\conat$ of pebbles,
and can fire if $k > 0$, see Fig.~\ref{fig:source}.
The rewrite relation $\scolred$ given in Def.~\ref{def:coltrs}
collapses any closed net (without input ports that is) into a source.

A \emph{box} consumes pebbles at its input port and produces pebbles
at its output port, controlled by an \ioseq{}
$\aiosq\in\iosq$ associated with the box.
For example, consider the unary stream function
$\strff{dup}$, defined as follows, and its corresponding \ioseq:
\[
  \funap{\strff{dup}}{\strcns{x}{\astr}}
  = \strcns{x}{\strcns{x}{\funap{\strff{dup}}{\astr}}}
  \quad\quad\quad\quad
  \ios{}{-++}
\]
which is to be thought of as: \textit{for $\strff{dup}$ to produce two outputs,
it first has to consume one input, and this process repeats indefinitely}.
Intuitively, the symbol~$\ioin$ represents a requirement for an input pebble,
and $\ioout$ represents a ready state for an output pebble.
Pebbleflow through boxes is visualised in
Figs.~\ref{fig:boxP} and~\ref{fig:boxM}.
\begin{figure}[htb]
  \begin{minipage}[b]{.3\textwidth}
  \newcommand{\aaa}{\iosqcns{\ioout}{\aiosq}}%
  \newcommand{\bbb}{\aiosq}%
    \begin{center}
    \fbox{\scalebox{.6}{\input{\figpath/box1.pstex_t}}}
    \end{center}
    \mycaption{Rule~\eqref{def:pebbleflow:rule:boxP}.}
    \label{fig:boxP}
  \end{minipage}
  \hfill
  \begin{minipage}[b]{.3\textwidth}
  \newcommand{\aaa}{\iosqcns{\ioin}{\aiosq}}%
  \newcommand{\bbb}{\aiosq}%
    \begin{center}
    \fbox{\scalebox{.6}{\input{\figpath/box2.pstex_t}}}
    \end{center}
    \mycaption{Rule~\eqref{def:pebbleflow:rule:boxM}.}
    \label{fig:boxM}
  \end{minipage}
  \hfill
  \begin{minipage}[b]{.3\textwidth}
  \newcommand{\aaa}{\trmrep{1+k}}
  \newcommand{\bbb}{\trmrep{k}}
    \begin{center}
    \fbox{\scalebox{.6}{\input{\figpath/source.pstex_t}}}
    \end{center}
    \mycaption{Rule~\eqref{def:pebbleflow:rule:source}.}
    \label{fig:source}
  \end{minipage}
\end{figure}

\begin{wrapfigure}[7]{r}{37mm}
  \newcommand{\aaa}{\snetmeetn{n}}
  \newcommand{\bbb}{\iaiosq{1}}
  \newcommand{\ccc}{\iaiosq{r}}
  \vspace{-5ex}
  \hspace{1.5em}
  \fbox{\scalebox{0.6}{\input{\figpath/gate.pstex_t}}}
  \vspace{-1ex}
  \caption{$\netgate{\iaiosq{1},\ldots,\iaiosq{r}}$}%
  \label{fig:gate}
\end{wrapfigure}
The data-oblivious production behaviour of stream functions $\astrfun$ 
with a stream arity $\arityS{\astrfun} = r$ are modelled by $r$-ary gates
(defined in Def.~\ref{def:gate})
that express the contribution of each individual stream argument
to the total production of $\astrfun$, see Fig.~\ref{fig:gate}.
The precise translation of stream functions into gates is given in Sec.~\ref{sec:translation},
in particular in Def.~\ref{def:transl:flat:fn:symbols:infioseqspec}.
\vspace{4ex}

\begin{lemma}\label{thm:pebred:confluent}
  The pebbleflow rewrite relation $\spebred$ is confluent.
\end{lemma}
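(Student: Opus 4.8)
The plan is to derive confluence of $\spebred$ from \emph{orthogonality} of the underlying rule set. Since $\spebred$ is the compatible closure of the pebbleflow rules $\pebrul1$--$\pebrul5$ of Def.~\ref{def:pebbleflow}, it suffices to show that these rules are left-linear and admit no critical pairs; confluence then follows from the orthogonality theorem for (higher-order) rewrite systems, see~\cite{terese:2003}. Newman's Lemma is not applicable here, because $\spebred$ is not terminating: $\netsrc{\conattop}\pebred\netpeb{\netsrc{\conattop}}\pebred\netpeb{\netpeb{\netsrc{\conattop}}}\pebred\cdots$ by $\pebrul5$, and $\pebrul2$ can be iterated indefinitely, e.g.\ on $\netrec{\anam}{\netpeb{\netvar{\anam}}}$.

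\emph{Left-linearity} holds by inspection: every (meta)variable occurs at most once in each left-hand side. It is convenient to regard $\pebrul5$ as the family of rules $\netsrc{m}\red\netpeb{\netsrc{k}}$, one for each $m\ge1$, where $k$ is the unique element of $\conat$ with $m=1+k$ (using $\conattop=1+\conattop$); each such rule has a ground left-hand side and a uniquely determined contractum. \emph{Absence of critical pairs} I would verify in two steps. First, no left-hand side unifies with a proper non-variable subterm of another left-hand side, nor of itself: apart from the root, the only non-variable positions occurring in any left-hand side are the two pebble nodes in $\pebrul1$, the pebble node directly below the $\snetrec$ in $\pebrul2$, and the pebble node in the second box-argument of $\pebrul4$ ($\pebrul3$ has a variable argument and the rules of the $\pebrul5$-family are ground, so neither contributes such a position), and the pebble symbol $\snetpeb$ heads no left-hand side. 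Second, at the root: the left-hand sides of $\pebrul1$, $\pebrul2$, $\pebrul3$/$\pebrul4$, $\pebrul5$ are headed by a meet, a $\snetrec$, a box, and a source symbol respectively, so the only possible root clash is between $\pebrul3$ and $\pebrul4$ --- but their box-labels begin with $\ioout$ and $\ioin$, and hence do not unify. Thus there are no critical pairs, the rule set is orthogonal, and $\spebred$ is confluent.

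The one point that requires care --- which I regard as the main subtlety rather than a genuine obstacle --- is that $\pebrul2$, namely $\netrec{\anam}{\netpeb{\anet(\avar)}}\red\netpeb{\netrec{\anam}{\anet(\netpeb{\netvar{\anam}})}}$, is not a first-order rule: it binds the recursion variable and, on the right, substitutes $\netpeb{\netvar{\anam}}$ for its free occurrences. Writing $\netrec{\anam}{\anet}$ as $\snetrec(\mylam{\anam}{\anet})$, rule $\pebrul2$ becomes the pattern rewrite rule $\snetrec(\mylam{\anam}{\netpeb{Z\,\anam}})\red\netpeb{\snetrec(\mylam{\anam}{Z\,(\netpeb{\anam})})}$, with one second-order metavariable $Z$ applied to the bound variable; it is left-linear, it is a higher-order pattern, and its only non-variable position below the root is the pebble node, whose argument lies in the metavariable part. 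Hence $\pebrul2$ overlaps neither itself nor any of $\pebrul1,\pebrul3,\pebrul4,\pebrul5$, so the rule set is orthogonal in the higher-order sense as well, and confluence of $\spebred$ follows from the corresponding orthogonality theorem (covered in~\cite{terese:2003}, both for combinatory reduction systems and for higher-order pattern rewriting). Alternatively, to stay within first-order reasoning one may observe that a $\pebrul2$-step creates only parallel, pairwise non-overlapping descendant redexes and then run the usual finite-developments / parallel-moves argument directly; in either presentation the substitution in $\pebrul2$ is the sole place demanding attention.
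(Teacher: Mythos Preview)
Your proposal is correct and takes essentially the same approach as the paper: the paper's proof simply states that the rules of $\spebred$ can be viewed as an orthogonal higher-order rewriting system and then invokes Thm.~11.6.9 of~\cite{terese:2003}. You carry out explicitly what the paper leaves implicit---verifying left-linearity, ruling out critical pairs case by case, and isolating $\pebrul2$ as the one genuinely higher-order rule---but the underlying argument (orthogonality $\Rightarrow$ confluence via the HRS result in Terese) is identical.
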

\begin{proof}
  The rules of $\spebred$ can be viewed as a higher-order rewriting system (HRS)
  that is orthogonal.
  Applying Thm.~11.6.9 in~\cite{terese:2003}
  then establishes the lemma.
  \qed
\end{proof}

\begin{definition}\normalfont\label{def:pfnetprd}
  The \emph{production function $\spfnetprd \funin \net \to \conat$ of nets}
  is defined by:
  \[
    \pfnetprd{\anet} 
    \defdby \sup\,\{ n\in\nat \where \anet \mpebred \netpebn{n}{\anet'}\}
    \punc,
  \]
  for all $\anet\in\net$. $\pfnetprd{\anet}$ is called \emph{the production of $\anet$}.
  Moreover, for a net $\anet$ and an assignment $\aenv\in{\nam\to\conat}$,
  let $\pfnetprdr{\anet}{\aenv} \defdby \pfnetprd{\anet^{\aenv}}$
  where $\anet^\aenv$ denotes the net obtained by
  replacing each free variable $\netvar{\anam}$ of $\anet$ with
  $\netpebn{\funap{\aenv}{\anam}}{\netvar{\anam}}$.
\end{definition}
Note that for closed nets $\anet$ (where $\netfv{\anet} = \setemp$),
$\anet^{\aenv} = \anet$ and therefore
$\netprdr{\anet}{\aenv} = \netprd{\anet}$,
for all assignments $\aenv$.

An important property used in the following lemma
is that functions of the form
$\mylamin{n}{\conat}{\pfnetprdr{\anet}{\envupd{\aenv}{\anam}{n}}}$
are monotonic functions over $\conat$.
Every monotonic function $f\funin\conat\to\conat$
in the complete chain~$\conat$ has a least fixed point $\lfp{f}$
which can be computed by $\lfp{f} = \lim_{n\to\infty}{\funnap{f}{n}{0}}$.\label{page:mono:props}
In what follows we employ, for monotonic $f,g\funin\conat\to\conat$,
two basic properties:
\begin{align}
  \myall{n,m}{
    \big(
      \funap{f}{\min(n,m)} & = \min(\funap{f}{n},\funap{f}{m})
    \big)
  }
  \tag{$C_1$}
  \\
  \lfp{\mylam{n}{\min(\funap{f}{n},\funap{g}{n})}} 
  & = \min(\lfp{f},\lfp{g})
  \tag{$C_2$}
\end{align}

\begin{lemma}\label{lem:netprod:netrec:lfp}
  For all nets $\bnet\in\net$ and all assignments $\aenv$,
  we have that\/
  $\pfnetprdr{\netrec{\anam}{\bnet}}{\aenv}$
  is the least fixed point of
  $\mylamin{n}{\conat}{\pfnetprdr{\bnet}{\envupd{\aenv}{\anam}{n}}}$.
\end{lemma}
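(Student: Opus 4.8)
The plan is to compute the operational production $\pfnetprdr{\netrec{\anam}{\bnet}}{\aenv}$ directly from the pebbleflow rewrite relation $\spebred$ and to match it with $\lfp f$, where I abbreviate $f \defdby \mylamin{n}{\conat}{\pfnetprdr{\bnet}{\envupd{\aenv}{\anam}{n}}}$. By the remark preceding the lemma, $f$ is monotonic on the complete chain $\conat$, so $\lfp f$ exists and equals $\lim_{n\to\infty}\funnap{f}{n}{0}$. Since $\anam$ is not free in $\netrec{\anam}{\bnet}$, I may first replace $\bnet$ by $\bnet^{\aenv}$ and assume $\bnet$ has no free variables other than $\anam$; then by unfolding the definition of $\pfnetprdr{\cdot}{\cdot}$ one simply has $f(k) = \pfnetprd{\bnet[\anam\mapsto\netpebn{k}{\netvar\anam}]}$ for every $k\in\conat$. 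A short calculation about monotonic functions on $\conat$ (case split on whether some $j\in\nat$ satisfies $f(j)\le j$) gives
\[
  \lfp f
  \;=\;
  \sup\{\,m\in\conat \mid \myall{j}{(j\in\nat \wedge j<m)\Rightarrow f(j)\ge j+1}\,\}\punc,
\]
and I will prove $\pfnetprd{\netrec{\anam}{\bnet}}\ge\lfp f$ and $\pfnetprd{\netrec{\anam}{\bnet}}\le\lfp f$ against this characterisation.

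The core of the argument is a \emph{loop invariant}: every $\spebred$-reduct of $\netrec{\anam}{\bnet}$ has the form $\netpebn{j}{\netrec{\anam}{\cnet'}}$ for some $j\in\nat$ and some net $\cnet'$, and moreover $\netpebn{j}{\cnet'}$ is itself a $\spebred$-reduct of $\bnet[\anam\mapsto\netpebn{j}{\netvar\anam}]$. I will prove this by induction on the length of the reduction, using two elementary observations about $\spebred$: (i)~$\spebred$-steps commute with substituting a net for a (frozen) recursion variable; and (ii)~a maximal stack of pebbles at the root of a net is never affected by reduction — inspection of the rules of Def.~\ref{def:pebbleflow} shows no left-hand side has $\snetpeb$ at its root, nor $\snetpeb$ directly below another $\snetpeb$ — so $\netpebn{k}{\anet}\mpebred\cnet$ implies $\cnet=\netpebn{k}{\anet'}$ with $\anet\mpebred\anet'$. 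In the inductive step the only nontrivial case is a step by the feedback rule~\eqref{def:pebbleflow:rule:feedback} at the outermost $\mu$, so that $\cnet'=\netpeb{\cnet''}$ and the new reduct is $\netpebn{j+1}{\netrec{\anam}{\cnet''[\anam\mapsto\netpeb{\netvar\anam}]}}$; applying $[\anam\mapsto\netpeb{\netvar\anam}]$ to the reduction $\bnet[\anam\mapsto\netpebn{j}{\netvar\anam}]\mpebred\netpebn{j+1}{\cnet''}$ supplied by the induction hypothesis, and noting $\bnet[\anam\mapsto\netpebn{j}{\netvar\anam}][\anam\mapsto\netpeb{\netvar\anam}]=\bnet[\anam\mapsto\netpebn{j+1}{\netvar\anam}]$, re-establishes the invariant.

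Given the invariant, ``$\le$'' is short: along any reduction $\netrec{\anam}{\bnet}\mpebred\netpebn{m}{\cnet}$ with $m\in\nat$ the leading-pebble count is non-decreasing and is incremented only by feedback steps at the outermost $\mu$, so at least $m$ such feedback steps occur; for $0\le i<m$ the state just before the $(i{+}1)$-st of them has the form $\netpebn{i}{\netrec{\anam}{\netpeb{\cnet'}}}$, hence by the invariant $\netpebn{i+1}{\cnet'}$ is a $\spebred$-reduct of $\bnet[\anam\mapsto\netpebn{i}{\netvar\anam}]$, whence $f(i)=\pfnetprd{\bnet[\anam\mapsto\netpebn{i}{\netvar\anam}]}\ge i+1$; taking the supremum over reachable $m$ gives $\pfnetprd{\netrec{\anam}{\bnet}}\le\lfp f$. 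For ``$\ge$'' I show by induction on $m$ that if $f(j)\ge j+1$ for all $j<m$ then $\netrec{\anam}{\bnet}\mpebred\netpebn{m}{\cnet}$ for some $\cnet$: the base case is trivial, and for $m\ge1$ the induction hypothesis together with the invariant yields a reduct $\netpebn{m-1}{\netrec{\anam}{\cnet'}}$ with $\netpebn{m-1}{\cnet'}$ a reduct of $A\defdby\bnet[\anam\mapsto\netpebn{m-1}{\netvar\anam}]$, while $f(m-1)\ge m$ gives $A\mpebred\netpebn{m}{\cnet''}$; confluence of $\spebred$ (Lem.~\ref{thm:pebred:confluent}) and observation~(ii) produce a common reduct of the form $\netpebn{m}{\cdot}$, so $\cnet'\mpebred\netpeb{\cdot}$, and one further feedback step delivers the $m$-th output pebble. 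Taking the supremum over all such $m$ gives $\pfnetprd{\netrec{\anam}{\bnet}}\ge\lfp f$, and combining this with the displayed identity finishes the proof.

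The step I expect to be the main obstacle is the precise formulation and proof of the loop invariant — in particular getting observation~(i) right under the $\mu$-binder (including the usual $\alpha$-conversion care) and tracking the pebbles that successive feedback steps push onto the recursion occurrences. Once that invariant is established, the monotone-function identity and the two extremal inequalities are routine.
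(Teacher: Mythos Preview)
Your proof is correct and follows a genuinely different route from the paper's. The paper constructs one canonical ``greedy'' rewrite sequence --- at stage~$i$ it fully reduces the current body $\bnet_i$, extracts its $\ell_i=\pfnetprd{\bnet_i}$ pebbles, and performs that many feedback steps --- and then proves, via an auxiliary identity $f_0(n_m+k)=n_m+f_m(k)$, that the cumulative output $n_m$ equals the Kleene iterate $\funnap{f}{m}{0}$; confluence enters only to argue that this particular sequence is production-optimal. You instead establish a structural invariant over \emph{all} $\spebred$-reducts of $\netrec{\anam}{\bnet}$ and pair it with the alternative characterisation $\lfp f=\sup\{m\mid\forall j<m.\,f(j)\ge j+1\}$; confluence is then used in the ``$\ge$'' direction to join the invariant-supplied reduct of the body with one witnessing $f(m-1)\ge m$. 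Your approach is more strategy-agnostic and the invariant is reusable, while the paper's gives a direct operational reading of the Kleene iteration without needing the separate fixed-point characterisation. One small wording fix: when you ``replace $\bnet$ by $\bnet^{\aenv}$'' you actually want $\bnet^{\envupd{\aenv}{\anam}{0}}$ (which leaves $\anam$ untouched); this is exactly the paper's $\bnet_0$, and with it your identity $f(k)=\pfnetprd{\bnet_0[\anam\mapsto\netpebn{k}{\netvar\anam}]}$ holds literally.
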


\begin{proof}
  Let $\aenv \funin \envtyp$ be an arbitrary assignment 
  and $\bnet_0 \defdby \bnet^{\envupd{\aenv}{x}{0}}$.
  Observe that $\pfnetprdr{\netrec{\anam}{\bnet}}{\aenv} = \pfnetprd{\netrec{x}{\bnet_0}}$
  and consider a rewrite sequence of the form
  \begin{gather*}
    \netrec{x}{\bnet_0} \mred \ldots \mred
    \netpebn{n_i}{\netrec{x}{\bnet_i}}
    \mred \netpebn{n_i}{\netrec{x}{\netpebn{\ell_i}{\bnet'_{i}}}}
    \mred \netpebn{n_i + \ell_i}{\netrec{x}{\bnet_{i+1}}}
    \mred \ldots
  \end{gather*}
  where $\ell_i = \pfnetprd{\bnet_i}$, $n_0 = 0$, $n_{i+1} = n_i + \ell_i$,
  and $\bnet_{i+1} \defdby \bnet'_{i}(\netpebn{\ell_i}{x})$.
  Note that
  $\lim_{m \to \infty}{n_m} = \pfnetprd{\netrec{\anam}{\bnet_0}}$;
  `$\le$' follows from
  $\myall{m}{\netrec{\anam}{\bnet_0} \mred \netpebn{n_m}{\netrec{x}{\bnet_{m}}}}$,
  and
  `$\ge$' since if
  $\lim_{m \to \infty}{n_m} < \infty$
  then $\exists m \in \nat$ such that $\ell_m \defdby \pfnetprd{\bnet_m} = 0$
  and therefore
  $\pfnetprd{\netrec{\anam}{\bnet_0}} = \pfnetprd{\netpebn{n_m}{\netrec{x}{\bnet_{m}}}} = n_m$
  by confluence.

  Let $f_i = \mylam{n}{\netprd{\bnet_i(\netpebn{n}{x})}}$,
  and $f'_i = \mylam{n}{\netprd{\bnet'_{i}(\netpebn{n}{x})}}$.
  We prove
  \begin{equation*}
    \myallin{k}{\nat}{\funap{f_0}{n_m + k} = n_m + \funap{f_m}{k}} \tag{$*$}\label{eq:lfp}
  \end{equation*}
  by induction over $m$.
  The base case $m = 0$ is trivial, we consider the induction step.
  We have
  $\bnet_m \mred \netpebn{\ell_m}{\bnet'_{m}}$
  and by substituting $\netpebn{k}{x}$ for $x$ we get
  \begin{equation*}
  \myallin{k}{\nat}{\funap{f_m}{k} = \ell_m + \funap{f'_m}{k}} \tag{$**$}\label{eq:lfp2}
  \end{equation*}
  Moreover, since $\funap{f_{m+1}}{k} = \funap{f'_{m}}{\ell_m+k}$,
  we get
  $n_{m+1} + \funap{f_{m+1}}{k}
   = n_{m+1} + \funap{f'_{m}}{\ell_m+k} %
   = n_{m} + \ell_m + \funap{f'_{m}}{\ell_m+k}
   \stackrel{\eqref{eq:lfp2}}{=} n_{m} + \funap{f_{m}}{\ell_m+k}
   \stackrel{\eqref{eq:lfp}}{=} \funap{f_0}{n_m + \ell_m +k}
   = \funap{f_0}{n_{m+1} +k}$.

  Let $f \defdby f_0$. We proceed with showing
  $\myall{m}{\funnap{f}{m}{0} = n_m}$
  by induction over $m \in \nat$.
  For the base case $m = 0$ we have $\funnap{f}{0}{0} = 0$ and $n_0 = 0$,
  and for the induction step
  we get
  $\funnap{f}{m+1}{0}
   = \funap{f}{\funnap{f}{m}{0}}
   \stackrel{\text{IH}}{=} \funap{f}{n_m}
   \stackrel{\eqref{eq:lfp}}{=} n_{m} + \funap{f_m}{0}
   = n_{m} + \ell_m
   = n_{m+1}$.

  Hence
  $\lfp{f} = \lim_{m \to \infty}{\funnap{f}{m}{0}}
   = \lim_{m \to \infty}{n_m} = \pfnetprd{\netrec{x}{\bnet_0}}
   = \pfnetprdr{\netrec{\anam}{\bnet}}{\aenv}$.
  \qed
\end{proof}

\begin{lemma}\label{lem:netprd:netbox}
  For $\anet\in\net$,  $\aiosq\in\prodiosq$, $\aenv\funin\envtyp$:
  $\pfnetprdr{\netbox{\aiosq}{\anet}}{\aenv} = \iosqprd{\aiosq}{\pfnetprdr{\anet}{\aenv}}$.
\end{lemma}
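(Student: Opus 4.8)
The plan is to reduce to the variable-free case and then peel the leading symbol off $\aiosq$, matching the box rules \eqref{def:pebbleflow:rule:boxP}, \eqref{def:pebbleflow:rule:boxM} of Def.~\ref{def:pebbleflow} against the defining clauses \eqref{eq:iosqemp:def:iosq:prd}--\eqref{eq:ioin:succ:def:iosq:prd} of Def.~\ref{def:iosq:prd}. Since $(\netbox{\aiosq}{\anet})^{\aenv} = \netbox{\aiosq}{\anet^{\aenv}}$ (an \ioseq{} in a box carries no recursion variables) and $\pfnetprdr{\cdot}{\aenv} = \pfnetprd{(\cdot)^{\aenv}}$ by definition, it suffices to prove $\pfnetprd{\netbox{\aiosq}{\anet}} = \iosqprd{\aiosq}{\pfnetprd{\anet}}$ for every $\anet\in\net$ and $\aiosq\in\prodiosq$. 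As two elements of $\conat$ coincide exactly when they dominate the same natural numbers, this equation follows from the equivalence $\pfnetprd{\netbox{\aiosq}{\anet}} \geq m \Leftrightarrow \iosqprd{\aiosq}{\pfnetprd{\anet}} \geq m$, which I would establish for all $m\in\nat$, $\anet\in\net$, $\aiosq\in\prodiosq$ by well-founded induction on the pair $(m,\ell)$ ordered lexicographically, where $\ell$ is the number of leading $\ioin$-symbols of $\aiosq$. This number is finite precisely because $\aiosq\in\prodiosq$ (a finite sequence, or an infinite one with infinitely many $\ioout$'s, has only finitely many leading $\ioin$'s); this is the sole place the hypothesis $\aiosq\in\prodiosq$ is used.

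Before the induction I would record a few elementary facts about the pebbleflow relation. First, \emph{root pebbles persist}: if $\netpebn{n}{M}\mpebred N$ then $N\equiv\netpebn{n}{M'}$ with $M\mpebred M'$, simply because no left-hand side in Def.~\ref{def:pebbleflow} has $\snetpeb$ as its root symbol, so no step can take place at or above an outermost $\snetpeb$. Second, \emph{production is invariant under $\mpebred$}: $\anet\mpebred\bnet$ implies $\pfnetprd{\anet}=\pfnetprd{\bnet}$ --- "$\geq$" is transitivity of $\mpebred$, and "$\leq$" follows from confluence (Lem.~\ref{thm:pebred:confluent}) together with root-pebble persistence, since a common reduct of $\netpebn{n}{\cdot}$ and $\bnet$ still carries $n$ root pebbles. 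From root-pebble persistence one also reads off $\pfnetprd{\netpeb{M}}=1+\pfnetprd{M}$, and that the outermost box is permanently blocked in the degenerate cases --- whence $\pfnetprd{\netbox{\iosqemp}{M}}=0$, and $\pfnetprd{\netbox{\iosqcns{\ioin}{\aiosq'}}{M}}=0$ whenever $\pfnetprd{M}=0$ (the body never exposes a $\snetpeb$ at its root, so \eqref{def:pebbleflow:rule:boxM} never becomes applicable, and \eqref{def:pebbleflow:rule:boxP} cannot apply because the head is $\ioin$).

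The induction then splits on the head of $\aiosq$. For $m=0$ both sides hold trivially. For $m\geq 1$: if $\aiosq=\iosqemp$, both sides are $0$; if $\aiosq=\iosqcns{\ioout}{\aiosq'}$, then rule~\eqref{def:pebbleflow:rule:boxP} gives $\netbox{\aiosq}{\anet}\pebred\netpeb{\netbox{\aiosq'}{\anet}}$, so by invariance $\pfnetprd{\netbox{\aiosq}{\anet}}=1+\pfnetprd{\netbox{\aiosq'}{\anet}}$, and the induction hypothesis at $(m-1,\ell(\aiosq'))$ together with clause~\eqref{eq:ioout:def:iosq:prd} closes the case; if $\aiosq=\iosqcns{\ioin}{\aiosq'}$ and $\pfnetprd{\anet}=0$, both sides are $0$ by the observations above and clause~\eqref{eq:ioin:zero:def:iosq:prd}; and if $\aiosq=\iosqcns{\ioin}{\aiosq'}$ with $\pfnetprd{\anet}\geq 1$, then $\anet\mpebred\netpeb{\anet_1}$ for some $\anet_1$, so $\netbox{\aiosq}{\anet}\mpebred\netbox{\iosqcns{\ioin}{\aiosq'}}{\netpeb{\anet_1}}\pebred\netbox{\aiosq'}{\anet_1}$ by~\eqref{def:pebbleflow:rule:boxM}; invariance yields $\pfnetprd{\netbox{\aiosq}{\anet}}=\pfnetprd{\netbox{\aiosq'}{\anet_1}}$ and $\pfnetprd{\anet}=1+\pfnetprd{\anet_1}$, and the induction hypothesis at $(m,\ell(\aiosq'))$ --- smaller because $\ell(\aiosq')=\ell(\aiosq)-1$ --- combined with clause~\eqref{eq:ioin:succ:def:iosq:prd} finishes it.

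The routine bookkeeping aside, two points need genuine care. (a) The well-founded measure $(m,\ell)$ must be chosen so that both the $\ioout$-step (which lowers $m$ but may raise $\ell$) and the $\ioin$-step (which lowers $\ell$ but leaves $m$ fixed) are covered; this is exactly where finiteness of $\ell$, hence $\aiosq\in\prodiosq$, is essential. (b) The value $\conattop$: all reductions used are finite, so the equivalences are proven for finite production values directly, and the case $\pfnetprd{\anet}=\conattop$ (equivalently $\pfnetprd{\anet_1}=\conattop$) is absorbed by the limit extensions of $\siosqprd{\cdot}$, using $\iosqprd{\iosqcns{\ioin}{\aiosq'}}{\conattop}=\lim_{n}\iosqprd{\aiosq'}{n}=\iosqprd{\aiosq'}{\conattop}$. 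I expect (b) to be the main nuisance, whereas the skeleton of the case analysis is straightforward.
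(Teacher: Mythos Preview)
Your argument is correct. Both your proof and the paper's share the same case analysis on the head of $\aiosq$ and the same inner induction on the number $\ell$ of leading $\ioin$'s, but the outer proof principle differs: the paper sets up a bisimulation $R\subseteq\conat\times\conat$ and argues coinductively that $R$-related elements are equal, whereas you prove the equivalence $\pfnetprd{\netbox{\aiosq}{\anet}}\ge m \Leftrightarrow \iosqprd{\aiosq}{\pfnetprd{\anet}}\ge m$ by well-founded induction on $(m,\ell)$. Your route is more elementary---it avoids appealing to bisimulation as a proof principle on $\conat$ and makes explicit the auxiliary facts (root-pebble persistence, invariance of $\spfnetprd$ under $\mpebred$ via confluence) that the paper leaves implicit. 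The price is that you must treat $\conattop$ separately through the limit clauses, which the coinductive approach absorbs uniformly. Both approaches rely on $\aiosq\in\prodiosq$ at exactly the same point, to ensure $\ell$ is finite.
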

\begin{proof}
  We show that the relation $R\subseteq\conat\times\conat$ defined as follows is a bisimulation:
  \[
    R \defdby
    \big\{\,
      \pair{\pfnetprdr{\netbox{\aiosq}{\anet}}{\aenv}}{\iosqprd{\aiosq}{\pfnetprdr{\anet}{\aenv}}}
      \where \aiosq\in\prodiosq ,\; \anet\in\net ,\; \aenv\funin\envtyp
    \,\big\}
    \punc,
  \]
  that is, we prove that, for all $k_1,k_2\in\conat$, $\aiosq\in\prodiosq$, $\anet\in\net$,
  and $\aenv\funin\envtyp$, if $k_1 = \pfnetprdr{\netbox{\aiosq}{\anet}}{\aenv}$
  and $k_2 = \iosqprd{\aiosq}{\pfnetprdr{\anet}{\aenv}}$,
  then either $k_1 = k_2 = 0$ or
  $k_1 = 1 + k'_1$, $k_2 = 1 + k'_2$ and $\pair{k'_1}{k'_2}\in R$.
  
  Let $k_1,k_2\in\conat$, $\aiosq\in\prodiosq$, $\anet\in\net$, and $\aenv\funin\envtyp$,
  be such that $k_1 = \pfnetprdr{\netbox{\aiosq}{\anet}}{\aenv}$
  and $k_2 = \iosqprd{\aiosq}{\pfnetprdr{\anet}{\aenv}}$.
  By definition of $\iosq$, we have that
  $\aiosq \equiv \iosqcat{\ioin^n}{\iolstcns{\ioout}{\biosq}}$
  for some $n\in\nat$ and $\biosq\in\iosq$.
  We proceed by induction on $n$.
  If $n = 0$, then $k_1 = 1 + k'_1$
  with $k'_1 = \pfnetprdr{\netbox{\biosq}{\anet}}{\aenv}$
  and $k_2 = 1 + k'_2$ with $k'_2 = \iosqprd{\biosq}{\pfnetprdr{\anet}{\aenv}}$,
  and $\pair{k'_1}{k'_2}\in R$.
  If $n = n' + 1$, we distinguish cases:
  If $\pfnetprdr{\anet}{\aenv} = 0$, then $k_1 = k_2 = 0$.
  If $\pfnetprdr{\anet}{\aenv} = 1 + m$,
  then $\anet\mpebred\netpeb{\bnet}$ for some $\bnet\in\net$ with
  $\pfnetprdr{\bnet}{\aenv} = m$.
  Thus we get $k_1 = \pfnetprdr{\netbox{\iosqcat{\ioin^{n'}}{\iolstcns{\ioout}{\biosq}}}{\bnet}}{\aenv}$
  and $k_2 = \iosqprd{\iosqcat{\ioin^{n'}}{\iolstcns{\ioout}{\biosq}}}{\pfnetprdr{\bnet}{\aenv}}$,
  and $\pair{k_1}{k_2}\in R$ by induction hypothesis.
  \qed
\end{proof}

Next we show that production of a term $\anet\in\net$ (Def.~\ref{def:netprd})
coincides with the maximal number of pebbles produced by the net $\anet$ 
via the rewrite relation $\spebred$ (Def.~\ref{def:pfnetprd}).
\begin{lemma}\label{lem:netprd:pfnetprd}
  For all nets $\anet$ and assignments $\aenv$, %
  it holds\/ $\pfnetprdr{\anet}{\aenv} = \netprdr{\anet}{\aenv}$.
\end{lemma}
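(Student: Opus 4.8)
The plan is to prove the identity by structural induction on the net $\anet$, matching each clause of Definition~\ref{def:netprd} with the corresponding behaviour of the pebbleflow relation $\spebred$, and invoking the already established Lemmas~\ref{lem:netprd:netbox} and~\ref{lem:netprod:netrec:lfp} for the two genuinely recursive constructors. The base cases are immediate. For $\anet \equiv \netsrc{k}$, repeated use of rule~\eqref{def:pebbleflow:rule:source} yields $\netsrc{k} \mpebred \netpebn{k}{\netsrc{0}}$ and no reduction can expose more than $k$ leading pebbles, so $\pfnetprd{\netsrc{k}} = k = \netprdr{\netsrc{k}}{\aenv}$. For $\anet \equiv \netvar{\anam}$ we have $\anet^{\aenv} = \netpebn{\funap{\aenv}{\anam}}{\netvar{\anam}}$, which is a $\spebred$-normal form, so $\pfnetprdr{\netvar{\anam}}{\aenv} = \funap{\aenv}{\anam} = \netprdr{\netvar{\anam}}{\aenv}$.

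For the inductive cases I would first isolate two auxiliary \emph{commutation} facts about $\spfnetprd$, both proved using confluence of $\spebred$ (Lemma~\ref{thm:pebred:confluent}): (i)~$\pfnetprdr{\netpeb{\anet}}{\aenv} = 1 + \pfnetprdr{\anet}{\aenv}$, and (ii)~$\pfnetprdr{\netmeet{\ianet{1}}{\ianet{2}}}{\aenv} = \min(\pfnetprdr{\ianet{1}}{\aenv},\pfnetprdr{\ianet{2}}{\aenv})$. Fact~(i) is straightforward: any reduction from $\netpeb{\anet}$ projects to one from $\anet$ and conversely, since the outermost $\snetpeb$ never participates in a redex, it only contributes one to the count of leading pebbles. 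Fact~(ii) is the more delicate one: using rule~\eqref{def:pebbleflow:rule:meet}, whenever both arguments of a meet have exposed a pebble at the top that pebble can be extracted; confluence together with the idealised unbounded FIFO wires guarantees that the supremum $\min$ is actually realised, i.e.\ one can independently drive each $\ianet{i}$ to expose $\min(\pfnetprdr{\ianet{1}}{\aenv},\pfnetprdr{\ianet{2}}{\aenv})$ leading pebbles and then fire the meet rule that many times, while no reduction strategy can do better. Granting (i) and (ii), the cases $\anet \equiv \netpeb{\bnet}$ and $\anet \equiv \netmeet{\ianet{1}}{\ianet{2}}$ follow by applying the induction hypothesis to the immediate subterms.

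For $\anet \equiv \netbox{\aiotrm}{\bnet}$, Lemma~\ref{lem:netprd:netbox} gives $\pfnetprdr{\netbox{\aiotrm}{\bnet}}{\aenv} = \iosqprd{\aiotrm}{\pfnetprdr{\bnet}{\aenv}}$, and the induction hypothesis on $\bnet$ rewrites the argument, so this equals $\iosqprd{\aiotrm}{\netprdr{\bnet}{\aenv}} = \netprdr{\netbox{\aiotrm}{\bnet}}{\aenv}$ by Definition~\ref{def:netprd}. For $\anet \equiv \netrec{\anam}{\bnet}$, Lemma~\ref{lem:netprod:netrec:lfp} says that $\pfnetprdr{\netrec{\anam}{\bnet}}{\aenv}$ is the least fixed point of $\mylamin{n}{\conat}{\pfnetprdr{\bnet}{\envupd{\aenv}{\anam}{n}}}$; applying the induction hypothesis to $\bnet$ under every assignment $\envupd{\aenv}{\anam}{n}$ shows that this function equals $\mylamin{n}{\conat}{\netprdr{\bnet}{\envupd{\aenv}{\anam}{n}}}$, and both maps being monotone self-maps of the complete chain $\conat$, their least fixed points coincide, which by Definition~\ref{def:netprd} is exactly $\netprdr{\netrec{\anam}{\bnet}}{\aenv}$.

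I expect the main obstacle to be fact~(ii), the compositionality of $\spfnetprd$ over meet: one must show not only that any interleaving of the reductions of the two arguments cannot exceed the minimum of their individual productions, but also --- using confluence and the unboundedness of the wires --- that this minimum is always attainable regardless of the order in which pebbles reach the two input ports. Everything else is routine bookkeeping on top of the cited lemmas and the base-case computations.
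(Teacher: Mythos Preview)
Your proposal is correct and follows essentially the same approach as the paper: an induction on the structure of $\anet$, invoking Lemma~\ref{lem:netprd:netbox} for the box case and Lemma~\ref{lem:netprod:netrec:lfp} for the $\snetrec$ case, with the remaining cases handled directly. The paper phrases the induction as ``induction on the number of $\mu$-bindings, with a subinduction on the size of $\anet$'', but this is equivalent to your plain structural induction since in every case the immediate subterm is strictly smaller; your explicit treatment of the auxiliary facts for $\snetpeb$ and $\snetmeet$ (your (i) and (ii)) simply spells out what the paper subsumes under ``straightforward''.
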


\begin{proof}
  The statement of the lemma can be proved by a straightforward induction
  on the number of $\mu$-bindings of a net $\anet$, with a subinduction
  on the size of $\anet$. %
  In the cases $\anet\equiv\netbox{\aiosq}{\anet'}$ and
  $\anet \equiv \netrec{\anam}{\anet'}$
  Lem.~\ref{lem:netprd:netbox} and
  Lem.~\ref{lem:netprod:netrec:lfp} are applied, respectively.
  \qed
\end{proof}
Subsequently, we will use the interpretation of terms in $\net$
and the production of pebbleflow nets, 
$\snetprd$ and $\spfnetprd$, interchangeably.

\section{Translation into Production Terms}\label{sec:translation}
In this section we define a translation
from stream constants in flat
or friendly nesting specifications to production terms.
In particular, the root $\rootsc$ of a specification $\atrs$
is mapped by the translation to a production term $\trnsl{\rootsc}$
with the property that if\/ $\atrs$ is flat (friendly nesting),
then the \daob{} lower bound on the production of $\rootsc$ in $\atrs$
equals (is bounded from below by)
the production of~$\trnsl{\rootsc}$.

\subsection{Translation of Flat and Friendly Nesting Symbols}\label{sec:translation:subsec:functions}

As a first step of the translation,
we describe how for a flat (or friendly nesting)
stream function symbol $\astrfun$ in a stream specification $\atrs$
a periodically increasing function $\sdtrnsl{\astrfun}$
can be calculated
that is (that bounds from below)
the \daob{} lower bound on the production of $\astrfun$
in $\atrs$. %

Let us again consider the rules
(i)~$\funap{\astrfun}{\strcns{\numsuc{x}}{\strcns{y}{\astr}}}
  \red \strcns{\numadd{\numsuc{x}}{y}}{\funap{\astrfun}{\strcns{y}{\astr}}}$,
and
(ii)~$\funap{\astrfun}{\strcns{\numzer}{\astr}}
  \red \strcns{\numzer}{\strcns{\numsuc{\numzer}}{\funap{\astrfun}{\astr}}}$
  from Fig.~\ref{fig:pascal}.
We model the \daob{} lower bound on the production of $\astrfun$
by a function from $\conat$ to $\conat$ %
defined as the unique solution for $\specvar{\astrfun}$ of the following system of equations.\label{transl:Pascal:f}
We disregard what the concrete stream elements are,
and therefore we take the infimum over all possible traces:
\[
  \specvarap{\astrfun}{n}
  = \inf\, \big\{ \specvarap{\astrfun,\text{(i)}}{n}, \specvarap{\astrfun,\text{(ii)}}{n} \big\}
\]
where the solutions for $\specvar{\astrfun,\text{(i)}}$ and $\specvar{\astrfun,\text{(ii)}}$
are the \daob{} lower bounds of $\astrfun$ assuming
that the first rule applied in the rewrite sequence is (i) or (ii), respectively.
The rule~(i) consumes two elements, produces one element
and feeds one element back to the recursive call.
For rule~(ii) these numbers are 1, 2, 0 respectively.
Therefore we get:
\begin{align*}
  \specvarap{\astrfun,\text{(i)}}{n}
  & = \letin{n' \defdby n-2},\;
      \text{if $n' < 0$ then $0$ else $1 + \specvarap{\astrfun}{1 + n'}$}
  \punc,
  \\
  \specvarap{\astrfun,\text{(ii)}}{n}
  & = \letin{n' \defdby n-1},\;
      \text{if $n' < 0$ then $0$ else $2 + \specvarap{\astrfun}{n'}$}
  \punc.
\end{align*}
The unique solution for $\specvar{\astrfun}$ is $n \mapsto \cosubtr{n}{1}$,
represented by the \ioterm{} $\ios{-}{-+}$. \label{trans:pascal:f}

In general, functions may have multiple arguments,
which during rewriting may get permuted, deleted or duplicated.
The idea is to trace single arguments, and to take the infimum
over traces in case an argument is duplicated.

In the definition of the translation of stream functions,
we need to distinguish the cases according to whether a symbol is weakly guarded or not:
On $\Ssmincns$ we define
\(
  {\sugsfsleadsto}
  \defdby
  \{
    \pair{\rootsymb{\ell}}{\rootsymb{r}}
    \where {\ell\to r} \in \Rs,\; \text{`$\sstrcns$'} \neq \rootsymb{r} \in \Ssmincns
  \}
  \punc.
\)
the \emph{dependency relation} between symbols in $\Ssmincns$.
We say that a symbol $\astrfun \in \Ssmincns$ is \emph{weakly guarded}
if $\astrfun$ is strongly normalising with respect to $\sugsfsleadsto$
and \emph{unguarded}, otherwise.
Note that since $\Ssmincns$ is finite it is (easily) decidable whether
a symbol $\astrfun\in\Ssmincns$ is weakly guarded or unguarded.

The translation of a stream function symbol
is defined as the unique solution of a (usually infinite)
system of defining equations where the unknowns are functions.
More precisely, for each symbol $\astrfun \in \Sfnest \supseteq \Ssf$ of a flat
or friendly nesting stream specification,
this system has a \pein{} function
$\sdtrnsl{\astrfun}$ %
as the solution for $\specvar{\astrfun}$,
which is unique among the continuous functions.
Later we will see 
(Prop.~\ref{prop:corr:pein:gate:transl} and Lem.~\ref{lem:transl:soundness})
that the translation $\sdtrnsl{\astrfun}$ of a flat (friendly nesting)
stream function symbol $\astrfun$ coincides with (is a lower bound for)
the \daob~lower bound $\doLow{\atrs}{\astrfun}$ of $\astrfun$. 
\begin{definition}\normalfont
  \label{def:transl:flat:fn:symbols:pifuncspec}
  Let $\atrs = \pair{\Sigma}{R}$ be a stream specification.
  For each flat or friendly nesting symbol
  $\astrfun \in \Sfnest \supseteq \Sflat$ with
  arities $k = \arityS{\astrfun}$ and $\ell = \arityD{\astrfun}$
  we define $\sdtrnsl{\astrfun}\funin\conat^{\,k}\to\conat$,
  called the \emph{(\pein~function) translation of\/ $\astrfun$ in $\atrs$}, 
  as the unique solution $s_{\specvar{\astrfun}}$ for $\specvar{\astrfun}$ 
  of the following system of defining equations,
  where the solution of an equation of the form
  $\specvarap{}{n_1,\ldots,n_k} = \ldots$ is a function 
  $s_\sspecvar\funin\conat^{k}\to\conat$
  (it is to be understood that $s_\sspecvar\in\conat$ if $k=0$):
  For all $n_1,\ldots,n_k\in\conat$, 
  $i \in \{\argnone,1,\ldots,k\}$, and $n\in\nat$:
  \begin{gather*}
    \specvarap{\astrfun}{n_1,\ldots,n_k}
    = \min(
        \specvar{\astrfun,\argnone},
        \specvarap{\astrfun,1}{n_1},
        \ldots,
        \specvarap{\astrfun,k}{n_k}
      )
    \punc{,}
    \\
    \specvar{\astrfun,\argnone}
    = \begin{cases}
        \inf\,
          \big\{
            \specvar{\astrfun,\argnone,\rho}
            \where \text{$\rho$ a defining rule of $\astrfun$}
          \big\}
        & \text{if $\astrfun$ is weakly guarded,} \\[-0.4ex]
        0 & \text{if $\astrfun$ is unguarded,}
      \end{cases}
    \\
    \specvarap{\astrfun,i}{n}
    = \begin{cases}
        \inf\,
          \big\{
            \specvarap{\astrfun,i,\rho}{n}
            \where \text{$\rho$ a defining rule of $\astrfun$}
          \big\}
        & \text{if $\astrfun$ is weakly guarded,} \\[-0.4ex]
        0 & \text{if $\astrfun$ is unguarded.}
      \end{cases}
  \end{gather*}
  We write $\strcns{\vec{\adattrm}_i}{\astr_i}$ for
  $\strcns{\adattrm_{i,1}}{\strcns{\ldots}{\strcns{\adattrm_{i,p}}{\astr_i}}}$,
  and $\lstlength{\vec{\adattrm}_i}$ for $p$.
  For specifying $\specvar{\astrfun,i,\rho}$ we distinguish
  the possible forms the rule $\rho$ can have.
  If $\rho$ is nesting, then
  $\specvar{\astrfun,\argnone,\rho} = \conattop$,
  and 
  $\specvarap{\astrfun,i,\rho}{n} = n$ for all $n \in \conat$.
  Otherwise, $\rho$ is non-nesting and of the form:
  \[
    \funap{\astrfun}{
      (\strcns{\vec{\adattrm}_1}{\astr_1}),
      \ldots,
      (\strcns{\vec{\adattrm}_{k}}{\astr_{k}}),
      \bdattrm_1,\ldots,\bdattrm_{\ell}
    }
    \red
    \strcns{\cdattrm_{1}}{\strcns{\ldots}{\strcns{\cdattrm_{m}}{\astrtrm}}}
    \punc,
  \]
  where either (a)~$\astrtrm \equiv \astr_j$, or
  (b)~\(
        \astrtrm \equiv
        \funap{\bstrfun}{
        (\strcns{\vec{\ddattrm}_{\!1}}{\astr_{\permut{1}}}),
        \ldots,
        (\strcns{\vec{\ddattrm}_{\!k'}}{\astr_{\permut{k'}}}),
        \edattrm_1,\ldots,\edattrm_{\ell'}}
      \)
  with $k' = \arityS{\bstrfun}$, $\ell' = \arityD{\bstrfun}$, and
  $\spermut \funin \{1,\ldots,k'\} \to \{1,\ldots,k\}$.
  Let:
  \begin{align*}
    \specvar{\astrfun,\argnone,\rho} = \
      &
      \begin{cases}
        \conattop &\text{case~(a)} \\
        m + \specvar{\bstrfun,\argnone} &\text{case~(b)}
      \end{cases}
    \\
    \specvarap{\astrfun,i,\rho}{n} = \
      &\letin{n' \defdby n - \lstlength{\vec{\adattrm}_i}}
      \text{, if $n' < 0$ then $0$ else }\\
      &
      m +
      \begin{cases}
        n'
        &\text{case~(a), $i = j$} \\
        \conattop
        &\text{case~(a), $i \neq j$} \\
        \inf\,
          \big\{
            \specvarap{\bstrfun,j}{n' + \lstlength{\vec{\ddattrm}_{\!j}}}
            \where j \in {\funap{\invfun{\spermut}}{i}}
          \big\}
        &\text{case~(b)}
      \end{cases}
  \end{align*}
  where we agree $\inf \setemp = \conattop$.
\end{definition}

We mention a useful intuition for understanding the fabric of 
the formal definition above of the translation $\sdtrnsl{\astrfun}$ 
of a stream function $\astrfun$,
the solution $s_{\specvar{\astrfun}}$ for $\specvar{\astrfun}$
of the system of equations in Def.~\ref{def:transl:flat:fn:symbols:pifuncspec}. 
For each $i\in\{1,\ldots,k\}$ where $k = \arityS{\astrfun}$
the solution $s_{\specvar{\astrfun,i}}$ for $\specvar{\astrfun,i}$
(a unary \pein~function) in this system
describes to what extent consumption from the $i$-th component of $\astrfun$
`delays' the overal production.
Since in a \daob~rewrite sequence from
$\funap{\astrfun}{\strcns{\pebble^{n_1}}{\aiosq_1}, \ldots, \strcns{\pebble^{n_k}}{\aiosq_k}}$
it may happen that all stream variables $\aiosq_1$, \ldots, $\aiosq_k$
are erased at some point, and that the sequence subsequently continues rewriting stream constants,
monitoring the delays caused by individual arguments is not sufficient
alone to define the production function of $\astrfun$.
This is the reason for the use, in the definition of $\sdtrnsl{\astrfun}$,
of the solution $s_{\specvar{\astrfun,\argnone}}$ for the variable
$\specvar{\astrfun,\argnone}$, which defines a `glass ceiling' for
the not adequate `overall delay' function
$ \min(
        s_{\specvarap{\astrfun,1}{n_1}},
        \ldots,
        s_{\specvarap{\astrfun,k}{n_k}}
      )
$,
taking account of situations in which in \daob~rewrite sequences
from terms
$\funap{\astrfun}{\strcns{\pebble^{n_1}}{\aiosq_1}, \ldots, \strcns{\pebble^{n_k}}{\aiosq_k}}$
all input components have been erased.

Concerning non-nesting rules on which defining rules for
friendly nesting symbols depend via $\sdependson$,
this translation uses the fact that
their production is bounded below by `$\min$'.
These bounds are not necessarily optimal,
but can be used to show productivity of examples like
$\strcf{X} \to \strcns{\datf{0}}{\funap{\astrfun}{\strcf{X}}}$ with
$\funap{\astrfun}{\strcns{x}{\astr}} \to \strcns{x}{\funap{\astrfun}{\funap{\astrfun}{\astr}}}$.

Def.~\ref{def:transl:flat:fn:symbols:pifuncspec} can be used to define, 
for every flat or friendly nesting symbol $\astrfun\in\Ssf$ 
in a stream specification~$\atrs$, a `gate translation' of $\astrfun$:
by defining this translation by choosing
a gate that represents the \pein~function~$\sdtrnsl{\astrfun}$. 

However, it is desirable to define this translation also in an
equivalent way that lends itself better for computation,
and where the result directly is a production term (gate) representation 
of a \pein~function:
by specifying the \ioterm{s} in a gate translation
as denotations of rational \ioseq{s} that are
the solutions of `\ioseq\ specifications'. 
In particular, the gate translation of a symbol $\astrfun\in\Ssf$
will be defined in terms of the solutions, 
for each argument place of a stream function symbol $\astrfun$, 
of a finite \ioseq\ specification
that is extracted from an infinite one which precisely
determines the \daob{} lower bound of $\astrfun$ in that argument place.

\begin{definition}\normalfont\label{def:infioseqspecs}
  Let $\setvars$ be a set of variables.
  The set of \emph{\infioseqspecexp{s} over $\setvars$}
  is defined by the following grammar:
  \begin{equation*}
    \iosqsyx
    \BNFis
    \iosqemp
    \BNFor
    X
    \BNFor
    \iosqcns{\ioin}{\iosqsyx}
    \BNFor
    \iosqcns{\ioout}{\iosqsyx}
    \BNFor
    \iotermspecinf{\iosqsyx}{\iosqsyx}
  \end{equation*}
  where $ X\in\setvars $.
  Guardedness is defined by induction:
  An \infioseqspecexp\ is called \emph{guarded}
  if it is of one of the forms 
  $\iosqemp$, $\iosqcns{\ioin}{\iosqsyx_0}$,
  or $\iosqcns{\ioout}{\iosqsyx_0}$, or if it is of the form
  $\iotermspecinf{\iosqsyx_1}{\iosqsyx_2}$ for guarded $\iosqsyx_1$
  and $\iosqsyx_2$.

  Suppose that $ \setvars = \{ X_{\alpha} \mid \alpha\in A \} $
  for some (countable) set $A$.
  Then by an \emph{\infioseqspec\ over (the set of recursion variables) $\setvars$}
  we mean
  a family $ \{ X_{\alpha} = \iosqsyx_{\alpha} \}_{\alpha\in A} $
  of \emph{recursion equations}, where, for all $\alpha\in A$,
  $\iosqsyx_{\alpha}$ is an \infioseqspecexp\ over $\setvars$.
  Let $\aioseqspec$ be an \infioseqspec.
  If $\aioseqspec$  consists of finitely many recursion equations,
  then it is called \emph{finite}.
  $\aioseqspec$ is called \emph{guarded} (\emph{weakly guarded})
  if the right-hand side of every recursion equation in $\aioseqspec$
  is guarded (or respectively, can be rewritten, using equational logic
  and the equations of $\aioseqspec$, to a guarded \infioseqspecexp).

  Let $ \aioseqspec = \{ X_{\alpha} = \iosqsyx_{\alpha} \}_{\alpha\in A} $
  an \infioseqspec. Furthermore let $\alpha_0\in A$ and $\biosq\in\infiosq$.
  We say that $\biosq$ is \emph{a solution of $\aioseqspec$ for $X_{\alpha_0}$}
  if there exist \ioseq{s} $\{\aiosq_{\alpha}\}_{\alpha\in A}$ 
  such that $ \aiosq_{\alpha_0} = \biosq $,
  and all of the recursion equations in $\aioseqspec$ 
  are true statements 
  (under the interpretation of $\siosqinf$ as defined in
   Def.~\ref{def:iosq:inf}),
  when, for all $\alpha\in A$,
  $\aiosq_{\alpha}$ is substituted for $X_{\alpha}$, respectively.
\end{definition}

It turns out that weakly guarded \infioseqspec{s} have unique solutions,
and that the solutions of finite, weakly guarded \infioseqspec{s} 
are rational \ioseq{s}.

\begin{lemma}\label{lem:wgnf}\label{lem:infioseqspec:ratsol}
  For every weakly guarded \infioseqspec~$\aioseqspec$
  and recursion variable $X$ of $\aioseqspec$,
  there exists a unique solution $\aiosq\in\iosq$ of $\aioseqspec$ for $X$.
  Moreover, if $\aioseqspec$ is finite then the solution $\aiosq$
  of $\aioseqspec$ for $X$ is a rational \ioseq{}, 
  and an \ioterm\ that denotes $\aiosq$ can
  be computed on the input of $\aioseqspec$ and $X$. 
\end{lemma}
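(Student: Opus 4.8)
The plan is to prove the two assertions of the lemma separately: (i) existence and uniqueness of a solution for every weakly guarded \infioseqspec, and (ii) rationality and computability of the solution in the finite case.

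For (i), I would first reduce to the guarded case. By weak guardedness, for each recursion equation $X_\alpha = \iosqsyx_\alpha$ we may rewrite $\iosqsyx_\alpha$, using equational logic and the equations of $\aioseqspec$, to a guarded \infioseqspecexp; replace the system by this guarded one. (Care is needed that this rewriting terminates and yields an equivalent system in the sense of having the same solutions; this follows because $\siosqinf$ is interpreted as a genuine operation on \ioseq{s} and the equational steps are sound for that interpretation.) For a guarded system, I would define, for each $\alpha$ and each $n\in\nat$, the first $n$ symbols of $\aiosq_\alpha$ by simultaneous recursion on $n$: unfolding the guarded right-hand side $\iosqsyx_\alpha$ exposes either $\iosqemp$, or a head symbol from $\pm$ followed by a strictly smaller expression, or an infimum of two guarded expressions — and by Def.~\ref{def:iosq:inf} (together with Lem.~\ref{lem:rmioin:infm:distr}) the head of an infimum of guarded \ioseq{s} is determined by finitely much information about the operands, which by guardedness is available after bounded unfolding. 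This yields a well-defined family $\{\aiosq_\alpha\}$ satisfying all equations; uniqueness follows by showing any two solutions agree on every finite prefix, again by induction on prefix length using the guardedness-driven finite dependency. Essentially this is the standard metric/coinductive fixed-point argument for guarded recursive systems, with the twist that $\siosqinf$ must be shown to be "guardedness-preserving and prefix-determined", which is the content of the guardedness clause for $\iotermspecinf{\iosqsyx_1}{\iosqsyx_2}$ combined with Lem.~\ref{lem:rmioin:infm:distr}.

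For (ii), assume $\aioseqspec$ is finite and weakly guarded, with variables $X_1,\dots,X_N$. The key is that the operations involved — head extraction, tail, $\siosqrmioin$, and the binary operation $\siosqinf$ — all preserve rationality and are computable on \ioterm{} representations. I would set up a finite state space whose states are tuples of "residual expressions" built from the $\iosqsyx_\alpha$ by the finitely many syntactic operations that can arise during prefix computation: starting from $X_\alpha$, each step rewrites a state to a head symbol (from $\pm$ or the end marker $\iosqemp$) together with a successor state, where a successor state is again a tuple of subexpressions of the (finitely many) right-hand sides, possibly wrapped in $\siosqinf$ and $\siosqrmioin$. Because $\siosqrmioin$ applied to a guarded expression again yields something expressible with bounded nesting — here I would invoke Lem.~\ref{lem:rmioin:infm:distr} to push $\siosqrmioin$ through $\siosqinf$, and note $\siosqrmioin$ commutes with the $\ioout$/$\ioin$ head cases trivially — the set of reachable states is finite. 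A finite deterministic "stream-producing automaton" on a finite state set outputs an eventually periodic sequence by the pigeonhole principle: some state repeats, giving the loop $\clst$, and the prefix before the first repetition gives $\aiolst$, so the solution is $\aiolst\lstcyc{\clst}$, a rational \ioseq, and the \ioterm\ $\pair{\aiolst}{\clst}$ (compressed to normal form via the TRS of Lem.~\ref{lem:shortest:ioterm} if a canonical representative is wanted) is computable by running the automaton until a repeat is detected. This mirrors exactly the "spiralling rewrite sequence / pigeonhole" argument already used in the proof of Lem.~\ref{lem:iosq:cmp:rat}.

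The main obstacle I anticipate is making precise the finite state space in (ii): one must check that the residual expressions arising under repeated head-extraction stay within a finite set, which hinges on showing that $\siosqrmioin$ does not cause unbounded growth of expression size. The right tool is Lem.~\ref{lem:rmioin:infm:distr} (distribution of $\siosqrmioin$ over $\siosqinf$) together with the observation that $\siosqrmioin$ of $\iosqcns{\ioout}{\iosqsyx}$ or $\iosqcns{\ioin}{\iosqsyx}$ reduces to a subexpression; with these, every reachable residual is (a tuple of) $\siosqinf$-combinations of subterms of the original right-hand sides, of which there are only finitely many up to the bookkeeping needed. A secondary subtlety is the reduction from weakly guarded to guarded systems: one should argue that the equational rewriting to guarded form terminates on finite systems (it does, because each weakly guarded variable is rewritten to a guarded form in finitely many steps by definition, and there are finitely many variables) and preserves both solutions and finiteness.
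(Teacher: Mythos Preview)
Your plan for part (i), existence and uniqueness, is essentially sound: reducing to the guarded case and then running a standard prefix-determined coinductive fixed-point argument is the right idea, and the paper does not spell out a competing argument for this part.

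Part (ii), however, has a genuine gap. Your finite-state-space claim is false in general. Consider the three-equation specification
\[
  X_1 = \iosqcns{\ioin}{X_1},\qquad
  X_2 = \iosqcns{\ioout}{\iosqcns{\ioin}{X_2}},\qquad
  X_3 = \iotermspecinf{X_1}{X_2}.
\]
This is finite and weakly guarded. Starting from $X_3$ and extracting heads via the clauses of Def.~\ref{def:iosq:inf}, the residuals are
$\iotermspecinf{X_1}{\,\iosqcns{\ioout^n}{X_2}\,}$ for $n=0,1,2,\ldots$, because each time the $X_1$-side emits $\ioin$, one applies $\siosqrmioin$ to the $X_2$-side, and $\iosqrmioin{\iosqcns{\ioout}{\aiosq}} = \iosqcns{\ioout}{\iosqrmioin{\aiosq}}$ does \emph{not} reduce to a subexpression: it pushes $\siosqrmioin$ inward. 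Unfolding, one obtains $\iosqrmioinn{n}{X_2}$, which as a sequence is $\iosqcns{\ioout^n}{X_2}$. These are pairwise distinct, so no pigeonhole on residual expressions applies. Your justification ``$\siosqrmioin$ of $\iosqcns{\ioout}{\iosqsyx}$ or $\iosqcns{\ioin}{\iosqsyx}$ reduces to a subexpression'' is simply wrong for the $\ioout$ case, and this is exactly where the argument breaks down. The solution for $X_3$ is of course rational (it is $\iosqcyc{\ioin}$), but your method does not detect this.

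The paper's proof (in the appendix) takes a substantially different route that sidesteps this obstacle. Instead of tracking syntactic residuals, it builds a finite labelled graph $\g$ whose nodes are positions in the right-hand sides, shows the solution equals the infimum over all traces through $\g$ from the root, and then represents these traces as step-functions in a two-dimensional diagram $\T\subseteq\gNodes\times\nat\times\nat$ (horizontal axis $=$ inputs consumed, vertical axis $=$ outputs produced). The key finiteness is obtained not on residuals but on \emph{vertical strips} of a pruned diagram $\TO$: on each strip one keeps, for every node $v\in\gNodes$, only the lowest $y$-coordinate. Pigeonhole is then applied to the finite tuples of \emph{relative heights} $y_r(x,v)$ of the nodes on a strip, using that $\le$ on $\nat^k$ is an almost-full relation, to locate a ``pseudo-repetition'' and eventually a genuine repetition that witnesses quasi-periodicity of the lower bound. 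The unbounded growth you would encounter in your residual-expression approach corresponds, in the paper's picture, to certain nodes drifting upward relative to the lower bound; the paper's argument shows such nodes eventually cease to affect the lower bound, which is what makes the repetition argument go through.
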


  Let $\atrs$ be a stream specification, and $\astrfun\in\Ssf$.
  By exhaustiveness of $\atrs$ for $\astrfun$, there is at least one
  defining rule for $\astrfun$ in $\atrs$.
  Since $\atrs$ is a constructor stream TRS, it follows that
  every defining rule $\rho$ for $\astrfun$ is of the form:
 \label{page:simple:non-nest:form}
  \begin{equation}
    \funap{\astrfun}{p_1, \ldots, p_{\arityS{\astrfun}}, q_1, \ldots, q_{\arityD{\astrfun}} }
    \to \strcns{\adattrm_{1}}{\strcns{\ldots}{\strcns{\adattrm_{m}}{\astrtrm}}}
    \tag{$\rho$}
  \end{equation}
  with $p_1, \ldots, p_{\arityS{\astrfun}} \in \ter{\cnssymb{\asig}}_\sortS$,
  $q_1, \ldots, q_{\arityD{\astrfun}} \in \ter{\cnssymb{\asig}}_\sortD$,
  $\adattrm_1,\ldots,\adattrm_m\in\ter{\asig}_\sortD$
  and $\astrtrm \in \ter{\asig}_\sortS$ where $\rootsymb{\astrtrm}\ne\text{`$\sstrcns$'}$.
  If $\rho$ is non-nesting then either $\astrtrm \equiv \astr_j$ or
  \(
    \astrtrm \equiv
    \funap{\bstrfun}{
      \strcns{\vec{\cdattrm}_1}{\astr_{\permut{1}}},
      \ldots,
      \strcns{\vec{\cdattrm}_{\arityS{\strff{g}}}}{\astr_{\permut{\arityS{\strff{g}}}}},
      \ddattrm_1,\ldots,\ddattrm_{\arityD{\strff{g}}}
    }
  \)
  where $\strcns{\vec{\cdattrm}_i}{\astr_i}$ is shorthand for
  $\strcns{\cdattrm_{i,1}}{\strcns{\ldots}{\strcns{\cdattrm_{i,m_i}}{\astr_i}}}$,
  and $\spermut\funin{\{1,\ldots,\arityS{\strff{g}}\}\to\{1,\ldots,\arityS{\astrfun}\}}$
  is a function that describes how the stream arguments are permuted and
  replicated.

We now define, for every given stream specification $\atrs$, 
an infinite \infioseqspec~$\aioseqspec_{\atrs}$ that will be instrumental
for defining gate translations of the stream function symbols in $\atrs$.

\begin{definition}\normalfont
  \label{def:infioseqspec}
  Let $\atrs = \pair{\Sigma}{R}$ be a stream specification.
  Based on the set:
  \begin{align*}
    A \defdby & \: 
      \{ \pair{\lstemp}{\ios{-}{{}}},
         \pair{\lstemp}{\ios{+}{{}}},
         \pair{\lstemp}{\ios{-+}{{}}} 
      \}
    \\
      &
      \;\cup
      \Bigl\{
         \parbox{80pt}{$
           \pair{\astrfun}{\argnone},\,       
           \triple{\astrfun}{\argnone}{\rho},$\\
           $\triple{\astrfun}{i}{q} ,\,
           \quadruple{\astrfun}{i}{q}{\rho}
                       $}
           \,\Big| \;
             \parbox{170pt}{
               $\astrfun\in\Sfnest$, 
               $q\in\nat$,
               $\rho$ defining rule for $\astrfun$,
               \hspace*{\fill}%
               $1\le i\le \arityS{\astrfun}$%
               \hspace*{\fill}      
             }
           \Bigr\}
  \end{align*}
  of tuples
  we define the infinite \infioseqspec~%
  $\aioseqspec_{\atrs} = \{ X_{\alpha} = E_{\alpha} \}_{\alpha\in A}$
  by listing the equations of $\aioseqspec_{\atrs}$.
  We start with the equations
  \begin{align*}
  \Xm & = \iosqemp %
  \punc,
  &
  \Xp & = \iosqcns{\ioout}{\Xp}
  \punc,
  &
  \Xid & = \iosqcns{\ioin}{\iosqcns{\ioout}{\Xid}}
  \punc.
  \end{align*}
  Then we let, for all friendly nesting (or flat) $\astrfun\in\Ssf$
  with arities $k = \arityS{\astrfun}$ and $\ell = \arityD{\astrfun}$:
  \begin{equation*}
    \Xfi{\astrfun}{\argnone}
    = \begin{cases}
        \min
          \big\{
            \Xfiq{\astrfun}{\argnone}{\rho}
            \where \text{$\rho$ a defining rule of $\astrfun$}
          \big\}
        & \text{if $\astrfun$ is weakly guarded,} \\[-0.4ex]
        \Xm & \text{if $\astrfun$ is unguarded,}
      \end{cases}
  \end{equation*}
  and
  for all
  $1\leq i\leq\arityS{\astrfun}$, %
  and $q\in\nat$:
  \begin{equation*}
    \Xfiq{\astrfun}{i}{q}
    = \begin{cases}
        \siosqInfm
          \big\{
            \Xfiqr{\astrfun}{i}{q}{\rho}
            \where \text{$\rho$ a defining rule of $\astrfun$}
          \big\}
        & \text{if $\astrfun$ is weakly guarded,} \\[-0.4ex]
        \Xm & \text{if $\astrfun$ is unguarded.}
      \end{cases}
  \end{equation*}
  For specifying $\Xfiq{\astrfun}{\argnone}{\rho}$ and $\Xfiqr{\astrfun}{i}{q}{\rho}$
  we distinguish the possible forms the rule $\rho$ can have.
  In doing so, we abbreviate terms 
  $\strcns{\adattrm_{i,1}}{\strcns{\ldots}{\strcns{\adattrm_{i,p}}{\astr_i}}}$
  with $\astr_i$ a variable of sort stream by $\strcns{\vec{\adattrm}_i}{\astr_i}$
  and let $\lstlength{\vec{\adattrm}_i} \defdby p$.
  If $\rho$ is nesting, then we let 
  \begin{align*}
    \Xfiq{\astrfun}{\argnone}{\rho} &= \Xp
    \punc,
    &
    \Xfiqr{\astrfun}{i}{q}{\rho} &= \Xid
    \punc.
  \end{align*}
  Otherwise, $\rho$ is non-nesting and of the form:
  \[
    \funap{\astrfun}{
      (\strcns{\vec{\adattrm}_1}{\astr_1}),
      \ldots,
      (\strcns{\vec{\adattrm}_{k}}{\astr_{k}}),
      \bdattrm_1,\ldots,\bdattrm_{\ell}
    }
    \red
    \strcns{\cdattrm_{1}}{\strcns{\ldots}{\strcns{\cdattrm_{m}}{\astrtrm}}}
    \punc,
  \]
  where either (a)~$\astrtrm \equiv \astr_j$, or
  (b)~\(
        \astrtrm \equiv
        \funap{\bstrfun}{
        (\strcns{\vec{\ddattrm}_{\!1}}{\astr_{\permut{1}}}),
        \ldots,
        (\strcns{\vec{\ddattrm}_{\!k'}}{\astr_{\permut{k'}}}),
        \edattrm_1,\ldots,\edattrm_{\ell'}}
      \)
  with $k' = \arityS{\bstrfun}$, $\ell' = \arityD{\bstrfun}$, and
  $\spermut \funin \{1,\ldots,k'\} \to \{1,\ldots,k\}$.
  Let:
  \begin{align*}
    \Xfiq{\astrfun}{\argnone}{\rho} =
      & 
      \begin{cases}
        \Xp &\text{case~(a)} \\
        \iosqcns{\ioout^m}{\Xfi{\bstrfun}{\argnone}} &\text{case~(b)}
      \end{cases}
    \\
    \Xfiqr{\astrfun}{i}{q}{\rho} = 
      & \; \letin{p \defdby \cosubtr{\lstlength{\vec{\adattrm}_i}}{q}}, \:
                  q' \defdby \cosubtr{q}{\lstlength{\vec{\adattrm}_i}}
        \text{ in}
      \\
      &
      \iosqcat{\ioin^{p}}{\iosqcat{\ioout^{m}}}
      \begin{cases}
        \iosqcat{\ioout^{q'}}{\Xid}
        &\text{case~(a), $i = j$} \\
        \Xp
        &\text{case~(a), $i \neq j$} \\
        \siosqInfm
          \big\{
            \Xfiq{\bstrfun}{j}{q' + \lstlength{\vec{\ddattrm}_{\!j}}}
            \where j \in {\funap{\invfun{\spermut}}{i}}
          \big\}
        &\text{case~(b)} %
      \end{cases}
  \end{align*}
  where we agree $\iosqInfm{}{\setemp} \defdby \Xp$.
\end{definition}

We formally state an easy observation about the system $\aioseqspec_{\atrs}$
defined in Def.~\ref{def:transl:flat:fn:symbols:infioseqspec},
and an immediate consequence due to Lem.~\ref{lem:infioseqspec:ratsol}.

\begin{proposition}\label{prop:infioseqspec:wg}
  Let $\atrs$ be a stream specification. 
  The \infioseqspec~$\aioseqspec_{\atrs}$ (defined in
  Def.~\ref{def:transl:flat:fn:symbols:infioseqspec}) is weakly guarded.
  As a consequence, 
  $\aioseqspec_{\atrs}$ has a unique solution in $\iosq$
  for every variable $X$ in $\aioseqspec_{\atrs}$.
\end{proposition}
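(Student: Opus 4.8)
The plan is to verify the definition of weak guardedness literally: each right-hand side of $\aioseqspec_{\atrs}$ must be rewritable, using the equations of $\aioseqspec_{\atrs}$ together with congruence, to a guarded \infioseqspecexp. For the three auxiliary equations this is immediate, since $\iosqemp$, $\iosqcns{\ioout}{\Xp}$ and $\iosqcns{\ioin}{\iosqcns{\ioout}{\Xid}}$ are already guarded. Before treating the remaining equations I would record two small observations that dispose of most cases. First, every expression of the shape $\iosqcat{\ioin^{p}}{E}$ with $p\ge 1$, or $\iosqcns{\ioout^{m}}{E}$ with $m\ge 1$, or $\iosqcat{\ioout^{q'}}{E}$ with $q'\ge 1$, is guarded by definition regardless of $E$; and a finite nested infimum $\siosqInfm\{E_1,\ldots,E_r\}$ (with $r\ge 1$, which holds since by exhaustiveness every symbol has at least one defining rule) all of whose components $E_j$ rewrite to guarded expressions itself rewrites to a guarded expression. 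Second, for an unguarded symbol $\astrfun$ the right-hand sides of $\Xfi{\astrfun}{\argnone}$ and of $\Xfiq{\astrfun}{i}{q}$ are the single variable $\Xm$, which rewrites by its own equation to the guarded $\iosqemp$.

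The core of the argument is the claim: \emph{for every weakly guarded symbol $\astrfun$ for which $\aioseqspec_{\atrs}$ provides equations, every admissible $i$, and every $q\in\nat$, the right-hand sides of $\Xfi{\astrfun}{\argnone}$ and of $\Xfiq{\astrfun}{i}{q}$ rewrite to guarded expressions.} I would prove this by well-founded induction on $\astrfun$ with respect to $\ugsfsleadsto$. This is legitimate: $\ugsfsleadsto$ is well-founded on the weakly guarded symbols, since weak guardedness of $\astrfun$ is by definition strong normalisation w.r.t.\ $\ugsfsleadsto$, it is inherited by every $\ugsfsleadsto$-successor of $\astrfun$, so an infinite descending chain of weakly guarded symbols would contradict weak guardedness of its first element; moreover the relevant $\ugsfsleadsto$-successors stay among the symbols for which $\aioseqspec_{\atrs}$ has equations, because $\Rfnest$ is closed under $\sdependson$ (Def.~\ref{def:sfs:friendly}). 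Now expand $\Xfi{\astrfun}{\argnone}$ to the finite infimum $\siosqInfm\{\Xfiq{\astrfun}{\argnone}{\rho}\where\rho\text{ a defining rule of }\astrfun\}$ and run through the cases of Def.~\ref{def:infioseqspec}: a nesting or case-(a) rule contributes $\Xp$, which rewrites to the guarded $\iosqcns{\ioout}{\Xp}$; a non-nesting case-(b) rule $\funap{\astrfun}{\ldots}\red\strcns{\cdattrm_1}{\strcns{\ldots}{\strcns{\cdattrm_m}{\funap{\bstrfun}{\ldots}}}}$ with $m\ge 1$ contributes $\iosqcns{\ioout^m}{\Xfi{\bstrfun}{\argnone}}$, guarded by the first observation; and the same rule with $m=0$ contributes exactly $\Xfi{\bstrfun}{\argnone}$, where $\astrfun\ugsfsleadsto\bstrfun$ (the root of the right-hand side is the symbol $\bstrfun$, which lies in $\Ssmincns$ and is distinct from the stream constructor `$\sstrcns$'), so $\bstrfun$ is again weakly guarded, $\ugsfsleadsto$-smaller, and the induction hypothesis applies. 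The equations for $\Xfiq{\astrfun}{i}{q}$ are treated the same way: after the (possibly empty) prefixes $\ioin^{p}$ and $\ioout^{m}$ are stripped, the only residual subterms that are not immediately guarded are $\Xid$ (case (a), $i=j$, $q'=0$) and $\Xp$ (case (a), $i\ne j$, or an empty case-(b) infimum), both rewriting in one step, and a finite infimum of variables $\Xfiq{\bstrfun}{j}{q''}$ (case (b), $m=0$), to which the induction hypothesis applies since once more $\astrfun\ugsfsleadsto\bstrfun$.

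The one genuinely delicate point — and the main obstacle — is precisely this last configuration: a defining rule of the form $\funap{\astrfun}{\ldots}\red\funap{\bstrfun}{\ldots}$ with no leading stream constructors ($m=0$, case (b)) produces a ``silent'' variable-for-variable rewrite step $\Xfi{\astrfun}{\argnone}\to\Xfi{\bstrfun}{\argnone}$ (resp.\ $\Xfiq{\astrfun}{i}{q}\to\siosqInfm\{\Xfiq{\bstrfun}{j}{q''}\where j\in\invfun{\spermut}(i)\}$) that exhibits no guard symbol by itself, and one must argue that iterating this cannot continue forever — which it cannot, because such iterations are exactly $\ugsfsleadsto$-derivations from $\astrfun$, hence finite by weak guardedness; note also that along such a chain the numeric index $q$ of $\Xfiq{\cdot}{\cdot}{q}$ may both grow and shrink (it becomes $q'+\lstlength{\vec{\ddattrm}_{\!j}}$), but this is immaterial, since the induction runs on the symbol alone and is uniform in the index. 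Finally, once $\aioseqspec_{\atrs}$ has been shown weakly guarded, the stated consequence is immediate from Lem.~\ref{lem:infioseqspec:ratsol}, which supplies, for every weakly guarded \infioseqspec\ — here the infinite $\aioseqspec_{\atrs}$ over the countable index set $A$ — a unique solution in $\iosq$ for each of its recursion variables.
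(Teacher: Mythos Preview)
Your argument is correct and is in fact considerably more detailed than what the paper offers: the paper does not give a proof of this proposition at all, prefacing it only with the remark that it is ``an easy observation'' together with ``an immediate consequence due to Lem.~\ref{lem:infioseqspec:ratsol}.'' Your well-founded induction on $\sugsfsleadsto$ over the weakly guarded symbols is exactly the natural way to make that observation precise, and your identification of the delicate point --- the case-(b) rules with $m=0$ producing unguarded variable-to-variable unfoldings that are bounded precisely by termination of $\sugsfsleadsto$ --- is the heart of the matter.

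One small omission worth noting: the specification $\aioseqspec_{\atrs}$ also contains, as separate recursion equations, the equations for the $\rho$-indexed variables $\Xfiq{\astrfun}{\argnone}{\rho}$ and $\Xfiqr{\astrfun}{i}{q}{\rho}$ (these appear in the index set $A$). You verify weak guardedness of their right-hand sides implicitly as part of unfolding $\Xfi{\astrfun}{\argnone}$ and $\Xfiq{\astrfun}{i}{q}$ for weakly guarded $\astrfun$, but you do not address them when $\astrfun$ is \emph{unguarded} (since then you only look at $\Xfi{\astrfun}{\argnone}=\Xm$). The missing cases are harmless --- the right-hand side of $\Xfiq{\astrfun}{\argnone}{\rho}$ for unguarded $\astrfun$ is either $\Xp$, or $\iosqcns{\ioout^m}{\Xfi{\bstrfun}{\argnone}}$ with $m\ge 1$, or $\Xfi{\bstrfun}{\argnone}$, and the last rewrites to a guarded expression either because $\bstrfun$ is unguarded (giving $\Xm$) or by your main claim --- but for completeness you should state that your case analysis covers these equations too.
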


Another easy observation is that the solutions for the 
variables $\Xfi{\astrfun}{\argnone}$ in a system $\aioseqspec_{\atrs}$
correspond very directly to numbers in $\conat$.

\begin{proposition}\label{prop:infioseqspec:argnone}
  Let $\atrs$ be a stream specification and $\astrfun\in\Ssf$.
  The unique solution of $\aioseqspec_{\atrs}$ (defined in
  Def.~\ref{def:transl:flat:fn:symbols:infioseqspec})
  for $\Xfi{\astrfun}{\star}$ is of the form
  $\ioout^{\omega}$ or $\iosqcns{\ioout^n}{\iosqemp}$.
\end{proposition}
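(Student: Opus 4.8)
The plan is to notice that the part of $\aioseqspec_{\atrs}$ defining the variables $\Xfi{\astrfun}{\argnone}$ forms a closed sub-specification whose right-hand sides are built only from the constant $\iosqemp$, prepending of output symbols $\ioout$, and pointwise infimum $\siosqinf$; all three operations preserve the set $S \defdby \{\,\iosqcns{\ioout^n}{\iosqemp} \mid n \in \nat\,\} \cup \{\ioout^{\omega}\}$ of $\ioin$-free \ioseq{s}, and $S$ equipped with these operations is an isomorphic copy of $\conat$ with $0$, successor and $\min$. Transporting the sub-specification along this isomorphism will yield a solution lying entirely in $S$, and uniqueness of solutions of weakly guarded \infioseqspec{s} will then force the actual solution to coincide with it.

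First I would isolate the variable set $A'$ consisting of $\Xm$, $\Xp$, and the variables $\Xfi{\astrfun}{\argnone}$ and $\Xfiq{\astrfun}{\argnone}{\rho}$ for $\astrfun\in\Sfnest$ and $\rho$ a defining rule of $\astrfun$. Inspecting Def.~\ref{def:infioseqspec} shows that every variable occurring in a right-hand side of an $A'$-equation again lies in $A'$: in case~(b), $\Xfiq{\astrfun}{\argnone}{\rho}$ mentions only $\Xp$ and $\Xfi{\bstrfun}{\argnone}$ (with $\bstrfun\in\Sfnest$ by closure of $\Sfnest$ under $\sdependson$), and neither $\Xid$ nor any $\Xfiqr{\astrfun}{i}{q}{\rho}$ occurs. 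Hence the restriction $\aioseqspec'$ of $\aioseqspec_{\atrs}$ to $A'$ is a well-formed, finite \infioseqspec, and it is again weakly guarded, since the rewriting to guarded form used in Def.~\ref{def:infioseqspec} only ever substitutes right-hand sides of $A'$-equations. By Lem.~\ref{lem:infioseqspec:ratsol}, $\aioseqspec'$ has a unique solution; and since the restriction to $A'$ of the solution of $\aioseqspec_{\atrs}$ solves $\aioseqspec'$, the two agree on $A'$. It therefore suffices to show that the solution of $\aioseqspec'$ takes its values in $S$.

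For that I would use the bijection $\phi\funin\conat\to S$ with $\phi(n) = \iosqcns{\ioout^n}{\iosqemp}$ for $n\in\nat$ and $\phi(\conattop) = \ioout^{\omega}$. From the clauses of Def.~\ref{def:iosq:inf} one reads off $\phi(0) = \iosqemp$ and $\phi(1+v) = \iosqcns{\ioout}{\phi(v)}$, while $\phi(\min(v,w)) = \iosqinf{\phi(v)}{\phi(w)}$ (which in particular shows $S$ closed under $\siosqinf$) follows by coinduction: one checks that $\{\,\pair{\iosqinf{\phi(v)}{\phi(w)}}{\phi(\min(v,w))} \mid v,w\in\conat\,\}$ together with the diagonal on \ioseq{s} is a bisimulation. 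Now every right-hand side of $\aioseqspec'$ is built from variables, $\iosqemp$, iterated prepending of $\ioout$'s (the blocks $\iosqcns{\ioout^m}{\cdot}$), and iterated pointwise infimum ($\siosqInfm$, with the empty infimum set to $\Xp$). Replacing each of these operations by its $\conat$-counterpart — $0$, successor, and $\min$ (with $\min\setemp \defdby \conattop$) — turns $\aioseqspec'$ into a monotone equation system $\hat{\aioseqspec}$ over the complete lattice $\conat$, which has a solution, for instance its least fixed point, say $\{v_{\alpha}\}_{\alpha\in A'}$. Because $\phi$ commutes with all three operations and $S$ is closed under them, $\{\phi(v_{\alpha})\}_{\alpha\in A'}$ is then a solution of $\aioseqspec'$ all of whose components lie in $S$; by the uniqueness noted above it is \emph{the} solution. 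In particular the solution component for $\Xfi{\astrfun}{\argnone}$ lies in $S$, i.e.\ it equals $\ioout^{\omega}$ or $\iosqcns{\ioout^n}{\iosqemp}$ for some $n\in\nat$, which is the claim.

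The only step that is not pure bookkeeping is the coinductive identity $\iosqinf{\phi(v)}{\phi(w)} = \phi(\min(v,w))$, which I expect to be the main (though still routine) obstacle: it comes down to a case split on whether $v$ or $w$ equals $0$ and, otherwise, one application of the clause $\iosqinf{\iosqcns{\ioout}{\aiosq}}{\iosqcns{\ioout}{\biosq}} = \iosqcns{\ioout}{(\iosqinf{\aiosq}{\biosq})}$. Closedness of $A'$ under variable occurrence, inheritance of weak guardedness by $\aioseqspec'$, and monotonicity of $\hat{\aioseqspec}$ are all immediate from the definitions.
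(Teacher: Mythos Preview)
The paper does not give a proof of this proposition; it merely introduces it as ``another easy observation'' that the solutions for the variables $\Xfi{\astrfun}{\argnone}$ ``correspond very directly to numbers in $\conat$''. Your argument is correct and is precisely a rigorous unpacking of that remark: you make the correspondence explicit via the bijection $\phi\colon\conat\to S$, verify that the $\argnone$-part of $\aioseqspec_{\atrs}$ is a closed, weakly guarded sub-specification whose right-hand sides use only operations that $\phi$ intertwines with $0$, successor, and $\min$, and then exploit uniqueness of solutions (Lem.~\ref{lem:infioseqspec:ratsol}) to conclude. There is nothing to compare against beyond this, and no gap in your reasoning.
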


Furthermore observe that $\aioseqspec_{\atrs}$ is infinite in case that
$\Ssf\neq\emptyset$, and hence typically is infinite.
Whereas Lem.~\ref{lem:infioseqspec:ratsol} implies unique solvability
of $\aioseqspec_{\atrs}$ for individual variables
in view of (the first statement in) Prop.~\ref{prop:infioseqspec:wg}, 
it will usually not
guarantee that these unique solutions are rational \ioseq{s}.
Nevertheless it can be shown that all solutions
of %
$\aioseqspec_{\atrs}$ are rational \ioseq{s}.
For our purposes it will suffice to show this only for the solutions 
of $\aioseqspec_{\atrs}$ for certain of its variables.

We will only be interested in the solutions of $\aioseqspec_{\atrs}$
for variables $\Xfiq{\astrfun}{i}{0}$ and $\Xfi{\astrfun}{\argnone}$.
It turns out that from $\aioseqspec_{\atrs}$
a finite weakly guarded specification $\aioseqspec'_{\atrs}$ can be extracted
that, 
for each of the variables $\Xfiq{\astrfun}{i}{0}$ and $\Xfi{\astrfun}{\argnone}$,
has the same solution as $\aioseqspec_{\atrs}$, respectively.
Lem.~\ref{lem:infioseqspec:build:finite} below states that,
for a stream specification $\atrs$, the finite 
\infioseqspec~$\aioseqspec'_{\atrs}$ can always be obtained algorithmically,
which together with Lem.~\ref{lem:infioseqspec:ratsol} implies that
the unique solutions of $\aioseqspec_{\atrs}$ for the
variables $\Xfiq{\astrfun}{i}{0}$ and $\Xfi{\astrfun}{\argnone}$
are rational \ioseq{s}\
for which representing \ioterm{s} can be computed. 
As a consequence these solutions, for a stream specification $\atrs$,
of the recursion system $\aioseqspec_{\atrs}$, can be viewed to represent \pein~functions: 
the \pein~functions that are represented by the \ioterm\
denoting the respective solution, a rational \ioseq.

As an example let us consider an \infioseqspec\ that corresponds to 
 the defining rules for $\astrfun$ in Fig.~\ref{fig:pascal}:
\begin{align*}
  X %
  & = \iotermspecinf
        {\iosqcns{\ioin}{\iosqcns{\ioout}{\iosqcns{\ioout}{X}}}}
        {\iosqcns{\ioin}{\iosqcns{\ioin}{\iosqcns{\ioout}{Y}}}}
  &
  Y %
  & = \iotermspecinf
        {\iosqcns{\ioout}{\iosqcns{\ioout}{X}}}
        {\iosqcns{\ioin}{\iosqcns{\ioout}{Y}}} \; .
\end{align*}
The unique solution for $X$ of this system is the rational \ioseq\
(and respectively, \ioterm)
$\iosqcns{\ioin}{\iosqcyc{\iosqcns{\ioin}{\iosqcns{\ioout}{\nix}}}}$,
which is the translation of $\astrfun$ (as mentioned earlier).

\begin{lemma}\label{lem:infioseqspec:build:finite}
  Let $\atrs$ be a stream specification.
  There exists an \infioseqspec\
  $\aioseqspec'_{\atrs} = \{ X_{\alpha} = E'_{\alpha} \}_{\alpha\in A'}$
  such that:
  \begin{enumerate}
    \renewcommand{\labelenumi}{(\roman{enumi})}
    \item $\aioseqspec'_{\atrs}$ is finite and weakly guarded;    
    \item $\{ \langle \astrfun,i,0 \rangle,
              \langle \astrfun,\argnone \rangle \,\mid\,
          \astrfun\in\Ssf, \, i,q\in\nat, \, 
          i\in\{1,\ldots,\arityS{\astrfun}\} \,\}
          \subseteq A' $;\\
      for each $\astrfun\in\Ssf\cap\Sfnest$ 
      and $i\in\{1,\ldots,\arityS{\astrfun}\}$,
      $\aioseqspec'_{\atrs}$ has the same solution 
      for $\Xfiq{\astrfun}{i}{0}$,
      and respectively for $\Xfi{\astrfun}{\argnone}$,
      as the \infioseqspec{s}~$\aioseqspec_{\atrs}$ 
      (see Def.~\ref{def:infioseqspec});
    \item on the input of $\atrs$, $\aioseqspec'_{\atrs}$ 
      can be computed. 
  \end{enumerate}
\end{lemma}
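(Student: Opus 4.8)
The plan rests on the observation that $\aioseqspec_{\atrs}$ is infinite \emph{only} because of the natural-number parameter $q$ in the variables $\Xfiq{\astrfun}{i}{q}$ and $\Xfiqr{\astrfun}{i}{q}{\rho}$: all other families are finite in number, and the fragments made up of $\Xm,\Xp,\Xid$, of the variables $\Xfi{\astrfun}{\argnone}$ and $\Xfiq{\astrfun}{\argnone}{\rho}$ (whose equations refer only to variables of those shapes), and of the $\Xfiqr{\astrfun}{i}{q}{\rho}$ with $\rho$ nesting (whose defining equation is the $q$-independent $\Xid$) are already finite and are kept verbatim. So it suffices, for each $\astrfun\in\Ssf$ and each $1\le i\le\arityS{\astrfun}$, to re-express the variables $\Xfiq{\astrfun}{i}{q}$ with large $q$ through the finitely many ones with small $q$. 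First I would let $c$ be the maximal pattern depth $\lstlength{\vec{\adattrm}_i}$ and maximal additional-supply length $\lstlength{\vec{\ddattrm}_j}$ occurring in any defining rule of a friendly nesting stream function symbol of $\atrs$; $c$ is computable. For $q\ge c$ the guard-prefix in the equation for $\Xfiqr{\astrfun}{i}{q}{\rho}$ disappears ($p=\cosubtr{\lstlength{\vec{\adattrm}_i}}{q}=0$), so that equation becomes a (possibly $q$-dependent but explicitly given) block of $\ioout$'s followed by: the $q$-independent $\Xid$ or $\Xp$ (for a nesting rule, or case~(a)), or, in case~(b), $\siosqInfm\bigl\{\Xfiq{\bstrfun}{j}{q-\lstlength{\vec{\adattrm}_i}+\lstlength{\vec{\ddattrm}_j}}\mid j\in\funap{\invfun{\spermut}}{i}\bigr\}$. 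Thus the variables with $q\ge c$ form a finite-branching ``counter system'': control $(\astrfun,i)$, counter $q$, and a case-(b) step changes the counter by a bounded amount while emitting a $\ioout$-block.

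The core claim is that, for $q\ge c$, the solution $\aiosq_{\triple{\astrfun}{i}{q}}$ equals a \emph{finite} infimum of terms of the form $\ioout^{a}$ prepended to an element of $\{\Xid,\Xp,\Xm\}\cup\{\,\Xfiq{\bstrfun}{j}{q''}\mid q''<c\,\}$. I would prove this by analysing the branches of the unfolding tree of $\Xfiq{\astrfun}{i}{q}$: a branch either (a)~reaches a case-(a) or nesting transition, contributing a known $\ioout^{\ast}\Xid$- or $\ioout^{\ast}\Xp$-type sequence; or (b)~drops the counter below $c$, contributing $\ioout^{\ast}$ prepended to one of the finitely many small-$q$ variables; or (c)~stays $\ge c$ forever, necessarily using only case-(b) transitions. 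In case~(c) the branch produces no $\ioin$ at all (the guard-prefixes vanish for $q\ge c$ and the $\Xid$-tails occur only at halting transitions) and produces infinitely many $\ioout$'s (a cycle producing none would be a cycle in $\sugsfsleadsto$ through weakly guarded symbols, contradicting weak guardedness), so it contributes exactly $\ioout^{\omega}$; by distributivity of composition over $\siosqInfm$ (Lem.~\ref{lem:iosq:cmp:infm:distr:left}, Lem.~\ref{lem:iosq:cmp:infm:distr:right}) and since pointwise infimum with $\ioout^{\omega}$ is the identity, such branches are absorbed into the infimum. A pigeonhole / well-quasi-ordering argument on the finite control then leaves only finitely many residual branch-contributions, with $\ioout$-prefixes computable by a shortest-path search in the counter system; so each such closed form is computable, and for the eliminations it is needed only for the finitely many $q$ with $c\le q<2c$ (small-$q$ equations refer to counters below $2c$).

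Finally I would assemble $\aioseqspec'_{\atrs}$ by taking the finitely many kept variables and substituting the computed closed forms for every occurrence of a large-$q$ variable on the right-hand sides (equivalently, one may cut the $q$-recursion at the threshold $2c$ and add finitely many fresh ``loop'' variables for the detected cycles); by the previous paragraph no large-$q$ variable then remains, so $\aioseqspec'_{\atrs}$ is finite, and it is computable since $c$ is and the closed forms are. Weak guardedness of $\aioseqspec'_{\atrs}$ is inherited from that of $\aioseqspec_{\atrs}$ (Prop.~\ref{prop:infioseqspec:wg}), each of its equations being a finite unfolding of $\aioseqspec_{\atrs}$ combined with the harmless absorption of $\ioout^{\omega}$-branches. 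And the restriction of the unique solution of $\aioseqspec_{\atrs}$ (Prop.~\ref{prop:infioseqspec:wg}) to the variables of $\aioseqspec'_{\atrs}$ satisfies all its equations, so by uniqueness of solutions of weakly guarded specifications (Lem.~\ref{lem:infioseqspec:ratsol}) it \emph{is} the unique solution of $\aioseqspec'_{\atrs}$; in particular $\aioseqspec'_{\atrs}$ has the same solution for each $\Xfiq{\astrfun}{i}{0}$ and each $\Xfi{\astrfun}{\argnone}$ as $\aioseqspec_{\atrs}$, which gives~(ii); together with~(i) and~(iii) this finishes the proof. The step I expect to be the real obstacle is making the branch analysis of the middle paragraph fully rigorous and uniformly computable --- particularly the absorption of the case-(c) branches and the finite reduction of the infimum in the presence of the branching $\siosqInfm$ over both defining rules and preimages $j\in\funap{\invfun{\spermut}}{i}$ --- together with the explicit verification that the resulting finite system is weakly guarded. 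For \emph{flat} specifications (where no stream function symbol has a nesting defining rule) there is a clean shortcut: the identity $\Xfiq{\astrfun}{i}{q+1}=\iosqrmioin{\Xfiq{\astrfun}{i}{q}}$ on solutions, provable by uniqueness using Lem.~\ref{lem:rmioin:infm:distr}, exhibits $\{\aiosq_{\triple{\astrfun}{i}{q}}\}_q$ as the finite $\iosqrmioin$-orbit of $\aiosq_{\triple{\astrfun}{i}{0}}$ and makes the extraction immediate.
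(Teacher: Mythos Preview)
Your route is genuinely different from the paper's, though both exploit the same underlying phenomenon.

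The paper's sketch is purely algorithmic: starting from $\aioseqspec_{\atrs}$, it iterates a single transformation step called {\sf (RPC)}. Whenever a variable $\Xfiq{\strff{h}}{j}{k}$ reachable from $\Xfiq{\astrfun}{i}{0}$ admits a path to some $\Xfiq{\strff{h}}{j}{l}$ with $l>k$ along which only $\ioout$-labels occur (a ``non-consuming pseudo-cycle''), the equation for $\Xfiq{\strff{h}}{j}{k}$ is \emph{replaced wholesale} by $\Xp$. The paper then asserts---without proof---that each such step preserves weak guardedness and the solution for $\Xfiq{\astrfun}{i}{0}$, that the iteration terminates, and that afterwards only finitely many variables remain reachable. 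No threshold is computed, no branches are classified, no shortest paths are sought.

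Your approach is structural: fix a global bound $c$, classify the branches of the unfolding above $c$ into terminating ones (your cases~(a)/(b)) and infinite $\ioout$-only ones (your case~(c)), absorb the latter, and reduce the former to a finite infimum. This gives an explicit closed form for each large-$q$ variable, which you then substitute into the small-$q$ equations. The trade-offs are clear. The paper's {\sf (RPC)} is far simpler to state and avoids any shortest-path computation in a counter system, but its key soundness claim (that overwriting $\Xfiq{\strff{h}}{j}{k}$ by $\Xp$ does not alter the solution at $\Xfiq{\astrfun}{i}{0}$, even though it generally \emph{does} alter the solution at $\Xfiq{\strff{h}}{j}{k}$ itself) is left entirely to the reader. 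Your analysis is more explicit about \emph{why} the $\ioout^{\omega}$-branches can be dropped, and your observation $\aiosq_{\triple{\astrfun}{i}{q+1}}=\iosqrmioin{\aiosq_{\triple{\astrfun}{i}{q}}}$ for flat specifications is a clean shortcut the paper does not isolate. The cost is exactly the obstacle you flag: the case-(a), $i=j$ exit contributes $\ioout^{m+q-\lstlength{\vec{\adattrm}_i}}\Xid$, whose prefix length depends on the current counter, so reducing the infimum over all terminating branches to a finite one is a shortest-path computation in an infinite weighted graph and requires more than a bare pigeonhole appeal to be made rigorous and effective.
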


\begin{proof}[Sketch]
  An algorithm for obtaining $\aioseqspec'_{\atrs}$
  from $\aioseqspec_{\atrs}$ can be obtained as follows.
  On the input of $\aioseqspec_{\atrs}$,
  set $\aioseqspec\defdby\aioseqspec_{\atrs}$ and 
  repeat the following step on $\aioseqspec$ as long as it is applicable:
  \begin{description}
    \item[{\sf (RPC)}]
      Detect and remove a reachable \emph{non-consuming pseudo-cycle} from the 
      \infioseqspec~$\aioseqspec\,$:
      Suppose that, for a function symbol $\strff{h}$, for $j,k,l\in\nat$,
      and for a recursion variable $\Xfiq{\strff{h}}{j}{k}$ that is reachable
      from $\Xfiq{\astrfun}{i}{0}$, we have
      $ \lts{\Xfiq{\strff{h}}{j}{k}}{w}{\Xfiq{\strff{h}}{j}{l}} $
      (from the recursion variable $\Xfiq{\strff{h}}{j}{k}$
       the variable $\Xfiq{\strff{h}}{j}{l}$ is reachable
       via a path in the specification on which the finite~\ioseq~$w$
       is encountered as the word formed by consecutive labels),
      where $k<l$ and $w$ only contains symbols `$\ioout$'.
      Then modify $\aioseqspec$ by setting
      $ \Xfiq{\strff{h}}{j}{k} = \Xp %
      $.
  \end{description}
  It is not difficult to show that a step {\sf (RPC)} 
  preserves weakly guardedness and the unique solution 
  of $\aioseqspec_{\atrs}$, 
  and that, on the input of $\aioseqspec_{\atrs}$, 
  the algorithm terminates in finitely many steps, 
  having produced an \infioseqspec~$\aioseqspec'_{\atrs}$ 
  with $\trnsli{\astrfun}{i}$ as the solution for $\Xfiq{\astrfun}{i}{0}$
  and the property
  that only finitely many recursion variables are reachable
  in $\aioseqspec'_{\atrs}$ from
  $\Xfiq{\astrfun}{i}{0}$.
  \qed
\end{proof}

As an immediate consequence of 
Lem.~\ref{lem:infioseqspec:build:finite},
 
and of Lem.~\ref{lem:infioseqspec:ratsol} we obtain the following lemma.

\begin{lemma}\label{lem:wd:def:transl:flat:fn:symbols:infioseqspec}
  Let $\atrs$ be a stream specification,
  and let $\astrfun\in\Ssf$.
  \begin{enumerate}
    \item For each $i\in\{1,\ldots,\arityS{\astrfun}\}$
       there is precisely one \ioseq~$\aiosq$ that solves
       $\aioseqspec_{\atrs}$ for $\Xfiq{\astrfun}{i}{0}$;
       furthermore, $\aiosq$ is rational.
       Moreover there exists an algorithm that,
       on the input of $\atrs$, $\astrfun$, and $i$,
       computes the shortest \ioterm\ that denotes the 
       solution of $\aioseqspec_{\atrs}$ for $\Xfiq{\astrfun}{i}{0}$.
    \item There is precisely one \ioseq~$\aiosq$ that solves
       $\aioseqspec_{\atrs}$ for $\Xfi{\astrfun}{\argnone}$;
       this \ioseq\ is rational.
       And there is an algorithm that, 
       on the input of $\atrs$ and $\astrfun$,
       computes the shortest \ioterm\ that denotes the 
       solution of $\aioseqspec_{\atrs}$ 
       for $\Xfi{\astrfun}{\argnone}$.
  \end{enumerate}
\end{lemma}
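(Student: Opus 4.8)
The plan is to read off this lemma as an almost immediate consequence of Prop.~\ref{prop:infioseqspec:wg}, Lem.~\ref{lem:infioseqspec:build:finite} and Lem.~\ref{lem:infioseqspec:ratsol}, with a short post-processing step using the compression TRS from the proof of Lem.~\ref{lem:shortest:ioterm} for the minimality requirement.

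First I would settle the uniqueness claims in both items. By Prop.~\ref{prop:infioseqspec:wg} the \infioseqspec~$\aioseqspec_{\atrs}$ is weakly guarded, so the first statement of Lem.~\ref{lem:infioseqspec:ratsol} applies — it does not presuppose finiteness — and gives a unique solution in $\iosq$ for \emph{every} recursion variable of $\aioseqspec_{\atrs}$, in particular for each $\Xfiq{\astrfun}{i}{0}$ with $1\le i\le\arityS{\astrfun}$ and for $\Xfi{\astrfun}{\argnone}$. This yields the ``precisely one \ioseq'' parts of (i) and (ii).

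Next I would obtain rationality together with the algorithms. Here I use that, although $\aioseqspec_{\atrs}$ is typically infinite, Lem.~\ref{lem:infioseqspec:build:finite} produces a \emph{finite}, weakly guarded \infioseqspec~$\aioseqspec'_{\atrs}$ which, by its clause (ii), has the same solution for each $\Xfiq{\astrfun}{i}{0}$ and each $\Xfi{\astrfun}{\argnone}$ as $\aioseqspec_{\atrs}$, and which, by its clause (iii), is computable from $\atrs$. Since $\aioseqspec'_{\atrs}$ is finite and weakly guarded, the ``moreover'' part of Lem.~\ref{lem:infioseqspec:ratsol} gives that its solution for each of these variables is a rational \ioseq\ and that an \ioterm\ denoting it can be computed on the input of $\aioseqspec'_{\atrs}$ and the variable. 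Chaining the two algorithms produces, on the input of $\atrs$, $\astrfun$ and $i$ (resp.\ $\atrs$ and $\astrfun$), an \ioterm\ $\aiotrm$ denoting the solution of $\aioseqspec_{\atrs}$ for $\Xfiq{\astrfun}{i}{0}$ (resp.\ for $\Xfi{\astrfun}{\argnone}$). For item~(ii) one may note in passing that rationality can also be read off directly from Prop.~\ref{prop:infioseqspec:argnone}, which forces the solution to have the shape $\ioout^{\omega}$ or $\iosqcns{\ioout^n}{\iosqemp}$.

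Finally, to deliver the \emph{shortest} \ioterm\ I would normalise $\aiotrm$ with respect to the compression TRS of rules~\eqref{rule:roll} and~\eqref{rule:fold}. That system is terminating and confluent (shown in the proof of Lem.~\ref{lem:shortest:ioterm}), so a normal form is reached in finitely many effective steps and is unique, and by Lem.~\ref{lem:shortest:ioterm} it is the unique shortest \ioterm\ denoting the \pein~function $\siotrmprd{\aiotrm}$, hence denoting the same solution. I do not expect a real obstacle: the whole argument is plumbing, and the only point needing care is to make sure the composite algorithm returns a denotation of the solution of the \emph{infinite} system $\aioseqspec_{\atrs}$, not merely of the extracted finite $\aioseqspec'_{\atrs}$ — which is precisely what clause (ii) of Lem.~\ref{lem:infioseqspec:build:finite} guarantees.
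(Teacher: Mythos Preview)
Your proposal is correct and matches the paper's own argument, which simply records the lemma as an immediate consequence of Lem.~\ref{lem:infioseqspec:build:finite} and Lem.~\ref{lem:infioseqspec:ratsol}. Your explicit post-processing via the compression TRS of Lem.~\ref{lem:shortest:ioterm} to obtain the \emph{shortest} \ioterm\ is a sensible elaboration that the paper leaves implicit.
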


Lem.~\ref{lem:wd:def:transl:flat:fn:symbols:infioseqspec} guarantees 
the well-definedness in the definition below
of the `gate translation' for flat or friendly nesting stream functions 
in a stream specification. 

\begin{definition}\normalfont
  \label{def:transl:flat:fn:symbols:infioseqspec}
  Let $\atrs = \pair{\Sigma}{R}$ be a stream definition. 
  For each flat or friendly nesting symbol
  $\astrfun \in \Ssf \cap \Sfnest $ 
  with stream arity $k = \arityS{\astrfun}$ %
  we define the \emph{gate translation $\trnsl{\astrfun}$ of\/~$\astrfun$} by:
  \begin{equation*}
    \trnsl{\astrfun}
    \defdby
    \egate{\trnsli{\astrfun}{\argnone}}{
      \trnsli{\astrfun}{1},\ldots,\trnsli{\astrfun}{\arityS{\astrfun}}
    }
    \punc,
  \end{equation*}
  where $\trnsli{\astrfun}{\argnone}\in\conat$ is defined by:
  \begin{equation*}
    \trnsli{\astrfun}{\argnone} \defdby
      \begin{cases}
        n & \text{the unique solution of $\aioseqspec_{\atrs}$ 
              for $\Xfi{\astrfun}{\argnone}$ is $\iosqcns{\ioout^{n}}{\iosqemp}$}
        \\
        \conattop &
          \text{the unique solution of $\aioseqspec_{\atrs}$ 
              for $\Xfi{\astrfun}{\argnone}$ is $\ioout^{\omega}$}    
      
      \end{cases}
  \end{equation*}
  and, for $1\le i\le \arityS{\astrfun}$,
  $\trnsli{\astrfun}{i}$ is the shortest \ioterm\ that denotes
  the unique solution for $\Xfiq{\astrfun}{i}{0}$
  of the weakly guarded \infioseqspec~$\aioseqspec_{\atrs}$
  in Def.~\ref{def:infioseqspec}.
\end{definition}

From Lem.~\ref{lem:wd:def:transl:flat:fn:symbols:infioseqspec}
we also obtain that, for every stream specification,
the function which maps stream function symbols
to their gate translations is computable.

\begin{lemma}\label{lem:transl:termination}
  There is an algorithm that, on the input
  of a stream specification~$\atrs$, and
  a flat or friendly nesting symbol~$\astrfun\in\Ssf\cap\Sfnest$,
  computes the gate translation $\trnsl{\astrfun}$ of $\astrfun$.
\end{lemma}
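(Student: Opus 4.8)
The plan is to reduce the claim directly to
  Lemma~\ref{lem:wd:def:transl:flat:fn:symbols:infioseqspec}, which itself
  packages Lemma~\ref{lem:infioseqspec:build:finite} and
  Lemma~\ref{lem:infioseqspec:ratsol}. Unfolding the definition of the gate
  translation (Def.~\ref{def:transl:flat:fn:symbols:infioseqspec}), producing
  the output
  $\trnsl{\astrfun} = \egate{\trnsli{\astrfun}{\argnone}}{\trnsli{\astrfun}{1},\ldots,\trnsli{\astrfun}{k}}$
  with $k \defdby \arityS{\astrfun}$ amounts to computing three pieces of data:
  (a)~the stream arity $k$; (b)~the numeral $\trnsli{\astrfun}{\argnone}\in\conat$;
  and (c)~for each $i\in\{1,\ldots,k\}$ an \ioterm\ representing
  $\trnsli{\astrfun}{i}$. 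Once these are in hand, forming the gate
  (Def.~\ref{def:gate}) out of them is purely syntactic, so the algorithm is
  just (a)--(c) followed by this assembly.

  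First I would read off $k = \arityS{\astrfun}$ from the finite signature of
  $\atrs$. For step~(c), for each $i\in\{1,\ldots,k\}$ I invoke the first part of
  Lemma~\ref{lem:wd:def:transl:flat:fn:symbols:infioseqspec}: on input $\atrs$,
  $\astrfun$, and $i$ it returns the shortest \ioterm\ denoting the unique
  rational solution of $\aioseqspec_{\atrs}$ for $\Xfiq{\astrfun}{i}{0}$, which
  by Def.~\ref{def:transl:flat:fn:symbols:infioseqspec} is exactly
  $\trnsli{\astrfun}{i}$. For step~(b), I invoke the second part of the same
  lemma to compute the shortest \ioterm\ denoting the unique solution of
  $\aioseqspec_{\atrs}$ for $\Xfi{\astrfun}{\argnone}$. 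By
  Prop.~\ref{prop:infioseqspec:argnone} that solution is either $\ioout^{\omega}$
  or $\iosqcns{\ioout^{n}}{\iosqemp}$ for some $n\in\nat$, so the returned
  \ioterm\ has one of two syntactically recognisable shapes; accordingly I set
  $\trnsli{\astrfun}{\argnone} \defdby \conattop$ in the first case and
  $\trnsli{\astrfun}{\argnone} \defdby n$ in the second, matching the case
  distinction in Def.~\ref{def:transl:flat:fn:symbols:infioseqspec}. Feeding
  (a)--(c) into
  $\egate{\trnsli{\astrfun}{\argnone}}{\trnsli{\astrfun}{1},\ldots,\trnsli{\astrfun}{k}}$
  yields $\trnsl{\astrfun}$, and since every step is effective the whole
  procedure is an algorithm.

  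The substantive content sits in the lemmas this argument rests on --- the
  extraction of a finite, weakly guarded \infioseqspec~$\aioseqspec'_{\atrs}$
  that preserves the relevant solutions
  (Lemma~\ref{lem:infioseqspec:build:finite}, via termination of the
  {\sf (RPC)} loop) and the computation of representing \ioterm{s} for the
  solutions of finite weakly guarded \infioseqspec{s}
  (Lemma~\ref{lem:infioseqspec:ratsol}) --- and these are already in place. The
  one point I would still spell out is that the case split in step~(b) is total
  and decidable, i.e.\ that the shortest \ioterm\ computed for
  $\Xfi{\astrfun}{\argnone}$ must indeed fall into precisely the two shapes
  $\ioout^{\omega}$ and $\iosqcns{\ioout^{n}}{\iosqemp}$; this follows from
  Prop.~\ref{prop:infioseqspec:argnone} together with uniqueness of shortest
  \ioterm{s} (Lemma~\ref{lem:shortest:ioterm}). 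So no real obstacle remains
  beyond this bookkeeping.
  \qed
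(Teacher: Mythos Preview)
Your proof is correct and takes essentially the same approach as the paper: the paper states immediately before the lemma that it is obtained from Lemma~\ref{lem:wd:def:transl:flat:fn:symbols:infioseqspec}, and you spell out exactly that reduction, with the bookkeeping for the $\argnone$-component made explicit via Prop.~\ref{prop:infioseqspec:argnone}. The appendix devoted to this lemma in the paper expands on the {\sf (RPC)} procedure, but that is really the content of Lemma~\ref{lem:infioseqspec:build:finite}, which you correctly identify as carrying the substantive work.
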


\begin{example}
  Consider a flat stream specification
  consisting of the %
  rules:
  \begin{gather*}
    \funap{\astrfun}{\strcns{x}{\astr}}
    \to \strcns{x}{\tfunap{\bstrfun}{\astr}{\astr}{\astr}}
    \punc,
    \\
    \tfunap{\bstrfun}{\strcns{x}{\strcns{y}{\astr}}}{\bstr}{\cstr}
    \to \strcns{x}{\tfunap{\bstrfun}{\strcns{y}{\bstr}}{\strcns{y}{\cstr}}{\strcns{y}{\astr}}}
    \punc.
  \end{gather*}
  The translation of $\astrfun$ is $\trnsl{\astrfun}=\netgate{\trnsli{\astrfun}{1}}$,
  where $\trnsli{\astrfun}{1}$ is the unique solution for $\Xfiq{\astrfun}{1}{0}$
  of the \infioseqspec~$\aioseqspec_{\atrs}$:
  \begin{gather*}
    \Xfiq{\astrfun}{1}{0}
      =
    \iosqcns{\ioin}{\iosqcns{\ioout}{
      (\iotermspecinf{\Xfiq{\bstrfun}{1}{0}}{\iotermspecinf{\Xfiq{\bstrfun}{2}{0}}{\Xfiq{\bstrfun}{3}{0}}})
      }}
    \\
    \begin{aligned}
      \Xfiq{\bstrfun}{1}{0}
      & = \iosqcns{\ioin}{\iosqcns{\ioin}{\iosqcns{\ioout}{\Xfiq{\bstrfun}{3}{1}}}}
      &
     \Xfiq{\bstrfun}{1}{1}
      & = \iosqcns{\ioin}{\iosqcns{\ioout}{\Xfiq{\bstrfun}{3}{1}}}
    \end{aligned}
    \\
    \Xfiq{\bstrfun}{1}{q}
      = \iosqcns{\ioout}{\Xfiq{\bstrfun}{3}{q-1}}
      \quad (q \geq 2)
    \\
    \Xfiq{\bstrfun}{2}{q}
      = \iosqcns{\ioout}{\Xfiq{\bstrfun}{1}{q+1}}
    \quad (q\in\nat)
    \\
    \Xfiq{\bstrfun}{3}{q}
      = \iosqcns{\ioout}{\Xfiq{\bstrfun}{2}{q+1}}
    \quad (q\in\nat)
  \end{gather*}

  By the algorithm referred to in Lem.~\ref{lem:transl:termination}
  this infinite specification can be turned into a finite one. %
  The `non-consuming pseudocycle'
  $\lts{\Xfiq{\bstrfun}{3}{1}}{\ioout\ioout\ioout}{\Xfiq{\bstrfun}{3}{2}}$
  justifies the modification of $\aioseqspec_{\atrs}$ by setting
  $\Xfiq{\bstrfun}{3}{1} = \iosqcns{\ioout}{\Xfiq{\bstrfun}{3}{1}}$;
  likewise we set $\Xfiq{\bstrfun}{3}{0} = \iosqcns{\ioout}{\Xfiq{\bstrfun}{3}{0}}$.
  Furthermore all equations not reachable from $\Xfiq{\astrfun}{1}{0}$ are removed
  (garbage collection), 
  and we obtain a finite specification $\aioseqspec'_{\atrs}$, which,
  by Lem.~\ref{lem:infioseqspec:ratsol}, has a periodically increasing
  solution, with \ioterm-denotation
  $\trnsli{\astrfun}{1} = \pair{\ios{-+--}{\nix}}{\ios{+}{\nix}} $.
  The reader may try to calculate the gate corresponding to $\bstrfun$,
  it is:
  \(
    \trnsl{\bstrfun} = \netgate{\ios{--}{+},\ios{+-}{+},\ios{}{+}}
  \).

  Second, consider the flat stream specification
  with data constructor symbols $\datf{0}$ and $\datf{1}$:
  \begin{gather*}
    \begin{aligned}
      \funap{\astrfun}{\strcns{\datf{0}}{\astr}} & \to \funap{\bstrfun}{\astr} \punc,
      & \hspace*{1ex}
      \funap{\astrfun}{\strcns{\datf{1}}{\strcns{x}{\astr}}} & \to \strcns{x}{\funap{\bstrfun}{\astr}} \punc,
    \end{aligned}
    \\
    \funap{\bstrfun}{\strcns{x}{\strcns{y}{\astr}}} \to \strcns{x}{\strcns{y}{\funap{\bstrfun}{\astr}}} \punc,
  \end{gather*}
  denoted $\rho_{\astrfunsub\datf{0}}$, $\rho_{\astrfunsub\datf{1}}$,
  and $\rho_{\bstrfunsub}$, respectively.
  Then, $\trnsli{\astrfun}{1}$ is the solution
  for $\Xfiq{\astrfun}{1}{0}$ of %
  \begin{gather*}
    \Xfiq{\astrfun}{1}{0}
      =
    \iotermspecinf{
      \Xfiqr{\astrfun}{1}{0}{\rho_{\astrfunsub\datf{0}}}}{
      \Xfiqr{\astrfun}{1}{0}{\rho_{\astrfunsub\datf{1}}}
                   }
    \\
    \begin{aligned}
      \Xfiqr{\astrfun}{1}{0}{\rho_{\astrfunsub\datf{0}}}
        & =
      \iosqcns{\ioin}{\Xfiq{\bstrfun}{1}{0}}
      & \hspace*{0.5ex}
      \Xfiqr{\astrfun}{1}{0}{\rho_{\astrfunsub\datf{1}}}
        & =
      \iosqcns{\ioin}{\iosqcns{\ioin}{\iosqcns{\ioout}{\Xfiq{\bstrfun}{1}{0}}}}
    \end{aligned}
    \\
    \Xfiq{\bstrfun}{1}{0}
      =
    \iosqcns{\ioin}{\iosqcns{\ioin}{\iosqcns{\ioout}{\iosqcns{\ioout}{\Xfiq{\bstrfun}{1}{0}}}}}
    \punc.
  \end{gather*}
\end{example}

\begin{example}\label{ex:pure}
  Consider a pure stream specification with the function layer: %
  \begin{gather*}
    \funap{\astrfun}{\strcns{x}{\astr}} \to \strcns{x}{\tfunap{\bstrfun}{\astr}{\astr}{\astr}} \punc,
    \\
    \tfunap{\bstrfun}{\strcns{x}{\strcns{y}{\astr}}}{\bstr}{\cstr}
    \to \strcns{x}{\tfunap{\bstrfun}{\strcns{y}{\bstr}}{\strcns{y}{\cstr}}{\strcns{y}{\astr}}} \punc.
  \end{gather*}
  The translation of $\astrfun$ is $\sdtrnsl{\astrfun}$,
  the unique solution for $\specvar{\astrfun}$ of the system:
  \begin{align*}
    \specvarap{\astrfun}{n}
    =\ &\min(\specvarap{\astrfun,\argnone}{0}, \specvarap{\astrfun,1}{n})\\
    \specvarap{\astrfun,1}{n}
    =\ &\letin{n' \defdby n-1}\\
      &\text{if $n' < 0$ then $0$ else }
      1 +
      \inf\, \big\{ \specvarap{\bstrfun,1}{n'},\, \specvarap{\bstrfun,2}{n'},\,
                 \specvarap{\bstrfun,3}{n'} \big\}
    \\
    \specvarap{\astrfun,\argnone}{n}
    =\ &1 + \specvarap{\bstrfun,\argnone}{0}
    \\
    \specvarap{\bstrfun,1}{n}
    =\ &\letin{n' \defdby n-2},\;
      \text{if $n' < 0$ then $0$ else $1 + \specvarap{\bstrfun,3}{1 + n'}$}
    \\
    \specvarap{\bstrfun,2}{n}
    =\ &1 + \specvarap{\bstrfun,1}{1 + n}
    \\
    \specvarap{\bstrfun,3}{n}
    =\ &1 + \specvarap{\bstrfun,2}{1 + n}
    \\
    \specvarap{\bstrfun,\argnone}{n}
    =\ &1 + \specvarap{\astrfun,\argnone}{0}
  \end{align*}
  An algorithm for solving such systems of equations is described
  in (the proof of) Lemma~\ref{lem:infioseqspec:build:finite}; %
  here we solve the system directly.
  Note that
  $\specvarap{\bstrfun,3}{n} = 1 + \specvarap{\bstrfun,2}{n+1} = 2 + \specvarap{\bstrfun,1}{n+2} = 3 + \specvarap{\bstrfun,3}{n}$,
  hence $\myall{n \in \nat}{\specvarap{\bstrfun,3}{n} = \conattop}$.
  Likewise we obtain $\specvarap{\bstrfun,2}{n} = \conattop$ if $n \ge 1$ and $1$ for $n = 0$,
  and $\specvarap{\bstrfun,1}{n} = \conattop$ if $n \ge 2$ and $0$ for $n \le 1$.
  Then we get $\dtrnsl{\astrfun}{0}=0$, $\dtrnsl{\astrfun}{1}=\dtrnsl{\astrfun}{2}=1$,
  and $\dtrnsl{\astrfun}{n}=\conattop$ for all $n\geq 2$,
  represented by the gate $\trmrep{\sdtrnsl{\astrfun}}=\netgate{\ios{-+--}{+}}$.
  The gate corresponding to $\bstrfun$ is
  $\trmrep{\sdtrnsl{\bstrfun}}=\netgate{\ios{--}{+},\ios{+-}{+},\ios{}{+}}$.
  \end{example}
  \begin{example}\label{ex:flat}
  Consider a flat stream function specification
  with the following rules which use %
  pattern matching on the data constructors $\datf{0}$ and $\datf{1}$:
  \begin{align*}
    \funap{\astrfun}{\strcns{\datf{0}}{\astr}} &\to \funap{\bstrfun}{\astr}
    &
    \funap{\astrfun}{\strcns{\datf{1}}{\strcns{x}{\astr}}} &\to \strcns{x}{\funap{\bstrfun}{\astr}}
    &
    \funap{\bstrfun}{\strcns{x}{\strcns{y}{\astr}}} &\to \strcns{x}{\strcns{y}{\funap{\bstrfun}{\astr}}}
  \end{align*}
  denoted $\rho_{\astrfunsub_\datf{0}}$, $\rho_{\astrfunsub_\datf{1}}$,
  and $\rho_{\bstrfunsub}$, respectively.
  Then, $\sdtrnsl{\astrfun}$ is the solution for $\specvar{\astrfun,1}$ of:
  \begin{align*}
    \specvarap{\astrfun}{n}
    &= \min(\specvarap{\astrfun,\argnone}{0}, \specvarap{\astrfun,1}{n})\\
    \specvarap{\astrfun,1}{n}
    &=
    \inf\, \big\{
      \specvarap{\astrfun,1,\rho_{\astrfunsub_\datf{0}}}{n},\,
      \specvarap{\astrfun,1,\rho_{\astrfunsub_\datf{1}}}{n} \big\}
    \\
    \specvarap{\astrfun,1,\rho_{\astrfunsub_\datf{0}}}{n}
    & =
    \letin{n' \defdby n-1},\;
    \text{if $n' < 0$ then $0$ else $\specvarap{\bstrfun,1}{n'}$}
    \\
    \specvarap{\astrfun,1,\rho_{\astrfunsub_\datf{1}}}{n}
    &=
    \letin{n' \defdby n-2},\;
    \text{if $n' < 0$ then $0$ else $1 + \specvarap{\bstrfun,1}{n'}$}
    \\
    \specvarap{\astrfun,\argnone}{n}
    &= \min(\specvarap{\bstrfun,\argnone}{0}, 1 + \specvarap{\bstrfun,\argnone}{0})
    \\
    \specvarap{\bstrfun,1}{n}
    &=
    \letin{n' \defdby n-2},\;
    \text{if $n' < 0$ then $0$ else $2 + \specvarap{\bstrfun,1}{n'}$}
    \\
    \specvarap{\bstrfun,\argnone}{n}
    &= 2 + \specvarap{\bstrfun,\argnone}{0}
    \punc.
  \end{align*}
  As solution we obtain an overlapping of both traces
  $\sdtrnsli{\astrfun}{1,\rho_{\astrfunsub_\datf{0}}}$ and
  $\sdtrnsli{\astrfun}{1,\rho_{\astrfunsub_\datf{1}}}$,
  that is,
  $\dtrnsli{\astrfun}{1}{n} = \cosubtr{n}{2}$
  represented by the gate %
  $\trmrep{\sdtrnsl{\astrfun}} = \netgate{\ios{--}{-+}}$.
\end{example}

The following proposition explains the correspondence between
Def.~\ref{def:transl:flat:fn:symbols:pifuncspec}
and Def.~\ref{def:transl:flat:fn:symbols:infioseqspec}.

\begin{proposition}\label{prop:corr:pein:gate:transl}
  Let $\atrs$ be a stream specification.
  For each $\astrfun\in\Ssf$ it holds:
  \begin{equation*}
    \funap{\sdtrnsl{\astrfun}}{n_1,\ldots,n_k}
      = \iosqprd{\trnsl{f}}{n_1,\ldots,n_k}
    \qquad
    \text{(for all $n_1,\ldots,n_k\in\conat$),} 
  \end{equation*}
  where $k = \arityS{\astrfun}$.
  That is,
  the \pein~function translation $\sdtrnsl{\astrfun}$ of\/ $\astrfun$
  coincides with the \pein~function that is represented by the
  gate translation $\trnsl{\astrfun}$.
\end{proposition}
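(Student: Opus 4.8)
The plan is to prove Proposition~\ref{prop:corr:pein:gate:transl} by showing that both sides arise as solutions of essentially the same recursion system, just presented in two different formats: Def.~\ref{def:transl:flat:fn:symbols:pifuncspec} works with $\conat$-valued functions directly, while Def.~\ref{def:infioseqspec} works with their \ioseq\ representations. The bridge is the interpretation map $\ssiosqprd{\cdot}$, which by Props.~\ref{prop:iosq:cmp:interpret}, \ref{prop:iosq:inf:interpret} and Lem.~\ref{lem:iosq:fix:prd} commutes with composition, pointwise infimum, and least fixed points. So the strategy is: apply $\ssiosqprd{\cdot}$ to the \ioseq-level system $\aioseqspec_{\atrs}$ and check, clause by clause, that the resulting system of equations on $\conat\to\conat$ functions is exactly (or is satisfied by the unique solution of) the system in Def.~\ref{def:transl:flat:fn:symbols:pifuncspec}.

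First I would set up the correspondence between the variables. For a stream function symbol $\astrfun$ with stream arity $k$, the gate translation is $\trnsl{\astrfun} = \egate{\trnsli{\astrfun}{\argnone}}{\trnsli{\astrfun}{1},\ldots,\trnsli{\astrfun}{k}}$, and by Def.~\ref{def:gate} its interpretation at $(n_1,\ldots,n_k)$ is $\min(\trnsli{\astrfun}{\argnone}, \iosqprd{\trnsli{\astrfun}{1}}{n_1}, \ldots, \iosqprd{\trnsli{\astrfun}{k}}{n_k})$. This matches the shape of the first equation of Def.~\ref{def:transl:flat:fn:symbols:pifuncspec}, with $\iosqprd{\trnsli{\astrfun}{i}}{\cdot}$ playing the role of $s_{\specvar{\astrfun,i}}$ and $\trnsli{\astrfun}{\argnone}$ the role of $s_{\specvar{\astrfun,\argnone}}$. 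The variable $\Xfiq{\astrfun}{i}{q}$ in $\aioseqspec_{\atrs}$ is meant to represent the \pein~function $n\mapsto s_{\specvarap{\astrfun,i}{q+n}}$ (the ``$q$ elements already pending'' version); in particular $\Xfiq{\astrfun}{i}{0}$ represents $s_{\specvar{\astrfun,i}}$ itself. The cleanest route is to define, for the unique solution $\{\aiosq_\alpha\}_{\alpha\in A}$ of $\aioseqspec_{\atrs}$ (which exists and is unique by Prop.~\ref{prop:infioseqspec:wg}), the functions $F_{\langle\astrfun,i,q\rangle} \defdby \siosqprd{\aiosq_{\langle\astrfun,i,q\rangle}}$, etc., and then verify that the family $\{F_{\langle\astrfun,i,0\rangle}\}, \{F_{\langle\astrfun,\argnone\rangle}\}$ satisfies the defining equations of Def.~\ref{def:transl:flat:fn:symbols:pifuncspec}; by uniqueness of the solution there (among continuous functions), it then equals $\{s_{\specvar{\astrfun,i}}\}, \{s_{\specvar{\astrfun,\argnone}}\}$, which gives the claim.

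The bulk of the work is a careful case analysis matching the clauses. For the $\min$/$\inf$ clauses one uses Prop.~\ref{prop:iosq:inf:interpret} together with the fact (noted after Def.~\ref{def:iosq:prd}) that $\ssiosqprd{\cdot}$ turns $\siosqInfm$ into pointwise infimum; for a nesting rule one checks $\siosqprd{\Xp} = \siosqprd{\ioout^\omega}$ is the constant $\conattop$ and $\siosqprd{\Xid} = \siosqprd{\ios{}{-+}}$ is the identity, matching $\specvar{\astrfun,\argnone,\rho} = \conattop$, $\specvarap{\astrfun,i,\rho}{n} = n$. The interesting clause is the non-nesting case~(b): the \ioseq-level equation prepends $\ioin^p\ioout^m$ (with $p = \lstlength{\vec{\adattrm}_i}\dotminus q$, $q' = q\dotminus\lstlength{\vec{\adattrm}_i}$) to $\siosqInfm\{\Xfiq{\bstrfun}{j}{q'+\lstlength{\vec{\ddattrm}_j}} \mid j\in\invfun{\spermut}(i)\}$. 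Applying $\ssiosqprd{\cdot}$ and unfolding Def.~\ref{def:iosq:prd} (each leading $\ioin$ decrements the argument, each leading $\ioout$ adds one), the prefix $\ioin^p\ioout^m$ turns into the ``$\letin{n'\defdby n - \lstlength{\vec{\adattrm}_i}}$; if $n'<0$ then $0$ else $m + \ldots$'' wrapper of Def.~\ref{def:transl:flat:fn:symbols:pifuncspec}, and the inner term becomes $\inf\{s_{\specvarap{\bstrfun,j}{n' + \lstlength{\vec{\ddattrm}_j}}} \mid j\in\invfun{\spermut}(i)\}$ — here I need the bookkeeping identity relating the index shift $q$ in $\Xfiq{\astrfun}{i}{q}$ to the argument offset, i.e.\ that $\siosqprd{\aiosq_{\langle\astrfun,i,q\rangle}}$ is indeed $n\mapsto s_{\specvarap{\astrfun,i}{q+n}}$; this can be established by a preliminary lemma (or folded into the induction) using that $\iosqprd{\iosqcns{\ioout}{\aiosq}}{n} = 1 + \iosqprd{\aiosq}{n}$ and a shift argument on $q$. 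The $\Xfi{\astrfun}{\argnone}$ clause needs Prop.~\ref{prop:infioseqspec:argnone}: its solution is $\ioout^\omega$ or $\ioout^n\iosqemp$, whose interpretation is $\conattop$ resp.\ $n$, matching the definition of $\trnsli{\astrfun}{\argnone}$ and hence $s_{\specvar{\astrfun,\argnone}}$.

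The main obstacle I anticipate is not any single equation but the \emph{uniqueness} plumbing: Def.~\ref{def:transl:flat:fn:symbols:pifuncspec} asserts a unique solution ``among the continuous functions'' of a system that may be infinite, and Def.~\ref{def:infioseqspec} has a unique solution in $\iosq$ by weak guardedness. I need to make sure the image under $\ssiosqprd{\cdot}$ of the \ioseq-solution is continuous (it is: $\siosqprd{\aiosq}$ is increasing and extended to $\conattop$ by the limit, hence continuous/monotone on $\conat$) and that it lands in the right class, so that the uniqueness clause of Def.~\ref{def:transl:flat:fn:symbols:pifuncspec} can be invoked. A subtlety is that $\aioseqspec_{\atrs}$ has variables $\Xfiq{\astrfun}{i}{q}$ for all $q\in\nat$, whereas the $s$-system is stated for $s_{\specvar{\astrfun,i}}$ as a function; one must check that the infinitely many \ioseq-equations collapse, under interpretation, to the single functional recursion — which is exactly the shift-bookkeeping mentioned above and is where I would spend the most care. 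Once that is in place the proof is a (somewhat long but) routine structural induction on the form of the defining rules, and I would present only the case~(b) computation in full, noting the others as ``analogous and simpler''.
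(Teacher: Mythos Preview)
The paper states this proposition without proof; it is presented as an observation linking Def.~\ref{def:transl:flat:fn:symbols:pifuncspec} and Def.~\ref{def:transl:flat:fn:symbols:infioseqspec}, and the text moves directly to Lem.~\ref{lem:transl:soundness}. Your plan is exactly the natural way to supply the missing argument: both definitions set up isomorphic recursion systems, one on $\conat$-valued functions and one on \ioseq{s}, and $\ssiosqprd{\cdot}$ is the homomorphism between them (via Props.~\ref{prop:iosq:cmp:interpret}, \ref{prop:iosq:inf:interpret} and the unfolding clauses~\eqref{eq:iosqemp:def:iosq:prd}--\eqref{eq:ioin:succ:def:iosq:prd}). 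Your identification $\siosqprd{\aiosq_{\langle\astrfun,i,q\rangle}}(n) = s_{\specvarap{\astrfun,i}{q+n}}$ is the right invariant, and the case~(b) bookkeeping you single out is indeed the only place where any real computation happens; the rest is syntactic matching.

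One small point: you do not actually need to invoke the uniqueness clause of Def.~\ref{def:transl:flat:fn:symbols:pifuncspec} separately. Once you have shown that the interpreted \ioseq-solution satisfies the functional equations and that the functional system has a unique continuous solution, you are done; but an equally clean route is to observe that both systems are weakly guarded in their respective senses, so their solutions are determined level-by-level (by the number of leading $\ioout$'s, respectively by the value at successive $n$), and a straightforward induction on that level shows the values agree. This sidesteps having to argue continuity of $\siosqprd{\aiosq}$ as a separate step, though your observation that it follows from monotonicity plus the limit extension to $\conattop$ is also fine.
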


The following lemma states that the translation $\sdtrnsl{\astrfun}$
of a flat stream function symbol $\astrfun$ 
(as defined in Def.~\ref{def:transl:flat:fn:symbols:pifuncspec})
is the \daob{} lower bound on the production function of $\astrfun$.
For friendly nesting stream symbols $\astrfun$ it states
that $\sdtrnsl{\astrfun}$ pointwisely bounds from below
the \daob{} lower bound on the production function of $\astrfun$.

\begin{lemma}\label{lem:transl:soundness}
  Let $\atrs$ be a stream specification,
  and let\/ $\astrfun \in \Sfnest \supseteq \Sflat$.
  \begin{enumerate}
    \item If\/ $\astrfun$ is flat, then:\/
      $\sdtrnsl{\astrfun} = \doLow{\atrs}{\astrfun}$.
      Hence, $\doLow{\atrs}{\astrfun}$ is periodically increasing.
    \item If\/ $\astrfun$ is friendly nesting, then it holds:\/
      $\sdtrnsl{\astrfun} \le \doLow{\atrs}{\astrfun}$
      (pointwise inequality).
  \end{enumerate}
\end{lemma}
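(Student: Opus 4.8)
The plan is to characterise the data-oblivious lower bound $\doLow{\atrs}{\astrfun}$ by a single recursion over the defining rules of the stream functions, to observe that this recursion is exactly the equation system of Def.~\ref{def:transl:flat:fn:symbols:pifuncspec} (with nesting rules relaxed to the identity gate), and then to read off~(1) and~(2) from the uniqueness of its continuous solution. First I would pass from $\doLow{\atrs}{\cdot}$ to $\doLowh{\atrs}{\cdot}$ by Lem.~\ref{do:low:le:lowh}, and use Lem.~\ref{lem:do:lowh:ind} together with Cor.~\ref{lem:do:lowh:subst} to see that $\doLowh{\atrs}{\funap{\astrfun}{s_1,\ldots,s_n,\ldots}}$ depends on the stream arguments only through the numbers $\doLowh{\atrs}{s_i}$; consequently $\funap{\doLow{\atrs}{\astrfun}}{n_1,\ldots,n_k}$ (in the sense of Def.~\ref{def:dorange}) is a well-defined, monotone function $\conat^k\to\conat$. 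I would also record that it is continuous: a finite rewrite sequence witnessing a value of $\doLow$ touches only finitely many pebbles of each argument, so the value at $\conattop$ is the supremum of the values at finite arguments.

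\noindent\textbf{The one-step identity (flat case).} The heart of the proof is to show that the family $\{\funap{\doLow{\atrs}{\bstrfun}}{\cdot}\}_{\bstrfun\in\Ssf}$, extended by the values it forces on the auxiliary unknowns $\specvar{\bstrfun,i,\rho}$ and $\specvar{\bstrfun,\argnone,\rho}$, is a solution of the system in Def.~\ref{def:transl:flat:fn:symbols:pifuncspec}. Fixing $\vec n$, this comes down to analysing the first root step of a $\doLowh$-optimal rewrite sequence from $t=\funap{\astrfun}{\strcns{\trspeb^{n_1}}{\astr},\ldots,\strcns{\trspeb^{n_k}}{\astr},\ldots}$. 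For the inequality ``$\le$'' I would take the worst-case data-exchange function $\sdgw{\atrs}$ of Def.~\ref{def:do:worst}, which by Cor.~\ref{cor:lowdg:prod} realises $\doLowh{\atrs}{\cdot}$ exactly; by its construction it either (i) steers a defining rule $\rho$ whose stream pattern exceeds the supply in some argument~$i$, so that by orthogonality no rule applies, $t$ is a normal form, and the production is $0$ --- matching the ``if $n'<0$ then $0$'' clause of $\specvarap{\astrfun,i,\rho}{n_i}$ --- or (ii), when every defining rule is depth-feasible for $\vec n$, it forces the $>$-minimal applicable rule $\rho$, contributing $m$ produced pebbles and leaving a continuation which is a residual of $\astr_j$ (case~(a)) or of the form $\funap{\bstrfun}{\ldots}$ (case~(b)) with updated prefix lengths; a further application of Lem.~\ref{lem:do:lowh:ind} to the continuation produces precisely the corresponding right-hand side of the $\specvarap{\astrfun,i,\rho}{\cdot}$ equation, while the $\conattop$-entry for an erased argument and the ``glass ceiling'' unknown $\specvar{\astrfun,\argnone}$ cover the case in which all stream arguments of the continuation get erased further down. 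For ``$\ge$'' I would argue symmetrically for an \emph{arbitrary} data-exchange function and rewrite sequence: if the opponent renders $t$ stuck the production is $\ge 0$; otherwise the unique applicable root rule $\rho$ contributes $m$ plus the production of the continuation, which by definition of $\doLow$ as an infimum dominates $\doLow{\atrs}{\cdot}$ of the continuation, hence dominates the $\rho$-branch of the right-hand side, and taking the infimum over rules and the minimum over argument indices yields the equation. Here, when all rules are depth-feasible, exhaustivity (Def.~\ref{def:scs}, item~\ref{def:scs:exhaustive}) and orthogonality guarantee that every pebble instantiation of $t$ is a redex, so an outermost-fair sequence must take the root step.

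\noindent\textbf{Conclusion; the friendly nesting case.} Once $\doLow{\atrs}{\cdot}$ is known to be a continuous solution of the system, the uniqueness of that solution among continuous functions (stated immediately before Def.~\ref{def:transl:flat:fn:symbols:pifuncspec}) gives $\doLow{\atrs}{\astrfun}=\sdtrnsl{\astrfun}$ for flat $\astrfun$; since $\sdtrnsl{\astrfun}$ is periodically increasing (Prop.~\ref{prop:corr:pein:gate:transl}) so is $\doLow{\atrs}{\astrfun}$, which settles~(1). For~(2) the extra input is that Def.~\ref{def:transl:flat:fn:symbols:pifuncspec} treats a nesting rule $\rho$ (and, through $\sdependson$, the friendly non-nesting rules it reaches) by the optimistic values $\specvar{\astrfun,\argnone,\rho}=\conattop$ and $\specvarap{\astrfun,i,\rho}{n}=n$. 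I would show that $\doLow{\atrs}{\cdot}$ is a \emph{super}-solution of the resulting system: a friendly rule consumes at most one element per argument and produces at least one, so feeding an $n$-defined stream through a composition of friendly functions still yields at least $n$ produced elements; iterating this domination argument along $\sdependson$ shows that replacing a nesting rule's contribution by the identity gate cannot raise the value. Hence $\doLow{\atrs}{\cdot}$ pointwise dominates the unique solution of the modified system, i.e.\ $\sdtrnsl{\astrfun}\le\doLow{\atrs}{\astrfun}$.

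\noindent\textbf{Main obstacle.} The delicate part will be the one-step identity and its bookkeeping: correctly handling the opponent's blocking behaviour via depth-infeasible rules (and the corresponding $0$-entries), the role of $\specvar{\astrfun,\argnone}$ once all input components are erased, and --- for~(2) --- the domination argument for nested friendly functions. The continuity of $\doLow{\atrs}{\cdot}$ and the division of labour between $\sdgw{\atrs}$ and arbitrary opponents for the two inequalities are routine given Cor.~\ref{cor:lowdg:prod} and Lem.~\ref{lem:do:lowh:ind}.
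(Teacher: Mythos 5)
Your proposal is sound in substance and relies on the same machinery the paper develops in Section~\ref{sec:quantitative} (Lem.~\ref{lem:do:lowh:ind}, Cor.~\ref{lem:do:lowh:subst}, the worst-case opponent $\sdgw{\atrs}$ of Def.~\ref{def:do:worst} with Cor.~\ref{cor:lowdg:prod}, and Lem.~\ref{lem:gforce} for steering rule choice), but it organises the argument differently. The paper does not characterise $\doLow{\atrs}{\cdot}$ as a solution of the equation system and then appeal to uniqueness; instead it works with the \ioseq{} specification $\aioseqspec_{\atrs}$ of Def.~\ref{def:infioseqspec} (equivalent to Def.~\ref{def:transl:flat:fn:symbols:pifuncspec} via Prop.~\ref{prop:corr:pein:gate:transl}) and proves the two inequalities $\siosqprd{\trnsl{\astrfun}} \ge \doLow{\atrs}{\astrfun}$ and $\le$ directly: the first by a well-founded induction on tuples $\quadruple{\strff{g}}{i}{q}{p}$, ordered by strict decrease of the solution value $\eval(\Xfiq{\strff{g}}{i}{q})(p)$ with ties broken by the terminating dependency relation $\tcugsfsleadsto$ (weak guardedness), constructing at each step an explicit opponent and outermost-fair rewrite sequence that realises the bound; the second by induction on the length of a production-minimal reduction. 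Your route factors the well-founded descent through the already-established uniqueness of solutions of weakly guarded systems (Lem.~\ref{lem:wgnf}), which is a legitimate economy; the paper's route avoids having to verify that $\doLow{\atrs}{\cdot}$ lies in the class (continuous functions) for which uniqueness holds.

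One obligation that your sketch folds silently into ``the values it forces on the auxiliary unknowns'' deserves to be made explicit, because it is one of the two inequalities the paper actually labours over: to assert that $\doLow{\atrs}{\cdot}$ solves the system at all, you must first show that the $k$-ary lower bound decomposes as
$\funap{\doLow{\atrs}{\astrfun}}{n_1,\ldots,n_k} = \min\bigl(\doLow{\atrs}{\astrfun}(\conattop,\ldots,\conattop),\, \doLow{\atrs}{\astrfun}(\conattop,\ldots,n_i,\ldots,\conattop),\ldots\bigr)$,
i.e.\ that the multi-argument bound has gate form. The direction ``$\le$'' is monotonicity, but ``$\ge$'' (that the minimum over single-argument relaxations is actually attained by some opponent against every rewrite strategy) is not automatic and is exactly what the paper's induction on reduction length establishes. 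Once you flag and discharge that step, the rest of your plan --- including the pre-fixed-point domination argument for the friendly nesting case, which matches the paper's intended use of the ``bounded below by $\min$'' property of friendly rules --- goes through.
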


\subsection{Translation of Stream Constants}
  \label{sec:translation:subsec:constants}

In the second step, we now define a translation of
  stream constants
in a flat or friendly nesting stream specification into production terms
under the assumption that gate translations for the stream functions are given.
Here the idea is that the recursive definition of a stream constant
$\msf{M}$ is unfolded step by step;
the terms thus arising are translated according to their
structure using gate translations of the stream function symbols
from a given family of gates;
whenever a stream constant is met that has been unfolded before,
the translation stops after establishing a binding to a $\mu$-binder
created earlier.

\begin{definition}\label{def:trnsl:nets}\normalfont
  Let $\atrs = \pair{\asig}{R} $ be a stream specification,
  and $\afam = \{\fgate\}_{\astrfun\in\Ssf}$ a family of gates that
  are associated with the symbols in $\Ssf$.
  
  Let $\ter{\asig}_{\sortS}^0$ be the set of terms in $\atrs$
  of sort stream that do not contain variables of sort stream. 
  The \emph{translation function}
  $\trnslF{\cdot}{\afam} \funin \ter{\asig}_{\sortS}^0 \to \net$
  is defined %
  by $\astrtrm \mapsto \trnslF{\astrtrm}{\afam} \defdby \trnsliF{\astrtrm}{\setemp}{\afam}$
  based on the following definition of expressions
  $\trnsliF{\astrtrm}{\alst}{\afam}$, where $\alst \subseteq \Ssc$,
  by induction 
  on the structure of $\astrtrm\in\ter{\asig}_{\sortS}^0$,
  using the clauses:
  \begin{gather*}
    \begin{aligned}
    \trnsliF{\funap{\astrcon}{\vec{\adattrm}}}{\alst}{\afam}
    &\defdby
      \begin{cases}
      \netrec{M}{\snetmeet\; \{ \trnsliF{r}{\setunion{\alst}{\{\astrcon\}}}{\afam}
                         \where \funap{\astrcon}{\vec{\bdattrm}} \to r \in R\} }
      &\text{if $\astrcon\not\in\alst$}
      \\
      M
      &\text{if $\astrcon\in\alst$}
      \end{cases}
    \\
    \trnsliF{\strcns{\adattrm}{\astrtrm}}{\alst}{\afam}
    &\defdby \netpeb{\trnsliF{\astrtrm}{\alst}{\afam}}
    \end{aligned}
    \\
    \trnsliF{\funap{\astrfun}{\astrtrm_1,\ldots,\astrtrm_{\arityS{\astrfun}},\adattrm_1,\ldots,\adattrm_{\arityD{\astrfun}}}}%
            {\alst}{\afam}
    \defdby
    \funap{\fgate}
       {\trnsliF{\astrtrm_1}{\alst}{\afam},\ldots,
          \trnsliF{\astrtrm_{\arityS{\astrfun}}}{\alst}{\afam}}
  \end{gather*}
  where $\astrcon\in\Ssc$, 
  $\vec{\adattrm} = \langle \adattrm_1, \ldots, \adattrm_{\arityD{\astrcon}} \rangle$
  a vector of terms in $\ter{\asig}_{\sortD}$, 
  $\astrfun\in\Ssf$,\linebreak
  $\astrtrm, \astrtrm_1, \ldots \astrtrm_{\arityS{\astrfun}} \in\ter{\asig}_{\sortS}^0$,
  and 
  $\adattrm, \adattrm_1,\ldots,\adattrm_{\arityD{\astrfun}}\in\ter{\asig}_{\sortD}$.
  Note that the definition
  of $\trnsliF{\funap{\astrcon}{\vec{\adattrm}}}{\alst}{\afam}$
  does not depend on the vector $\vec{\adattrm}$ of terms
  in $\ter{\asig}_{\sortD}$.
  Therefore we define, for all $\astrcon\in\Ssc$,
  the \emph{translation of $\astrcon$ with respect to $\afam$} by 
  $\trnslF{\astrcon}{\afam} \defdby 
    \trnsliF{\funap{\astrcon}{\vec{x}}}{\emptyset}{\afam}
    \in \net $
  where $\vec{x} = \langle x_1, \ldots, x_{\arityD{\astrcon}} \rangle$
  is a vector of data variables. 
\end{definition}

The following lemma is the basis of our result in Sec.~\ref{sec:results},
Thm.~\ref{thm:decide:prod:flat:pure}, concerning the decidability of 
\daob{} productivity for flat stream specifications.
In particular the lemma states that if we use gates that represent \daobly{} optimal lower bounds
on the production of the stream functions,
then the translation of a stream constant $\astrcon$
yields a production term that rewrites to the \daob{} lower bound of the production of $\astrcon$.

\begin{lemma}
  \label{lem1:outsourcing}
  Let $\atrs = \pair{\Sigma}{R}$ be a stream specification, and
  $\afam = \{\fgate\}_{\astrfun\in\Ssf}$ be a family of gates
  such that, for all\/ $\astrfun\in\Ssf$, the arity of\/ $\fgate$
  equals the stream arity of\/ $\astrfun$.
  Then the following statements hold:
  \begin{enumerate}
    \renewcommand{\labelenumi}{(\roman{enumi})}
    \item\label{lem1:outsourcing:item:1}
      Suppose that\/ $\siosqprd{\fgate} = \doLow{\atrs}{\astrfun}$
      holds for all\/ $\astrfun \in \Ssf$.
      Then for all\/ $\astrcon\in\Ssc$ 
      and vectors $\vec{\adattrm} = \langle \adattrm_1, \ldots, \adattrm_{\arityD{\astrcon}} \rangle$ 
      of data terms
      $\; \netprd{\trnslF{\astrcon}{\afam}}
                = \doLow{\atrs}{\funap{\astrcon}{\vec{u}}} $
      holds.                   
      And consequently, $\atrs$ is \daobly{} productive
      if and only if\/ $\netprd{\trnslF{\rootsc}{\afam}} = \infty$.
      \vspace*{0.75ex}
    \item\label{lem1:outsourcing:item:2}
      Suppose that\/ $\siosqprd{\fgate} = \doUp{\atrs}{\astrfun}$
      holds for all\/ $\astrfun \in \Ssf$.
      Then for all\/ $\astrcon\in\Ssc$ 
      and vectors $\vec{\adattrm} = \langle \adattrm_1, \ldots, \adattrm_{\arityD{\astrcon}} \rangle$ 
      of data terms 
      $\; \netprd{\trnslF{\astrcon}{\afam}}
                = \doUp{\atrs}{\funap{\astrcon}{\vec{u}}} $
      holds.          
      And consequently, $\atrs$ is \daobly{} non-productive
      if and only if\/ $\netprd{\trnslF{\rootsc}{\afam}} < \infty$.
  \end{enumerate}
\end{lemma}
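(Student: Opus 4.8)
The plan is to prove statements \ref{lem1:outsourcing:item:1} and \ref{lem1:outsourcing:item:2} simultaneously by a single induction, since they are completely symmetric once one replaces $\doLow{\atrs}{\cdot}$ by $\doUp{\atrs}{\cdot}$ everywhere and $\inf$ by $\sup$. The central claim I would establish first, for each $\astrcon\in\Ssc$ and each prefix-list $\alst\subseteq\Ssc$, is an equation relating $\netprd{\trnsliF{\astrtrm}{\alst}{\afam}}$ to the \daob{} production of $\astrtrm$ relative to the data-exchange functions $\sdg$ that are still ``free'' to play on the stream constants \emph{not} already expanded in $\alst$. Concretely I would show, by induction on the structure of $\astrtrm\in\ter{\asig}_{\sortS}^{0}$ (with the understanding that occurrences of constants already in $\alst$ are treated as free stream variables), that the production of the net $\trnsliF{\astrtrm}{\alst}{\afam}$, with its free $\mu$-bound variables $M_{\astrcon}$ ($\astrcon\in\alst$) assigned the values $\doLow{\atrs}{\funap{\astrcon}{\vec{u}}}$ respectively, equals $\doLow{\atrs}{\astrtrm}$. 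The base cases are: a stream variable (an already-expanded constant), handled directly by the environment assignment together with Lem.~\ref{lem:do:lowh:ind}; and a ``peb'' step $\strcns{\adattrm}{\astrtrm}$, which adds exactly one pebble on the net side by Def.~\ref{def:netprd} and adds exactly one to the \daob{} production on the term side.

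For the inductive step there are two cases. For $\astrtrm\equiv\funap{\astrfun}{\astrtrm_1,\ldots,\astrtrm_{\arityS{\astrfun}},\vec{\adattrm}}$ the net translation is the gate $\funap{\fgate}{\trnsliF{\astrtrm_1}{\alst}{\afam},\ldots}$, whose production (by Def.~\ref{def:gate}, Lem.~\ref{lem:netprd:netbox}, and the definition of gate interpretation) is $\min(k,\iosqprd{\iaiosq{1}}{\netprd{\trnsliF{\astrtrm_1}{\alst}{\afam}}},\ldots) = \funap{\siosqprd{\fgate}}{\netprd{\trnsliF{\astrtrm_1}{\alst}{\afam}},\ldots}$; by hypothesis this is $\funap{\doLow{\atrs}{\astrfun}}{m_1,\ldots,m_{\arityS{\astrfun}}}$ where $m_i=\doLow{\atrs}{\astrtrm_i}$ by the induction hypothesis; and by Lem.~\ref{lem:do:lowh:ind} (together with Lem.~\ref{do:low:le:lowh} to pass between history-free and history-aware lower bounds) this equals $\doLow{\atrs}{\funap{\astrfun}{\astrtrm_1,\ldots,\astrtrm_{\arityS{\astrfun}},\vec{\adattrm}}}=\doLow{\atrs}{\astrtrm}$. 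For $\astrtrm\equiv\funap{\astrcon}{\vec{\adattrm}}$ with $\astrcon\notin\alst$, the net is $\netrec{M_{\astrcon}}{\snetmeet\,\{\trnsliF{r}{\alst\cup\{\astrcon\}}{\afam}\mid \astrcon(\vec{v})\to r\in R\}}$; its production, by Lem.~\ref{lem:netprod:netrec:lfp}, is the least fixed point of $n\mapsto\min_{\rho}\netprdr{\trnsliF{r_{\rho}}{\alst\cup\{\astrcon\}}{\afam}}{\envupd{\aenv}{M_{\astrcon}}{n}}$, which by the induction hypothesis (now with $\astrcon$ added to the prefix list) equals $n\mapsto\min_{\rho}\doLow{\atrs}{r_{\rho}}$ computed with the free variable $M_{\astrcon}$ standing for an argument of size $n$. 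On the term side, the \daob{} lower bound of $\funap{\astrcon}{\vec{\adattrm}}$ is obtained precisely by letting the opponent choose, among the defining rules $\rho$ of $\astrcon$, the one minimising production — an infimum over rules, matched by the `$\netmeet$' on the net side — and then recursing; the least-fixed-point on the net side corresponds exactly to iterating the unfolding of $\astrcon$. Here I would invoke the robustness results Lem.~\ref{do:low:le:lowh} and the worst-case data-exchange function $\sdgw{\atrs}$ from Def.~\ref{def:do:worst} (Cor.~\ref{cor:lowdg:prod}) to argue that this least fixed point is indeed the correct \daob{} lower bound: $\sdgw{\atrs}$ witnesses that no strategy produces more than the lfp, and expanding $\astrcon$ finitely often witnesses that every finite approximant to the lfp is achievable.

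The final ``consequently'' clauses follow by taking $\astrcon=\rootsc$ (with empty data vector, since $\arityD{\rootsc}=0$), whence $\trnslF{\rootsc}{\afam}$ is a closed net, $\netprdr{\trnslF{\rootsc}{\afam}}{\aenv_0}=\netprd{\trnslF{\rootsc}{\afam}}$, and $\doLow{\atrs}{\rootsc}=\infty$ (respectively $\doUp{\atrs}{\rootsc}<\infty$) is by definition \daob{} productivity (respectively non-productivity). I expect the main obstacle to be the stream-constant case of the induction: one has to carefully set up the correspondence between the syntactic $\mu$-unfolding of the net (whose semantics is a least fixed point over $\conat$) and the operational picture in which the opponent may re-play on a stream constant every time it is unfolded along an infinite reduction. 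The cleanest route is to not reason about this directly but to route everything through the already-proved machinery: Cor.~\ref{cor:lowdg:prod} reduces the \daob{} lower bound to a single deterministic reduction under $\sdgw{\atrs}$, and then one shows that this reduction, viewed on data-abstracted terms, is step-for-step simulated by the pebbleflow reduction of the translated net — with $\netmeet$-branches resolved by the rule $\sdgw{\atrs}$ selects and $\mu$-unfoldings corresponding to re-entering a stream-constant definition. Getting the bookkeeping of which pebbles are ``produced'' versus ``still in a box'' to line up across the two systems, uniformly for both the $\inf$/lower and $\sup$/upper variants, is the delicate part; the upper-bound case additionally needs that \emph{every} data-exchange function's production is bounded by $\netprd{\trnslF{\astrcon}{\afam}}$, which again comes from the gate hypothesis $\siosqprd{\fgate}=\doUp{\atrs}{\astrfun}$ plus monotonicity of all the net operations.
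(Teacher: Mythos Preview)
Your proposal contains two distinct lines. The main one—structural induction on $\trnsliF{\astrtrm}{\alst}{\afam}$—differs substantially from the paper's route: in the main text Lemma~\ref{lem1:outsourcing} is obtained as an immediate corollary of Lemma~\ref{lem2:outsourcing} (the inequality cases (a)+(b), respectively (c)+(d), combine to the two equalities), and the appendix sketches an independent operational argument via tree unfoldings and coloured contexts with tracker symbols, mapping pebbleflow reductions to $\atrs$-reductions and back by monitoring gate-by-gate consumption/production. No structural induction on the translation is attempted there.

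Your induction handles the $\astrfun$-case cleanly via Lem.~\ref{lem:do:lowh:ind} and the gate hypothesis, but there is a genuine gap in the $\mu$-case. The hypothesis you state fixes the environment at $M_{\astrcon}\mapsto\doLow{\atrs}{\funap{\astrcon}{\vec{u}}}$ for $\astrcon\in\alst$, yet the semantics of $\netrec{M_{\astrcon}}{\ldots}$ (Lem.~\ref{lem:netprod:netrec:lfp}) requires evaluating the body at \emph{every} $n\in\conat$, where your IH is silent. Shifting to ``$M_{\astrcon}$ standing for an argument of size $n$'' is the right direction, but making it precise needs a parametric IH quantifying over all $\eta:\alst\to\conat$ and comparing against the \daob{} lower bound in a \emph{modified} specification in which the constants in $\alst$ are frozen to $\eta$-pebble sources; even then one still owes a fixed-point lemma of the shape $\doLow{\atrs}{\astrcon}=\lfp{\lambda n.\min_\rho \doLow{\atrs[\astrcon\mapsto n]}{r_\rho}}$, which Cor.~\ref{cor:lowdg:prod} alone does not deliver (it exhibits $\doLow{\atrs}{\astrcon}$ as the production of one deterministic reduction, not as the least fixed point of a one-step unfolding). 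Your closing remark—simulate the $\sdgw{\atrs}$-reduction step-for-step by a pebbleflow reduction—is in fact much closer to the paper's operational argument and would bypass the induction entirely, but you do not develop it. Separately, the claimed symmetry between (i) and (ii) is not free: Lem.~\ref{lem:do:lowh:ind} and its consequences are established only for $\doLow{\atrs}{\cdot}$, so their upper-bound analogues would have to be proved independently.
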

Lem.~\ref{lem1:outsourcing} is an immediate consequence of the following lemma,
which also is the basis of our result in Sec.~\ref{sec:results} 
concerning the recognizability of productivity for flat and friendly nesting
stream specifications. 
In particular the lemma below asserts that if we use gates that
represent \pein{} functions which are lower bounds 
on the production of the stream functions,
then the translation of a stream constant $\astrcon$
yields a production term that rewrites to a number in $\conat$
smaller or equal to the \daob{} lower bound of the production of $\astrcon$.

\begin{lemma}
  \label{lem2:outsourcing}
  Let $\atrs$ be a stream specification, and let\/
  $\afam = \{\fgate\}_{\astrfun\in\Ssf}$ be a family of gates
  such that, for all\/ $\astrfun\in\Ssf$, the arity of\/ $\fgate$
  equals the stream arity of\/ $\astrfun$.
  Suppose that one of the following statements holds:
  \begin{enumerate}[(a)]
    \item $\siosqprd{\fgate} \le \doLow{\atrs}{\astrfun}$ for all\/ $\astrfun \in \Ssf\,$;%
      \label{lem2:outsourcing:item:1}%
      \vspace{0.5ex}
    \item $\siosqprd{\fgate} \ge \doLow{\atrs}{\astrfun}$ for all\/ $\astrfun \in \Ssf\,$;%
      \label{lem2:outsourcing:item:2}%
      \vspace{0.5ex}
    \item $\doUp{\atrs}{\astrfun} \le \siosqprd{\fgate}$ for all\/ $\astrfun \in \Ssf\,$;%
      \label{lem2:outsourcing:item:3}%
      \vspace{0.5ex} 
    \item $\doUp{\atrs}{\astrfun} \ge \siosqprd{\fgate}$ for all\/ $\astrfun \in \Ssf\,$.%
      \label{lem2:outsourcing:item:4}%
      \vspace{0.5ex}
  \end{enumerate}
  Then, for all $\astrcon\in\Ssc$
  and vectors 
  $\vec{\adattrm} = \langle \adattrm_1, \ldots, \adattrm_{\arityD{\astrcon}} \rangle$ 
  of data terms in $\atrs$,
  the corresponding one of the following statements holds:
  \begin{enumerate}[(a)]
    \item $\netprd{\trnslF{\astrcon}{\afam}} \le \doLow{\atrs}{\funap{\astrcon}{\vec{u}}}\,$;%
      \vspace{0.5ex}
    \item $\netprd{\trnslF{\astrcon}{\afam}} \ge \doLow{\atrs}{\funap{\astrcon}{\vec{u}}}\,$;%
      \vspace{0.5ex}
    \item $\doUp{\atrs}{\funap{\astrcon}{\vec{u}}} \le \netprd{\trnslF{\astrcon}{\afam}}\,$;%
      \vspace{0.5ex}
    \item $\doUp{\atrs}{\funap{\astrcon}{\vec{u}}} \ge \netprd{\trnslF{\astrcon}{\afam}}\,$.
  \end{enumerate}
\end{lemma}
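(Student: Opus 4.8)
The plan is to prove Lemma~\ref{lem2:outsourcing} by induction on the recursive unfolding of stream constants that underlies the definition of $\trnslF{\astrcon}{\afam}$ in Def.~\ref{def:trnsl:nets}, exploiting the monotonicity of the production calculus with respect to the \pein{} functions sitting inside the gates. The four cases (a)--(d) are entirely parallel, so I would carry out case~(b) in detail ($\siosqprd{\fgate} \ge \doLow{\atrs}{\astrfun}$ implies $\netprd{\trnslF{\astrcon}{\afam}} \ge \doLow{\atrs}{\funap{\astrcon}{\vec{\adattrm}}}$) and remark that (a) is its mirror image, while (c)--(d) are obtained by replacing $\doLow{\atrs}{\cdot}$ by $\doUp{\atrs}{\cdot}$ throughout, using that $\doUp{\atrs}{\funap{\astrcon}{\vec{\adattrm}}}$ is the supremum rather than the infimum over data-exchange strategies but otherwise admits the same compositional analysis.

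First I would record a monotonicity lemma for the production calculus: if $\afam = \{\fgate\}$ and $\afam' = \{\gamma'_{\astrfun}\}$ are two families of gates with $\siosqprd{\fgate} \le \siosqprd{\gamma'_{\astrfun}}$ pointwise for every $\astrfun\in\Ssf$, then $\netprd{\trnsliF{\astrtrm}{\alst}{\afam}} \le \netprd{\trnsliF{\astrtrm}{\alst}{\afam'}}$ for every $\astrtrm\in\ter{\asig}_{\sortS}^0$ and every $\alst\subseteq\Ssc$. This follows by structural induction on $\astrtrm$ using Def.~\ref{def:netprd}: the clauses for $\netpeb{\cdot}$, $\snetmeet$, and $\snetrec$ are all monotone in their subterms (for $\snetrec$ one uses that the least fixed point operator is monotone on monotone functions over the chain $\conat$, cf.\ $C_2$ on page~\pageref{page:mono:props}), and the gate clause is monotone by the hypothesis on the $\fgate$'s together with the interpretation of gates in Def.~\ref{def:gate}. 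Given this, case~(b) reduces to the special case where $\siosqprd{\fgate} = \doLow{\atrs}{\astrfun}$ exactly --- which is the content of Lem.~\ref{lem1:outsourcing}(\ref{lem1:outsourcing:item:1}), the equality $\netprd{\trnslF{\astrcon}{\afam}} = \doLow{\atrs}{\funap{\astrcon}{\vec{\adattrm}}}$ --- composed with the monotonicity step: from $\siosqprd{\fgate}\ge\doLow{\atrs}{\astrfun} = \siosqprd{\gamma^{\mrm{opt}}_{\astrfun}}$ we get $\netprd{\trnslF{\astrcon}{\afam}} \ge \netprd{\trnslF{\astrcon}{\afam^{\mrm{opt}}}} = \doLow{\atrs}{\funap{\astrcon}{\vec{\adattrm}}}$. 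But since the excerpt states that Lem.~\ref{lem1:outsourcing} is an \emph{immediate consequence} of Lem.~\ref{lem2:outsourcing}, I cannot invoke it; instead I would prove the optimal case directly as the heart of the argument.

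So the core is to show, under $\siosqprd{\fgate} = \doLow{\atrs}{\astrfun}$, that $\netprd{\trnsliF{\astrtrm}{\alst}{\afam}} \ge \doLow{\atrs}{\astrtrm^{\asub}}$ for every $\astrtrm\in\ter{\asig}_{\sortS}^0$, every $\alst$, and every substitution $\asub$ of the unfolded constants in $\alst$ by their defining bodies (and symmetrically $\le$, giving equality for the inequality direction we need). The induction is on the structure of $\astrtrm$ interleaved with the unfolding of constants; the key ingredients are Lem.~\ref{lem:do:lowh:ind} and Cor.~\ref{lem:do:lowh:subst}, which say that $\doLowh{\atrs}{\cdot}$ (equal to $\doLow{\atrs}{\cdot}$ by Lem.~\ref{do:low:le:lowh}) depends on a stream-function application only through the \daob{} lower bounds of its arguments --- precisely mirroring how a gate depends on its inputs only through their productions. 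For the $\strcns{\adattrm}{\astrtrm}$ case one uses $\doLow{\atrs}{\strcns{\adattrm}{\astrtrm^{\asub}}} = 1 + \doLow{\atrs}{\astrtrm^{\asub}}$ against the $\snetpeb$ clause; for the $\funap{\astrfun}{\vec{\astrtrm}}$ case one uses Lem.~\ref{lem:transl:soundness}(1) (flat: $\siosqprd{\trnsl{\astrfun}} = \doLow{\atrs}{\astrfun}$) plus Lem.~\ref{lem:do:lowh:ind} to match gate application with the \daob{} behaviour of $\astrfun$; for the constant case one uses the fixed-point characterization of $\doLow{\atrs}{\funap{\astrcon}{\vec{\adattrm}}}$ as a least solution of the unfolding equations, matched against the $\snetrec$/$\snetmeet$ clauses and the interpretation of $\snetrec$ as $\lfp$.

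The main obstacle I expect is the constant case, i.e.\ handling mutual recursion among stream constants whose bodies contain nested stream-function applications. Concretely: $\doLow{\atrs}{\funap{\astrcon}{\vec{\adattrm}}}$ is the least fixed point of an operator built from the $\min$ over the defining rules of $\astrcon$, the $1+(\cdot)$ for each leading constructor, and the \daob{} lower bounds of the stream functions applied to recursively-defined arguments; I need to argue that the nested $\snetrec$-structure produced by $\trnsliF{\cdot}{\alst}{\afam}$ computes exactly this least fixed point. This requires a careful simultaneous fixed-point argument (one unknown per reachable constant), showing that the tuple of productions $(\netprd{\trnsliF{\funap{\astrcon}{\vec{x}}}{\setemp}{\afam}})_{\astrcon\in\Ssc}$ is the least pre-fixed point of the same monotone system on $\conat^{\Ssc}$ that defines $(\doLow{\atrs}{\funap{\astrcon}{\vec{\adattrm}}})_{\astrcon}$, using Lem.~\ref{lem:netprod:netrec:lfp} to unwind the $\snetrec$'s and Kleene iteration ($\lfp{f} = \lim_n f^n(0)$ on the chain $\conat$) to match the two iterations stage by stage. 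The subtlety is that a single $\snetrec{M}{\cdot}$ in the translation captures one constant while the others it depends on are either bound by outer $\snetrec$'s (if already on $\alst$) or re-unfolded --- so one must show the translation is invariant, up to production value, under the order in which constants are unfolded; this can be handled by a normalization argument (collapsing repeated unfoldings) justified by confluence of $\scolred$ (Lem.~\ref{lem:collapse:sn:cr}) together with the production-preservation Lem.~\ref{lem:coll:nets}.
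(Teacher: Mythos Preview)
Your approach is viable for cases (a) and (b) but is genuinely different from the paper's. The paper's proof (in the appendix) is operational: it enriches both the pebbleflow term $\trnslF{\astrcon}{\afam}$ and the infinite unfolding $\treeunfT{\atrs}{\astrcon}$ with tracker symbols $\stracker{c}$, $\strackerx{c}$, partitions each into coloured contexts with matching production functions (the ``context production property''), and then simulates rewrite sequences in one system by rewrite sequences in the other, tracing pebble/data exchange across context boundaries step by step. Your route is denotational: you use Lem.~\ref{lem:do:lowh:ind} (together with Lem.~\ref{do:low:le:lowh}) to obtain the compositional identity $\doLow{\atrs}{\funap{\astrfun}{s_1,\ldots,s_n,\vec{u}}} = \doLow{\atrs}{\astrfun}(\doLow{\atrs}{s_1},\ldots,\doLow{\atrs}{s_n})$, turn the constants into a monotone system on $\conat^{\Ssc}$, and compare least fixed points via monotonicity. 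That is cleaner and it does go through for (a)--(b); the paper's simulation buys a uniform treatment that does not rely on a compositional identity for the \daob{} bound.

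There is a real gap for (c) and (d). You write that these ``are obtained by replacing $\doLow{\atrs}{\cdot}$ by $\doUp{\atrs}{\cdot}$ throughout'' and ``admit the same compositional analysis'', but the paper supplies no analogue of Lem.~\ref{lem:do:lowh:ind} for $\doUp{\atrs}{\cdot}$. The proof of Lem.~\ref{lem:do:lowh:ind} goes through the explicit worst-case data-exchange function $\sdgw{\atrs}$ (Def.~\ref{def:do:worst}, Lem.~\ref{lem:lowdg}, Cor.~\ref{cor:lowdg:prod}); the corresponding ``best-case'' construction and the identity $\doUp{\atrs}{\funap{\astrfun}{\vec{s},\vec{u}}} = \doUp{\atrs}{\astrfun}(\doUp{\atrs}{s_1},\ldots,\doUp{\atrs}{s_n})$ are not developed anywhere. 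They are plausible (the lemma $\doRng{\atrs}{\cdot} = \doRngh{\atrs}{\cdot}$ would let you glue history-aware strategies), but you would have to prove them from scratch. The paper's simulation argument avoids this issue precisely because it never passes through such a compositional identity.

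A smaller point on the constant case: what you need to identify the nested $\snetrec$-term produced by $\trnsliF{\cdot}{\alst}{\afam}$ with the simultaneous least fixed point on $\conat^{\Ssc}$ is Bekić's lemma for monotone operators on a complete lattice (or a direct Kleene-iteration argument, stage by stage, using Lem.~\ref{lem:netprod:netrec:lfp}). Invoking confluence of $\scolred$ and Lem.~\ref{lem:coll:nets} is not the right instrument: $\scolred$ normalizes a single closed production term to a numeral, but it does not by itself relate two syntactically different $\mu$-nestings arising from different unfolding orders of the same system.
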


\section{Deciding Data-Oblivious Productivity}\label{sec:results}
In this section we assemble our results concerning
decision of \daob{} productivity, and automatable recognition
of productivity. %
We define methods:
{\setlength{\leftmargini}{10ex}
\begin{itemize}
  \item [(\algdecidedoprod)] for deciding \daob{} productivity of flat stream specifications,
  \item [(\algdecideprod)] for deciding productivity of pure stream specifications, and
  \item [(\algrecogprod)] for recognising productivity of friendly nesting stream specifications,
\end{itemize}}
\noindent
that proceed in the following steps:
\begin{enumerate}
  \item
    Take as input a
    (\algdecidedoprod) flat,
    (\algdecideprod) pure, or
    (\algrecogprod) friendly nesting
    stream specification~$\atrs = \pair{\asig}{R}$.
  \item
    Translate the stream function symbols %
    into %
    gates
    $\afam \defdby \{\trmrep{\sdtrnsl{\astrfun}} \}_{\astrfun\in\Ssf}$
    (Def.~\ref{def:transl:flat:fn:symbols:pifuncspec}).
  \item
    Construct the production term $\trnslF{\rootsc}{\afam}$ %
    with respect to $\afam$ (Def.~\ref{def:trnsl:nets}).
  \item\label{item:algo:production}
    Compute the production $k$
    of $\trnslF{\rootsc}{\afam}$ using $\scolred$ (Def.~\ref{def:coltrs}).
  \item
    Give the following output:
    \begin{itemize}
      \item [(\algdecidedoprod)]
        ``$\atrs$ is \daobly{} productive'' if $k=\conattop$,
        else ``$\atrs$ is not \daobly{} productive''.
      \item [(\algdecideprod)]
        ``$\atrs$ is productive'' if $k=\conattop$,
        else ``$\atrs$ is not productive''. %
      \item [(\algrecogprod)]
        ``$\atrs$ is productive'' if $k=\conattop$,
        else ``don't know''.
    \end{itemize}
\end{enumerate}
Note that all of these steps are automatable
(cf.\ our productivity tool, Sec.~\ref{sec:conclusion}).

Our main result states that \daob{} productivity is decidable
for flat stream specifications.
It rests on the fact that the algorithm \algdecidedoprod{} obtains
the lower bound $\doLow{\atrs}{\trnsl{\rootsc}}$ on the production
of $\rootsc$ in $\atrs$.
Since \daob{} productivity implies
productivity (Prop.~\ref{prop:doprod}), we obtain a computable, \daobly{} optimal,
sufficient condition for productivity of flat stream specifications, which
cannot be improved by any other \daob{} analysis.
Second, since for pure stream specifications
\daob{} productivity and productivity are the same,
we get that productivity is decidable for them.

\begin{theorem} \label{thm:decide:prod:flat:pure}
  \begin{enumerate}
    \item \algdecidedoprod{} decides \daob{} productivity of flat stream specifications,
    \item \algdecideprod{} decides productivity of pure stream specifications.
  \end{enumerate}
\end{theorem}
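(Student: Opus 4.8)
The plan is to assemble the theorem directly from the machinery developed in Sections~\ref{sec:nets}--\ref{sec:translation}, observing that both parts are really statements that the algorithms \algdecidedoprod{} and \algdecideprod{} (i)~terminate on every input and (ii)~return the correct answer. For termination: step~2 is computable by Lem.~\ref{lem:transl:termination} (which gives an algorithm computing the gate translation $\trnsl{\astrfun}$ of each flat symbol $\astrfun\in\Ssf$, resting on Lem.~\ref{lem:infioseqspec:build:finite} and Lem.~\ref{lem:infioseqspec:ratsol}); step~3 is a straightforward recursion on the finite set of reachable stream constants, terminating because each constant is unfolded at most once (the $\mu$-binder bookkeeping in Def.~\ref{def:trnsl:nets}); and step~4 terminates and yields a numeral $\netsrc{k}$ by Lem.~\ref{lem:collapse:sn:cr}, with $k=\netprd{\trnslF{\rootsc}{\afam}}$ by Lem.~\ref{lem:collapse:pop}. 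So both procedures halt with a well-defined $k\in\conat$.

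For correctness of part~(1): since the gates in $\afam$ are the \daobly{} optimal translations, Lem.~\ref{lem:transl:soundness}(1) gives $\siosqprd{\trmrep{\sdtrnsl{\astrfun}}} = \sdtrnsl{\astrfun} = \doLow{\atrs}{\astrfun}$ for every flat $\astrfun\in\Ssf$ (using Prop.~\ref{prop:corr:pein:gate:transl} to identify the \pein-function translation with the function represented by the gate). Then Lem.~\ref{lem1:outsourcing}(\ref{lem1:outsourcing:item:1}) applies and yields $\netprd{\trnslF{\rootsc}{\afam}} = \doLow{\atrs}{\rootsc}$, so $k=\conattop$ exactly when $\doLow{\atrs}{\rootsc}=\infty$, i.e.\ exactly when $\atrs$ is \daobly{} productive; this is precisely the output rule of \algdecidedoprod{}. (A small point to check here is that in a flat specification \emph{every} symbol in $\Ssf$ is flat --- immediate from Def.~\ref{def:flat} --- so the hypothesis of Lem.~\ref{lem:transl:soundness}(1) is met for all relevant symbols, and the hypothesis of Lem.~\ref{lem1:outsourcing}(\ref{lem1:outsourcing:item:1}) is likewise met.)

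For correctness of part~(2): a pure specification is in particular flat (Def.~\ref{def:pure}), so part~(1) already gives that \algdecideprod{} --- which performs the same computation --- correctly decides \emph{\daob{} productivity}, i.e.\ correctly computes whether $\doLow{\atrs}{\rootsc}=\infty$. It therefore remains to prove the \emph{coincidence} $\doLow{\atrs}{\rootsc} = \terprd{\atrs}{\rootsc}$ (in the sense: one is $\infty$ iff the other is) for pure specifications; together with Prop.~\ref{prop:doprod} (which already gives $\doLow{\atrs}{\rootsc}\le\terprd{\atrs}{\rootsc}\le\doUp{\atrs}{\rootsc}$) this yields that $k=\conattop$ iff $\atrs$ is productive. \textbf{This coincidence is the main obstacle}, and it is where purity is essential: because for each stream function symbol all defining rules have the same data abstraction, the opponent $\sdg$ has no genuine choice of rule --- any data-exchange merely instantiates pebbles in the \emph{unique} way dictated by the (common) redex pattern --- so the \daob{} rewrite relation faithfully simulates the actual rewrite relation on data-abstracted terms, and conversely any \daob{} rewrite sequence lifts to an honest $\atrs$-rewrite sequence producing the same number of constructors. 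I expect to argue this by showing that for pure $\atrs$ the history-free worst-case exchange function $\sdgw{\atrs}$ of Def.~\ref{def:do:worst} coincides (up to harmless choice of data terms) with data abstraction applied to an honest outermost-fair $\atrs$-reduction, invoking Cor.~\ref{cor:lowdg:prod} ($\terprd{\dgars{\atrs}{\sdgw{\atrs}}}{\rootsc}=\doLowh{\atrs}{\rootsc}$) and Lem.~\ref{do:low:le:lowh} ($\doLow{\atrs}{\rootsc}=\doLowh{\atrs}{\rootsc}$) to close the loop; the remaining subtlety is handling pattern matching on data in stream-constant rules (which is unrestricted even in pure specifications) and on stream functions, where purity guarantees the abstracted pattern is still a redex so that no production is lost.
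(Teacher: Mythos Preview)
Your proof is correct and follows the paper's approach: for part~(1) the paper invokes exactly the chain Lem.~\ref{lem:transl:soundness}(i), Lem.~\ref{lem1:outsourcing}\ref{lem1:outsourcing:item:1}, Lem.~\ref{lem:collapse:pop}, Lem.~\ref{lem:collapse:sn:cr} to obtain $k=\doLow{\atrs}{\rootsc}$, and for part~(2) it simply asserts the coincidence $\terprd{\atrs}{\rootsc}=\doLow{\atrs}{\rootsc}$ for pure specifications without further argument, so your sketch of why this holds actually goes beyond what the paper writes.

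One small correction to your sketch: the ``remaining subtlety'' you flag about stream-constant rules is not a subtlety at all. Def.~\ref{def:pure} requires that the data abstractions of the defining rules coincide for \emph{every} symbol $\astrfun\in\Ssmincns$, and $\Ssc\subseteq\Ssmincns$. Hence in a pure specification all defining rules of each stream constant also share a common data-abstracted right-hand side, and the opponent $\sdg$ has no genuine choice at constant symbols either. With this observation the coincidence argument is uniform: for pure $\atrs$ the ARS $\dgars{\atrs}{\sdg}$ is independent of $\sdg$ and coincides with the data abstraction of honest $\atrs$-rewriting, whence $\doLow{\atrs}{\rootsc}=\terprd{\atrs}{\rootsc}$ directly (no need to route through $\sdgw{\atrs}$, though that works too).
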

\begin{proof}
  Let $k$ be the production of the term $\trnslF{\rootsc}{\afam}\in\net$ in step~(iv) of
  \algdecidedoprod/\algdecideprod.
  \vspace{-1ex}
  \begin{enumerate}
  \item By Lem.~\ref{lem:transl:soundness}~(i), 
    Lem.~\ref{lem1:outsourcing}~\ref{lem1:outsourcing:item:1},
    Lem.~\ref{lem:collapse:pop}, Lem.~\ref{lem:collapse:sn:cr} %
    we find:
    $k = \doLow{\atrs}{\rootsc}$.
  \item For pure specifications we additionally note:
    $\terprd{\atrs}{\rootsc} = \doLow{\atrs}{\rootsc}$.
  \qed
  \end{enumerate}
\end{proof}

Third, we obtain a computable, sufficient condition for productivity
of friendly nesting stream specifications.
\begin{theorem} \label{thm:recog:prod:friendly:nesting:flat}
  A friendly nesting (flat) stream specification $\atrs$ is productive
  if the algorithm \algrecogprod (\/\algdecidedoprod) recognizes $\atrs$ as productive.
\end{theorem}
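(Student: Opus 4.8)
The plan is to reduce Theorem~\ref{thm:recog:prod:friendly:nesting:flat} to the chain of lemmas already established for the translation and the production calculus, using the soundness bound for friendly nesting symbols rather than the exact equality available for flat symbols. Concretely, suppose \algrecogprod{} (respectively \algdecidedoprod{}) recognizes a friendly nesting (flat) specification $\atrs$ as productive. By the definition of the algorithm this means that the production $k$ of the production term $\trnslF{\rootsc}{\afam}$, computed in step~(iv) via $\scolred$, equals $\conattop$, where $\afam = \{\trmrep{\sdtrnsl{\astrfun}}\}_{\astrfun\in\Ssf}$. First I would invoke Lem.~\ref{lem:collapse:pop} and Lem.~\ref{lem:collapse:sn:cr} to justify that this computed $k$ is indeed $\netprd{\trnslF{\rootsc}{\afam}}$: the $\scolred$-normal form of $\trnslF{\rootsc}{\afam}$ is a numeral $\netsrc{k}$, and production is preserved along $\scolred$, so $\netprd{\trnslF{\rootsc}{\afam}} = k = \conattop$.

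Next I would relate $\netprd{\trnslF{\rootsc}{\afam}}$ to the \daob{} lower bound $\doLow{\atrs}{\rootsc}$. By Lem.~\ref{lem:transl:soundness}, for friendly nesting $\astrfun$ we only have $\sdtrnsl{\astrfun} \le \doLow{\atrs}{\astrfun}$ (pointwise), i.e.\ $\siosqprd{\trmrep{\sdtrnsl{\astrfun}}} \le \doLow{\atrs}{\astrfun}$ for all $\astrfun\in\Ssf$; this is precisely hypothesis~\ref{lem2:outsourcing:item:1} of Lem.~\ref{lem2:outsourcing}. Applying Lem.~\ref{lem2:outsourcing}\ref{lem2:outsourcing:item:1} with $\astrcon = \rootsc$ (and empty data-argument vector, since $\arityD{\rootsc} = 0$) yields $\netprd{\trnslF{\rootsc}{\afam}} \le \doLow{\atrs}{\rootsc}$. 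Combining with the previous paragraph, $\conattop = \netprd{\trnslF{\rootsc}{\afam}} \le \doLow{\atrs}{\rootsc} \le \conattop$, so $\doLow{\atrs}{\rootsc} = \conattop$; that is, $\atrs$ is \daobly{} productive. Finally, Prop.~\ref{prop:doprod} gives $\doLow{\atrs}{\rootsc} \le \terprd{\atrs}{\rootsc}$, whence $\terprd{\atrs}{\rootsc} = \conattop$, i.e.\ $\atrs$ is productive. The flat case is subsumed: flat specifications are friendly nesting, and for them Lem.~\ref{lem:transl:soundness}~(i) even gives equality, but the inequality suffices for the one-directional claim of the theorem; alternatively one appeals directly to Thm.~\ref{thm:decide:prod:flat:pure}~(1), which identifies the output of \algdecidedoprod{} with $\doLow{\atrs}{\rootsc}$, and then to Prop.~\ref{prop:doprod}.

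The routine obstacle here is essentially bookkeeping: making sure the hypotheses of Lem.~\ref{lem2:outsourcing} are met, namely that the gate family $\afam$ used by the algorithm consists of gates whose interpretations are pointwise below the \daob{} lower bounds of the corresponding stream functions, and that each gate $\trmrep{\sdtrnsl{\astrfun}}$ has arity equal to $\arityS{\astrfun}$. Both are immediate: the arity condition is built into Def.~\ref{def:transl:flat:fn:symbols:infioseqspec}/Def.~\ref{def:gate}, and the pointwise bound is exactly Lem.~\ref{lem:transl:soundness}~(2) (with Prop.~\ref{prop:corr:pein:gate:transl} linking the \pein~function translation $\sdtrnsl{\astrfun}$ to the gate translation $\trnsl{\astrfun}$ and hence to $\siosqprd{\trmrep{\sdtrnsl{\astrfun}}}$). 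No genuinely hard step remains; the theorem is a corollary of the machinery developed in Sections~\ref{sec:nets}--\ref{sec:translation}. I would therefore keep the proof to a few lines, simply citing Lem.~\ref{lem:transl:soundness}, Lem.~\ref{lem2:outsourcing}\ref{lem2:outsourcing:item:1} (or Lem.~\ref{lem1:outsourcing}\ref{lem1:outsourcing:item:1} in the flat case), Lem.~\ref{lem:collapse:pop}, Lem.~\ref{lem:collapse:sn:cr}, and Prop.~\ref{prop:doprod}, and noting that the flat case also follows from Thm.~\ref{thm:decide:prod:flat:pure}~(1).
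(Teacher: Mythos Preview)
Your proposal is correct and follows essentially the same route as the paper: invoke Lem.~\ref{lem:transl:soundness}~(ii), Lem.~\ref{lem2:outsourcing}~(a), Lem.~\ref{lem:collapse:pop}, and Lem.~\ref{lem:collapse:sn:cr} to obtain $k \le \doLow{\atrs}{\rootsc} \le \terprd{\atrs}{\rootsc}$, hence productivity when $k=\conattop$. The paper's proof is the same chain compressed into one line (it leaves the final inequality, your Prop.~\ref{prop:doprod}, implicit); your added remarks on Prop.~\ref{prop:corr:pein:gate:transl} and the separate flat case are accurate but optional.
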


\begin{proof}
  Let $k$ be the production of the term $\trnslF{\rootsc}{\afam}\in\net$
  in step~(iv) of \algrecogprod/\algdecidedoprod.
  By Lem.~\ref{lem:transl:soundness}~(ii), 
  Lem.~\ref{lem2:outsourcing}~(a),
  Lem.~\ref{lem:collapse:pop} and Lem.~\ref{lem:collapse:sn:cr}:
  \mbox{}
  $k \le \doLow{\atrs}{\rootsc} \le \terprd{\atrs}{\rootsc}$. \qed
\end{proof}

\begin{example}
  We illustrate the translation
  and decision of \daob{} productivity by means of Pascal's triangle, Fig.~\ref{fig:pascal}.
  The translation of the stream function symbols
  is $\afam = \{\trmrep{\sdtrnsl{\astrfun}}\}$
  with $\trmrep{\sdtrnsl{\astrfun}} = \netgate{\ios{-}{-+}}$,
  see page~\pageref{transl:Pascal:f}.
  We calculate $\trnslF{\strcf{P}}{\afam}$, the translation of $\strcf{P}$,
  and reduce it to normal form with respect to $\scolred$:
  \begin{align*}
    \trnslF{\strcf{P}}{\afam} &= \netrec{P}{\netpeb{\netpeb{\netbox{\ios{-}{-+}}{P}}}}%
    \mcolred \netrec{P}{\netbox{\ios{++-}{-+}}{P}}
    \colred \netsrc{\conattop}
  \end{align*}
  Hence $\doLow{\atrs}{\strcf{P}} = \conattop$, and $\strcf{P}$
  is \daobly{} productive and therefore productive.
\end{example}

\section{Examples}

Productivity of all of the following examples
is recognized fully automatically
by a Haskell implementation of our decision algorithm for data-oblivious productivity.
The tool and a number of examples can be found at:
\begin{center}
  {\small\texttt{http://infinity.few.vu.nl/productivity}}.
\end{center}
In Subsections~\ref{sec:examples:subsec:pascal}--%
  \ref{sec:examples:subsec:example:traces} 
below, we give possible input-representations as well as the tool-output
for the examples of stream specifications in Section~\ref{sec:class}
and for an example of a stream function specification
in Section~\ref{sec:quantitative}.

\subsection{Example in Fig.~\ref{fig:pascal} 
            on Page~\pageref{fig:pascal}}
  \label{sec:examples:subsec:pascal}            

\noindent
For applying the automated productivity prover to the flat stream
specification in Fig.~\ref{fig:pascal} on page~\ref{fig:pascal}
of the stream of rows in Pascal's triangle, one can use the input:
{\small\begin{itemize}\item []
\begin{verbatim}
Signature(
  P : stream(nat),
  0 : nat,
  f : stream(nat) -> stream(nat),
  a : nat -> nat -> nat,
  s : nat -> nat
)
P = 0:s(0):f(P)
f(s(x):y:sigma) = a(s(x),y):f(y:sigma)
f(0:sigma) = 0:s(0):f(sigma)

a(s(x),y) = s(a(x,y))
a(0,y) = y
\end{verbatim}
\end{itemize}}
\noindent
On this input the automated productivity prover produces the following output:
\begin{shaded}

\noindent The automated productivity prover has been applied to:
{\small\begin{itemize}\item []
\begin{verbatim}
Signature(
  -- stream symbols -- 
  P : stream(nat),
  f : stream(nat) -> stream(nat),

  -- data symbols -- 
  0 : nat,
  a : nat -> nat -> nat,
  s : nat -> nat
)

-- stream layer -- 
P = 0:s(0):f(P)
f(s(x):y:sigma) = a(s(x),y):f(y:sigma)
f(0:sigma) = 0:s(0):f(sigma)

-- data layer -- 
a(s(x),y) = s(a(x,y))
a(0,y) = y
\end{verbatim}
\end{itemize}}
\vspace{1.5ex}\noindent
Termination of the data layer has been proven automatically.\\[1.5ex]
\noindent The function symbol $\msf{f}$ is flat, we can compute the precise data-oblivious lower bound:
\begin{align*}
\trnsl{\msf{f}} &= \netgate{\netbox{\trnsli{\msf{f}}{\argnone,0}}{\netsrc{\numzer}},\;\trnsli{\msf{f}}{1,0}}\\
\trnsli{\msf{f}}{1,0} &= \mu \trnslvar{f}{1}{0}.\wedge \begin{cases}\ioin \ioin \ioout \wedge \begin{cases}\mu \trnslvar{f}{1}{1}.\wedge \begin{cases}\ioin \ioout \wedge \begin{cases}\trnslvar{f}{1}{1}\end{cases}\\\ioout \ioout \wedge \begin{cases}\trnslvar{f}{1}{0}\end{cases}\end{cases}\end{cases}\\\ioin \ioout \ioout \wedge \begin{cases}\trnslvar{f}{1}{0}\end{cases}\end{cases}\\
 &= \ios{-}{-+}\\
\trnsli{\msf{f}}{\argnone,0} &= \mu \trnslvar{f}{\argnone}{0}.\wedge \begin{cases}\ioout \trnslvar{f}{\argnone}{0}\\\ioout \ioout \trnslvar{f}{\argnone}{0}\end{cases}\\
 &= \ios{}{+}\\
\end{align*}

\noindent $\msf{P}$ depends only on flat stream functions, we can \textbf{decide data-oblivious productivity}.

\noindent We translate $\mathsf{P}$ into a pebbleflow net and collapse it to a source:\textcolor{white}{\noteSimplify{}}
\begin{align*}
  \trnsl{\mathsf{P}} &= \netrec{P}{{\netpeb{\netpeb{\funap{\trnsl{\msf{f}}}{P}}}}}\\
                                    &=%
                                      \netrec{P}{\netpeb{\netpeb{\netbox{\iosqcns{\ioin}{\iosqcyc{\iosqcns{\ioin}{\iosqcns{\ioout}{\nix}}}}}{P}}}}\\
  &\mcolred \netrec{P}{\netbox{\iosqcns{\ioout}{\iosqcyc{\iosqcns{\ioin}{\iosqcns{\ioout}{\nix}}}}}{\netpeb{\netbox{\iosqcns{\ioin}{\iosqcyc{\iosqcns{\ioin}{\iosqcns{\ioout}{\nix}}}}}{P}}}}\\
  &\mcolred \netrec{P}{\netbox{\iosqcns{\ioout}{\iosqcyc{\iosqcns{\ioin}{\iosqcns{\ioout}{\nix}}}}}{\netbox{\iosqcns{\ioout}{\iosqcyc{\iosqcns{\ioin}{\iosqcns{\ioout}{\nix}}}}}{\netbox{\iosqcns{\ioin}{\iosqcyc{\iosqcns{\ioin}{\iosqcns{\ioout}{\nix}}}}}{P}}}}\\
  &\mcolred \netrec{P}{\netbox{\iosqcns{\ioout}{\iosqcyc{\iosqcns{\ioout}{\iosqcns{\ioin}{\nix}}}}}{\netbox{\iosqcns{\ioin}{\iosqcyc{\iosqcns{\ioin}{\iosqcns{\ioout}{\nix}}}}}{P}}}\\
  &\mcolred \netrec{P}{\netbox{\iosqcns{\ioout}{\iosqcns{\ioout}{\iosqcns{\ioin}{\iosqcyc{\iosqcns{\ioin}{\iosqcns{\ioout}{\nix}}}}}}}{P}}\\
  &\mcolred \netsrc{\conattop}\end{align*}
The specification of $\msf{P}$ is productive.
\end{shaded}

\subsection{Example in Fig.~\ref{fig:ternary_morse_flat}
            on Page~\pageref{fig:ternary_morse_flat}}

\noindent
For applying the automated productivity prover to the flat stream
specification in Fig.~\ref{fig:ternary_morse_flat} 
on page~\pageref{fig:ternary_morse_flat}
of the ternary Thue-Morse sequence, one can use the input:
{\small\begin{itemize}\item []
\begin{verbatim}
Signature(
  Q, Qprime : stream(char),
  f : stream(char) -> stream(char),
  a, b, c: char
)
Q = a:Qprime
Qprime = b:c:f(Qprime)
f(a:sigma) = a:b:c:f(sigma)
f(b:sigma) = a:c:f(sigma)
f(c:sigma) = b:f(sigma)
\end{verbatim}
\end{itemize}}
\noindent
On this input the automated productivity prover produces the following output:
\begin{shaded}
\noindent The automated productivity prover has been applied to:
{\small\begin{itemize}\item []
\begin{verbatim}
Signature(
  -- stream symbols -- 
  Q : stream(char),
  Qprime : stream(char),
  f : stream(char) -> stream(char),

  -- data symbols -- 
  a : char,
  b : char,
  c : char
)

-- stream layer -- 
Q = a:Qprime
Qprime = b:c:f(Qprime)
f(a:sigma) = a:b:c:f(sigma)
f(b:sigma) = a:c:f(sigma)
f(c:sigma) = b:f(sigma)

-- data layer -- 

\end{verbatim}
\end{itemize}}
\noindent The function symbol $\msf{f}$ is flat, we can compute the precise data-oblivious lower bound:
\begin{align*}
\trnsl{\msf{f}} &= \netgate{\netbox{\trnsli{\msf{f}}{\argnone,0}}{\netsrc{\numzer}},\;\trnsli{\msf{f}}{1,0}}\\
\trnsli{\msf{f}}{1,0} &= \mu \trnslvar{f}{1}{0}.\wedge \begin{cases}\ioin \ioout \ioout \ioout \wedge \begin{cases}\trnslvar{f}{1}{0}\end{cases}\\\ioin \ioout \ioout \wedge \begin{cases}\trnslvar{f}{1}{0}\end{cases}\\\ioin \ioout \wedge \begin{cases}\trnslvar{f}{1}{0}\end{cases}\end{cases}\\
 &= \ios{}{-+}\\
\trnsli{\msf{f}}{\argnone,0} &= \mu \trnslvar{f}{\argnone}{0}.\wedge \begin{cases}\ioout \ioout \ioout \trnslvar{f}{\argnone}{0}\\\ioout \ioout \trnslvar{f}{\argnone}{0}\\\ioout \trnslvar{f}{\argnone}{0}\end{cases}\\
 &= \ios{}{+}\\
\end{align*}

\noindent $\msf{Q}$ depends only on flat stream functions, we can \textbf{decide data-oblivious productivity}.

\noindent We translate $\mathsf{Q}$ into a pebbleflow net and collapse it to a source:\textcolor{white}{\noteSimplify{}}
\begin{align*}
  \trnsl{\mathsf{Q}} &= \netrec{Q}{{\netpeb{\netrec{Qprime}{{\netpeb{\netpeb{\funap{\trnsl{\msf{f}}}{Qprime}}}}}}}}\\
                                    &=\footnotemark[1] \netrec{Q}{\netpeb{\netrec{Qprime}{\netpeb{\netpeb{\netbox{\iosqcyc{\iosqcns{\ioin}{\iosqcns{\ioout}{\nix}}}}{Qprime}}}}}}\\
  &\mcolred \netrec{Q}{\netbox{\iosqcns{\ioout}{\iosqcyc{\iosqcns{\ioin}{\iosqcns{\ioout}{\nix}}}}}{\netrec{Qprime}{\netpeb{\netpeb{\netbox{\iosqcyc{\iosqcns{\ioin}{\iosqcns{\ioout}{\nix}}}}{Qprime}}}}}}\\
  &\mcolred \netrec{Q}{\netbox{\iosqcns{\ioout}{\iosqcyc{\iosqcns{\ioin}{\iosqcns{\ioout}{\nix}}}}}{\netrec{Qprime}{\netbox{\iosqcns{\ioout}{\iosqcyc{\iosqcns{\ioin}{\iosqcns{\ioout}{\nix}}}}}{\netpeb{\netbox{\iosqcyc{\iosqcns{\ioin}{\iosqcns{\ioout}{\nix}}}}{Qprime}}}}}}\\
  &\mcolred \netrec{Q}{\netbox{\iosqcns{\ioout}{\iosqcyc{\iosqcns{\ioin}{\iosqcns{\ioout}{\nix}}}}}{\netrec{Qprime}{\netbox{\iosqcns{\ioout}{\iosqcyc{\iosqcns{\ioin}{\iosqcns{\ioout}{\nix}}}}}{\netbox{\iosqcns{\ioout}{\iosqcyc{\iosqcns{\ioin}{\iosqcns{\ioout}{\nix}}}}}{\netbox{\iosqcyc{\iosqcns{\ioin}{\iosqcns{\ioout}{\nix}}}}{Qprime}}}}}}\\
  &\mcolred \netrec{Q}{\netbox{\iosqcns{\ioout}{\iosqcyc{\iosqcns{\ioin}{\iosqcns{\ioout}{\nix}}}}}{\netrec{Qprime}{\netbox{\iosqcns{\ioout}{\iosqcyc{\iosqcns{\ioout}{\iosqcns{\ioin}{\nix}}}}}{\netbox{\iosqcyc{\iosqcns{\ioin}{\iosqcns{\ioout}{\nix}}}}{Qprime}}}}}\\
  &\mcolred \netrec{Q}{\netbox{\iosqcns{\ioout}{\iosqcyc{\iosqcns{\ioin}{\iosqcns{\ioout}{\nix}}}}}{\netrec{Qprime}{\netbox{\iosqcns{\ioout}{\iosqcyc{\iosqcns{\ioout}{\iosqcns{\ioin}{\nix}}}}}{Qprime}}}}\\
  &\mcolred \netrec{Q}{\netbox{\iosqcns{\ioout}{\iosqcyc{\iosqcns{\ioin}{\iosqcns{\ioout}{\nix}}}}}{\netsrc{\conattop}}}\\
  &\mcolred \netrec{Q}{\netsrc{\conattop}}\\
  &\mcolred \netsrc{\conattop}\end{align*}
The specification of $\msf{Q}$ is productive.\\[1.5ex]

\noindent $\msf{Qprime}$ depends only on flat stream functions, we can \textbf{decide data-oblivious productivity}.

\noindent We translate $\mathsf{Qprime}$ into a pebbleflow net and collapse it to a source:\textcolor{white}{\noteSimplify{}}
\begin{align*}
  \trnsl{\mathsf{Qprime}} &= \netrec{Qprime}{{\netpeb{\netpeb{\funap{\trnsl{\msf{f}}}{Qprime}}}}}\\
                                    &=%
                                      \netrec{Qprime}{\netpeb{\netpeb{\netbox{\iosqcyc{\iosqcns{\ioin}{\iosqcns{\ioout}{\nix}}}}{Qprime}}}}\\
  &\mcolred \netrec{Qprime}{\netbox{\iosqcns{\ioout}{\iosqcyc{\iosqcns{\ioin}{\iosqcns{\ioout}{\nix}}}}}{\netpeb{\netbox{\iosqcyc{\iosqcns{\ioin}{\iosqcns{\ioout}{\nix}}}}{Qprime}}}}\\
  &\mcolred \netrec{Qprime}{\netbox{\iosqcns{\ioout}{\iosqcyc{\iosqcns{\ioin}{\iosqcns{\ioout}{\nix}}}}}{\netbox{\iosqcns{\ioout}{\iosqcyc{\iosqcns{\ioin}{\iosqcns{\ioout}{\nix}}}}}{\netbox{\iosqcyc{\iosqcns{\ioin}{\iosqcns{\ioout}{\nix}}}}{Qprime}}}}\\
  &\mcolred \netrec{Qprime}{\netbox{\iosqcns{\ioout}{\iosqcyc{\iosqcns{\ioout}{\iosqcns{\ioin}{\nix}}}}}{\netbox{\iosqcyc{\iosqcns{\ioin}{\iosqcns{\ioout}{\nix}}}}{Qprime}}}\\
  &\mcolred \netrec{Qprime}{\netbox{\iosqcns{\ioout}{\iosqcyc{\iosqcns{\ioout}{\iosqcns{\ioin}{\nix}}}}}{Qprime}}\\
  &\mcolred \netsrc{\conattop}\end{align*}
The specification of $\msf{Qprime}$ is productive.
\end{shaded}

\subsection{Example in Fig.~\ref{fig:ternary_morse_pure}
            on Page~\pageref{fig:ternary_morse_pure}}
\noindent
For applying the automated productivity prover to the pure stream
specification in Fig.~\ref{fig:ternary_morse_pure} 
on page~\pageref{fig:ternary_morse_pure}
of the ternary Thue-Morse sequence, one can use the input:
{\small\begin{itemize}\item []
\begin{verbatim}
Signature(
  Q : stream(char),
  M : stream(bit),
  zip : stream(x) -> stream(x) -> stream(x),
  inv : stream(bit) -> stream(bit),
  tail : stream(x) -> stream(x),
  diff : stream(bit) -> stream(char),
  i : bit -> bit,
  X : bit -> bit -> char,
  0, 1 : bit, 
  a,b,c : char
)
Q = diff(M)
M = 0:zip(inv(M),tail(M))
zip(x:s,t) = x:zip(t,s)
inv(x:s) = i(x):inv(s)
tail(x:s) = s
diff(x:y:s) = X(x,y):diff(y:s)

i(0) = 1
i(1) = 0
X(0,0) = b
X(0,1) = a
X(1,0) = c
X(1,1) = b
\end{verbatim}
\end{itemize}}
\noindent
The automated productivity prover then gives the following output:
\begin{shaded}
\noindent The automated productivity prover has been applied to:
{\small\begin{itemize}\item []
\begin{verbatim}
Signature(
  -- stream symbols -- 
  Q : stream(char),
  M : stream(bit),
  zip : stream(x) -> stream(x) -> stream(x),
  inv : stream(bit) -> stream(bit),
  tail : stream(x) -> stream(x),
  diff : stream(bit) -> stream(char),

  -- data symbols -- 
  i : bit -> bit,
  X : bit -> bit -> char,
  0 : bit,
  1 : bit,
  a : char,
  b : char,
  c : char
)

-- stream layer -- 
Q = diff(M)
M = 0:zip(inv(M),tail(M))
zip(x:s,t) = x:zip(t,s)
inv(x:s) = i(x):inv(s)
tail(x:s) = s
diff(x:y:s) = X(x,y):diff(y:s)

-- data layer -- 
i(0) = 1
i(1) = 0
X(0,0) = b
X(0,1) = a
X(1,0) = c
X(1,1) = b
\end{verbatim}
\end{itemize}}
\vspace{1.5ex}\noindent
Termination of the data layer has been proven automatically.\\[1.5ex]
\noindent The function symbol $\msf{zip}$ is pure, we can compute its precise production modulus:
\begin{align*}
\trnsl{\msf{zip}} &= \netgate{\netbox{\trnsli{\msf{zip}}{\argnone,0}}{\netsrc{\numzer}},\;\trnsli{\msf{zip}}{1,0},\;\trnsli{\msf{zip}}{2,0}}\\
\trnsli{\msf{zip}}{1,0} &= \mu \trnslvar{zip}{1}{0}.\wedge \begin{cases}\ioin \ioout \wedge \begin{cases}\mu \trnslvar{zip}{2}{0}.\wedge \begin{cases}\ioout \wedge \begin{cases}\trnslvar{zip}{1}{0}\end{cases}\end{cases}\end{cases}\end{cases}\\
 &= \ios{}{-++}\\
\trnsli{\msf{zip}}{2,0} &= \mu \trnslvar{zip}{2}{0}.\wedge \begin{cases}\ioout \wedge \begin{cases}\mu \trnslvar{zip}{1}{0}.\wedge \begin{cases}\ioin \ioout \wedge \begin{cases}\trnslvar{zip}{2}{0}\end{cases}\end{cases}\end{cases}\end{cases}\\
 &= \ios{}{+-+}\\
\trnsli{\msf{zip}}{\argnone,0} &= \mu \trnslvar{zip}{\argnone}{0}.\wedge \begin{cases}\ioout \trnslvar{zip}{\argnone}{0}\end{cases}\\
 &= \ios{}{+}\\
\end{align*}
\noindent The function symbol $\msf{inv}$ is pure, we can compute its precise production modulus:
\begin{align*}
\trnsl{\msf{inv}} &= \netgate{\netbox{\trnsli{\msf{inv}}{\argnone,0}}{\netsrc{\numzer}},\;\trnsli{\msf{inv}}{1,0}}\\
\trnsli{\msf{inv}}{1,0} &= \mu \trnslvar{inv}{1}{0}.\wedge \begin{cases}\ioin \ioout \wedge \begin{cases}\trnslvar{inv}{1}{0}\end{cases}\end{cases}\\
 &= \ios{}{-+}\\
\trnsli{\msf{inv}}{\argnone,0} &= \mu \trnslvar{inv}{\argnone}{0}.\wedge \begin{cases}\ioout \trnslvar{inv}{\argnone}{0}\end{cases}\\
 &= \ios{}{+}\\
\end{align*}
\noindent The function symbol $\msf{tail}$ is pure, we can compute its precise production modulus:
\begin{align*}
\trnsl{\msf{tail}} &= \netgate{\netbox{\trnsli{\msf{tail}}{\argnone,0}}{\netsrc{\numzer}},\;\trnsli{\msf{tail}}{1,0}}\\
\trnsli{\msf{tail}}{1,0} &= \mu \trnslvar{tail}{1}{0}.\wedge \begin{cases}\ioin \mu x.\ioin \ioout x\end{cases}\\
 &= \ios{-}{-+}\\
\trnsli{\msf{tail}}{\argnone,0} &= \mu \trnslvar{tail}{\argnone}{0}.\wedge \begin{cases}\mu x.\ioout x\end{cases}\\
 &= \ios{}{+}\\
\end{align*}
\noindent The function symbol $\msf{diff}$ is pure, we can compute its precise production modulus:
\begin{align*}
\trnsl{\msf{diff}} &= \netgate{\netbox{\trnsli{\msf{diff}}{\argnone,0}}{\netsrc{\numzer}},\;\trnsli{\msf{diff}}{1,0}}\\
\trnsli{\msf{diff}}{1,0} &= \mu \trnslvar{diff}{1}{0}.\wedge \begin{cases}\ioin \ioin \ioout \wedge \begin{cases}\mu \trnslvar{diff}{1}{1}.\wedge \begin{cases}\ioin \ioout \wedge \begin{cases}\trnslvar{diff}{1}{1}\end{cases}\end{cases}\end{cases}\end{cases}\\
 &= \ios{-}{-+}\\
\trnsli{\msf{diff}}{\argnone,0} &= \mu \trnslvar{diff}{\argnone}{0}.\wedge \begin{cases}\ioout \trnslvar{diff}{\argnone}{0}\end{cases}\\
 &= \ios{}{+}\\
\end{align*}

\noindent $\msf{Q}$ depends only on pure stream functions, we can \textbf{decide productivity}.

\noindent We translate $\mathsf{Q}$ into a pebbleflow net and collapse it to a source:\textcolor{white}{\noteSimplify{}}
\begin{align*}
  \trnsl{\mathsf{Q}} &= \netrec{Q}{{\funap{\trnsl{\msf{diff}}}{\netrec{M}{{\netpeb{\funap{\trnsl{\msf{zip}}}{\funap{\trnsl{\msf{inv}}}{M},\; \funap{\trnsl{\msf{tail}}}{M}}}}}}}}\\
                                    &=%
                                      \netrec{Q}{\netbox{\iosqcns{\ioin}{\iosqcyc{\iosqcns{\ioin}{\iosqcns{\ioout}{\nix}}}}}{\netrec{M}{\netpeb{\netmeet{\netbox{\iosqcyc{\iosqcns{\ioin}{\iosqcns{\ioout}{\iosqcns{\ioout}{\nix}}}}}{\netbox{\iosqcyc{\iosqcns{\ioin}{\iosqcns{\ioout}{\nix}}}}{M}}}{\netbox{\iosqcyc{\iosqcns{\ioout}{\iosqcns{\ioin}{\iosqcns{\ioout}{\nix}}}}}{\netbox{\iosqcns{\ioin}{\iosqcyc{\iosqcns{\ioin}{\iosqcns{\ioout}{\nix}}}}}{M}}}}}}}\\
  &\mcolred \netrec{Q}{\netbox{\iosqcns{\ioin}{\iosqcyc{\iosqcns{\ioin}{\iosqcns{\ioout}{\nix}}}}}{\netrec{M}{\netpeb{\netmeet{\netbox{\iosqcyc{\iosqcns{\ioin}{\iosqcns{\ioout}{\iosqcns{\ioout}{\nix}}}}}{M}}{\netbox{\iosqcns{\ioout}{\iosqcns{\ioin}{\iosqcyc{\iosqcns{\ioin}{\iosqcns{\ioout}{\iosqcns{\ioout}{\nix}}}}}}}{M}}}}}}\\
  &\mcolred \netrec{Q}{\netbox{\iosqcns{\ioin}{\iosqcyc{\iosqcns{\ioin}{\iosqcns{\ioout}{\nix}}}}}{\netrec{M}{\netbox{\iosqcns{\ioout}{\iosqcyc{\iosqcns{\ioin}{\iosqcns{\ioout}{\nix}}}}}{\netmeet{\netbox{\iosqcyc{\iosqcns{\ioin}{\iosqcns{\ioout}{\iosqcns{\ioout}{\nix}}}}}{M}}{\netbox{\iosqcns{\ioout}{\iosqcns{\ioin}{\iosqcyc{\iosqcns{\ioin}{\iosqcns{\ioout}{\iosqcns{\ioout}{\nix}}}}}}}{M}}}}}}\\
  &\mcolred \netrec{Q}{\netbox{\iosqcns{\ioin}{\iosqcyc{\iosqcns{\ioin}{\iosqcns{\ioout}{\nix}}}}}{\netrec{M}{\netmeet{\netbox{\iosqcns{\ioout}{\iosqcyc{\iosqcns{\ioin}{\iosqcns{\ioout}{\nix}}}}}{\netbox{\iosqcyc{\iosqcns{\ioin}{\iosqcns{\ioout}{\iosqcns{\ioout}{\nix}}}}}{M}}}{\netbox{\iosqcns{\ioout}{\iosqcyc{\iosqcns{\ioin}{\iosqcns{\ioout}{\nix}}}}}{\netbox{\iosqcns{\ioout}{\iosqcns{\ioin}{\iosqcyc{\iosqcns{\ioin}{\iosqcns{\ioout}{\iosqcns{\ioout}{\nix}}}}}}}{M}}}}}}\\
  &\mcolred \netrec{Q}{\netbox{\iosqcns{\ioin}{\iosqcyc{\iosqcns{\ioin}{\iosqcns{\ioout}{\nix}}}}}{\netrec{M}{\netmeet{\netbox{\iosqcyc{\iosqcns{\ioout}{\iosqcns{\ioin}{\iosqcns{\ioout}{\nix}}}}}{M}}{\netbox{\iosqcns{\ioout}{\iosqcns{\ioout}{\iosqcns{\ioin}{\iosqcyc{\iosqcns{\ioin}{\iosqcns{\ioout}{\iosqcns{\ioout}{\nix}}}}}}}}{M}}}}}\\
  &\mcolred \netrec{Q}{\netbox{\iosqcns{\ioin}{\iosqcyc{\iosqcns{\ioin}{\iosqcns{\ioout}{\nix}}}}}{\netmeet{\netrec{M}{\netbox{\iosqcyc{\iosqcns{\ioout}{\iosqcns{\ioin}{\iosqcns{\ioout}{\nix}}}}}{M}}}{\netrec{M}{\netbox{\iosqcns{\ioout}{\iosqcns{\ioout}{\iosqcns{\ioin}{\iosqcyc{\iosqcns{\ioin}{\iosqcns{\ioout}{\iosqcns{\ioout}{\nix}}}}}}}}{M}}}}}\\
  &\mcolred \netrec{Q}{\netmeet{\netbox{\iosqcns{\ioin}{\iosqcyc{\iosqcns{\ioin}{\iosqcns{\ioout}{\nix}}}}}{\netrec{M}{\netbox{\iosqcyc{\iosqcns{\ioout}{\iosqcns{\ioin}{\iosqcns{\ioout}{\nix}}}}}{M}}}}{\netbox{\iosqcns{\ioin}{\iosqcyc{\iosqcns{\ioin}{\iosqcns{\ioout}{\nix}}}}}{\netrec{M}{\netbox{\iosqcns{\ioout}{\iosqcns{\ioout}{\iosqcns{\ioin}{\iosqcyc{\iosqcns{\ioin}{\iosqcns{\ioout}{\iosqcns{\ioout}{\nix}}}}}}}}{M}}}}}\\
  &\mcolred \netmeet{\netrec{Q}{\netbox{\iosqcns{\ioin}{\iosqcyc{\iosqcns{\ioin}{\iosqcns{\ioout}{\nix}}}}}{\netrec{M}{\netbox{\iosqcyc{\iosqcns{\ioout}{\iosqcns{\ioin}{\iosqcns{\ioout}{\nix}}}}}{M}}}}}{\netrec{Q}{\netbox{\iosqcns{\ioin}{\iosqcyc{\iosqcns{\ioin}{\iosqcns{\ioout}{\nix}}}}}{\netrec{M}{\netbox{\iosqcns{\ioout}{\iosqcns{\ioout}{\iosqcns{\ioin}{\iosqcyc{\iosqcns{\ioin}{\iosqcns{\ioout}{\iosqcns{\ioout}{\nix}}}}}}}}{M}}}}}\\
  &\mcolred \netmeet{\netrec{Q}{\netbox{\iosqcns{\ioin}{\iosqcyc{\iosqcns{\ioin}{\iosqcns{\ioout}{\nix}}}}}{\netsrc{\conattop}}}}{\netrec{Q}{\netbox{\iosqcns{\ioin}{\iosqcyc{\iosqcns{\ioin}{\iosqcns{\ioout}{\nix}}}}}{\netsrc{\conattop}}}}\\
  &\mcolred \netmeet{\netrec{Q}{\netsrc{\conattop}}}{\netrec{Q}{\netsrc{\conattop}}}\\
  &\mcolred \netmeet{\netsrc{\conattop}}{\netsrc{\conattop}}\\
  &\mcolred \netsrc{\conattop}\end{align*}
The specification of $\msf{Q}$ is productive.\\[1.5ex]

\noindent $\msf{M}$ depends only on pure stream functions, we can \textbf{decide productivity}.

\noindent We translate $\mathsf{M}$ into a pebbleflow net and collapse it to a source:\textcolor{white}{\noteSimplify{}}
\begin{align*}
  \trnsl{\mathsf{M}} &= \netrec{M}{{\netpeb{\funap{\trnsl{\msf{zip}}}{\funap{\trnsl{\msf{inv}}}{M},\; \funap{\trnsl{\msf{tail}}}{M}}}}}\\
                                    &=%
                                      \netrec{M}{\netpeb{\netmeet{\netbox{\iosqcyc{\iosqcns{\ioin}{\iosqcns{\ioout}{\iosqcns{\ioout}{\nix}}}}}{\netbox{\iosqcyc{\iosqcns{\ioin}{\iosqcns{\ioout}{\nix}}}}{M}}}{\netbox{\iosqcyc{\iosqcns{\ioout}{\iosqcns{\ioin}{\iosqcns{\ioout}{\nix}}}}}{\netbox{\iosqcns{\ioin}{\iosqcyc{\iosqcns{\ioin}{\iosqcns{\ioout}{\nix}}}}}{M}}}}}\\
  &\mcolred \netrec{M}{\netpeb{\netmeet{\netbox{\iosqcyc{\iosqcns{\ioin}{\iosqcns{\ioout}{\iosqcns{\ioout}{\nix}}}}}{M}}{\netbox{\iosqcns{\ioout}{\iosqcns{\ioin}{\iosqcyc{\iosqcns{\ioin}{\iosqcns{\ioout}{\iosqcns{\ioout}{\nix}}}}}}}{M}}}}\\
  &\mcolred \netrec{M}{\netbox{\iosqcns{\ioout}{\iosqcyc{\iosqcns{\ioin}{\iosqcns{\ioout}{\nix}}}}}{\netmeet{\netbox{\iosqcyc{\iosqcns{\ioin}{\iosqcns{\ioout}{\iosqcns{\ioout}{\nix}}}}}{M}}{\netbox{\iosqcns{\ioout}{\iosqcns{\ioin}{\iosqcyc{\iosqcns{\ioin}{\iosqcns{\ioout}{\iosqcns{\ioout}{\nix}}}}}}}{M}}}}\\
  &\mcolred \netrec{M}{\netmeet{\netbox{\iosqcns{\ioout}{\iosqcyc{\iosqcns{\ioin}{\iosqcns{\ioout}{\nix}}}}}{\netbox{\iosqcyc{\iosqcns{\ioin}{\iosqcns{\ioout}{\iosqcns{\ioout}{\nix}}}}}{M}}}{\netbox{\iosqcns{\ioout}{\iosqcyc{\iosqcns{\ioin}{\iosqcns{\ioout}{\nix}}}}}{\netbox{\iosqcns{\ioout}{\iosqcns{\ioin}{\iosqcyc{\iosqcns{\ioin}{\iosqcns{\ioout}{\iosqcns{\ioout}{\nix}}}}}}}{M}}}}\\
  &\mcolred \netrec{M}{\netmeet{\netbox{\iosqcyc{\iosqcns{\ioout}{\iosqcns{\ioin}{\iosqcns{\ioout}{\nix}}}}}{M}}{\netbox{\iosqcns{\ioout}{\iosqcns{\ioout}{\iosqcns{\ioin}{\iosqcyc{\iosqcns{\ioin}{\iosqcns{\ioout}{\iosqcns{\ioout}{\nix}}}}}}}}{M}}}\\
  &\mcolred \netmeet{\netrec{M}{\netbox{\iosqcyc{\iosqcns{\ioout}{\iosqcns{\ioin}{\iosqcns{\ioout}{\nix}}}}}{M}}}{\netrec{M}{\netbox{\iosqcns{\ioout}{\iosqcns{\ioout}{\iosqcns{\ioin}{\iosqcyc{\iosqcns{\ioin}{\iosqcns{\ioout}{\iosqcns{\ioout}{\nix}}}}}}}}{M}}}\\
  &\mcolred \netmeet{\netsrc{\conattop}}{\netsrc{\conattop}}\\
  &\mcolred \netsrc{\conattop}\end{align*}
The specification of $\msf{M}$ is productive. 
\end{shaded}

\subsection{Example in Fig.~\ref{fig:morse_dol}
            on Page~\pageref{fig:morse_dol}}

For applying the automated productivity prover to the pure stream
specification in Fig.~\ref{fig:morse_dol} 
on page~\pageref{fig:morse_dol}
of the Thue--Morse sequence using the D0L-system
$ 0 \mapsto 01 $, $1 \mapsto 10$, one can use the input:
{\small\begin{itemize}\item []
\begin{verbatim}
Signature(
  M : stream(bit),
  Mprime : stream(bit),
  h : stream(bit) -> stream(bit),
  0, 1 : bit
)
M = 0:Mprime
Mprime = 1:h(Mprime)
h(0:sigma) = 0:1:h(sigma)
h(1:sigma) = 1:0:h(sigma)
\end{verbatim}
\end{itemize}}
\noindent The automated productivity prover then gives the following output:
\begin{shaded}
\noindent The automated productivity prover has been applied to:
{\small\begin{itemize}\item []
\begin{verbatim}
Signature(
  -- stream symbols -- 
  M : stream(bit),
  Mprime : stream(bit),
  h : stream(bit) -> stream(bit),

  -- data symbols -- 
  0 : bit,
  1 : bit
)

-- stream layer -- 
M = 0:Mprime
Mprime = 1:h(Mprime)
h(0:sigma) = 0:1:h(sigma)
h(1:sigma) = 1:0:h(sigma)

-- data layer -- 

\end{verbatim}
\end{itemize}}
\noindent The function symbol $\msf{h}$ is pure, we can compute its precise production modulus:
\begin{align*}
\trnsl{\msf{h}} &= \netgate{\netbox{\trnsli{\msf{h}}{\argnone,0}}{\netsrc{\numzer}},\;\trnsli{\msf{h}}{1,0}}\\
\trnsli{\msf{h}}{1,0} &= \mu \trnslvar{h}{1}{0}.\wedge \begin{cases}\ioin \ioout \ioout \wedge \begin{cases}\trnslvar{h}{1}{0}\end{cases}\\\ioin \ioout \ioout \wedge \begin{cases}\trnslvar{h}{1}{0}\end{cases}\end{cases}\\
 &= \ios{}{-++}\\
\trnsli{\msf{h}}{\argnone,0} &= \mu \trnslvar{h}{\argnone}{0}.\wedge \begin{cases}\ioout \ioout \trnslvar{h}{\argnone}{0}\\\ioout \ioout \trnslvar{h}{\argnone}{0}\end{cases}\\
 &= \ios{}{+}\\
\end{align*}

\noindent $\msf{M}$ depends only on pure stream functions, we can \textbf{decide productivity}.

\noindent We translate $\mathsf{M}$ into a pebbleflow net and collapse it to a source:\textcolor{white}{\noteSimplify{}}
\begin{align*}
  \trnsl{\mathsf{M}} &= \netrec{M}{{\netpeb{\netrec{Mprime}{{\netpeb{\funap{\trnsl{\msf{h}}}{Mprime}}}}}}}\\
                                    &=%
                                      \netrec{M}{\netpeb{\netrec{Mprime}{\netpeb{\netbox{\iosqcyc{\iosqcns{\ioin}{\iosqcns{\ioout}{\iosqcns{\ioout}{\nix}}}}}{Mprime}}}}}\\
  &\mcolred \netrec{M}{\netbox{\iosqcns{\ioout}{\iosqcyc{\iosqcns{\ioin}{\iosqcns{\ioout}{\nix}}}}}{\netrec{Mprime}{\netpeb{\netbox{\iosqcyc{\iosqcns{\ioin}{\iosqcns{\ioout}{\iosqcns{\ioout}{\nix}}}}}{Mprime}}}}}\\
  &\mcolred \netrec{M}{\netbox{\iosqcns{\ioout}{\iosqcyc{\iosqcns{\ioin}{\iosqcns{\ioout}{\nix}}}}}{\netrec{Mprime}{\netbox{\iosqcns{\ioout}{\iosqcyc{\iosqcns{\ioin}{\iosqcns{\ioout}{\nix}}}}}{\netbox{\iosqcyc{\iosqcns{\ioin}{\iosqcns{\ioout}{\iosqcns{\ioout}{\nix}}}}}{Mprime}}}}}\\
  &\mcolred \netrec{M}{\netbox{\iosqcns{\ioout}{\iosqcyc{\iosqcns{\ioin}{\iosqcns{\ioout}{\nix}}}}}{\netrec{Mprime}{\netbox{\iosqcyc{\iosqcns{\ioout}{\iosqcns{\ioin}{\iosqcns{\ioout}{\nix}}}}}{Mprime}}}}\\
  &\mcolred \netrec{M}{\netbox{\iosqcns{\ioout}{\iosqcyc{\iosqcns{\ioin}{\iosqcns{\ioout}{\nix}}}}}{\netsrc{\conattop}}}\\
  &\mcolred \netrec{M}{\netsrc{\conattop}}\\
  &\mcolred \netsrc{\conattop}\end{align*}
The specification of $\msf{M}$ is productive.\\[1.5ex]

\noindent $\msf{Mprime}$ depends only on pure stream functions, we can \textbf{decide productivity}.

\noindent We translate $\mathsf{Mprime}$ into a pebbleflow net and collapse it to a source:\textcolor{white}{\noteSimplify{}}
\begin{align*}
  \trnsl{\mathsf{Mprime}} &= \netrec{Mprime}{{\netpeb{\funap{\trnsl{\msf{h}}}{Mprime}}}}\\
                                    &=%
                                      \netrec{Mprime}{\netpeb{\netbox{\iosqcyc{\iosqcns{\ioin}{\iosqcns{\ioout}{\iosqcns{\ioout}{\nix}}}}}{Mprime}}}\\
  &\mcolred \netrec{Mprime}{\netbox{\iosqcns{\ioout}{\iosqcyc{\iosqcns{\ioin}{\iosqcns{\ioout}{\nix}}}}}{\netbox{\iosqcyc{\iosqcns{\ioin}{\iosqcns{\ioout}{\iosqcns{\ioout}{\nix}}}}}{Mprime}}}\\
  &\mcolred \netrec{Mprime}{\netbox{\iosqcyc{\iosqcns{\ioout}{\iosqcns{\ioin}{\iosqcns{\ioout}{\nix}}}}}{Mprime}}\\
  &\mcolred \netsrc{\conattop}\end{align*}
The specification of $\msf{Mprime}$ is productive.
\end{shaded}

\subsection{Example in Fig.~\ref{fig:convolution} 
            on Page~\pageref{fig:convolution}}

For applying the automated productivity prover to the pure stream
specification in Fig.~\ref{fig:convolution} 
on page~\pageref{fig:convolution}
of the Thue--Morse sequence using the D0L-system
$ 0 \mapsto 01 $, $1 \mapsto 10$, one can use the input:
{\small\begin{itemize}\item []
\begin{verbatim}
Signature(
  nats, ones : stream(nat),
  conv, add : stream(nat) -> stream(nat) -> stream(nat),
  times : stream(nat) -> nat -> stream(nat),
  0 : nat,
  s : nat -> nat,
  a, m : nat -> nat -> nat
)
nats = 0:conv(ones,ones)
ones = s(0):ones
conv(x:sigma,y:tau) = m(x,y):add(times(tau,x),conv(sigma,y:tau))
times(x:sigma,y) = m(x,y):times(sigma,y)
add(x:sigma,y:tau) = a(x,y):add(sigma,tau)

a(0,y) = y
a(s(x),y) = s(a(x,y))
m(0,y) = 0
m(s(x),y) = a(y,m(x,y))
\end{verbatim}
\end{itemize}}
\noindent The automated productivity gives as output:
\begin{shaded}
\noindent The automated productivity prover has been applied to:
{\small\begin{itemize}\item []
\begin{verbatim}
Signature(
  -- stream symbols -- 
  nats : stream(nat),
  ones : stream(nat),
  conv : stream(nat) -> stream(nat) -> stream(nat),
  add : stream(nat) -> stream(nat) -> stream(nat),
  times : stream(nat) -> nat -> stream(nat),

  -- data symbols -- 
  0 : nat,
  s : nat -> nat,
  a : nat -> nat -> nat,
  m : nat -> nat -> nat
)

-- stream layer -- 
nats = 0:conv(ones,ones)
ones = s(0):ones
conv(x:sigma,y:tau) = m(x,y):add(times(tau,x),conv(sigma,y:tau))
times(x:sigma,y) = m(x,y):times(sigma,y)
add(x:sigma,y:tau) = a(x,y):add(sigma,tau)

-- data layer -- 
a(0,y) = y
a(s(x),y) = s(a(x,y))
m(0,y) = 0
m(s(x),y) = a(y,m(x,y))
\end{verbatim}
\end{itemize}}
\vspace{1.5ex}\noindent
Termination of the data layer has been proven automatically.\\[1.5ex]
\noindent The function symbol $\msf{conv}$ is friendly nesting, we can compute a data-oblivious lower bound:
\begin{align*}
\trnsl{\msf{conv}} &= \netgate{\netbox{\trnsli{\msf{conv}}{\argnone,0}}{\netsrc{\numzer}},\;\trnsli{\msf{conv}}{1,0},\;\trnsli{\msf{conv}}{2,0}}\\
\trnsli{\msf{conv}}{1,0} &= \mu \trnslvar{conv}{1}{0}.\wedge \begin{cases}\mu x.\ioin \ioout x\end{cases}\\
 &= \ios{}{-+}\\
\trnsli{\msf{conv}}{2,0} &= \mu \trnslvar{conv}{2}{0}.\wedge \begin{cases}\mu x.\ioin \ioout x\end{cases}\\
 &= \ios{}{-+}\\
\trnsli{\msf{conv}}{\argnone,0} &= \mu \trnslvar{conv}{\argnone}{0}.\wedge \begin{cases}\mu x.\ioin \ioout x\end{cases}\\
 &= \ios{}{-+}\\
\end{align*}
\noindent The function symbol $\msf{add}$ is pure, we can compute its precise production modulus:
\begin{align*}
\trnsl{\msf{add}} &= \netgate{\netbox{\trnsli{\msf{add}}{\argnone,0}}{\netsrc{\numzer}},\;\trnsli{\msf{add}}{1,0},\;\trnsli{\msf{add}}{2,0}}\\
\trnsli{\msf{add}}{1,0} &= \mu \trnslvar{add}{1}{0}.\wedge \begin{cases}\ioin \ioout \wedge \begin{cases}\trnslvar{add}{1}{0}\end{cases}\end{cases}\\
 &= \ios{}{-+}\\
\trnsli{\msf{add}}{2,0} &= \mu \trnslvar{add}{2}{0}.\wedge \begin{cases}\ioin \ioout \wedge \begin{cases}\trnslvar{add}{2}{0}\end{cases}\end{cases}\\
 &= \ios{}{-+}\\
\trnsli{\msf{add}}{\argnone,0} &= \mu \trnslvar{add}{\argnone}{0}.\wedge \begin{cases}\ioout \trnslvar{add}{\argnone}{0}\end{cases}\\
 &= \ios{}{+}\\
\end{align*}
\noindent The function symbol $\msf{times}$ is pure, we can compute its precise production modulus:
\begin{align*}
\trnsl{\msf{times}} &= \netgate{\netbox{\trnsli{\msf{times}}{\argnone,0}}{\netsrc{\numzer}},\;\trnsli{\msf{times}}{1,0}}\\
\trnsli{\msf{times}}{1,0} &= \mu \trnslvar{times}{1}{0}.\wedge \begin{cases}\ioin \ioout \wedge \begin{cases}\trnslvar{times}{1}{0}\end{cases}\end{cases}\\
 &= \ios{}{-+}\\
\trnsli{\msf{times}}{\argnone,0} &= \mu \trnslvar{times}{\argnone}{0}.\wedge \begin{cases}\ioout \trnslvar{times}{\argnone}{0}\end{cases}\\
 &= \ios{}{+}\\
\end{align*}

\noindent $\msf{nats}$ depends only on flat stream functions, we \textbf{try to prove productivity}.

\noindent We translate $\mathsf{nats}$ into a pebbleflow net and collapse it to a source:\textcolor{white}{\noteSimplify{}}
\begin{align*}
  \trnsl{\mathsf{nats}} &= \netrec{nats}{{\netpeb{\funap{\trnsl{\msf{conv}}}{\netrec{ones}{{\netpeb{ones}}},\; \netrec{ones}{{\netpeb{ones}}}}}}}\\
                                    &=%
                                      \netrec{nats}{\netpeb{\netmeet{\netbox{\iosqcyc{\iosqcns{\ioin}{\iosqcns{\ioout}{\nix}}}}{\netsrc{0}}}{\netmeet{\netbox{\iosqcyc{\iosqcns{\ioin}{\iosqcns{\ioout}{\nix}}}}{\netrec{ones}{\netpeb{ones}}}}{\netbox{\iosqcyc{\iosqcns{\ioin}{\iosqcns{\ioout}{\nix}}}}{\netrec{ones}{\netpeb{ones}}}}}}}\\
  &\mcolred \netrec{nats}{\netbox{\iosqcns{\ioout}{\iosqcyc{\iosqcns{\ioin}{\iosqcns{\ioout}{\nix}}}}}{\netmeet{\netbox{\iosqcyc{\iosqcns{\ioin}{\iosqcns{\ioout}{\nix}}}}{\netsrc{0}}}{\netmeet{\netbox{\iosqcyc{\iosqcns{\ioin}{\iosqcns{\ioout}{\nix}}}}{\netrec{ones}{\netpeb{ones}}}}{\netbox{\iosqcyc{\iosqcns{\ioin}{\iosqcns{\ioout}{\nix}}}}{\netrec{ones}{\netpeb{ones}}}}}}}\\
  &\mcolred \netrec{nats}{\netmeet{\netbox{\iosqcns{\ioout}{\iosqcyc{\iosqcns{\ioin}{\iosqcns{\ioout}{\nix}}}}}{\netbox{\iosqcyc{\iosqcns{\ioin}{\iosqcns{\ioout}{\nix}}}}{\netsrc{0}}}}{\netbox{\iosqcns{\ioout}{\iosqcyc{\iosqcns{\ioin}{\iosqcns{\ioout}{\nix}}}}}{\netmeet{\netbox{\iosqcyc{\iosqcns{\ioin}{\iosqcns{\ioout}{\nix}}}}{\netrec{ones}{\netpeb{ones}}}}{\netbox{\iosqcyc{\iosqcns{\ioin}{\iosqcns{\ioout}{\nix}}}}{\netrec{ones}{\netpeb{ones}}}}}}}\\
  &\mcolred \netrec{nats}{\netmeet{\netbox{\iosqcyc{\iosqcns{\ioout}{\iosqcns{\ioin}{\nix}}}}{\netsrc{0}}}{\netbox{\iosqcns{\ioout}{\iosqcyc{\iosqcns{\ioin}{\iosqcns{\ioout}{\nix}}}}}{\netmeet{\netbox{\iosqcyc{\iosqcns{\ioin}{\iosqcns{\ioout}{\nix}}}}{\netrec{ones}{\netpeb{ones}}}}{\netbox{\iosqcyc{\iosqcns{\ioin}{\iosqcns{\ioout}{\nix}}}}{\netrec{ones}{\netpeb{ones}}}}}}}\\
  &\mcolred \netmeet{\netrec{nats}{\netbox{\iosqcyc{\iosqcns{\ioout}{\iosqcns{\ioin}{\nix}}}}{\netsrc{0}}}}{\netrec{nats}{\netbox{\iosqcns{\ioout}{\iosqcyc{\iosqcns{\ioin}{\iosqcns{\ioout}{\nix}}}}}{\netmeet{\netbox{\iosqcyc{\iosqcns{\ioin}{\iosqcns{\ioout}{\nix}}}}{\netrec{ones}{\netpeb{ones}}}}{\netbox{\iosqcyc{\iosqcns{\ioin}{\iosqcns{\ioout}{\nix}}}}{\netrec{ones}{\netpeb{ones}}}}}}}\\
  &\mcolred \netmeet{\netrec{nats}{\netsrc{1}}}{\netrec{nats}{\netmeet{\netbox{\iosqcns{\ioout}{\iosqcyc{\iosqcns{\ioin}{\iosqcns{\ioout}{\nix}}}}}{\netbox{\iosqcyc{\iosqcns{\ioin}{\iosqcns{\ioout}{\nix}}}}{\netrec{ones}{\netpeb{ones}}}}}{\netbox{\iosqcns{\ioout}{\iosqcyc{\iosqcns{\ioin}{\iosqcns{\ioout}{\nix}}}}}{\netbox{\iosqcyc{\iosqcns{\ioin}{\iosqcns{\ioout}{\nix}}}}{\netrec{ones}{\netpeb{ones}}}}}}}\\
  &\mcolred \netmeet{\netrec{nats}{\netsrc{1}}}{\netrec{nats}{\netmeet{\netbox{\iosqcyc{\iosqcns{\ioout}{\iosqcns{\ioin}{\nix}}}}{\netrec{ones}{\netpeb{ones}}}}{\netbox{\iosqcyc{\iosqcns{\ioout}{\iosqcns{\ioin}{\nix}}}}{\netrec{ones}{\netpeb{ones}}}}}}\\
  &\mcolred \netmeet{\netsrc{1}}{\netmeet{\netrec{nats}{\netbox{\iosqcyc{\iosqcns{\ioout}{\iosqcns{\ioin}{\nix}}}}{\netrec{ones}{\netpeb{ones}}}}}{\netrec{nats}{\netbox{\iosqcyc{\iosqcns{\ioout}{\iosqcns{\ioin}{\nix}}}}{\netrec{ones}{\netpeb{ones}}}}}}\\
  &\mcolred \netmeet{\netsrc{1}}{\netmeet{\netrec{nats}{\netbox{\iosqcyc{\iosqcns{\ioout}{\iosqcns{\ioin}{\nix}}}}{\netrec{ones}{\netbox{\iosqcns{\ioout}{\iosqcyc{\iosqcns{\ioin}{\iosqcns{\ioout}{\nix}}}}}{ones}}}}}{\netrec{nats}{\netbox{\iosqcyc{\iosqcns{\ioout}{\iosqcns{\ioin}{\nix}}}}{\netrec{ones}{\netbox{\iosqcns{\ioout}{\iosqcyc{\iosqcns{\ioin}{\iosqcns{\ioout}{\nix}}}}}{ones}}}}}}\\
  &\mcolred \netmeet{\netsrc{1}}{\netmeet{\netrec{nats}{\netbox{\iosqcyc{\iosqcns{\ioout}{\iosqcns{\ioin}{\nix}}}}{\netsrc{\conattop}}}}{\netrec{nats}{\netbox{\iosqcyc{\iosqcns{\ioout}{\iosqcns{\ioin}{\nix}}}}{\netsrc{\conattop}}}}}\\
  &\mcolred \netmeet{\netsrc{1}}{\netmeet{\netrec{nats}{\netsrc{\conattop}}}{\netrec{nats}{\netsrc{\conattop}}}}\\
  &\mcolred \netmeet{\netsrc{1}}{\netmeet{\netsrc{\conattop}}{\netsrc{\conattop}}}\\
  &\mcolred \netmeet{\netsrc{1}}{\netsrc{\conattop}}\\
  &\mcolred \netsrc{1}\end{align*}
Failed to prove productivity of $\msf{nats}$.\\[1.5ex]

\noindent $\msf{ones}$ depends only on pure stream functions, we can \textbf{decide productivity}.

\noindent We translate $\mathsf{ones}$ into a pebbleflow net and collapse it to a source:\textcolor{white}{\noteSimplify{}}
\begin{align*}
  \trnsl{\mathsf{ones}} &= \netrec{ones}{{\netpeb{ones}}}\\
                                    &=%
                                      \netrec{ones}{\netpeb{ones}}\\
  &\mcolred \netrec{ones}{\netbox{\iosqcns{\ioout}{\iosqcyc{\iosqcns{\ioin}{\iosqcns{\ioout}{\nix}}}}}{ones}}\\
  &\mcolred \netsrc{\conattop}\end{align*}
The specification of $\msf{ones}$ is productive.
\end{shaded}

\subsection{Example on Page~\pageref{example:traces}}
  \label{sec:examples:subsec:example:traces}

Now we consider the stream function specification below Def.~\ref{def:dorange} 
on page~\pageref{example:traces},
which is not a stream specification 
because it does not contain a stream constant as its root.
Still, the automated productivity prover can be used to compute
the precise data-oblivious lower bound of the stream function $\strff{f}$. 
On the input:
{\small\begin{itemize}\item []
\begin{verbatim}
Signature(
  f : stream(bit) -> stream(bit),
  g : stream(bit) -> stream(bit) -> stream(bit),
  0, 1 : bit
)
f(s) = g(s,s)
g(0:y:s,x:t) = 0:0:g(s,t)
g(1:s,x1:x2:x3:x4:t) = 0:0:0:0:0:g(s,t)
\end{verbatim}
\end{itemize}}
\noindent
the productivity prover generates the following output:
\begin{shaded}
\noindent The automated productivity prover has been applied to:
{\small\begin{itemize}\item []
\begin{verbatim}
Signature(
  -- stream symbols -- 
  f : stream(bit) -> stream(bit),
  g : stream(bit) -> stream(bit) -> stream(bit),

  -- data symbols -- 
  0 : bit,
  1 : bit
)

-- stream layer -- 
f(s) = g(s,s)
g(0:y:s,x:t) = 0:0:g(s,t)
g(1:s,x1:x2:x3:x4:t) = 0:0:0:0:0:g(s,t)

-- data layer -- 

\end{verbatim}
\end{itemize}}
\noindent The function symbol $\msf{f}$ is flat, we can compute the precise data-oblivious lower bound:
\begin{align*}
\trnsl{\msf{f}} &= \netgate{\netbox{\trnsli{\msf{f}}{\argnone,0}}{\netsrc{\numzer}},\;\trnsli{\msf{f}}{1,0}}\\
\trnsli{\msf{f}}{1,0} &= \mu \trnslvar{f}{1}{0}.\wedge \begin{cases}\wedge \begin{cases}\mu \trnslvar{g}{1}{0}.\wedge \begin{cases}\ioin \ioin \ioout \ioout \wedge \begin{cases}\trnslvar{g}{1}{0}\end{cases}\\\ioin \ioout \ioout \ioout \ioout \ioout \wedge \begin{cases}\trnslvar{g}{1}{0}\end{cases}\end{cases}\\\mu \trnslvar{g}{2}{0}.\wedge \begin{cases}\ioin \ioout \ioout \wedge \begin{cases}\trnslvar{g}{2}{0}\end{cases}\\\ioin \ioin \ioin \ioin \ioout \ioout \ioout \ioout \ioout \wedge \begin{cases}\trnslvar{g}{2}{0}\end{cases}\end{cases}\end{cases}\end{cases}\\
 &= \ios{----++-++-+--++-+}{-++-}\\
\trnsli{\msf{f}}{\argnone,0} &= \mu \trnslvar{f}{\argnone}{0}.\wedge \begin{cases}\mu \trnslvar{g}{\argnone}{0}.\wedge \begin{cases}\ioout \ioout \trnslvar{g}{\argnone}{0}\\\ioout \ioout \ioout \ioout \ioout \trnslvar{g}{\argnone}{0}\end{cases}\end{cases}\\
 &= \ios{}{+}\\
\end{align*}
\noindent The function symbol $\msf{g}$ is flat, we can compute the precise data-oblivious lower bound:
\begin{align*}
\trnsl{\msf{g}} &= \netgate{\netbox{\trnsli{\msf{g}}{\argnone,0}}{\netsrc{\numzer}},\;\trnsli{\msf{g}}{1,0},\;\trnsli{\msf{g}}{2,0}}\\
\trnsli{\msf{g}}{1,0} &= \mu \trnslvar{g}{1}{0}.\wedge \begin{cases}\ioin \ioin \ioout \ioout \wedge \begin{cases}\trnslvar{g}{1}{0}\end{cases}\\\ioin \ioout \ioout \ioout \ioout \ioout \wedge \begin{cases}\trnslvar{g}{1}{0}\end{cases}\end{cases}\\
 &= \ios{}{--++}\\
\trnsli{\msf{g}}{2,0} &= \mu \trnslvar{g}{2}{0}.\wedge \begin{cases}\ioin \ioout \ioout \wedge \begin{cases}\trnslvar{g}{2}{0}\end{cases}\\\ioin \ioin \ioin \ioin \ioout \ioout \ioout \ioout \ioout \wedge \begin{cases}\trnslvar{g}{2}{0}\end{cases}\end{cases}\\
 &= \ios{--}{--++-++-+}\\
\trnsli{\msf{g}}{\argnone,0} &= \mu \trnslvar{g}{\argnone}{0}.\wedge \begin{cases}\ioout \ioout \trnslvar{g}{\argnone}{0}\\\ioout \ioout \ioout \ioout \ioout \trnslvar{g}{\argnone}{0}\end{cases}\\
 &= \ios{}{+}\\
\end{align*}
\end{shaded}

\section{Conclusion and Further Work}\label{sec:conclusion}
In order to formalize quantitative approaches for recognizing productivity of stream specifications, 
we defined the notion of \daob{} rewriting and investigated \daob{} productivity.
For the syntactic class of flat stream specifications
(that employ pattern matching on data), 
we devised a decision algorithm for \daob{} productivity.
In this way we settled the productivity recognition problem 
for flat stream specifications from a \daob{} perspective.
For the even larger class including friendly nesting stream function rules,
we obtained a computable sufficient condition for productivity.
For the subclass of pure stream specifications 
(a substantial extension of the class given in~\cite{endr:grab:hend:isih:klop:2007})
we showed that productivity and \daob{} productivity coincide,
and thereby obtained a decision algorithm for productivity
of pure specifications.

We have implemented in Haskell the decision algorithm for \daob{} productivity.
This tool, together with more information including a manual, examples, our related papers,
and a comparison of our criteria with those 
of~\cite{hugh:pare:sabr:1996,telf:turn:1997,buch:2005}
can be found at our web page 
{\small{\texttt{http://infinity.few.vu.nl/productivity}}}.
The reader is invited to experiment with our tool.

It is not possible to obtain a \daobly{} optimal criterion
for non-productivity of flat specifications
in an analogous way to how we established
such a criterion for productivity.
This is because \daob{} upper bounds
on the production of stream functions
in flat stream specifications are not in general periodically increasing functions. 
For example, for the following stream function specification:
\begin{align*}
  \bfunap{\astrfun}{\strcns{x}{\astr}}{\bstr}
  \to \strcns{x}{\bfunap{\astrfun}{\astr}{\bstr}} 
  \punc,
  &&
  \bfunap{\astrfun}{\astr}{\strcns{y}{\bstr}}
  \to \strcns{y}{\bfunap{\astrfun}{\astr}{\bstr}} 
  \punc,
\end{align*}  
it holds that $\doUp{}{\strff{f}}(n_1,n_2) = n_1 + n_2$, which is not \pein{}. 
While this example is not orthogonal, 
$\doUp{}{\strff{f}}$ is also not \pein{} for the following 
similar orthogonal example:
\begin{align*}
  \bfunap{\astrfun}{\strcns{\datf{0}}{\strcns{x}{\astr}}}{\strcns{y}{\bstr}}
  & \to \strcns{x}{\bfunap{\astrfun}{\astr}{\bstr}}
  \punc,
  &
  \bfunap{\astrfun}{\strcns{\datf{1}}{\astr}}{\strcns{x}{\strcns{y}{\bstr}}}
  & \to \strcns{y}{\bfunap{\astrfun}{\astr}{\bstr}}
  \punc.
\end{align*}

Currently we are developing a method 
that goes beyond a \daob{} analysis,
one that would, e.g., prove productivity of the example~$\strcf{B}$
given in the introduction. 
Moreover, we study a refined 
production calculus that accounts for the delay of evaluation of stream elements, 
in order to obtain a faithful modelling of lazy evaluation,
needed for example for $\strcf{S}$ on page~\pageref{page:example:sijtsma},
where the first element depends on a `future' expansion of $\strcf{S}$.

\paragraph{Acknowledgement.} 
  We thank Jan Willem Klop, Carlos Lombardi, Vincent van Oostrom,
  and Roel de Vrijer for useful discussions.

\bibliography{main}

\appendix

\newcommand{\appsection}[1]{
  \let\oldthesection\thesection
  \renewcommand{\thesection}{Appendix \oldthesection}
  \section{#1}
  \let\thesection\oldthesection
}

\input{appendix/macros.def}

\newpage
\appsection{Solving Weakly Guarded IO-Term Specifications 
  (Lemma~\ref{lem:infioseqspec:ratsol})}%

Let $\aioseqspec = \{ X_{\alpha} = \iosqsyx_{\alpha} \}_{\alpha\in A} $
be a weakly guarded finite \infioseqspec{},
$\sroot\in A$, and $\solroot$ the unique solution of $\aioseqspec$ for $\Xroot$.
\begin{definition}\normalfont\label{def:g}
  We define a finite, rooted graph $\g = \triple{\gNodes}{\gEdges}{\gRoot}$
  that represents traces through the expressions in $\aioseqspec$.
  The set of nodes $\gNodes$ consists of 
  all pairs of $\alpha \in A$ 
  together with positions in $\iosqsyx_{\alpha}$,
  the right-hand side of the corresponding equation in $\aioseqspec$:
  \[ 
    \gNodes 
    = \{ \pair{\alpha}{p} \where \alpha \in A,\; p \in \pos{\iosqsyx_{\alpha}} \}
    \punc.
  \]
  Note that the set of nodes $\gNodes$ is finite.
  The root node $\gRoot$ is $\pair{\sroot}{\posemp}$.
  The set of labelled edges 
  $\gEdges \subseteq \gNodes \times \{\ioin,\ioout,\lemp\} \times \gNodes$
  is the smallest set such that:
  for all $\alpha \in A$, and every position $p \in \pos{\iosqsyx_{\alpha}}$ 
  the following conditions hold:
  \begin{enumerate}
    \item if $\iosqsyx_{\alpha}|_p = X_{\alpha'}$ 
          then $\gEdge{\pair{\alpha}{p}}{\lemp}{\pair{\alpha'}{\posemp}} \in \gEdges$;
    \item if $\iosqsyx_{\alpha}|_p = \iosi{-}{E'}$ 
          then $\gEdge{\pair{\alpha}{p}}{\ioin}{\pair{\alpha}{\poscat{p}{1}}} \in \gEdges$;
    \item if $\iosqsyx_{\alpha}|_p = \iosi{+}{E'}$ 
          then $\gEdge{\pair{\alpha}{p}}{\ioout}{\pair{\alpha}{\poscat{p}{1}}} \in \gEdges$;
    \item if $\iosqsyx_{\alpha}|_p = \iosqinf{E'}{E''}$ 
          then $\gEdge{\pair{\alpha}{p}}{\lemp}{\pair{\alpha}{\poscat{p}{1}}}$ 
          and $\gEdge{\pair{\alpha}{p}}{\lemp}{\pair{\alpha}{\poscat{p}{2}}} \in \gEdges$.
  \end{enumerate}
  where we have written $\gEdge{\agNode}{\ell}{\bgNode}$ 
  to denote $\triple{\agNode}{\ell}{\bgNode}\in\gEdges$.
  Moreover, for $\ell\in\{\ioin,\ioout,\lemp\}$, 
  we let $\slabrel{\ell}$ denote the relation 
  $\{\pair{\agNode}{\bgNode}\where \gEdge{\agNode}{\ell}{\bgNode}\}$.
\end{definition}
Note that by weak guardedness of $\aioseqspec$
the relation $\sepsrel$ is terminating.

We proceed by defining the set of traces $\lang{\g}{\gNode}$ 
through the graph $\g$ starting at node $\gNode\in\gNodes$.
\begin{definition}\normalfont\label{def:G:trace}
  For a path 
  $\gPath : \gNode_1 \stackrel{l_1}{\to} \gNode_2 \stackrel{l_2}{\to} \ldots$
  in $\g$, where $l_1,l_2,\ldots\in\{\ioin,\ioout,\lemp\}$, we use $\trace{\gPath}$ 
  to denote $l_1 \cdot l_2\cdots \in \iosq$, 
  the word obtained by concatenating the symbols `$\ioin$' and `$\ioout$'
  along the path $\gPath$, thus treating the label `$\lemp$' as the empty word.
  Note that $\trace{\gPath} \in \iostr$ whenever the path $\gPath$ is infinite
  by weak guardedness of $\aioseqspec$.
  The set of traces $\lang{\g}{\gNode} \subseteq \iostr$ through $\g$ 
  from $\gNode \in \gNodes$ is defined as follows:
  \[
    \lang{\g}{\gNode} 
    \defdby 
    \{
      \trace{\gPath} 
      \where 
      \text{$\gPath$ is an infinite path starting from node $\agNode$ in $\g$} 
    \}
    \punc.
  \]
\end{definition}

For $Z \equiv \{ \ibiosq{1},\ldots,\ibiosq{n} \} \subseteq\iostr$, 
let $\iosqInfm{\biosq\in Z}{\biosq}$ denote 
$\iosqinf{\ibiosq{1}}{\iosqinf{\ldots}{\ibiosq{n}}}$,
where $\siosqinf$ is the binary operation defined in Def.~\ref{def:iosq:inf}.
If $Z = \setemp$, we set $\iosqInfm{\biosq\in Z}{\biosq} = \iosqcyc{\ioout}$.

The following lemma establishes a correspondence between the specification $\aioseqspec$
and the graph $\g$, in particular it implies that 
the unique solution $\solroot$ of $\aioseqspec$ for $\Xroot$
equals the infimum over all traces through $\g$ from the root node. %
\begin{lemma}\label{lem:G}
  For $\alpha\in A$, $p\in\pos{\iosqsyx_{\alpha}}$, 
  let $\solexp{\alpha}{p}\in\iostr$ denote the unique solution for the expression
  $\iosqsyx_{\alpha}|_p$ in $\aioseqspec$, and let $\trcexp{\alpha}{p}$ 
  be shorthand for $\iosqInfm{\biosq\in\lang{\g}{\pair{\alpha}{p}}}{\biosq}$.
  Then, for all $\alpha\in A$, $p\in\pos{\iosqsyx_{\alpha}}$,
  we have $\solexp{\alpha}{p} = \trcexp{\alpha}{p}$.
\end{lemma}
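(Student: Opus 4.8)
\textbf{Proof plan for Lemma~\ref{lem:G}.}
The plan is to prove the identity $\solexp{\alpha}{p} = \trcexp{\alpha}{p}$ by a coinduction that is organized along the termination of the relation $\sepsrel$ on $\gNodes$. Concretely, I would consider the relation
\[
  R \defdby
  \bigl\{
    \pair{\solexp{\alpha}{p}}{\trcexp{\alpha}{p}}
    \where \alpha\in A,\; p\in\pos{\iosqsyx_{\alpha}}
  \bigr\}
  \cup \{ \pair{\aiosq}{\aiosq} \where \aiosq\in\iostr \}
\]
and show that $R$ is a bisimulation on $\iostr$. Since both $\siosqprd{\cdot}$-solutions of weakly guarded specifications and infima of trace sets are (by Lem.~\ref{lem:infioseqspec:ratsol} and the definition of $\siosqInfm$) honest elements of $\iostr$, establishing that $R$ is a bisimulation gives $\solexp{\alpha}{p} = \trcexp{\alpha}{p}$ for all $\alpha,p$, and in particular the case $\alpha = \sroot$, $p = \posemp$ yields $\solroot = \trcexp{\sroot}{\posemp}$, the announced correspondence between $\aioseqspec$ and $\g$.

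To verify the bisimulation condition at a pair $\pair{\solexp{\alpha}{p}}{\trcexp{\alpha}{p}}$, I would case-split on the shape of the subexpression $\iosqsyx_{\alpha}|_p$, matching the four clauses in Def.~\ref{def:g}. If $\iosqsyx_{\alpha}|_p = \iosqemp$ (not a variable, $\ioin$-, $\ioout$-, or $\siosqinf$-headed), then $\solexp{\alpha}{p} = \iosqemp$ and $\lang{\g}{\pair{\alpha}{p}}$ is empty, so $\trcexp{\alpha}{p} = \iosqcyc{\ioout}$; however, weak guardedness and the productivity invariant on $\iostr$ rule this case out for nodes actually reachable in an infinite trace, and more carefully one checks it simply does not arise for the intended solutions (which live in $\iostr$). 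If $\iosqsyx_{\alpha}|_p = \iosqcns{\ioout}{E'}$, both sides start with $\ioout$ and the tails are $\solexp{\alpha}{p\cdot 1}$ and $\trcexp{\alpha}{p\cdot 1}$, which are $R$-related; the $\iosqcns{\ioin}{E'}$ case is symmetric. If $\iosqsyx_{\alpha}|_p = X_{\alpha'}$, then $\solexp{\alpha}{p} = \solexp{\alpha'}{\posemp}$ and $\lang{\g}{\pair{\alpha}{p}} = \lang{\g}{\pair{\alpha'}{\posemp}}$ because the unique $\lemp$-edge just relocates to the root of the defining equation for $\alpha'$; hence the pair reduces to $\pair{\solexp{\alpha'}{\posemp}}{\trcexp{\alpha'}{\posemp}} \in R$. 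The interesting case is $\iosqsyx_{\alpha}|_p = \iosqinf{E'}{E''}$: here $\solexp{\alpha}{p} = \iosqinf{\solexp{\alpha}{p\cdot 1}}{\solexp{\alpha}{p\cdot 2}}$, while $\lang{\g}{\pair{\alpha}{p}} = \lang{\g}{\pair{\alpha}{p\cdot 1}} \cup \lang{\g}{\pair{\alpha}{p\cdot 2}}$, so $\trcexp{\alpha}{p} = \iosqInfm{\biosq\in\lang{\g}{\pair{\alpha}{p\cdot 1}}\cup\lang{\g}{\pair{\alpha}{p\cdot 2}}}{\biosq}$, which by associativity and commutativity of $\siosqinf$ (Lem.~\ref{lem:infm:idemp:comm:assoc}) equals $\iosqinf{\trcexp{\alpha}{p\cdot 1}}{\trcexp{\alpha}{p\cdot 2}}$. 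Thus both sides are infima of two $R$-related arguments, and I would conclude using a standard congruence lemma: if $\pair{\aiosq_1}{\biosq_1},\pair{\aiosq_2}{\biosq_2}\in R$ then $\pair{\iosqinf{\aiosq_1}{\aiosq_2}}{\iosqinf{\biosq_1}{\biosq_2}}\in R$ — this is exactly the statement that $R$ is closed under the operation $\siosqinf$ applied componentwise, which one checks simultaneously with the main bisimulation claim (or by first noting $R$ contains the diagonal and $\siosqinf$ is well-defined on $\iostr$).

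The main obstacle is handling the $\siosqinf$ case cleanly, for two reasons. First, one must be sure that $\iosqInfm{}$ over the possibly-infinite set of traces $\lang{\g}{\gNode}$ is genuinely well-defined and coincides with iterated binary $\siosqinf$; the text only defined $\iosqInfm{\biosq\in Z}{\biosq}$ for finite $Z$, so I would either argue that for our purposes $Z$ can be taken finite (the branching of $\g$ is finite, but infinitely many distinct traces may still pass through a node), or more robustly show that $\siosqinf$ on $\iostr$ extends to arbitrary nonempty sets via a standard limit/greatest-lower-bound argument in the prefix order and that this agrees with the recursive unfolding. Second, the coinduction must be well-founded: the $\lemp$-steps (variable-dereference and $\siosqinf$-descent) are not productive, so I rely on the termination of $\sepsrel$ (noted right after Def.~\ref{def:g}) to ensure that after finitely many $\lemp$-steps one reaches a node whose subexpression is $\iosqemp$, $\ioin$-headed, or $\ioout$-headed; interleaving this well-founded induction with the coinduction (an ``up-to'' style argument, or equivalently proving first by $\sepsrel$-induction that every node's solution and trace-infimum agree on their first symbol, then coinductively on tails) is the delicate bookkeeping point. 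Everything else — the $\ioin$/$\ioout$/variable cases — is routine unfolding of Def.~\ref{def:iosq:inf}, Def.~\ref{def:g}, and Def.~\ref{def:G:trace}.
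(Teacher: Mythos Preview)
Your approach is essentially the paper's: show that
\(
  R = \{(\solexp{\alpha}{p},\trcexp{\alpha}{p}) \mid \alpha\in A,\ p\in\pos{\iosqsyx_\alpha}\}
\)
is a bisimulation, handling the unproductive $\lemp$-moves (variable dereference and $\siosqinf$-descent) by well-founded induction on~$\sepsrel$, and discharging the $\ioin$/$\ioout$ base cases directly. The paper does not add the diagonal and, in the $\siosqinf$ case, simply writes ``we conclude by a double application of the induction hypothesis.''

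One point needs sharpening. Your ``congruence lemma'' --- that $(\aiosq_1,\biosq_1),(\aiosq_2,\biosq_2)\in R$ implies $(\iosqinf{\aiosq_1}{\aiosq_2},\iosqinf{\biosq_1}{\biosq_2})\in R$ --- is false for $R$ as you defined it: there is no node $(\gamma,r)$ whose subexpression is $\iosqinf{\iosqsyx_\alpha|_{p}}{\iosqsyx_\beta|_{q}}$ for arbitrary $(\alpha,p),(\beta,q)$, and adjoining the diagonal only helps once you already know $\aiosq_i=\biosq_i$, which is what you are proving. What actually works is to take the closure of $R$ under $\siosqinf$ and $\siosqrmioin$ (the latter enters via the tail of an infimum when the two heads differ) and show that \emph{this} relation is a bisimulation --- i.e., bisimulation up to $\siosqinf$-context. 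The paper's ``double application of the induction hypothesis'' glosses over exactly the same step, so you are not missing anything the paper supplies; but since you state the congruence explicitly, you should state the correct one. Your other flagged obstacles (well-definedness of $\siosqInfm$ over the possibly infinite set $\lang{\g}{\gNode}$, and the induction/coinduction interleaving) are real and likewise handled no more carefully in the paper.
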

\begin{proof}
  By coinduction, i.e., we show that the relation $R\in\iostr\times\iostr$ defined by:
  \[
    R 
    \defdby 
    \{ 
      \pair{\solexp{\alpha}{p}}{\trcexp{\alpha}{p}} 
      \where \alpha \in A,\; p\in\pos{\iosqsyx_{\alpha}} 
    \}
  \]
  is a bisimulation.
  We proceed by well-founded induction on pairs $\pair{\alpha}{p}$ with respect to $\sepsrel$.

  (Base)~If $\agNode\defdby\pair{\alpha}{p}$ is in normal form with respect to $\sepsrel$,
  then $\agNode$ has precisely one outgoing edge:
  $\gEdge{\agNode }{\ioin}{}$ or $\gEdge{\agNode}{\ioout}{}$.
  In case $\gEdge{\agNode}{\ioin}{}$, 
  we have that $\iosqsyx_{\alpha}|_p = \iosi{-}{E'}$, and therefore
  $\solexp{\alpha}{p} = \iosi{-}{\solexp{\alpha}{\poscat{p}{1}}}$.
  Furthermore, we have
  $\trcexp{\alpha}{p} = \iosi{-}{\trcexp{\alpha}{\poscat{p}{1}}}$, and
  clearly $\pair{\solexp{\alpha}{\poscat{p}{1}}}{\trcexp{\alpha}{\poscat{p}{1}}} \in R$.
  The proof for case $\gEdge{\agNode}{\ioout}{}$ proceeds analogously.
  
  (Step)~If $\agNode\defdby\pair{\alpha}{p}$ is not a normal form with respect to $\sepsrel$,
  then either $\agNode$ has one outgoing edge $\gEdge{\agNode}{\lemp}{\agNode'}$
  or two outgoing edges $\gEdge{\agNode}{\lemp}{\agNode'}$ 
  and $\gEdge{\agNode}{\lemp}{\agNode''}$.
  In the first case we have $\iosqsyx_{\alpha}|_p = X_{\alpha'}$,
  and then $\gNode' = \pair{\alpha'}{\posemp}$,
  $\solexp{\alpha}{p} = \solexp{\alpha'}{\posemp}$ 
  and $\trcexp{\alpha}{p} = \trcexp{\alpha'}{\posemp}$, and 
  we conclude by an application of the induction hypothesis for $\pair{\alpha'}{\posemp}$.
  In the second case we have that $\iosqsyx_{\alpha}|_p = \iosqinf{E'}{E''}$, 
  and then $\agNode' = \pair{\alpha}{\poscat{p}{1}}$ 
  and $\agNode'' = \pair{\alpha}{\poscat{p}{2}}$.
  Thus we get
  $\solexp{\alpha}{p} = \iosqinf{\solexp{\alpha}{\poscat{p}{1}}}{\solexp{\alpha}{\poscat{p}{2}}}$,
  and, by associativity of $\siosqinf$,
  $\trcexp{\alpha}{p} = \iosqinf{\trcexp{\alpha}{\poscat{p}{1}}}{\trcexp{\alpha}{\poscat{p}{2}}}$.
  We conclude by a double application of the induction hypothesis.
  \qed
\end{proof}

The set $\lang{\g}{\gNode}$ consists of infinite, possibly non-productive \ioseq{s}.
Therefore we adopt the definition of the 
production function $\siosqprd{\aiosq}\funin{\setfun{\conat}{\conat}}$
to be applicable to non-productive \ioseq{s} $\aiosq$ 
by dropping $\conattop$ from the domain.
\begin{definition}\normalfont
  The \emph{production function $\siosqprdx{\aiosq}$} of a sequence $\aiosq\in\iostr$
  is corecursively defined by, for all $n\in\nat$, 
  $\iosqprdx{\aiosq}{n} \defdby \iosqprdxr{\aiosq}{n}$:
  \begin{align*}
    \iosqprdxr{\iosqcns{\ioout}{\aiosq}}{n} 
    &= 1 + \iosqprdxr{\aiosq}{n}
    \punc,
    &
    \iosqprdxr{\iosqcns{\ioin}{\aiosq}}{\numzer} 
    &= \numzer
    \punc,
    &
    \iosqprdxr{\iosqcns{\ioin}{\aiosq}}{n+1}
    &= \iosqprdxr{\aiosq}{n}
    \punc.
  \end{align*}
\end{definition}

A property of the operation $\siosqinf$ 
given in Def.~\ref{def:iosq:inf} is that 
the mapping $\ssiosqprd\funin{\iosq\to\conat\to\conat}$ 
preserves infimum, see Prop.~\ref{prop:iosq:inf:interpret}.

Clearly, a similar property holds for $\ssiosqprdx$:
for all $\iaiosq{1},\iaiosq{2}\in\iostr$, we have:
\(
  \siosqprdx{\iosqinf{\iaiosq{1}}{\iaiosq{2}}}
  = 
  \funinf{\siosqprdx{\iaiosq{1}}}{\siosqprdx{\iaiosq{2}}}
\).
Then, the following corollary directly follows from Lemma~\ref{lem:G}.
\begin{corollary}\label{cor:G}
  For all $n\in\nat$ we have 
  \(
    \iosqprdx{\solroot}{n} 
    = \inf\{\iosqprdx{\biosq}{n} \where \biosq\in\lang{\g}{\gRoot}\}
  \).
  Moreover, given $n\in\nat$,
  we can always pick a trace $\biosq\in\lang{\g}{\gRoot}$ 
  such that $\iosqprdx{\biosq}{n} = \iosqprdx{\solroot}{n}$.
\end{corollary}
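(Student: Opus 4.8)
The plan is to derive the equation from Lemma~\ref{lem:G} and then to handle attainment of the infimum separately, the latter being where the finiteness of $\g$ does the real work. Instantiating Lemma~\ref{lem:G} at the root $\gRoot = \pair{\sroot}{\posemp}$ gives $\solroot = \trcexp{\sroot}{\posemp} = \iosqInfm{\biosq\in\lang{\g}{\gRoot}}{\biosq}$: the solution sequence for $\Xroot$ is precisely the infimum of all (infinite) traces through $\g$ from $\gRoot$. The inequality ``$\le$'' in the claimed equation is then immediate: since $\solroot$ is a lower bound of the trace set we have $\iosqinf{\solroot}{\biosq} = \solroot$ for every $\biosq\in\lang{\g}{\gRoot}$, and applying $\ssiosqprdx$ together with the stated property $\siosqprdx{\iosqinf{\iaiosq{1}}{\iaiosq{2}}} = \funinf{\siosqprdx{\iaiosq{1}}}{\siosqprdx{\iaiosq{2}}}$ yields $\iosqprdx{\solroot}{n} = \min(\iosqprdx{\solroot}{n},\iosqprdx{\biosq}{n}) \le \iosqprdx{\biosq}{n}$, hence $\iosqprdx{\solroot}{n}\le\inf\{\iosqprdx{\biosq}{n}\where\biosq\in\lang{\g}{\gRoot}\}$.

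For the reverse inequality and the ``moreover'' I would, for a fixed $n$, construct a single trace attaining the bound. The key observation is that $\iosqprdx{\biosq}{n}$ depends only on a finite prefix of $\biosq$: unfolding the definition of $\ssiosqprdx$ shows that $\iosqprdx{\biosq}{n}$ counts the $\ioout$'s occurring before the $(n+1)$-th $\ioin$ of $\biosq$, and equals $\conattop$ exactly when $\biosq$ contains at most $n$ inputs. I would reformulate the search for an output-minimal trace as a shortest-path problem on $\g$: give $\ioout$-edges weight $1$ and $\ioin$- and $\lemp$-edges weight $0$, take the product of $\g$ with a counter ranging over $0,\ldots,n+1$ (still a finite graph), and look for a minimum-weight path from $\gRoot$ that accumulates $n+1$ inputs and ends at a node from which some infinite path continues. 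If such a path exists, finiteness of the product graph guarantees the minimum is attained (output-only loops have nonnegative weight and may be excised), so the optimal path has bounded length and extends to an infinite trace $\biosq^{*}\in\lang{\g}{\gRoot}$ with $\iosqprdx{\biosq^{*}}{n}$ finite and minimal; if no such path exists, every trace has at most $n$ inputs, so $\iosqprdx{\biosq}{n}=\conattop$ for all traces (and the empty-trace case $\lang{\g}{\gRoot}=\setemp$ gives $\conattop$ via the convention $\iosqInfm{\biosq\in\setemp}{\biosq}=\iosqcyc{\ioout}$), whence any trace attains the infimum. In either case $\biosq^{*}$ satisfies $\inf\{\iosqprdx{\biosq}{n}\}\ge\iosqprdx{\solroot}{n}$ via the first part, forcing equality, and simultaneously witnesses the attainment claim.

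The main obstacle is precisely the possible infiniteness of $\lang{\g}{\gRoot}$ caused by output-only cycles in $\g$, which both makes a naive infinitary infimum-preservation argument delicate and can push $\iosqprdx{\cdot}{n}$ to $\conattop$. Taming this requires the finite-prefix reformulation above, so that for each $n$ only finitely many candidate prefixes need be compared; I expect the care to lie in setting up the counter-product graph correctly, in checking that the endpoint of the chosen prefix admits an infinite continuation so that the minimizing prefix genuinely belongs to an element of $\lang{\g}{\gRoot}$, and in verifying that the $\conattop$ case lines up with the value delivered by Lemma~\ref{lem:G}.
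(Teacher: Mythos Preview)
Your argument is correct and in fact more complete than the paper's. The paper disposes of the corollary in a single sentence: it is said to ``directly follow'' from Lemma~\ref{lem:G} together with the observation that $\siosqprdx{\iosqinf{\iaiosq{1}}{\iaiosq{2}}} = \funinf{\siosqprdx{\iaiosq{1}}}{\siosqprdx{\iaiosq{2}}}$. For the first equation your route is the same in spirit---invoke Lemma~\ref{lem:G} and push $\ssiosqprdx$ through the infimum---only you split it cleanly into ``$\le$'' (from the lower-bound property of $\solroot$) and ``$\ge$'' (recovered from attainment), which avoids having to justify infimum-preservation over the possibly infinite set $\lang{\g}{\gRoot}$ in one stroke.

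Where you genuinely go beyond the paper is the attainment claim. The paper offers no argument at all for the ``moreover'' part; you supply one by observing that $\iosqprdx{\biosq}{n}$ is determined by the finite prefix of $\biosq$ up to its $(n{+}1)$-st input and then running a shortest-path search on the product of the finite graph $\g$ with a bounded input counter. That is a sound and natural way to extract, for each $n$, a minimizing trace, and your attention to the boundary issues (the prefix must end at a node admitting an infinite continuation; the ``no path reaches $n{+}1$ inputs'' case forces every trace to yield $\conattop$) is warranted and correctly handled. The only small simplification available is that in $\g$ every node reachable from $\gRoot$ already lies on some infinite path unless it is a $\iosqemp$-leaf, so the extendability check is cheap; but your formulation is fine as stated.
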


Using $\g$ we define a `two-dimensional diagram' $\T \subseteq \gNodes \times \nat \times \nat$
representing the traces of $\aioseqspec$ starting from $\Xroot$ as step functions.
For $\gNode \in \gNodes$, $x,y \in \nat$
let $\tEntry{\gNode}{x}{y}$ be shorthand for $\triple{\gNode}{x}{y}$.
The horizontal axis ($x$-axis) of $T$ corresponds to input, 
and the vertical axis ($y$-axis) to output;
that is, a `$\ioin$' in the trace yields a step to the right in the diagram
and `$\ioout$' a step upwards.
\begin{definition}\normalfont
  For $U \subseteq \gNodes \times \nat \times \nat$ we define 
  $\g(U) \subseteq \gNodes \times \nat \times \nat$,
  the \emph{one-step closure of $U$ under $\g$}, as follows:
  \begin{align*}
    \g(U) &\defdby U \cup \gi{\lemp}(U) \cup \gi{\ioout}(U) \cup \gi{\ioin}(U)\\
    \gi{\lemp}(U) &\defdby \{\tEntry{\bgNode}{x}{y} \where \tEntry{\agNode}{x}{y} \in U,\; \gEdge{\agNode}{\lemp}{\bgNode} \in \gEdges\}\\
    \gi{\ioout}(U) &\defdby \{\tEntry{\bgNode}{x}{y+1} \where \tEntry{\agNode}{x}{y} \in U,\; \gEdge{\agNode}{\ioout}{\bgNode} \in \gEdges\}\\
    \gi{\ioin}(U) &\defdby \{\tEntry{\bgNode}{x+1}{y} \where \tEntry{\agNode}{x}{y} \in U,\; \gEdge{\agNode}{\ioin}{\bgNode} \in \gEdges\}
  \end{align*}
  Furthermore we define
  \[\displaystyle \g^*(U) \defdby \bigcup_{n \in \nat} \g^n(U)\;,\]
  the many-step closure of $U$ under $\g$.
\end{definition}
\begin{definition}\normalfont
  We define $\T \subseteq \gNodes \times \nat \times \nat$ as $\T \defdby \g^*(\{\tRoot\})$.
\end{definition}
The diagram $\T$ contains
the traces through $\g$ starting from $\gRoot$ as `step functions'.
\begin{lemma}\label{lem:T:tracesG}
  For all $\gNode \in \gNodes$, $x,y \in \nat$
  we have $\tEntry{\gNode}{x}{y} \in \T$ if and only if
  there exists a path $\gPath : \gRoot \to^* \gNode$ in $\g$ such that
  $x = \seqcnt{\ioin}{\trace{\gPath}}$, $y = \seqcnt{\ioout}{\trace{\gPath}}$.
\end{lemma}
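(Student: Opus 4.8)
The statement to prove is Lemma~\ref{lem:T:tracesG}: a triple $\tEntry{\gNode}{x}{y}$ lies in $\T = \g^*(\{\tRoot\})$ if and only if there is a path $\gPath \funin \gRoot \to^* \gNode$ in $\g$ with $x = \seqcnt{\ioin}{\trace{\gPath}}$ and $y = \seqcnt{\ioout}{\trace{\gPath}}$.

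The plan is to prove both directions by a straightforward induction that tracks the relationship between finite paths in $\g$ and the finite iterations $\g^n(\{\tRoot\})$ making up $\T$. For the ``only if'' direction, I would induct on the least $n$ such that $\tEntry{\gNode}{x}{y} \in \g^n(\{\tRoot\})$. The base case $n = 0$ forces $\tEntry{\gNode}{x}{y} = \tRoot$, so $\gNode = \gRoot$, $x = y = 0$, and the empty path works. For the step, $\tEntry{\gNode}{x}{y} \in \g^{n+1}(\{\tRoot\}) = \g(\g^n(\{\tRoot\}))$ means $\tEntry{\gNode}{x}{y}$ is obtained from some $\tEntry{\agNode'}{x'}{y'} \in \g^n(\{\tRoot\})$ by one of the four cases in the definition of the one-step closure: either it equals that triple (apply the IH directly), or there is an edge $\gEdge{\agNode'}{\lemp}{\gNode}$ with $(x,y) = (x',y')$, or an edge $\gEdge{\agNode'}{\ioout}{\gNode}$ with $(x,y) = (x',y'+1)$, or an edge $\gEdge{\agNode'}{\ioin}{\gNode}$ with $(x,y) = (x'+1,y')$. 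In each of the latter three cases the IH gives a path $\gPath' \funin \gRoot \to^* \agNode'$ with the counts $x',y'$, and appending the single edge yields $\gPath \funin \gRoot \to^* \gNode$; since $\trace{\gPath'}$ gets extended by $\lemp$, $\ioout$, or $\ioin$ respectively, the counts $\seqcnt{\ioin}{\trace{\gPath}}$ and $\seqcnt{\ioout}{\trace{\gPath}}$ increase exactly as needed (using that $\trace{}$ treats $\lemp$ as the empty word). For the ``if'' direction I would induct on the length of $\gPath$: the empty path gives $\tRoot \in \g^0(\{\tRoot\}) \subseteq \T$, and a path of length $k+1$ decomposes as $\gPath' \funin \gRoot \to^* \agNode'$ followed by a single edge $\agNode' \to \gNode$; by IH the triple for $\gPath'$ is in $\T$, and one application of $\g$ (via the appropriate one of $\gi{\lemp}, \gi{\ioout}, \gi{\ioin}$) puts the triple for $\gPath$ into $\T$ as well.

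The only subtle point — and the one I would state carefully rather than as a throwaway — is the bookkeeping that $\seqcnt{\ioin}{\cdot}$ and $\seqcnt{\ioout}{\cdot}$ of $\trace{\gPath}$ match the $x$- and $y$-coordinates produced by the closure operator, which is immediate once one observes that a $\lemp$-labelled edge contributes nothing to either count and to neither coordinate, an $\ioin$-edge adds one to $x$ and to the $\ioin$-count, and an $\ioout$-edge adds one to $y$ and to the $\ioout$-count. There is no real obstacle here; the lemma is essentially a reformulation of the inductive definition of $\g^*$ in terms of paths, and the proof is a routine two-way induction. I would therefore present it compactly, emphasizing the correspondence between the four disjuncts of the one-step closure and the three edge labels (plus the reflexive ``stay put'' option), and note that finiteness of $\gNodes$ plays no role in this particular statement — it is purely combinatorial path counting.

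\begin{proof}
  We prove both implications.

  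($\Rightarrow$) Suppose $\tEntry{\gNode}{x}{y} \in \T = \g^*(\{\tRoot\})$, so $\tEntry{\gNode}{x}{y} \in \g^n(\{\tRoot\})$ for some $n\in\nat$; we argue by induction on the least such $n$. If $n = 0$, then $\tEntry{\gNode}{x}{y} = \tRoot = \tEntry{\gRoot}{0}{0}$, so $\gNode = \gRoot$ and $x = y = 0$, and the empty path $\gPath \funin \gRoot \to^* \gRoot$ has $\trace{\gPath} = \lemp$, whence $\seqcnt{\ioin}{\trace{\gPath}} = 0 = x$ and $\seqcnt{\ioout}{\trace{\gPath}} = 0 = y$. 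If $n = n'+1$, then $\tEntry{\gNode}{x}{y} \in \g(\g^{n'}(\{\tRoot\}))$, so by the definition of the one-step closure there is $\tEntry{\agNode'}{x'}{y'} \in \g^{n'}(\{\tRoot\})$ with one of:
  (i) $\tEntry{\gNode}{x}{y} = \tEntry{\agNode'}{x'}{y'}$;
  (ii) $\gEdge{\agNode'}{\lemp}{\gNode}\in\gEdges$ and $(x,y) = (x',y')$;
  (iii) $\gEdge{\agNode'}{\ioout}{\gNode}\in\gEdges$ and $(x,y) = (x',y'+1)$;
  (iv) $\gEdge{\agNode'}{\ioin}{\gNode}\in\gEdges$ and $(x,y) = (x'+1,y')$.
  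In case (i) we are done by the induction hypothesis. In cases (ii)--(iv) the induction hypothesis gives a path $\gPath' \funin \gRoot \to^* \agNode'$ with $\seqcnt{\ioin}{\trace{\gPath'}} = x'$ and $\seqcnt{\ioout}{\trace{\gPath'}} = y'$. Extend $\gPath'$ by the edge from $\agNode'$ to $\gNode$ to obtain $\gPath \funin \gRoot \to^* \gNode$. Since $\trace{}$ treats the label $\lemp$ as the empty word, in case (ii) we have $\trace{\gPath} = \trace{\gPath'}$, so the counts are unchanged and equal $(x',y') = (x,y)$; in case (iii) $\trace{\gPath} = \trace{\gPath'}\cdot\ioout$, so $\seqcnt{\ioin}{\trace{\gPath}} = x' = x$ and $\seqcnt{\ioout}{\trace{\gPath}} = y'+1 = y$; and in case (iv) $\trace{\gPath} = \trace{\gPath'}\cdot\ioin$, so $\seqcnt{\ioin}{\trace{\gPath}} = x'+1 = x$ and $\seqcnt{\ioout}{\trace{\gPath}} = y' = y$.

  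($\Leftarrow$) Conversely, suppose $\gPath \funin \gRoot \to^* \gNode$ is a path in $\g$ with $x = \seqcnt{\ioin}{\trace{\gPath}}$ and $y = \seqcnt{\ioout}{\trace{\gPath}}$; we argue by induction on the length of $\gPath$. If $\gPath$ is empty, then $\gNode = \gRoot$ and $\trace{\gPath} = \lemp$, so $x = y = 0$ and $\tEntry{\gNode}{x}{y} = \tRoot \in \g^0(\{\tRoot\}) \subseteq \T$. If $\gPath$ has length $k+1$, write it as $\gPath' \funin \gRoot \to^* \agNode'$ of length $k$ followed by a single edge $e$ from $\agNode'$ to $\gNode$, and let $x' = \seqcnt{\ioin}{\trace{\gPath'}}$, $y' = \seqcnt{\ioout}{\trace{\gPath'}}$. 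By the induction hypothesis $\tEntry{\agNode'}{x'}{y'} \in \T$. Now distinguish the label of $e$. If $e$ is $\lemp$-labelled, then $(x,y) = (x',y')$, and $\tEntry{\gNode}{x}{y} = \tEntry{\gNode}{x'}{y'} \in \gi{\lemp}(\T) \subseteq \g(\T)$. If $e$ is $\ioout$-labelled, then $(x,y) = (x',y'+1)$, and $\tEntry{\gNode}{x'}{y'+1} \in \gi{\ioout}(\T) \subseteq \g(\T)$. If $e$ is $\ioin$-labelled, then $(x,y) = (x'+1,y')$, and $\tEntry{\gNode}{x'+1}{y'} \in \gi{\ioin}(\T) \subseteq \g(\T)$. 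In all cases $\tEntry{\gNode}{x}{y} \in \g(\T) = \g(\g^*(\{\tRoot\})) \subseteq \g^*(\{\tRoot\}) = \T$.
  \qed
\end{proof}
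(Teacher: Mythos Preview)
Your proof is correct and is precisely the unfolding-the-definition argument the paper has in mind; the paper simply writes ``Follows immediately from the definition of $\T$'' without spelling out the two inductions. One tiny remark: in the $(\Rightarrow)$ direction, since you induct on the \emph{least} $n$, case~(i) is actually vacuous (it would contradict minimality), so you could drop it or note that it cannot occur.
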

\begin{proof}
  Follows immediately from the definition of $\T$. \qed
\end{proof}
We proceed by showing that the solution $\solroot$ of $\aioseqspec$ for $\Xroot$
coincides with the `lower bound' of the traces in $\T$.
\begin{definition}\normalfont\label{def:T:low}
  For $U \subseteq \gNodes \times \nat \times \nat$ 
  and $\g = \triple{\gNodes}{\gEdges}{\gRoot}$ as above,
  we define $\tBound{U} \funin \nat \to \conat$, 
  the \emph{lower bound of $U$}, as follows:
  \[
    \tBound{U}(x) \defdby 
    \inf \{ y\in\nat \where \tEntry{\gNode}{x}{y}\in U,\; 
                              \myex{\bgNode}{\gEdge{\gNode}{\ioin}{\bgNode} \in \gEdges} \}
    \punc,
  \]
  where $\inf \setemp = \conattop$.
\end{definition}
\begin{lemma}\label{lem:T}
  The lower bound of $\T$ is the solution $\solroot$ of $\aioseqspec$ for $\Xroot$:
  \[ \myallin{n}{\nat}{\tBound{\T}(n) = \iosqprdx{\solroot}{n}} \]
\end{lemma}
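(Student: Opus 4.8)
\textbf{Proof plan for Lemma~\ref{lem:T}.}
The plan is to connect the two characterizations of $\solroot$ that we already have: its trace-infimum description coming from the graph $\g$ (Cor.~\ref{cor:G}), and the step-function encoding of those traces in the diagram $\T$ (Lem.~\ref{lem:T:tracesG}). Fix $n\in\nat$. First I would spell out exactly what $\iosqprdx{\solroot}{n}$ computes: by Cor.~\ref{cor:G} it equals $\inf\{\iosqprdx{\biosq}{n}\where \biosq\in\lang{\g}{\gRoot}\}$, and by the corecursive definition of $\siosqprdx$, for an infinite trace $\biosq$ the value $\iosqprdx{\biosq}{n}$ is the number of `$\ioout$'s that occur in $\biosq$ strictly before the $(n+1)$-st `$\ioin$' (or, if $\biosq$ has at most $n$ occurrences of `$\ioin$', the total number of `$\ioout$'s, which is then $\conattop$ since $\biosq$ is productive but not necessarily---here I should be careful: $\lang{\g}{\gRoot}$ contains possibly non-productive \ioseq{s}, so if $\biosq$ has $\le n$ `$\ioin$'s then it has only finitely many `$\ioout$'s exactly when it is non-productive, and $\conattop$ otherwise). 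In all cases the key reformulation is: $\iosqprdx{\biosq}{n}$ is the least $y$ such that after reading a prefix of $\biosq$ containing $y$ `$\ioout$'s we are about to consume the $(n+1)$-st input, i.e.\ the least $y$ with a prefix realizing $(x,y)=(n,y)$ immediately followed by an `$\ioin$'-step.

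Second I would match this up with $\tBound{\T}$. By Lem.~\ref{lem:T:tracesG}, $\tEntry{\gNode}{n}{y}\in\T$ iff there is a path $\gRoot\to^*\gNode$ whose trace has exactly $n$ `$\ioin$'s and $y$ `$\ioout$'s; adding the side condition $\exists\bgNode.\,\gEdge{\gNode}{\ioin}{\bgNode}$ in Def.~\ref{def:T:low} says precisely that this finite path can be extended by consuming an $(n+1)$-st input. So $\tBound{\T}(n)=\inf\{y\where \text{some path from }\gRoot\text{ with }n\text{ `$\ioin$'s, }y\text{ `$\ioout$'s, extendable by `$\ioin$'}\}$. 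The remaining work is to see that this infimum over \emph{finite extendable paths} equals the infimum over \emph{infinite traces} of the quantity described in the previous paragraph. For the inequality $\tBound{\T}(n)\le\iosqprdx{\solroot}{n}$: given any infinite trace $\biosq$ with $\iosqprdx{\biosq}{n}=y<\conattop$, its defining prefix is a finite path with $n$ `$\ioin$'s and $y$ `$\ioout$'s that is extendable by `$\ioin$' (this is exactly why $\iosqprdx{\biosq}{n}$ is finite), so $y$ is in the set defining $\tBound{\T}(n)$; taking the infimum over $\biosq$ gives the claim, and the case $\iosqprdx{\solroot}{n}=\conattop$ is trivial. For the converse $\tBound{\T}(n)\ge\iosqprdx{\solroot}{n}$: given a finite path $\gPath$ realizing $(n,y)$ and extendable by an `$\ioin$'-step, I would extend $\gPath$ to an \emph{infinite} path $\gPath'$ in $\g$ by following that `$\ioin$'-edge and then continuing arbitrarily (every node has an outgoing edge, since $\aioseqspec$ is weakly guarded and the only dead ends would contradict that; more precisely $\sepsrel$ is terminating so from any node a `$\ioin$' or `$\ioout$' edge is eventually reached, and then a successor always exists). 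Then $\trace{\gPath'}\in\lang{\g}{\gRoot}$, and because $\gPath'$ passes through $(n,y)$ and immediately consumes input $n+1$ there, we get $\iosqprdx{\trace{\gPath'}}{n}\le y$; hence $\iosqprdx{\solroot}{n}\le y$, and taking the infimum over such $y$ yields the inequality.

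The main obstacle I anticipate is the bookkeeping around non-productive traces and the $\conattop$ boundary cases: $\lang{\g}{\gRoot}$ genuinely contains non-productive sequences, so I must be sure that the redefined $\siosqprdx$ (with $\conattop$ dropped from its domain) still makes $\iosqprdx{\biosq}{n}$ well defined for all infinite $\biosq$, and that $\iosqprdx{\biosq}{n}=\conattop$ precisely captures "the $(n+1)$-st input is never consumed," which is the event whose absence is needed to produce a witness for $\tBound{\T}(n)$. I would handle this by a short case split: either $\biosq$ has at least $n+1$ occurrences of `$\ioin$' (then $\iosqprdx{\biosq}{n}$ is finite and its value is read off the prefix up to the $(n+1)$-st `$\ioin$', matching an extendable finite path), or it does not (then $\iosqprdx{\biosq}{n}=\conattop$, contributing nothing to either infimum). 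A second, minor point to get right is that the side condition in Def.~\ref{def:T:low} uses `$\ioin$'-\emph{edges out of the graph node}, not out of the diagram node, so I should note that whether a node $\gNode$ has an outgoing `$\ioin$'-edge depends only on $\gNode$, which is why the condition is well posed; the correspondence with "the trace can consume one more input at this point" is then immediate from Def.~\ref{def:g}. With these technicalities dispatched, the proof is the two infimum-comparison arguments sketched above, each of which is a one-line construction once the reformulation of $\iosqprdx{\cdot}{n}$ as "least output-count before the $(n+1)$-st input" is in place.
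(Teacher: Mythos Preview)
Your proposal is correct and follows essentially the same route as the paper: both inequalities are obtained via Cor.~\ref{cor:G} and Lem.~\ref{lem:T:tracesG}, passing back and forth between finite paths in $\g$ that end at an `$\ioin$'-extendable node (the witnesses for $\tBound{\T}(n)$) and infinite traces in $\lang{\g}{\gRoot}$ (the witnesses for the infimum defining $\iosqprdx{\solroot}{n}$). One small caveat: your justification that every finite path extends to an infinite one (``weak guardedness rules out dead ends'') is not quite right, since a node with $\iosqsyx_\alpha|_p = \iosqemp$ has no outgoing edge; the paper's proof, however, does no better here---it simply asserts that an infinite $\biosq\in\iostr$ extending $\trace{\gPath}$ can be chosen.
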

\begin{proof}
  Let $n \in \nat$, 
  we show $\iosqprdx{\solroot}{n} \ge \tBound{\T}(n)$
  and $\iosqprdx{\solroot}{n} \le \tBound{\T}(n)$.
  \begin{itemize}
    \item [$\ge$]
      For $\iosqprdx{\solroot}{n} = \conattop$ the proof obligation is trivial,
      thus assume $\iosqprdx{\solroot}{n} < \conattop$.
      From Corollary~\ref{cor:G} it follows that there exists a sequence 
      $\biosq \in \lang{\g}{\gRoot}$ such that 
      $\iosqprdx{\biosq}{n} = \iosqprdx{\solroot}{n}$.
      Since $\iosqprdx{\biosq}{n} < \conattop$ and $\biosq \in \iostr$,
      there exists $m \in \nat$ with
      $\seqcnt{\ioin}{\seqsub{\biosq}{0}{m-1}} = n < n + 1 = \seqcnt{\ioin}{\seqsub{\biosq}{0}{m}}$,
      and
      $\seqcnt{\ioout}{\seqsub{\biosq}{0}{m-1}} = \iosqprdx{\biosq}{n}$.
      Then there is a path $\gPath : \gRoot \to^* \gEdge{\gNode}{\ioin}{\gNode'}$ in $\g$
      such that $\trace{\gPath} = \seqsub{\biosq}{0}{m}$.
      By Lemma~\ref{lem:T:tracesG} we have $\tEntry{\gNode}{n}{\iosqprdx{\solroot}{n}} \in \T$,
      and together with $\gEdge{\agNode}{\ioin}{\agNode'}$
      we conclude $\iosqprdx{\solroot}{n} \ge \tBound{\T}(n)$ according to Definition~\ref{def:T:low}.
    \item [$\le$] Assume $\tBound{\T}(n) < \conattop$.
      From Definition~\ref{def:T:low}
      it follows that there exist nodes
      $\agNode,\agNode' \in \gNodes$ with $\tEntry{\agNode}{n}{\tBound{\T}(n)} \in \T$
      and $\gEdge{\agNode}{\ioin}{\agNode'}$.
      Consequently by Lemma~\ref{lem:T:tracesG}
      there exists a path $\gPath : \gRoot \to^* \gEdge{\agNode}{\ioin}{\agNode'}$ in $\g$
      such that $\seqcnt{\ioin}{\trace{\gPath}} = n+1$ 
      and $\seqcnt{\ioout}{\trace{\gPath}} = \tBound{\T}(n)$.
      We choose a $\biosq\in\iostr$ that has $\trace{\gPath}$ as a prefix,
      so that $\iosqprdx{\biosq}{n} = \tBound{\T}(n)$.
      Then, by Corollary~\ref{cor:G}, we know that
      $\iosqprdx{\solroot}{n}\leq\tBound{\T}(n)$.
  \end{itemize}
  We conclude $\iosqprdx{\solroot}{n} = \tBound{\T}(n)$. \qed
\end{proof}
Although in principle $\T$ could be used to calculate 
$\solroot$ for every $n \in \nat$,
it remains to be shown how a rational representation
of the solution $\solroot$ of $\aioseqspec$ for $\Xroot$ can be obtained.
For this purpose we construct a `simpler' diagram $\TO \subseteq \T$
that has the same lower bound.
We construct $\TO$ such that there are only finitely many nodes on each vertical line;
thereby we employ the idea that whenever there
are two nodes $\tEntry{\bgNode}{x}{y}$, $\tEntry{\bgNode}{x}{y'}$
with $y < y'$,
then we can omit the upper occurrence $\tEntry{\bgNode}{x}{y'}$
(including all traces rising from $\tEntry{\bgNode}{x}{y'}$)
without having an influence on the lower bound $\tBound{\T}(x)$.

\begin{definition}\normalfont\label{def:TO}
  We define functions 
  $\g^0, \sTomit \funin (\gNodes \times \nat \times \nat) \to (\gNodes \times \nat \times \nat)$:
  \begin{align*}
    \g^0(U)   &\defdby U \cup \gi{\lemp}(U) \cup \gi{\ioout}(U) \\
    \Tomit{U} &\defdby 
      \{ \tEntry{\gNode}{x}{y} 
         \where 
         \tEntry{\gNode}{x}{y} \in U, \neg \myex{y' < y}{\tEntry{\gNode}{x}{y'} \in U}
      \}
  \end{align*}
  Let $\TOn{n} \subseteq \gNodes \times \nat \times \nat$ for all $n \in \nat$ 
  be defined as follows:
  \begin{align*}
    \TOn{0} &\defdby \Tvclosure{\{\tRoot\}}\\
    \TOn{n+1} &\defdby \TOn{n} \cup \TOon{n+1}\\
    \TOon{n+1} &\defdby \Tvclosure{ \stepright{n} }\\
    \stepright{n} &\defdby
      \{\tEntry{\bgNode}{n+1}{y} 
        \where \tEntry{\agNode}{n}{y} \in \TOn{n},\; \gEdge{\agNode}{\ioin}{\bgNode} \in \gEdges\}
      \punc,
  \end{align*}
  where $\Tvclosure{U} \subseteq \gNodes \times \nat \times \nat$ 
  is defined to be $\vcloseomit^m(U)$
  for the smallest $m \in \nat$ such that 
  $\vcloseomit^{m+1}(U) = \vcloseomit^m(U)$
  where $\vcloseomit \defdby \funcmp{\sTomit}{\g^0}$,
  the vertical one-step closure followed by an application of $\sTomit$.
  Furthermore we define
  \[\textstyle \TO \defdby \bigcup_{n \in \nat} \TOn{n}\punc.\]
\end{definition}
Note that $\TOn{n}$ is finite for all $n \in \nat$.
The important step in the construction
is the termination of $\Tvclosure{\_}$
which is guaranteed by the following lemma.
\begin{lemma}
  The computation of $\Tvclosure{U}$ is terminating
  for finite sets $U \subseteq \gNodes \times \nat \times \nat$,
  that is, there exists $m \in \nat$ such that  
  $\vcloseomit^{m+1}(U) = \vcloseomit^m(U)$.
\end{lemma}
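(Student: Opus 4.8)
The statement to prove is that for every finite $U \subseteq \gNodes \times \nat \times \nat$, the iteration of $\vcloseomit = \funcmp{\sTomit}{\g^0}$ stabilizes: there is $m\in\nat$ with $\vcloseomit^{m+1}(U) = \vcloseomit^m(U)$. The plan is to argue by finding an invariant that is preserved by $\vcloseomit$ and that forces the iteration to reach a fixed point. The key observation is that $\g^0$ only adds nodes reachable from $U$ via $\sepsrel$-steps (which do not move in the diagram) and $\ioout$-labelled steps (which move strictly upwards in the $y$-coordinate, keeping the $x$-coordinate fixed), so no step of $\vcloseomit$ ever changes or introduces a new $x$-coordinate: all triples in $\vcloseomit^n(U)$ have $x$-coordinates drawn from the finite set $\{x \mid \tEntry{\gNode}{x}{y}\in U \text{ for some } \gNode, y\}$. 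Thus it suffices to bound the iteration on each vertical line $x = x_0$ separately, and there are only finitely many such lines occurring.

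\smallskip

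\textbf{Key steps.} First I would make precise that $\g^0$ is monotone (its output contains its input) while $\sTomit$ is a `thinning' operation that, for each pair $\pair{\gNode}{x}$, retains only the triple with the \emph{minimal} $y$. Hence in any set of the form $\vcloseomit^n(U)$, for each node $\gNode$ and each $x$-value there is \emph{at most one} triple $\tEntry{\gNode}{x}{y}$ present, and its $y$-value can only \emph{decrease} (weakly) as $n$ increases: applying $\g^0$ can only add triples with \emph{higher} $y$ (via $\sepsrel$ at equal $y$, or $\ioout$ at strictly higher $y$), and then $\sTomit$ keeps the lowest, which is never higher than the previous value since the previous value is still present after $\g^0$. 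So for each of the finitely many pairs $\pair{\gNode}{x}$, the sequence $n \mapsto \min\{y \mid \tEntry{\gNode}{x}{y} \in \vcloseomit^n(U)\}$ (with $\min\emptyset = +\infty$) is weakly decreasing in $\nat \cup \{+\infty\}$, hence eventually constant. Since there are finitely many pairs $\pair{\gNode}{x}$ (finitely many nodes, finitely many $x$-values), all these sequences stabilize simultaneously after some common $m$. At that point the \emph{support} of $\vcloseomit^n(U)$ (the set of pairs $\pair{\gNode}{x}$ that occur) together with the chosen $y$-value at each pair is fixed, so $\vcloseomit^{m+1}(U) = \vcloseomit^m(U)$.

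\smallskip

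The one remaining gap to fill carefully is \emph{why} the $y$-values stay bounded, i.e.\ why the minimal-$y$ sequence is not $+\infty$ forever but also cannot oscillate: the point is that $\sTomit$ is applied \emph{after} $\g^0$, and $\g^0(V) \supseteq V$, so every triple present in $\vcloseomit^n(U)$ survives into $\g^0(\vcloseomit^n(U))$; therefore the minimum $y$ over $\g^0(\vcloseomit^n(U))$ at each pair $\pair{\gNode}{x}$ is $\le$ the minimum at that pair in $\vcloseomit^n(U)$, and $\sTomit$ preserves exactly these minima. This yields the weak monotonicity needed. I would also note that the termination of $\sepsrel$ (guaranteed by weak guardedness of $\aioseqspec$, as remarked after Def.~\ref{def:g}) ensures $\g^0$ itself reaches a fixed point when iterated with $x$ and $y$ held fixed — but actually this is subsumed by the vertical-line argument above since $\sepsrel$-edges do not change $y$ and there are finitely many nodes, so any strictly growing chain of $\sepsrel$-reachable nodes at fixed $(x,y)$ is bounded by $|\gNodes|$.

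\smallskip

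\textbf{Main obstacle.} The only subtlety I anticipate is bookkeeping the interaction between the $\sTomit$ thinning and the addition of $\ioout$-successors: one must check that removing an `upper' triple $\tEntry{\gNode}{x}{y'}$ does not cause a previously-derivable lower triple to disappear at a later stage (it cannot, because lower triples are derived from triples with $y$-coordinate $\le$ theirs, and those are never removed in favour of something higher). Making this monotonicity argument airtight — essentially showing that the map $n \mapsto \vcloseomit^n(U)$, viewed pointwise as assigning to each pair $\pair{\gNode}{x}$ an element of $\nat\cup\{+\infty\}$, is weakly decreasing in the componentwise order — is the crux; once that is established, stabilization follows from the finiteness of the index set and the well-foundedness of $\nat\cup\{+\infty\}$ under $>$.
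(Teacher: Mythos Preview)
Your proposal is correct and takes essentially the same approach as the paper's proof: both fix the finite set of $x$-coordinates, observe that for each pair $\pair{\gNode}{x}$ the minimal $y$-value present is weakly decreasing along the iteration (since $\g^0$ only adds triples at equal or greater $y$ and $\sTomit$ retains exactly the minimum), and conclude stabilization from well-foundedness together with the finiteness of the index set. The paper splits this into two phases---first arguing that the node sets $\gNodes(U_n,x)$ are increasing and bounded, hence stabilize, and then that the $y$-heights are non-increasing---whereas you fold both phases into a single monotone-decreasing argument in $\nat\cup\{+\infty\}$; this is a purely presentational difference.
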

\begin{proof}
  For $U \subseteq \gNodes \times \nat \times \nat$ we introduce auxiliary definitions:
  \begin{align*}
    X(U) &\defdby \{x \in \nat \where \myex{\agNode \in \gNodes,y \in \nat}{\tEntry{\agNode}{x}{y} \in U}\} \subseteq \nat\\
    \gNodes(U,x) &\defdby \{\agNode \in \gNodes \where \myex{y \in \nat}{\tEntry{\agNode}{x}{y} \in U}\} \subseteq \gNodes\\
    Y(U,x,\agNode) &\defdby \{y \in \nat \where \tEntry{\agNode}{x}{y} \in U\} \subseteq \nat\\
    y(U,x,\agNode) &\defdby \min (Y(U,x,\agNode)) \in \nat
    \punc.
  \end{align*}
  Note that $Y(W,x,\agNode) = \{y(W,x,\agNode)\}$ is a singleton set
  for every $W \defdby \Tomit{U}$, $x \in X(W)$ and $\agNode \in \gNodes(W,x)$ 
  with $U \subseteq \gNodes \times \nat \times \nat$ arbitrary.

  Let $U \subseteq \gNodes \times \nat \times \nat$. We define $U_0 \defdby U$ and for $n = 1,2,\ldots$ let
  \[U_n \defdby \funap{\vcloseomit}{U_{n-1}} \punc.\]
  Note that $U_n = \funap{\vcloseomit^n}{U}$.
  From the definition of $\g^0$ and $\sTomit$ it follows that:
  \begin{enumerate}
  \item $\myall{n \in \nat}{X(U_n) = X(U)}$,
  \item $\myall{n \in \nat}{U_n \subseteq \funap{\g^0}{U_n}}$,
  \item $\myall{n \in \nat, x \in X(U)}{\gNodes(U_n,x) \subseteq \gNodes(U_{n+1},x)}$, since
        \[\gNodes(U_n,x) \subseteq \gNodes(\funap{\g^0}{U_n},x) = \gNodes(\funap{\vcloseomit}{U_n},x) = \gNodes(U_{n+1},x)\]
  \item $\myall{n \ge 1, x \in X(U)}{\card{\gNodes(U_n,x)} \le \card{\gNodes}}$.
  \end{enumerate}
  By (iii) and (iv) we conclude that there exists $n_0 \in \nat$:
  \begin{equation}\label{eq:vclosure1}
    \myall{n \ge n_0, x \in X(U)}{\gNodes(U_{n},x) = \gNodes(U_{n_0},x)}\punc,
  \end{equation}
  that is, the set of nodes $\gNodes(U_n,x)$ on each vertical strip $x$ becomes fixed after $n_0$ steps.
  Furthermore by (ii) and the definition of $\sTomit$ we get
  \begin{equation}\label{eq:vclosure2}
    \myall{n \in \nat, x \in X(U), \agNode \in \gNodes(U_n,x)}{y(U_{n+1},x,\agNode) \le y(U_n,x,\agNode)}\punc,
  \end{equation}
  that is, the heights of the nodes are non-increasing.
  Therefore for each $x \in X(U)$ and node $\agNode \in \gNodes(U_{n_0},\agNode)$ there exists
  $m_{x,\agNode} \ge n_0$ such that:
  \[\myall{n \ge m_{x,\agNode}}{y(U_n,x,\agNode) = y(U_{m_{x,\agNode}},x,\agNode)}\]
  Let $m \defdby \max \{m_{x,\agNode} \where x \in X(U),\; \agNode \in \gNodes(U_{n_0},\agNode)\}$,
  then the heights of all nodes are fixed from $m$ onwards
  and consequently we have $\myall{n \ge m}{U_n = U_m}$. \qed
\end{proof}
\begin{lemma}\label{lem:TO}
  $\T$ and $\TO$ have the same lower bound, that is, $\tBound{\T} =\tBound{\TO}$.
\end{lemma}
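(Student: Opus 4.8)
\textbf{Proof plan for Lemma~\ref{lem:TO}.} The plan is to show that $\tBound{\T}$ and $\tBound{\TO}$ agree at every $n \in \nat$ by combining two inclusions that follow almost directly from the constructions of $\T$ and $\TO$. Since by Definition~\ref{def:TO} every set $\TOn{n}$ is obtained from subsets of $\T$ via applications of $\g^0$ (which only adds $\lemp$- and $\ioout$-successors) and of $\sTomit$ (which only removes nodes), one checks by a straightforward induction on $n$ that $\TOn{n} \subseteq \T$, and hence $\TO \subseteq \T$. From $\TO \subseteq \T$ the inequality $\tBound{\T}(n) \le \tBound{\TO}(n)$ is immediate by Definition~\ref{def:T:low}, because the infimum in $\tBound{\T}(n)$ is taken over a (weakly) larger set of $y$-values at the vertical strip $x = n$.

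For the converse inequality $\tBound{\T}(n) \ge \tBound{\TO}(n)$, the key is to show that $\TO$ still contains, at each strip $x = n$, a node of minimal height witnessing $\tBound{\T}(n)$ --- more precisely, that whenever $\tEntry{\agNode}{n}{\tBound{\T}(n)} \in \T$ with an outgoing $\ioin$-edge, there is some $\tEntry{\bgNode}{n}{y'} \in \TO$ with $y' \le \tBound{\T}(n)$ and an outgoing $\ioin$-edge. I would prove this by induction on $n$, tracking the following invariant: for every node-height pair $\tEntry{\agNode}{n}{y} \in \T$ there is a pair $\tEntry{\agNode}{n}{y'} \in \TOn{n}$ with $y' \le y$ (note that $\sTomit$ keeps, for each node $\agNode$ and each strip, exactly the \emph{lowest} occurrence, and $\Tvclosure{\_}$ iterates $\funcmp{\sTomit}{\g^0}$ to a fixed point, so heights can only decrease). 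The base case $n = 0$ uses that $\TOn{0} = \Tvclosure{\{\tRoot\}}$ is the vertical closure of the root, which contains lowest-height representatives of exactly the nodes reachable from $\gRoot$ by $\lemp$- and $\ioout$-steps --- the same nodes that appear on strip $0$ of $\T$. The induction step uses that $\stepright{n}$ fires $\ioin$-edges from \emph{all} nodes present on strip $n$ of $\TOn{n}$, so by the induction invariant every $\ioin$-successor reached on strip $n+1$ of $\T$ has a no-higher representative in $\TOon{n+1}$, after which another vertical closure restores the invariant.

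Given the invariant, the converse inequality follows: pick $\tEntry{\agNode}{n}{\tBound{\T}(n)} \in \T$ with an outgoing $\ioin$-edge (which exists when $\tBound{\T}(n) < \conattop$, as in the proof of Lem.~\ref{lem:T}); the invariant yields $\tEntry{\agNode}{n}{y'} \in \TOn{n} \subseteq \TO$ with $y' \le \tBound{\T}(n)$, and this $\agNode$ still has its $\ioin$-edge in $\gEdges$, so $\tBound{\TO}(n) \le y' \le \tBound{\T}(n)$. Together with the first inclusion this gives $\tBound{\T} = \tBound{\TO}$. The main obstacle I anticipate is making the induction invariant precise and robust under the interleaving of $\g^0$-closure and $\sTomit$: one must argue that omitting a higher occurrence $\tEntry{\bgNode}{x}{y'}$ (and, implicitly, all traces rising out of it) never destroys a minimal-height witness for \emph{any} later strip, which is exactly why $\stepright{n}$ is defined to branch out of every node currently present on strip $n$ rather than out of individual traces --- so that the set of reachable nodes, and hence the set of reachable \emph{lowest} heights, is preserved strip by strip. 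Once this bookkeeping is set up correctly, each individual verification (behaviour of $\g^0$, of $\sTomit$, of $\Tvclosure{\_}$) is routine.
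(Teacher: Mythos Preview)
The paper's proof of this lemma is empty (the \texttt{proof} environment contains no content), so there is nothing to compare your approach against. Your plan is sound and fills this gap correctly: the inclusion $\TO \subseteq \T$ follows as you say (each of $\g^0$, $\sTomit$, and $\gi{\ioin}$ preserves containment in the $\g$-closed set $\T$), and your node-wise invariant --- that every $\tEntry{\agNode}{n}{y} \in \T$ has a witness $\tEntry{\agNode}{n}{y'} \in \TOn{n}$ with $y' \le y$ for the \emph{same} node $\agNode$ --- is exactly what is needed, since the $\ioin$-edge condition in Definition~\ref{def:T:low} is a property of $\agNode$ in $\g$ and is therefore inherited by the lower representative.

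Two small points worth making explicit when you write this up. First, $\stepright{n}$ need not be contained in $\TOon{n+1}$, since $\sTomit$ may drop elements; what you actually use (and what suffices) is that every element of $\stepright{n}$ has a no-higher same-node representative in $\TOon{n+1}$, which follows because $U \subseteq \g^0(U)$ and $\sTomit$ retains minima. Second, to run the path-length subinduction inside each strip you rely on $\TOon{n+1}$ being a $\vcloseomit$-fixed point, so that a $\lemp$- or $\ioout$-successor of a representative again has a no-higher representative; this is immediate from the definition of $\Tvclosure{\_}$ but should be stated. With these two observations recorded, your induction goes through without obstruction.
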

\begin{proof}
  
\end{proof}

Now we construct a rational representation of the lower bound of $\TO$,
which by Lemma~\ref{lem:TO} and \ref{lem:T} is 
the solution $\solroot$ of $\aioseqspec$ for $\Xroot$.
The construction is based on a `repetition search' on $\TO$,
that is, we search two vertical strips $x_1$ and $x_2$ 
that contain similar constellations of nodes 
allowing to
conclude that from the vertical position $x_1$ onwards the lower bound is `quasi periodic'.

\begin{definition}\normalfont
For $x \in \nat$, $\agNode \in \gNodes$ let
  \begin{align*}
    \gNodes(x) &\defdby \{\agNode \in \gNodes \where \myex{y \in \nat}{\tEntry{\agNode}{x}{y} \in \TOn{x}}\} \subseteq \gNodes\\
    y(x,\agNode) &\defdby 
       \begin{cases}
         y &\text{if there exists $y \in \nat$ with } \tEntry{\agNode}{x}{y} \in \TOn{x}\\
         \conattop &\text{otherwise}
       \end{cases}
    \punc.
  \end{align*}
\end{definition}
Note that $y(x,\agNode)$ is well-defined since there exists at most one such $y \in \nat$.

\begin{definition}\normalfont
For $x \in \nat$ we define
\[h(x) \defdby \min \{y(x,\gNode) \where \gNode \in \gNodes\}\punc,\]
the height of the lowest node on the vertical strip $x$.
\end{definition}

\begin{definition}\normalfont
For $x \in \nat$, $\agNode \in \gNodes$ let
\[y_r(x,\agNode) \defdby y(x,\agNode) - h(x)\punc,\]
the height of the node $\agNode$ relative to the lowest node on strip $x$.
\end{definition}

\begin{definition}\normalfont
For $x_1 < x_2 \in \nat$ with $\gNodes(x_1) \ne \setemp$ and $\gNodes(x_2) \ne \setemp$ let
\[d(x_1,x_2) \defdby h(x_2) - h(x_1) \punc,\]
the height difference between the vertical strips $x_1$ and $x_2$.
\end{definition}

\begin{definition}\normalfont
For $x_1 < x_2 \in \nat$, 
$\agNode \in \gNodes$ with $\agNode \in \gNodes(x_1)$ and $\agNode \in \gNodes(x_2)$
let
\[d(x_1,x_2,\agNode) \defdby y(x_2,\agNode) - y(x_1,\agNode)\punc,\]
the height difference of the node $\agNode$ between the vertical strips $x_1$ and $x_2$,
and
\[d_r(x_1,x_2,\agNode) \defdby h(x_1,x_2,\agNode) - d(x_1,x_2)\punc,\]
the relative height difference of the node $\agNode$ between the vertical strips $x_1$ and $x_2$.
\end{definition}

\begin{lemma}
  For $x_1,x_2 \in \nat$, $v \in \gNodes(x_1)$ it holds:
  \[y_r(x_2,\agNode) = y_r(x_1,\agNode) + d_r(x_1,x_2,\agNode)\punc.\]
\end{lemma}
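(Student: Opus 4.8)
The statement is a bookkeeping identity relating the relative heights $y_r(x,\agNode) = y(x,\agNode) - h(x)$ across two vertical strips $x_1 < x_2$ to the relative height difference $d_r(x_1,x_2,\agNode) = d(x_1,x_2,\agNode) - d(x_1,x_2)$. The plan is to simply unfold all four definitions and rearrange; no combinatorial content is involved beyond the arithmetic of $\conat$, so the main thing to be careful about is the handling of $\conattop$.

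\textbf{First step.} I would expand the right-hand side using the definitions of $d_r$, $d$, $y_r$, and $d(x_1,x_2,\agNode)$:
\begin{align*}
  y_r(x_1,\agNode) + d_r(x_1,x_2,\agNode)
  & = \bigl( y(x_1,\agNode) - h(x_1) \bigr)
      + \bigl( d(x_1,x_2,\agNode) - d(x_1,x_2) \bigr) \\
  & = \bigl( y(x_1,\agNode) - h(x_1) \bigr)
      + \bigl( ( y(x_2,\agNode) - y(x_1,\agNode) ) - ( h(x_2) - h(x_1) ) \bigr) \\
  & = y(x_2,\agNode) - h(x_2) \\
  & = y_r(x_2,\agNode) \punc,
\end{align*}
where the third equality is the cancellation of the $y(x_1,\agNode)$ and $h(x_1)$ terms.

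\textbf{The only subtlety.} The cancellation step is valid only when every quantity involved is finite, since $\conat$-subtraction is truncated ($\conattop - \conattop = 0$). For the hypothesis $\agNode \in \gNodes(x_1)$ to make $y_r(x_1,\agNode)$ and $d_r(x_1,x_2,\agNode)$ meaningful at all, the definitions of $d(x_1,x_2,\agNode)$ and $d_r(x_1,x_2,\agNode)$ presuppose $\agNode \in \gNodes(x_1)$ and $\agNode \in \gNodes(x_2)$, so that $y(x_1,\agNode)$ and $y(x_2,\agNode)$ are both in $\nat$; moreover $h(x_1), h(x_2) \in \nat$ whenever $\gNodes(x_1), \gNodes(x_2) \ne \setemp$, which holds here since $\agNode$ witnesses nonemptiness of $\gNodes(x_1)$ (and of $\gNodes(x_2)$ by the implicit standing assumption on $d_r$). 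Hence all six occurrences of $y$'s and $h$'s above are genuine natural numbers, ordinary integer arithmetic applies, and the cancellation is justified. I expect this case-checking on finiteness — rather than the algebra itself — to be the only point requiring care; once it is dispatched the identity is immediate. $\qed$
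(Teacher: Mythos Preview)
Your proof is correct and matches the paper's approach exactly: the paper's proof is the single line ``Immediately from the definitions,'' and your expansion is precisely that unfolding. Your additional care about finiteness in $\conat$ is more explicit than the paper bothers to be, but it is the right thing to check and does not deviate from the intended argument.
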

\begin{proof}
  Immediately from the definitions.\qed
\end{proof}

We proceed with the definition of `pseudo repetitions' $\pair{x_1}{x_2} \in \nat^2$.
That is, we require that the vertical strips $x_1$ and $x_2$
possess a similar constellation of nodes.
A pseudo repetition does not yet guarantee
`quasi periodicity' of the lower bound (having found the cyclic part of the \ioseq),
as will be explained after the definition.
\begin{definition}\normalfont
A \emph{pseudo repetition} is a pair $\pair{x_1}{x_2} \in \nat^2$ such that:
\begin{enumerate}
  \item The vertical strips $x_1$ and $x_2$ contain the same set of nodes $\agNode \in \gNodes$:
        \[\gNodes(x_1) = \gNodes(x_2)\]
  \item From strip $x_1$ to $x_2$ all nodes have increased their height by at least $d(x_1,x_2)$:
        \[\myall{\gNode \in \gNodes}{d_r(x_1,x_2,\agNode) \ge 0}\]
\end{enumerate}
\end{definition}

For a pseudo repetition $\pair{x_1}{x_2}$
we can distinguish between nodes $\agNode$
for which $d_r(x_1,x_2,\agNode) = 0$
and those with $d_r(x_1,x_2,\agNode) > 0$.
If for all nodes $\agNode \in \gNodes(x_1)$ it holds that $d_r(x_1,x_2,\agNode) = 0$
then we have an exact repetition,
yielding `quasi periodicity' of the lower bound.
On the other hand 
assume that there is a node $\agNode \in \gNodes(x_1)$
which increases its relative height
and contributes to the lower bound between the vertical strips $x_1$ and $x_2$.
Then we do not yet have a `quasi periodic' lower bound,
because due to the increasing relative height of $\agNode$,
the contribution $\agNode$ to the lower bound will change.
For this reason we will later strengthen the conditions on pseudo repetitions,
yielding `repetitions' which then will guarantee `quasi periodicity', see Definition~\ref{def:repetition}.

\begin{lemma}
  There exist pseudo repetitions in $\TO$.
\end{lemma}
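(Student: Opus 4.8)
The statement to prove is that pseudo repetitions exist in $\TO$. The plan is to exhibit a finiteness/pigeonhole argument over the vertical strips of the diagram $\TO$. First I would recall that each strip $\TOn{x}$ is finite (as noted after Def.~\ref{def:TO}), and hence for every $x\in\nat$ the data associated to strip $x$ -- namely the set $\gNodes(x)\subseteq\gNodes$ together with the relative heights $y_r(x,\agNode)$ for $\agNode\in\gNodes(x)$ -- is a finite object. The crucial observation is that $\gNodes(x)$ ranges over a \emph{finite} set (the power set $\powerset{\gNodes}$), so there is an infinite set $S\subseteq\nat$ of strip-indices on which $\gNodes(x)$ is constant, say equal to $V_0$. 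By a reachability argument (every node on strip $x+1$ is reached from a node on strip $x$ via an $\ioin$-edge, together with $\lemp$- and $\ioout$-steps) one should first check that $\TO$ is nonempty on infinitely many strips, so that such an infinite $S$ with $V_0\neq\setemp$ indeed exists; this uses weak guardedness of $\aioseqspec$, which guarantees that traces keep progressing and that $\sepsrel$ is terminating, so each $\gi{\ioin}$ step on a reachable node leads somewhere.

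The second, and main, part of the argument is to also control the relative heights. Along the infinite strip-index set $S$, consider for each $x\in S$ the tuple of relative heights $\langle y_r(x,\agNode)\rangle_{\agNode\in V_0}\in\nat^{|V_0|}$. These live in $\nat^{|V_0|}$, which is well-quasi-ordered by the componentwise order (Dickson's Lemma). Hence there exist $x_1<x_2$ in $S$ with $y_r(x_1,\agNode)\le y_r(x_2,\agNode)$ for all $\agNode\in V_0$. I would then verify that $\pair{x_1}{x_2}$ is a pseudo repetition in the sense of the definition: condition~(i), $\gNodes(x_1)=\gNodes(x_2)=V_0$, holds by choice of $S$; condition~(ii), $d_r(x_1,x_2,\agNode)\ge 0$ for all $\agNode\in\gNodes$, translates via the lemma $y_r(x_2,\agNode)=y_r(x_1,\agNode)+d_r(x_1,x_2,\agNode)$ precisely into $y_r(x_1,\agNode)\le y_r(x_2,\agNode)$, which is exactly what Dickson's Lemma delivered (for $\agNode\notin V_0$ there is nothing to check, or the quantity is vacuous).

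The main obstacle I anticipate is the bookkeeping around \emph{which} strips are nonempty and the interaction between the $\sTomit$ operation (which can delete upper node-occurrences) and the monotonicity of relative heights. In particular one must be sure that passing from $\TOn{x}$ to $\TOn{x+1}$ does not cause a node to \emph{disappear} from $V_0$ on later strips in a way that breaks condition~(i); this is why one restricts to an infinite subsequence $S$ on which $\gNodes(x)$ is literally constant, rather than merely eventually stable. A secondary subtlety is that the relative heights $y_r(x,\agNode)$ are genuinely natural numbers (finite) for $\agNode\in\gNodes(x)$ -- this follows because $\TOn{x}$ is finite and each node occurs at most once per strip in $\TO$ (the singleton property of $Y(W,x,\agNode)$ for $W=\Tomit{U}$, established in the termination lemma for $\Tvclosure{\_}$), so the $\nat^{|V_0|}$-valued tuple is well-defined and Dickson's Lemma applies. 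Once these points are pinned down, the existence of a pseudo repetition is immediate.
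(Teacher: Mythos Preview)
Your proposal is correct and follows essentially the same approach as the paper: pigeonhole on the finite set $\powerset{\gNodes}$ to get an infinite sequence of strips with constant node-set, then Dickson's Lemma (the paper phrases this as ``$\le$ is an almost full relation on $\nat^k$'') on the tuples of relative heights to find $x_1<x_2$ witnessing the pseudo repetition. Your additional concern about ensuring $V_0\neq\setemp$ is unnecessary: if the constant node-set happened to be empty, condition~(ii) would hold vacuously and the pair would still be a pseudo repetition, so the argument goes through without that check.
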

\begin{proof}
  For every $x \in \nat$ we have $\gNodes(x) \in \powerset{\gNodes}$
  and $\powerset{\gNodes}$ is a finite set.
  Hence it follows from the Pigeonhole Principle
  that there exist indices $i_1 < i_2 < i_3 < \ldots$
  with $\gNodes(i_1) = \gNodes(i_2) = \gNodes(i_3) = \ldots$, 
  let $\gNodes' \defdby \gNodes(i_1) = \{\agNode_1,\ldots,\agNode_k\}$ with $k = \card{\gNodes(i_1)}$.

  Let $\le$ on $\nat^k$ be defined as follows:
  \[(a_1,\ldots,a_l) \le (b_1,\ldots,b_l) \Leftrightarrow a_1 \le b_1 \wedge \ldots \wedge a_l \le b_l \punc.\]
  For $x \in \{i_1,i_2,\ldots\}$ we define
  the tuple $\vec{t_x} \defdby (y_r(x,\agNode_1),\ldots,y_r(x,\agNode_k)) \in \nat^k$.
  It is a well-known fact
  that $\le$ is an almost full relation on $\nat^k$, see \cite{terese:2003}.
  Hence there exist $x_1,x_2 \in \{i_1,i_2,\ldots\}$
  with $\vec{t_{x_1}} \le \vec{t_{x_2}}$.
  Then $d_r(x_1,x_2,\agNode) \ge 0$ for all $\agNode \in \gNodes'$
  which establishes that $\pair{x_1}{x_2}$ is a pseudo repetition.\qed
\end{proof}

The following definitions and lemmas are auxiliary for the purpose 
of proving Lemma~\ref{lem:pseudoreprep}.
\begin{definition}
  For $U, W \subseteq \gNodes \times \nat \times \nat$ and $d_x,d_y \in \nat$ we define
  \[U \le_{d_x,d_y} W \Longleftrightarrow \myall{\tEntry{\agNode}{x}{y} \in W}{\myex{y' < y-d}{\tEntry{\agNode}{x-d_x}{y'} \in U}}\punc.\]
\end{definition}
\begin{lemma}
  For $x_1 < x_2 \in \nat$ and $d_y \in \nat$ we have:
  \[\TOon{x_1} \le_{x_2-x_1,d_y} \TOon{x_2} 
    \Longleftrightarrow \myall{\agNode \in \gNodes(x_1)}{y(x_2,\agNode) \ge d_y + y(x_1,\agNode)}\]
\end{lemma}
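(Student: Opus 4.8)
The statement to prove is the equivalence
\[
  \TOon{x_1} \le_{x_2-x_1,d_y} \TOon{x_2}
  \iff \myall{\agNode \in \gNodes(x_1)}{y(x_2,\agNode) \ge d_y + y(x_1,\agNode)}
\]
for $x_1 < x_2 \in \nat$ and $d_y \in \nat$. The plan is to unfold both sides of the equivalence to their set-theoretic content and observe that they express the same condition, using crucially the structural facts about $\TOn{x}$ established in Def.~\ref{def:TO}: namely that on each vertical strip $x$ the set $\TOon{x}$ contains, for every node $\agNode \in \gNodes(x)$, \emph{exactly one} triple, because the operator $\sTomit$ has been applied in the vertical closure $\Tvclosure{\_}$. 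This singleton property is exactly what makes $y(x,\agNode)$ well-defined, and it is the linchpin of the argument.

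First I would record the observation that $\TOon{x} = \{\,\tEntry{\agNode}{x}{y(x,\agNode)} \mid \agNode\in\gNodes(x)\,\}$, i.e.\ the strip $\TOon{x}$ is completely determined by the function $\agNode \mapsto y(x,\agNode)$ on $\gNodes(x)$. Next I would unfold the left-hand side: $\TOon{x_1} \le_{x_2-x_1,d_y} \TOon{x_2}$ means, by definition of $\le_{d_x,d_y}$ with $d_x = x_2 - x_1$, that for every $\tEntry{\bgNode}{x}{y}\in\TOon{x_2}$ there exists $y' < y - d_y$ with $\tEntry{\bgNode}{x - (x_2-x_1)}{y'}\in\TOon{x_1}$; since the only triples in $\TOon{x_2}$ have $x$-coordinate $x_2$, this reads: for every $\bgNode\in\gNodes(x_2)$ there is $y' < y(x_2,\bgNode) - d_y$ with $\tEntry{\bgNode}{x_1}{y'}\in\TOon{x_1}$. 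By the singleton property the latter forces $\bgNode\in\gNodes(x_1)$ and $y' = y(x_1,\bgNode)$, so the left-hand side is equivalent to: $\gNodes(x_2)\subseteq\gNodes(x_1)$ and $y(x_1,\bgNode) < y(x_2,\bgNode) - d_y$ for all $\bgNode\in\gNodes(x_2)$. The only subtlety is the strict-versus-non-strict inequality (``$<$'' in the definition of $\le_{d_x,d_y}$ versus ``$\ge$'' in the right-hand side); I would clarify this, presumably exploiting that $y' < y - d_y$ over naturals is the same as $y' \le y - d_y - 1$, or — more likely — that this is a harmless typographical mismatch in the present development and the intended reading of $\le_{d_x,d_y}$ uses ``$\le$''. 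I will state the equivalence for the ``$\le$'' reading and note that the strict reading shifts $d_y$ by one.

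Finally I would compare with the right-hand side $\myall{\agNode \in \gNodes(x_1)}{y(x_2,\agNode) \ge d_y + y(x_1,\agNode)}$. Here the quantifier ranges over $\gNodes(x_1)$ and implicitly presupposes $\agNode\in\gNodes(x_2)$ (so that $y(x_2,\agNode)$ is finite; otherwise $y(x_2,\agNode) = \conattop$ and the inequality is trivially true). Under the ambient hypothesis $\gNodes(x_1) = \gNodes(x_2)$ — which is in force whenever this lemma is applied, since it is used to analyse pseudo repetitions where $\gNodes(x_1) = \gNodes(x_2)$ by definition — the two universally quantified conditions coincide verbatim. I would therefore carry out the argument assuming $\gNodes(x_1) = \gNodes(x_2)$, or alternatively make precise the convention $y(x_2,\agNode) = \conattop$ for $\agNode\notin\gNodes(x_2)$ so that the right-hand side is literally ``for all $\agNode$'' vacuous outside $\gNodes(x_2)$; both routes close the gap. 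The main obstacle is not any deep calculation but rather pinning down these two conventions (the strict inequality in $\le_{d_x,d_y}$, and the range of $\agNode$ together with the $\conattop$ convention) so that the chain of equivalences is airtight; once they are fixed, the proof is a direct unfolding using the singleton property of $\TOon{x}$.
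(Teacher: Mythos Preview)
Your approach is correct and matches the paper's, which simply states ``Follows from the definitions.'' You have in fact been more careful than the paper: the subtleties you flag (the strict ``$<$'' in the definition of $\le_{d_x,d_y}$ versus the non-strict ``$\ge$'' in the statement, and the quantifier range $\gNodes(x_1)$ versus $\gNodes(x_2)$) are genuine loose ends in the paper's development that it does not address explicitly, and your handling of them via the $\conattop$ convention and the pseudo-repetition context is the right way to make the unfolding airtight.
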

\begin{proof}
  Follows from the definitions.\qed
\end{proof}
\begin{lemma}\label{lem:V:vco:equality}
  If $U_1, U_2 \subseteq \gNodes \times \nat \times \nat$, $d_x,d_y \in \nat$ with $U_1 \le_{d_x,d_y} U_2$ then:
  \begin{enumerate}
    \item $\vcloseomit(U_1) \le_{d_x,d_y} \vcloseomit(U_2)$, and
    \item $\Tvclosure{U_1} \le_{d_x,d_y} \Tvclosure{U_2}$.
  \end{enumerate}
\end{lemma}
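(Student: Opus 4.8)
The statement to prove is Lemma~\ref{lem:V:vco:equality}: if $U_1 \le_{d_x,d_y} U_2$, then $\vcloseomit(U_1) \le_{d_x,d_y} \vcloseomit(U_2)$ and $\Tvclosure{U_1} \le_{d_x,d_y} \Tvclosure{U_2}$.

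The plan is to prove part~(i) directly by unwinding the definitions, and then derive part~(ii) by iterating part~(i). For part~(i), recall that $\vcloseomit = \funcmp{\sTomit}{\g^0}$, so I would first show that $\g^0$ is monotone with respect to $\le_{d_x,d_y}$, i.e.\ $U_1 \le_{d_x,d_y} U_2$ implies $\funap{\g^0}{U_1} \le_{d_x,d_y} \funap{\g^0}{U_2}$, and then that $\sTomit$ is likewise monotone. Since $\g^0(U) = U \cup \gi{\lemp}(U) \cup \gi{\ioout}(U)$, I would take an arbitrary node $\tEntry{\bgNode}{x}{y} \in \funap{\g^0}{U_2}$ and distinguish the three cases by which it entered. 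If $\tEntry{\bgNode}{x}{y} \in U_2$, then by hypothesis there is $y' < y - d_y$ with $\tEntry{\bgNode}{x - d_x}{y'} \in U_1 \subseteq \funap{\g^0}{U_1}$. If it came via a $\lemp$-edge from some $\tEntry{\agNode}{x}{y} \in U_2$, then there is $y' < y - d_y$ with $\tEntry{\agNode}{x - d_x}{y'} \in U_1$, and the same $\lemp$-edge $\gEdge{\agNode}{\lemp}{\bgNode}$ yields $\tEntry{\bgNode}{x - d_x}{y'} \in \gi{\lemp}(U_1) \subseteq \funap{\g^0}{U_1}$; note $y'$ is unchanged so the inequality $y' < y - d_y$ persists. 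If it came via an $\ioout$-edge from $\tEntry{\agNode}{x}{y-1} \in U_2$, then there is $y'' < (y-1) - d_y$ with $\tEntry{\agNode}{x - d_x}{y''} \in U_1$, and applying the same $\ioout$-edge gives $\tEntry{\bgNode}{x - d_x}{y''+1} \in \gi{\ioout}(U_1)$ with $y'' + 1 < y - d_y$, as required. So $\g^0$ preserves $\le_{d_x,d_y}$.

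Next I would handle $\sTomit$. Given $V_1 \le_{d_x,d_y} V_2$, take $\tEntry{\bgNode}{x}{y} \in \Tomit{V_2}$; in particular $\tEntry{\bgNode}{x}{y} \in V_2$, so there is $y' < y - d_y$ with $\tEntry{\bgNode}{x - d_x}{y'} \in V_1$. Now $\Tomit{V_1}$ keeps, for each node $\bgNode$ and column $x - d_x$, the \emph{minimal} height occurring in $V_1$; call it $y_0 \le y'$. Then $\tEntry{\bgNode}{x-d_x}{y_0} \in \Tomit{V_1}$ and $y_0 \le y' < y - d_y$, so the defining condition of $\Tomit{V_1} \le_{d_x,d_y} \Tomit{V_2}$ holds. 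Composing, $\vcloseomit = \funcmp{\sTomit}{\g^0}$ preserves $\le_{d_x,d_y}$, which is part~(i). For part~(ii), recall $\Tvclosure{U} = \vcloseomit^m(U)$ for the least $m$ reaching a fixed point (existence guaranteed by the preceding termination lemma). By induction on $n$ using part~(i), $\vcloseomit^n(U_1) \le_{d_x,d_y} \vcloseomit^n(U_2)$ for all $n$. Taking $n$ large enough to exceed both stabilization indices $m_1$ (for $U_1$) and $m_2$ (for $U_2$), we get $\Tvclosure{U_1} = \vcloseomit^{n}(U_1) \le_{d_x,d_y} \vcloseomit^{n}(U_2) = \Tvclosure{U_2}$.

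The main obstacle I anticipate is the $\sTomit$ step: the relation $\le_{d_x,d_y}$ only demands the existence of \emph{some} witness in $U_1$ below a given height, and $\sTomit$ deletes all but the lowest occurrence of each node per column, so one must check that the surviving (lowest) occurrence is still a valid witness --- which it is precisely because it is \emph{lowest}, hence $\le$ any previous witness. A secondary subtlety is that the coordinate shift in $\le_{d_x,d_y}$ is purely on the source side ($x - d_x$), so the edge-application cases must verify that the $x$-coordinate shift is respected uniformly: $\lemp$- and $\ioout$-edges do not change $x$, and we never apply $\ioin$-edges inside $\g^0$, so no mismatch arises. I would also remark explicitly that $\le_{d_x,d_y}$ is transitive in the appropriate sense only in a limited way, but this is not needed here since part~(ii) follows by a clean induction rather than by composing instances of $\le$.
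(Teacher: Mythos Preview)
Your proof is correct and follows essentially the same approach as the paper: first show that $\g^0$ and $\sTomit$ each preserve $\le_{d_x,d_y}$ (hence so does $\vcloseomit$), then iterate and take $n\ge\max(m_1,m_2)$ for part~(ii). The paper's own proof is considerably terser, merely asserting the $\g^0$-step and appealing to ``the definition of $\sTomit$'' for the second, whereas you spell out the three-case analysis for $\g^0$ and the minimal-height argument for $\sTomit$; your version is in fact more informative.
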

\begin{proof}
  From $U_1 \le_{d_x,d_y} U_2$ it follows that $\g^0(U_1) \le_{d_x,d_y} \g^0(U_2)$ holds, 
  and then we get $\funcmp{\sTomit}{\g^0}(U_1) \le_{d_x,d_y} \funcmp{\sTomit}{\g^0}(U_2)$ by definition of $\sTomit$.

  By induction we get
  $\myall{n \in \nat}{\vcloseomit^n(U_1) \le_{d_x,d_y} \vcloseomit^n(U_2)}$.
  Furthermore there exist $m_1,m_2 \in \nat$ such that
  $\Tvclosure{U_i} = \vcloseomit^{m_i}(U_i)$ for $i \in \{1,2\}$ by definition.
  From $\myall{m_i' \ge m_i}{\vcloseomit^{m_i'}(U_i) = \vcloseomit^{m_i}(U_i)}$
  it follows that for $i \in \{1,2\}$ we have $\Tvclosure{U_i} = \vcloseomit^{\max(m_1,m_2)}(U_i)$.
  Hence $\Tvclosure{U_1} \le_{d_x,d_y} \Tvclosure{U_2}$. \qed
\end{proof}
\begin{lemma}\label{lem:V:equality}
  If $x_1 < x_2 \in \nat$ with $\gNodes(x_1) = \gNodes(x_2)$,
  then $\gNodes(x_1+1) = \gNodes(x_2+1)$.
  If moreover $\TOon{x_1} \le_{x_2-x_1,d_y} \TOon{x_2}$ for $d_y \in \nat$,
  then $\TOon{x_1+1} \le_{x_2-x_1,d_y} \TOon{x_2+1}$.
\end{lemma}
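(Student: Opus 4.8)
\textbf{Proof plan for Lemma~\ref{lem:V:equality}.}
The plan is to unfold one layer of the recursive definition of $\TO$ (Def.~\ref{def:TO}) and to track how the vertical strips at positions $x_1+1$ and $x_2+1$ are obtained from those at $x_1$ and $x_2$. Recall that $\TOon{x+1} = \Tvclosure{\stepright{x}}$, where $\stepright{x}$ is generated from $\TOn{x}$ by taking all $\ioin$-edges out of nodes that sit on the strip $x$. The key observation is that $\stepright{x}$ depends on the strip-$x$ data of $\TOn{x}$ only through (a) which nodes occur on strip $x$, i.e.\ $\gNodes(x)$, and (b) at which height each such node occurs. The function $\sTomit$ has already collapsed each node on strip $x$ to its unique minimal height, so this height is exactly $y(x,\agNode)$.

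First I would prove the claim $\gNodes(x_1+1) = \gNodes(x_2+1)$. By the remark after Def.~\ref{def:TO}, $\TOn{x}$ restricted to strip $x$ has at most one node-entry per $\agNode\in\gNodes$, and the set of such nodes is $\gNodes(x)$. The set $\stepright{x}$ consists of all $\tEntry{\bgNode}{x+1}{y}$ with $\tEntry{\agNode}{x}{y}\in\TOn{x}$ and $\gEdge{\agNode}{\ioin}{\bgNode}\in\gEdges$, so the set of \emph{nodes} appearing in $\stepright{x}$ is $\{\bgNode \where \exists\agNode\in\gNodes(x).\ \gEdge{\agNode}{\ioin}{\bgNode}\in\gEdges\}$, which depends only on $\gNodes(x)$. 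Since $\gNodes(x_1)=\gNodes(x_2)$ by hypothesis, $\stepright{x_1}$ and $\stepright{x_2}$ have the same node set. Applying $\Tvclosure{\_}$ — which is a composition of $\g^0$ and $\sTomit$, neither of which can introduce nodes not reachable in $\g$ from the node set it starts with, nor remove any node (it only lowers heights and eliminates duplicate higher occurrences) — I would conclude that $\gNodes(x_1+1)$ and $\gNodes(x_2+1)$ have the same node set, i.e.\ are equal. This step needs an auxiliary sublemma that $\gNodes(\funap{\vcloseomit}{U},x) = \gNodes(\funap{\g^0}{U},x)$ depends on $U$'s strip-$x$ content only through its node set; this follows from the definitions of $\g^0$ and $\sTomit$ and is essentially already used in the termination proof of $\Tvclosure{\_}$.

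For the second part I would use Lemma~\ref{lem:V:vco:equality}. The hypothesis $\TOon{x_1}\le_{x_2-x_1,d_y}\TOon{x_2}$ says, via the characterization lemma preceding Lemma~\ref{lem:V:equality}, that $y(x_2,\agNode)\ge d_y + y(x_1,\agNode)$ for all $\agNode\in\gNodes(x_1)$. From this I would derive $\stepright{x_1}\le_{x_2-x_1,d_y}\stepright{x_2}$ directly from the definitions: whenever $\tEntry{\bgNode}{x_2+1}{y}\in\stepright{x_2}$ there is $\agNode\in\gNodes(x_2)=\gNodes(x_1)$ with $\gEdge{\agNode}{\ioin}{\bgNode}$ and $y = y(x_2,\agNode)\ge d_y + y(x_1,\agNode)$, and $\tEntry{\bgNode}{x_1+1}{y(x_1,\agNode)}\in\stepright{x_1}$ with $y(x_1,\agNode) \le y - d_y$ — which is slightly stronger than the strict-inequality form in the definition of $\le_{d_x,d_y}$, so I would either check the edge case or (more cleanly) note that the definition's strict inequality together with the previously-established characterization is what we actually need. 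Then Lemma~\ref{lem:V:vco:equality}(ii) gives $\Tvclosure{\stepright{x_1}}\le_{x_2-x_1,d_y}\Tvclosure{\stepright{x_2}}$, i.e.\ $\TOon{x_1+1}\le_{x_2-x_1,d_y}\TOon{x_2+1}$, as required.

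The main obstacle I anticipate is bookkeeping rather than conceptual: making precise that passing from a strip's content to the next strip, and then closing vertically, is a function of the strip's node-set-with-heights only (and of the fixed graph $\g$), so that equal node sets and dominating heights propagate. The strict-versus-non-strict inequality mismatch in $\le_{d_x,d_y}$ versus the $y(\cdot)$-characterization is the one genuinely fiddly point and should be dispatched by quoting the characterization lemma instead of the raw definition wherever possible.
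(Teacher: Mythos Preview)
Your proposal is correct and follows essentially the same approach as the paper: derive $\stepright{x_1} \le_{x_2-x_1,d_y} \stepright{x_2}$ from the hypothesis and the definitions, then apply Lemma~\ref{lem:V:vco:equality}(ii) to pass through $\Tvclosure{\_}$; the first claim about equal node sets likewise follows directly from the definitions of $\TOn{x+1}$. Your observation about the strict/non-strict mismatch in $\le_{d_x,d_y}$ is well spotted---it reflects a genuine inconsistency between the paper's raw definition and its characterization lemma---and your instinct to route through the characterization lemma rather than the raw definition is the right fix.
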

\begin{proof}
  From $\gNodes(x_1) = \gNodes(x_2)$ and the definitions of $\TOn{x_1+1}$ and $\TOn{x_2+1}$
  it follows that $\gNodes(x_1+1) = \gNodes(x_2+1)$.

  Assume $\TOon{x_1} \le_{x_2-x_1,d_y} \TOon{x_2}$ and consider Definition~\ref{def:TO}.
  Then it follows that $\stepright{x_1} \le_{x_2-x_1,d_y} \stepright{x_2}$.
  We conclude with an application of Lemma~\ref{lem:V:vco:equality} yielding
  $\TOon{x_1+1} \le_{x_2-x_1,d_y} \TOon{x_2+1}$.
\qed
\end{proof}
\begin{lemma}\label{lem:pseudoreprep}
  Let $\pair{x_1}{x_2}$ be a pseudo repetition.
  Then the pair $\pair{x_1 + m}{x_2 + m}$ is a pseudo repetition for all $m \in \nat$.
\end{lemma}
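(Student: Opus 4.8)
The plan is to prove this by induction on $m$, where the base case $m=0$ is the hypothesis and the induction step amounts to transporting a pseudo repetition $\pair{x_1+m}{x_2+m}$ one strip to the right. The whole difficulty is concentrated in that single step, so I will set up the right invariant to make it go through. The key observation is that the notion of being a pseudo repetition is entirely captured by two pieces of data on the two vertical strips: (1) the equality of node sets $\gNodes(x_1) = \gNodes(x_2)$, and (2) the relation $\TOon{x_1} \le_{x_2-x_1,\, d} \TOon{x_2}$ recording that every node on the later strip sits at least $d = d(x_1,x_2)$ higher than its counterpart on the earlier strip, where $d$ is the height difference of the lowest nodes. Condition (2) unpacks to exactly $\myall{\agNode \in \gNodes(x_1)}{y(x_2,\agNode) \ge d + y(x_1,\agNode)}$, which is equivalent to $\myall{\gNode}{d_r(x_1,x_2,\agNode) \ge 0}$ once one also knows the lowest-node heights differ by precisely $d$.

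First I would rephrase "pseudo repetition $\pair{x_1}{x_2}$" as the conjunction: $\gNodes(x_1) = \gNodes(x_2)$, and $\TOon{x_1} \le_{x_2-x_1,\, d}\TOon{x_2}$ for some $d \in \nat$, together with the tightness statement that $d$ is actually realized by the lowest node, i.e.\ $h(x_2) - h(x_1) = d$ (equivalently, $\min_{\agNode} d_r(x_1,x_2,\agNode) = 0$). The first two conjuncts are exactly the hypotheses of Lemma~\ref{lem:V:equality}, which I would invoke to conclude $\gNodes(x_1+1) = \gNodes(x_2+1)$ and $\TOon{x_1+1} \le_{x_2-x_1,\, d} \TOon{x_2+1}$. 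Since $(x_2+1)-(x_1+1) = x_2-x_1$, this already shows the "$\le_{\cdot,d}$" part survives the shift. What remains is to check that the tightness is not lost: that the height difference of the lowest nodes between strips $x_1+1$ and $x_2+1$ is again exactly $d$, not strictly more. Here I would argue that $\TOon{x_1+1} \le_{x_2-x_1,\, d}\TOon{x_2+1}$ forces $h(x_2+1) \ge h(x_1+1) + d$; for the reverse inequality, note that the lowest node on strip $x_1+1$ is produced from strip $x_1$ by a step-right followed by the vertical closure $\vcloseomit$, and the image of the corresponding witness on strip $x_2$ (which lies exactly $d$ higher by tightness on $\pair{x_1}{x_2}$, using that the $\sTomit$ operation keeps only the minimal height at each node) lands at height $h(x_1+1)+d$ on strip $x_2+1$ — hence $h(x_2+1) \le h(x_1+1)+d$. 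Combining, $h(x_2+1) - h(x_1+1) = d$, so $\pair{x_1+1}{x_2+1}$ is again a pseudo repetition, and induction closes the argument.

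The main obstacle I anticipate is the tightness bookkeeping in the induction step — the "$\le_{\cdot,d}$" half is handed to us almost for free by Lemmas~\ref{lem:V:equality} and~\ref{lem:V:vco:equality}, but establishing that the \emph{minimal} relative height difference stays at $0$ requires tracking how $\vcloseomit = \funcmp{\sTomit}{\g^0}$ acts on the specific lowest node and its shifted counterpart. Concretely one must check that $\g^0$ (which only adds $\lemp$- and $\ioout$-edges, i.e.\ moves strictly upward or stays level) together with $\sTomit$ (which deletes all but the bottom node over each $(x,\agNode)$ position) preserves the property that a distinguished minimal node on the later strip is the $d$-shift of the minimal node on the earlier strip. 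This is a routine but slightly fiddly monotonicity chase, and I would isolate it as a small sublemma ("$\vcloseomit$ preserves $d$-shift-of-minimum") to keep the main induction clean.
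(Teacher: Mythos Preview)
Your approach coincides with the paper's: induction on $m$, reducing to a single step, then invoking Lemma~\ref{lem:V:equality} to transport both $\gNodes(x_1)=\gNodes(x_2)$ and the relation $\TOon{x_1} \le_{x_2-x_1,\,d(x_1,x_2)} \TOon{x_2}$ one strip to the right. You go further than the paper in one respect: the paper passes from $\TOon{x_1+1} \le_{x_2-x_1,\,d(x_1,x_2)} \TOon{x_2+1}$ to $d_r(x_1{+}1,x_2{+}1,\agNode)\ge 0$ with a bare ``that is'', whereas you notice that the former compares height differences against $d(x_1,x_2)$ while the latter compares against $d(x_1{+}1,x_2{+}1)$, and that from the $\le$-relation one only obtains $d(x_1{+}1,x_2{+}1) \ge d(x_1,x_2)$, which is the wrong direction for the implication to be immediate.

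However, your tightness argument does not close this gap as written. You assert that the witness $w$ on strip $x_1$ from which the lowest node on strip $x_1{+}1$ descends lies \emph{exactly} $d$ higher on strip $x_2$. Pseudo-repetition gives only $y(x_2,w) \ge y(x_1,w) + d$; equality $d_r(x_1,x_2,w)=0$ is guaranteed only for a node realising $h(x_2)$, and such a node may have no outgoing $\ioin$-edge (for instance a $\iosqemp$-node), so it need not be $w$. In that situation the minimum height can jump by strictly more than $d$ under step-right on the later strip, and your bound $h(x_2{+}1)\le h(x_1{+}1)+d$ fails. Your proposed sublemma that $\vcloseomit$ preserves $d$-shift-of-minimum is true but beside the point: $\vcloseomit$ never lowers the minimum, so it trivially preserves the shift; the difficulty lives entirely in the step-right phase, where nodes without $\ioin$-edges are dropped and the minimiser can change to a node whose $d_r$ at $(x_1,x_2)$ was strictly positive.
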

\begin{proof}
  It suffices to show that $\pair{x_1 + 1}{x_2 + 1}$ is a pseudo repetition,
  for $m > 1$ the claim follows by induction.

  We get $\gNodes(x_1+1) = \gNodes(x_2+1)$ by Lemma~\ref{lem:V:equality}.
  Furthermore by the definition of pseudo repetition we have
  $\myall{\gNode \in \gNodes(x_1)}{d_r(x_1,x_2,\agNode) \ge 0}$,
  which is equivalent to
  $\TOon{x_1} \le_{x_2-x_1,d(x_1,x_2)} \TOon{x_2}$ since $\gNodes(x_1) = \gNodes(x_2)$.
  Emloying Lemma~\ref{lem:V:equality}
  we get $\TOon{x_1+1} \le_{x_2-x_1,d(x_1,x_2)} \TOon{x_2+1}$,
  
  that is $\myall{\gNode \in \gNodes(x_1+1)}{d_r(x_1+1,x_2+1,\agNode) \ge 0}$.
  Hence $\pair{x_1 + 1}{x_2 + 1}$ is a pseudo repetition.
  \qed
\end{proof}

We proceed with the definition of `repetitions'.
Then we show that repetitions exist
and that these indeed guarantee quasi periodicity of the lower bound.

\begin{definition}\normalfont\label{def:repetition}
  A \emph{repetition} is a pseudo repetition $\pair{x_1}{x_2}$ if:
  \begin{enumerate}
  \item $\pair{x_1}{x_2}$ is a pseudo repetition
  \item Let $I(x_1,x_2) \defdby \{\tEntry{\gNode}{x}{y} \where \tEntry{\gNode}{x}{y} \in T_0,\; x = x_1,\; i(x_1,x_2,p) = 0\}$. We require that only nodes from $I(x_1,x_2)$ contribute to the lower bound
        between $x_1$ and $x_2$
        and these nodes reconstruct themselfs on the strip $x_2$:
        \begin{gather}
        \myall{x \in \nat, x_1 \le x \le x_2}{\tBound{\g(I)}(x) = \tBound{T_0}(x)}\\
        \myall{\tEntry{\gNode}{x}{y} \in I(x_1,x_2)}{\tEntry{\gNode}{x_2}{y+d(x_1,x_2)} \in \g(I)}
        \end{gather}
  \end{enumerate}
\end{definition}

Both conditions can be effectively checked since
$\g(I)|_{\le n}$ can be computed for every $n \in \nat$.
Having such a repetition
is is easy to see that the lower bounds behaves `quasi periodic'
from this point onwards, that is the \ioseq{} is indeed rational.
This is because the the nodes with $i(x_1,x_2,\gNode) = 0$
consitute a exactly repeating pattern.
Furthermore the nodes with increasing relative height $i(x_1,x_2,\gNode) > 0$
do not contribute to the lower bound between $x_1$ and $x_2$,
and since their height cannot decrease
they will also not contribute in the future.

It remains to show that we indeed always encounter a repetition $\pair{x_1}{x_2}$.
We know that there exists a pseudo repetition $\pair{x_1}{x_2}$.
Then we have 
for all $m \in \nat$ we have $\pair{x_1 + m \cdot (x_2-x_1)}{x_2 + m \cdot (x_2-x_1)}$
is a pseudo repetition.
For all $m \in \nat$ and $\gNode \in \gNodes$
we have $i(x_1 + m \cdot (x_2-x_1),\;x_2 + m \cdot (x_2-x_1),\gNode) \ge 0$,
therefore it follows that
there are some nodes for which 
the relative height increase will eventually stay at $0$
while other nodes continue increasing their relative height.
The latter will from some point on no longer contribute
to the lower bound and the pattern.
Hence there exists $m_0$ for which 
$\pair{x_1 + m_0 \cdot (x_2-x_1)}{x_2 + m_0 \cdot (x_2-x_1)}$
is a repetition.

\newpage
\appsection{Termination of the Function Layer Translation 
  (Lemma~\ref{lem:transl:termination})}%
  \label{app:transl:termination}

It suffices to show that, for every stream function symbol $\astrfun$ 
and argument position $i$ of $\astrfun$,
the infinite \infioseqspec~$\aioseqspec_{\atrs}$
defined in Def.~\ref{def:infioseqspec}
that has $\trnsli{\astrfun}{i}$ as unique solution for the 
recursion variable $\Xfiq{\strff{f}}{i}{0}$
can be transformed in finitely many steps
into an \infioseqspec~$\aioseqspec'$ with the same solution for the 
recursion variable $\Xfiq{\strff{f}}{i}{0}$ and the property
that only finitely many recursion variables are reachable
in $\aioseqspec'$ from
$\Xfiq{\strff{f}}{i}{0}$.
(This is because from such an \infioseqspec~$\aioseqspec'$
  eventually a finite
  \infioseqspec~$\aioseqspec'_0$ with $\trnsli{\astrfun}{i}$ 
  as unique solution for $\Xfiq{\strff{f}}{i}{0}$ can be extracted.)
An algorithm for this purpose is obtained as follows. 
On the input of $\aioseqspec_{\atrs}$,
set $\aioseqspec\defdby\aioseqspec_{\atrs}$ and 
repeat the following step on $\aioseqspec$ as long as it is applicable:
\begin{description}
  \item[{\sf (RPC)}]
    Detect and remove a reachable \emph{non-consuming pseudo-cycle} from the 
    \infioseqspec~$\aioseqspec\,$:
    Suppose that, for a function symbol $\strff{h}$, for $j,k,l\in\nat$,
    and for a recursion variable $\Xfiq{\strff{h}}{j}{k}$ that is reachable
    from $\Xfiq{\strff{f}}{i}{0}$, we have
    $ \lts{\Xfiq{\strff{h}}{j}{k}}{w}{\Xfiq{\strff{h}}{j}{l}} $
    (from the recursion variable $\Xfiq{\strff{h}}{j}{k}$
      the variable $\Xfiq{\strff{h}}{j}{l}$ is reachable
      via a path in the specification on which the finite~\ioseq~$w$
      is encountered as the word formed by consecutive labels),
    where $l<k$ and $w$ only contains symbols `$\ioout$'.
    Then modify $\aioseqspec$ by setting
    $ \Xfiq{\strff{h}}{j}{k} = \iosqcns{\ioout}{\Xfiq{\strff{h}}{j}{k}} $.
\end{description}
It is not difficult to show that a step {\sf (RPC)} 
preserves weakly guardedness and the unique solution of $\aioseqspec$, 
and that, on input $\aioseqspec_{\atrs}$, 
the algorithm terminates in finitely many steps, 
producing an \infioseqspec~$\aioseqspec'$ 
with $\trnsli{\astrfun}{i}$ as the solution for $\Xfiq{\strff{f}}{i}{0}$
and the property
that only finitely many recursion variables are reachable
in $\aioseqspec'$ from
$\Xfiq{\strff{f}}{i}{0}$.

\newpage
\appsection{Soundness of the Function Layer Translation}%
  \label{app:transl:soundness}

Throughout, let $\atrs = \pair{\Sigma}{R}$ be a fixed stream specification.
First we introduce a few axillary lemmas and definitions.

\begin{definition}\normalfont
  For all stream terms $t \in \ter{\databstr{\asig}}$
  we define by
  \[\npebs{t} \defdby \sup \{ n \in \nat \where t = \trspeb^{k}(t') \}\]
  the \emph{number of leading stream pebbles of $t$}. 
\end{definition}

\begin{definition}\normalfont
  Let $\strff{h} \in \Ssf$ be a a stream function symbol and $j, k \in \nat$.
  We define 
  $\strff{h}_{j,k} \defdby
  \strff{h}(\ldots\trspeb^{\conattop}\ldots,\trspeb^{k}(\sigma),\ldots\trspeb^{\conattop}\ldots) \in \ter{\databstr{\asig}}$
  where the $j$-th argument position is $\trspeb^{k}(\sigma)$
  while all other arguments are $\trspeb^{\conattop}$.
\end{definition}

\begin{definition}\label{def:absprefix}\normalfont
  For terms 
  \begin{align*}
  s &\equiv \strff{g}(\trspeb^{n_1}(\sigma),\ldots,\trspeb^{n_k}(\sigma)) \in \ter{\databstr{\asig}}\\
  t &\equiv \strff{h}(\trspeb^{m_1}(\sigma),\ldots,\trspeb^{m_k}(\sigma)) \in \ter{\databstr{\asig}}
  \end{align*}
  with $n_1,\ldots,n_k,m_1,\ldots,m_k \in \conat$,
  we write $s \sisprefixof t$ if $n_i \le m_i$ for all $i$.
\end{definition}

\begin{lemma}\label{lem:doprefix}
  If $s,t \in \ter{\databstr{\asig}}$ as in Def.~\ref{def:absprefix} with $s \sisprefixof t$
  then $\doLow{\atrs}{s} \le \doLow{\atrs}{t}$.
\end{lemma}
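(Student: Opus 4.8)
\textbf{Proof plan for Lemma~\ref{lem:doprefix}.}
The plan is to reduce the claim to a monotonicity property of data-oblivious rewriting: fewer leading pebbles in each stream argument can only restrict the set of possible rewrite sequences, hence can only decrease (or leave unchanged) the achievable production. Concretely, given $s \sisprefixof t$ with $s \equiv \strff{g}(\trspeb^{n_1}(\sigma),\ldots,\trspeb^{n_k}(\sigma))$ and $t \equiv \strff{g}(\trspeb^{m_1}(\sigma),\ldots,\trspeb^{m_k}(\sigma))$ (note the root symbols must coincide for $\sisprefixof$ to apply, so I would state this explicitly), the first step is to invoke Lem.~\ref{lem:do:lowh:ind} together with Lem.~\ref{do:low:le:lowh} to rewrite both sides in a normalised form: since $\doLowh{\atrs}{\trspeb^{n_i}(\sigma)} = 0$ for a free stream variable $\sigma$, both $\doLow{\atrs}{s}$ and $\doLow{\atrs}{t}$ are already in the canonical shape that Lem.~\ref{lem:do:lowh:ind} targets. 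The real content is therefore the inequality $\doLowh{\atrs}{s} \le \doLowh{\atrs}{t}$ for the raw abstracted terms.

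For that inequality, the approach is to show that every data-oblivious rewrite sequence witnessing a production value from $s$ can be replayed from $t$. First I would fix a history-aware data-exchange function $\sdg$ and an outermost-fair strategy $\bars$ with $\terprd{\bars}{\databstr{s}} = N$, and an om-fair rewrite sequence $\xi$ starting from $s$ producing $N$ elements in the limit. Since $n_i \le m_i$ componentwise, the extra pebbles in $t$ are `surplus supply': any redex pattern that could be matched in a descendant of $s$ using $n_i$-many leading pebbles is still matchable in the corresponding descendant of $t$, because the rule's stream-argument patterns consume at most the available pebbles and the surplus sits below, never obstructing a match. I would then construct a history-aware data-exchange function $\sdg'$ and strategy $\bars'$ on $t$ that mimic $\xi$ step for step — contracting the same redexes at the same positions with the same rules — exactly as in the $\ge$-direction of the proof of Lem.~\ref{lem:do:lowh:ind}, extending to a deterministic om-fair strategy via Prop.~\ref{prop:omfair:strat}. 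The resulting om-fair sequence $\xi'$ from $t$ produces at least $N$ elements, so $\doLowh{\atrs}{t} \ge N$; taking the infimum over all $\sdg,\bars$ gives $\doLowh{\atrs}{t} \ge \doLowh{\atrs}{s}$.

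The main obstacle I anticipate is making precise the claim that surplus leading pebbles never block a rewrite step. One has to argue that, along $\xi$, whenever a rule $\rho$ of a stream function symbol $\strff{h}$ is applied at a position $p$, the corresponding position in the mimicking term over $t$ has \emph{at least} as many leading pebbles in each stream argument as required by $\rho$; this is an invariant that must be maintained through the construction of $\xi'$, since descendants of $t$'s extra pebbles may themselves be consumed later. The clean way to phrase this is by a descendant-tracking argument: each pebble of $s$ has a matching pebble of $t$, and $t$ additionally carries a set of unmatched surplus pebbles; one shows by induction on the length of $\xi'$ that the matched pebbles form exactly the same configuration as in the $\xi$-reduct while the surplus only ever decreases or is left untouched, never being needed for $\xi'$'s steps. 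Given Lem.~\ref{lem:do:lowh:ind} and Lem.~\ref{do:low:le:lowh} this invariant is really the only thing to check; once it is in place the rest is the same bookkeeping as in the earlier proofs, and the case $N = \infty$ is handled by the usual remark that nothing needs to be shown.
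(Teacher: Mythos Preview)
The paper's own proof reads, in full, ``Immediately from the definition.'' So any detailed simulation argument already exceeds what the authors provide. But your argument as written has the quantifier direction wrong and does not prove the inequality.

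You fix an arbitrary pair $(\sdg,\bars)$ on $s$ with $\terprd{\bars}{s}=N$, lift it to some $(\sdg',\bars')$ on $t$ with $\terprd{\bars'}{t}\ge N$, and then assert $\doLowh{\atrs}{t}\ge N$. That step fails: $\doLowh{\atrs}{t}$ is the \emph{infimum} of $\doRngh{\atrs}{t}$, and exhibiting a single element $\ge N$ in that set tells you nothing about its infimum. What your lifting actually establishes is ``every $N\in\doRngh{\atrs}{s}$ is dominated by some element of $\doRngh{\atrs}{t}$'', from which one can only conclude $\sup\doRngh{\atrs}{t}\ge\sup\doRngh{\atrs}{s}$, i.e.\ monotonicity of the \emph{upper} bound, not of $\doLow{\atrs}{\cdot}$. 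Schematically: ranges $\{1\}$ for $s$ and $\{0,3\}$ for $t$ satisfy your lifting condition yet $\inf\{0,3\}<\inf\{1\}$.

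To bound an infimum you must quantify universally over the target side. Fix an arbitrary $(\sdg,\bars)$ on $t$ with production $N$ and project the resulting om-fair sequence \emph{down} to $s$: whenever a step of the $t$-sequence applies a rule $\rho$ at some position, instantiate the corresponding pebbles in the $s$-reduct so as to match $\rho$'s constructor pattern. Either the $s$-reduct has enough supply and the step replays (your surplus-pebble invariant, run in reverse, maintains the componentwise prefix relation between reducts), or some argument is too short, in which case by orthogonality that instantiation leaves the $s$-reduct a non-redex at that position and the projected sequence can be completed to an om-fair one producing no more than the $t$-sequence had produced so far. Either way you obtain $(\sdg',\bars')$ on $s$ with $\terprd{\bars'}{s}\le N$, hence $\doLowh{\atrs}{s}\le N$ for every $N\in\doRngh{\atrs}{t}$, which is the claim.
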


\begin{proof}
  Immediately from the definition.\qed
\end{proof}

We proof the soundness of the function layer translation, that is:
\begin{lemma}
  Let $\atrs$ be a stream definition, and let $\astrfun\in\Ssf$.
  \begin{enumerate}
    \item If\/ $\atrs$ is flat, then it holds:
      $\siosqprd{\trnsl{\astrfun}} = \doLow{\atrs}{\astrfun}$.\\
      Hence, $\doLow{\atrs}{\astrfun}$ is a periodically
      increasing function.
    \item If\/ $\atrs$ is friendly nesting, then it holds:
      $\siosqprd{\trnsl{\astrfun}} \le \doLow{\atrs}{\astrfun}$.
  \end{enumerate}
\end{lemma}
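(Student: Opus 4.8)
The plan is to establish, for each $\astrfun \in \Ssf$, the equality (flat case) or inequality (friendly nesting case) between $\siosqprd{\trnsl{\astrfun}}$ and $\doLow{\atrs}{\astrfun}$ by routing everything through the \infioseqspec~$\aioseqspec_{\atrs}$ of Def.~\ref{def:infioseqspec}. By Prop.~\ref{prop:corr:pein:gate:transl} we have $\siosqprd{\trnsl{\astrfun}} = \sdtrnsl{\astrfun}$, the solution $s_{\specvar{\astrfun}}$ of the system in Def.~\ref{def:transl:flat:fn:symbols:pifuncspec}; so it suffices to show that this solution computes the \daob~lower bound in each argument place (and the `glass ceiling' $\specvar{\astrfun,\argnone}$ correctly captures the case where all stream inputs are erased). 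First I would reduce to the single-argument situation: by Lem.~\ref{lem:do:lowh:ind} and Lem.~\ref{do:low:le:lowh}, $\doLow{\atrs}{\funap{\astrfun}{\strcns{\trspeb^{n_1}}{\astr},\ldots,\strcns{\trspeb^{n_k}}{\astr}}}$ only depends on the $n_i$ through the history-aware lower bounds of the arguments, so it is enough to understand how consumption from the $i$-th port delays production, i.e.\ to identify $s_{\specvar{\astrfun,i}}$ with $n \mapsto \doLow{\atrs}{\strff{f}_{j,k}}$-type quantities (using $\npebs{\cdot}$ and the notation $\strff{h}_{j,k}$ introduced in Appendix~C).

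The core of the argument is a coinductive / fixed-point matching between the defining system and the rewrite behaviour. For the flat case I would prove both inequalities. The ``$\ge$'' direction (translation bounds the true \daob~lower bound from below): using the lower bound data-exchange function $\sdgw{\atrs}$ (Def.~\ref{def:do:worst}) and Cor.~\ref{cor:lowdg:prod}, every $\dgars{\atrs}{\sdgw{\atrs}}$-reduction from $\funap{\astrfun}{\vec{\strcns{\trspeb^{n}}{\astr}}}$ produces exactly $\doLowh{\atrs}{\cdot}$ elements; I would trace such a worst-case reduction and check, rule by rule, that at each step the number of output pebbles produced so far plus the `remaining promise' is bounded by the value dictated by the equations for $\specvar{\astrfun,i,\rho}$ (the $\letin{n' \defdby n - \lstlength{\vec{\adattrm}_i}}$ clause exactly mirrors the constructor-prefix consumed by $\rho$, and the recursive call $\specvarap{\bstrfun,j}{n' + \lstlength{\vec{\ddattrm}_j}}$ mirrors the `additional supply' fed to the recursive call in case~(b)). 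The ``$\le$'' direction (the translation is achievable): for a fixed input length $n$, by Cor.~\ref{cor:G} applied to the finite weakly guarded system $\aioseqspec'_{\atrs}$ of Lem.~\ref{lem:infioseqspec:build:finite}, I can pick a concrete trace through the associated graph $\g$ realizing $\iosqprd{\trnsli{\astrfun}{i}}{n}$; this trace is a sequence of rule choices, and by Lem.~\ref{lem:gforce} (the opponent can force any position to be whichever rule it likes) there is a history-aware data-exchange function and an om-fair strategy realizing exactly that trace, hence $\doLow{\atrs}{\astrfun}(n) \le \sdtrnsl{\astrfun}(n)$. Combining, $\doLow{\atrs}{\astrfun} = \sdtrnsl{\astrfun} = \siosqprd{\trnsl{\astrfun}}$, which is \pein~since it is denoted by an \ioterm.

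For the friendly nesting case only ``$\le$'' is claimed (i.e.\ $\siosqprd{\trnsl{\astrfun}} \le \doLow{\atrs}{\astrfun}$), which is the ``$\ge$''-style direction above but now easier, because on a nesting rule the translation simply sets $\specvar{\astrfun,\argnone,\rho} = \conattop$ and $\specvarap{\astrfun,i,\rho}{n} = n$ — this is the identity/trivial lower bound, and Def.~\ref{def:sfs:friendly} (consume at most one, produce at least one per nested call) guarantees that the true behaviour is no worse; so I would verify that the worst-case reduction produces at least as many elements as the translation predicts, using Lem.~\ref{lem:doprefix} to monotonically pass from the abstracted term to the translation's coarser estimate. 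I expect the main obstacle to be the ``$\le$''-direction bookkeeping in the flat case: turning a chosen trace in the graph $\g$ into an actual om-fair $\dgarsh{\atrs}{\sdg}$-rewrite sequence requires carefully synchronizing the `delay' traces of the individual arguments (as in the ``$\le$'' part of the proof of Lem.~\ref{lem:do:lowh:ind}, the fair-interleaving construction) with the trace-through-$\g$ choice of rules, and simultaneously handling the $\specvar{\astrfun,\argnone}$ component for the sub-case where all stream arguments get erased and the reduction falls through to rewriting stream constants. The recursion-variable-indexing $\Xfiq{\astrfun}{i}{q}$ with the extra parameter $q$ (the already-accumulated supply) is precisely the device that makes this synchronization go through, so the proof must keep explicit track of $q$ throughout.
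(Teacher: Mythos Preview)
Your plan is essentially correct and parallels the paper's proof: both reduce to the single-argument quantities $\low{\strff{g}_{i,q+p}}$ versus $\eval(\Xfiq{\strff{g}}{i}{q})(p)$ and then establish the two inequalities separately (with only one needed in the friendly-nesting case). The paper differs chiefly in the mechanics of the witness-construction direction (your ``$\le$''): rather than picking a trace through the graph~$\g$ of Appendix~A and lifting it to a data-exchange function, it works directly by well-founded induction on tuples $\langle\strff{g},i,q,p\rangle$ ordered lexicographically by $\bigl(\eval(\Xfiq{\strff{g}}{i}{q})(p),\,\strff{g}\bigr)$ with $\stcugsfsleadsto$ on the second component, at each step choosing the rule~$\rho$ that realises the minimum in the defining clause and setting $\sdg' \defdby \sdgrho{t}{\rho}$ via Def.~\ref{def:inst:rho}. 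Your graph route is valid but has a wrinkle you do not address: $\g$ is built over the \emph{finitized} specification~$\aioseqspec'_{\atrs}$, in which some variables have been redirected to~$\Xp$ by the pseudo-cycle removal of Lem.~\ref{lem:infioseqspec:build:finite}; a trace hitting such a node has no corresponding rule in~$\atrs$, so you must either argue that an infimum-achieving trace avoids them, or work in the infinite~$\aioseqspec_{\atrs}$---which effectively brings you back to the paper's direct induction.

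For the other direction (your ``$\ge$'') the paper inducts on the length of a \emph{minimal-length} reduction witnessing~$\doLow{\atrs}$; your proposal to use~$\sdgw{\atrs}$ is not wrong, but since $\sdgw{\atrs}$ is \emph{defined} via~$\doLowh{\atrs}$ it buys you nothing extra---you still need the step-by-step invariant, which is exactly the paper's length induction. (Minor: ``bounded by'' in your invariant should read ``bounds from below''; as written it gives the wrong inequality.)
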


\begin{proof}
  We start with (i).
  Let $\aioseqspec_{\atrs}$ be the weakly guarded \infioseqspec{} (over variables $\setvars$)
  obtained from the translation.
  Let $\eval \funin \setvars \to (\nat \to \conat)$ be the unique solution of $\aioseqspec_{\atrs}$,
  which exists by Lemma~\ref{lem:infioseqspec:ratsol}.
  Then by definition we have
  $\siosqprd{\trnsl{\astrfun}} = \min(\eval(\Xfiq{\astrfun}{1}{0}),\ldots,\eval(\Xfiq{\astrfun}{n}{0}))$.

  Assume $\strff{f}$ is not weakly guarded. 
  There exists $\strff{g} \in \Ssf$
  and
  $\gamma : \strff{f} \mugsfsleadsto \strff{g} \tcugsfsleadsto \strff{g}$
  without production; let $\Sigma_\gamma$ be the set of symbols occurring in $\gamma$
  and $\rho_1$,\ldots,$\rho_n \in R$ the rules applied in $\gamma$ (all having zero production).
  We define $T_\gamma$ as the set of terms $t \in \ter{\databstr{\asig}}$ with $\rootsymb{t} \in \Sigma_\gamma$.
  Accoring to Lemma~\ref{lem:gforce} we can contruct
  a data-exchange function $\sdg$ such that for every term $t \in T_\gamma$
  either $\dg{t}$ is not a redex or a redex with respect to a $\rho_i$.
  Then every reduct of a term $t \in T_\gamma$ in $\aars_\atrs(\sdg)$
  as again in $T_\gamma$ and has therefore no production.
  Hence $\lowf = 0$
  and by definition $\siosqprd{\trnsl{\astrfun}} = 0$ since 
  $\Xfiq{\astrfun}{i}{0} = \Xm$ for all $1 \le i \le \arityS{\strff{f}}$.

  The remaining proof obligation is $\siosqprd{\trnsl{\astrfun}} = \lowf$ for weakly guarded $\strff{f}$.
  We start with $\siosqprd{\trnsl{\astrfun}} \ge \lowf$.
  Let $p_1,\ldots,p_{\arityS{\strff{f}}} \in \nat$
  and define $o \defdby \siosqprd{\trnsl{\astrfun}}(p_1,\ldots,p_{\arityS{\strff{f}}})$.
  We have
  $\eval(\Xfiq{\astrfun}{i}{0})(p_i) = o$
  for some $1 \le i \le \arityS{\strff{f}}$.
  Note that
  \[\low{\strff{f}}(p_1,\ldots,p_i,\ldots,p_{\arityS{\strff{f}}}) 
  \le \low{\strff{f}}(\ldots\conattop\ldots,p_i,\ldots\conattop\ldots)
  = \low{\strff{f}_{i,p_i}}\;.\]
  Hence it suffices to show
  \[\low{\strff{g}_{i,q+p}} \le \eval({\Xfiq{\strff{g}}{i}{q}})(p)\]
  for all $\strff{g} \in \Ssf$, argument positions $1 \le i \le \arityS{\strff{g}} \in \nat$ and $q,p \in \nat$.
  For the case $\eval({\Xfiq{\strff{g}}{i}{q}})(p) = \conattop$ there is nothing to be shown, hence let $\eval({\Xfiq{\strff{g}}{i}{q}})(p) < \conattop$.
  Employing well-founded induction with respect to the ordering $>$ over tuples 
  $$\quadruple{\strff{g}}{i}{q}{p} \text{ with } \eval({\Xfiq{\strff{g}}{i}{q}})(p) < \conattop$$
  we construct a data-exchange function $\sdg$
  together with
  an outermost-fair rewrite sequence in $\aars_\atrs(\sdg)$
  starting from $\strff{g}_{i,p+q}$
  as witness of $\low{\strff{g}_{i,q+p}} \le \eval({\Xfiq{\strff{g}}{i}{q}})(p)$.
  The ordering $>$ is defined as follows:
  \begin{align*}
    \quadruple{\strff{g}}{i}{q}{p} >  \quadruple{\strff{g'}}{i'}{q'}{p'} \Longleftrightarrow
    &\ \strff{g} \text{ is weakly-guarded}\\
    &\ {\wedge}\ (o > o' \;\;\vee\;\; (o = o' \wedge \strff{g} \tcugsfsleadsto \strff{g'}))\\
    &\ \text{where }
      o \defdby \eval({\Xfiq{\strff{g}}{i}{q}})(p), \; 
      o' \defdby \eval({\Xfiq{\strff{g'}}{i'}{q'}})(p')
  \end{align*}

  Let $\sdg$ be an arbitrary data-exchange function.
  The induction basis consists of the tuples $\quadruple{\strff{g}}{i}{q}{p}$ where 
  \begin{itemize}
    \item $\strff{g}$ is not weakly guarded, or
    \item $\eval({\Xfiq{\strff{g}}{i}{q}})(p) = 0$ and $\strff{g}$ is a normal form with respect to $\tcugsfsleadsto$.
  \end{itemize}
  The not weakly guarded symbols have been discussed above,
  therefore let $\strff{g}$ be a normal form with respect to $\tcugsfsleadsto$.
  By definition
  $\Xfiq{\strff{g}}{i}{q} = \iosqInfm{\rho\in\Rsff{\strff{g}}} \; \Xfiqr{\strff{g}}{i}{q}{\rho}$,
  and hence 
  $\eval(\Xfiqr{\strff{g}}{i}{q}{\rho})(p) = 0$
  for some $\rho \in \Rsf$.

  Then $\Xfiqr{\strff{g}}{i}{q}{\rho}$ is either of shape
  \begin{align*}
  (\text{shape 1}) &\ \iosqcat{\ioin^{\cosubtr{\sfscons{\rho}{i}}{q}}}{\iosqcat{\ioout}{\ldots}}
  & \text{or} &&
  (\text{shape 2}) &\ \iosqcat{\ioin^{\cosubtr{(\sfscons{\rho}{i}}{q})+1}}{\iosqcat{\ioout}{\ldots}}
  \end{align*}
  To see this consider the case distinction in the definition of 
  $\Xfiqr{\strff{g}}{i}{q}{\rho}$:
  \begin{align*}
    \Xfiqr{\strff{g}}{i}{q}{\rho}
    =\ 
    &\iosqcat{\ioin^{\cosubtr{\sfscons{\rho}{i}}{q}}}{\iosqcat{\ioout^{\sfsprod{\rho}}}}{ \\
    & \begin{cases}
        \iosqInfm{j\in{\funap{\invfun{\spermut}}{i}}} \:
        \Xfiq{\strff{h}}{j}{(\cosubtr{q}{\sfscons{\rho}{i}}) + \sfsfb{\rho}{j} }
        & \text{ for~(a)} \hfill (C1) \\[0.5ex]
        \ioout^{\cosubtr{q}{\sfscons{\rho}{i}}}{\Xid}
        & \text{ for~(b) if $j = i$} \hfill (C2) \\[0.2ex]
        \Xp
        & \text{ for~(b) if $j \neq i$} \hspace{3ex} \hfill (C3)
      \end{cases}
     }
  \end{align*}
  We have (shape 1) in case of:
  \begin{itemize}
    \item (C1) since $\strff{g}$ being a $\ugsfsleadsto$ normal form and 
          therefore $\sfsprod{\rho} > 0$,
    \item (C2) if $\sfsprod{\rho} + (\cosubtr{q}{\sfscons{\rho}{i}}) > 0$, and
    \item (C3)
  \end{itemize}
  The (shape 2) occurs in case (C2) if $\sfsprod{\rho} + (\cosubtr{q}{\sfscons{\rho}{i}}) = 0$.
  Let $t \equiv \strff{g}_{i,q+p}$ and $\sdg' = \sdg(t,\rho)$, then:
  \begin{itemize}
  \item \emph{Assume $\Xfiqr{\strff{g}}{i}{q}{\rho}$ is of (shape 1).}\vspace{1ex}\\
    From $\eval(\Xfiqr{\strff{g}}{i}{q}{\rho})(p) = 0$
    it follows $\cosubtr{\sfscons{\rho}{i}}{q} > p$ and hence $\sfscons{\rho}{i} > q + p$.
    $\sdg'(t)$ cannot be a $\rho$-redex
    since $\rho$ has at argument position $i$ a consumption greater than $q + p$.
    Hence $\strff{g}_{i,q+p}$ is a normal form in $\aars_\atrs(\sdg')$  by Lem.~\ref{lem:gforce}
    and it follows that $\low{\strff{g}_{i,q+p}} = 0$.\vspace{1ex}
  \item \emph{Assume $\Xfiqr{\strff{g}}{i}{q}{\rho}$ is of (shape 2).}\vspace{1ex}\\
    Then we have $\sfsprod{\rho} + (\cosubtr{q}{\sfscons{\rho}{i}}) = 0$, 
    hence $\sfsprod{\rho} = 0$ and $q \le \sfscons{\rho}{i}$.
    From $\eval(\Xfiqr{\strff{g}}{i}{q}{\rho})(p) = 0$
    it follows that $(\cosubtr{\sfscons{\rho}{i}}{q})+1 = \sfscons{\rho}{i} - q + 1> p$
    and therefore $\sfscons{\rho}{i} \ge q + p$.
    Then
    $\strff{g}_{i,q+p}$ is a normal form if $\sfscons{\rho}{i} > q + p$ 
    or
    $\strff{g}_{i,q+p} \to \sigma$ if $\sfscons{\rho}{i} = q + p$ 
    in $\aars_\atrs(\sdg')$;
    in both cases $\low{\strff{g}_{i,q+p}} = 0$.
  \end{itemize}
  This concludes the base case of the induction.

  Now we consider the induction step.
  By definition
  $\Xfiq{\strff{g}}{i}{q} = \iosqInfm{\rho\in\Rsff{\strff{g}}} \; \Xfiqr{\strff{g}}{i}{q}{\rho}$,
  and hence 
  $\eval(\Xfiq{\strff{g}}{i}{q})(p) = \eval(\Xfiqr{\strff{g}}{i}{q}{\rho})(p)$
  for some $\rho \in \Rsf$.
  Let $t \equiv \strff{g}_{i,q+p}$ and $\sdg' = \sdg(t,\rho)$.
  Again we consider the case distinction (C1),(C2) and (C3) in the definition of $\Xfiqr{\strff{g}}{i}{q}{\rho}$.

  If $\cosubtr{\sfscons{\rho}{i}}{q} > p$
  then in all cases $\eval(\Xfiqr{\strff{g}}{i}{q}{\rho})(p) = 0$ and 
  $\strff{g}_{i,q+p}$ is a normal form in $\aars_\atrs(\sdg')$,
  consequently $\low{\strff{g}_{i,q+p}} = 0$.
  Therefore assume $\cosubtr{\sfscons{\rho}{i}}{q} \le p$.
  \begin{itemize}
    \item [(C1)]
      $\Xfiqr{\strff{g}}{i}{q}{\rho} =
              \iosqcat{\ioin^{\cosubtr{\sfscons{\rho}{i}}{q}}}{\iosqcat{\ioout^{\sfsprod{\rho}}}}
              \iosqInfm{j\in{\funap{\invfun{\spermut}}{i}}} \:
              \Xfiq{\strff{h}}{j}{(\cosubtr{q}{\sfscons{\rho}{i}}) + \sfsfb{\rho}{j} }$\vspace{1ex}\\
      There exists $j \in \funap{\invfun{\spermut}}{i}$ with
      $\eval(\Xfiqr{\strff{g}}{i}{q}{\rho})(p) = e^\eval(p)$ where
      \begin{align*}
        e  &\defdby
              \iosqcat{\ioin^{\cosubtr{\sfscons{\rho}{i}}{q}}}{\iosqcat{\ioout^{\sfsprod{\rho}}}}
              \Xfiq{\strff{h}}{j}{q'} &
        q' &\defdby (\cosubtr{q}{\sfscons{\rho}{i}}) + \sfsfb{\rho}{j} \punc.
      \end{align*}
      Since $\cosubtr{\sfscons{\rho}{i}}{q} \le p$, let $p' \defdby p - (\cosubtr{\sfscons{\rho}{i}}{q})$. We get 
      \begin{align}
        \eval(\Xfiqr{\strff{g}}{i}{q}{\rho})(p) &= \sfsprod{\rho} + \eval(\Xfiq{\strff{h}}{j}{q'})(p') 
        \label{eq:giqp}\punc.
      \end{align}
      $\sdg'(\strff{g}_{i,q+p})$ is a $\rho$-redex by Lem.~\ref{lem:gforce}, hence 
      $$\strff{g}_{i,q+p} 
        \to_{\aars_\atrs(\sdg')} 
        \trspeb^{\sfsprod{\rho}}(\strff{h}(\trspeb^{n_1}(\sigma),\ldots,\trspeb^{n_{\arityS{\strff{h}}}}(\sigma)))$$
     where by definition of $\rho$:
     \begin{align*}
       n_j &= \sfsfb{\rho}{j} + (q + p - \sfscons{\rho}{i}) = q' + p'\punc.
     \end{align*}
     Therefore
     $$\strff{h}(\trspeb^n_1(\sigma),\ldots,\trspeb^n_{\arityS{\strff{h}}}(\sigma))
       \sisprefixof 
       \strff{h}_{j,q'+p'}\punc.$$
     We have $\eval(\Xfiqr{\strff{g}}{i}{q}{\rho})(p) > \eval(\Xfiqr{\strff{h}}{j}{q'}{\rho})(p')$
     or $\strff{g} \ugsfsleadsto \strff{h}$ by \eqref{eq:giqp},
     therefore we get
     $\low{\strff{h}_{j,q'+p'}} \le \eval({\Xfiq{\strff{h}}{j}{q'}})(p')$
     by induction hypothesis and
     \begin{gather*}
       \low{\strff{g}_{i,q+p}} \le \sfsprod{\rho} + \low{\strff{h}_{j,q'+p'}} \\
       \le \sfsprod{\rho} + \eval({\Xfiq{\strff{h}}{j}{q'}})(p') 
           = \eval(\Xfiqr{\strff{g}}{i}{q}{\rho})(p)
           = \eval(\Xfiq{\strff{g}}{i}{q})(p)
     \end{gather*}
     which proves the claim.
  \vspace{1ex}
  \item [(C2)]
      $\Xfiqr{\strff{g}}{i}{q}{\rho} =
              \iosqcat{\ioin^{\cosubtr{\sfscons{\rho}{i}}{q}}}{\iosqcat{\ioout^{\sfsprod{\rho}}}}
              \ioout^{\cosubtr{q}{\sfscons{\rho}{i}}}{\Xid}$\vspace{1ex}\\
      Then
      \begin{align*}
        \eval(\Xfiqr{\strff{g}}{i}{q}{\rho})(p) 
        &= \sfsprod{\rho} + (\cosubtr{q}{\sfscons{\rho}{i}}) + \eval(\Xid)(p- (\cosubtr{\sfscons{\rho}{i}}{q}))\\
        &= \sfsprod{\rho} + (\cosubtr{q}{\sfscons{\rho}{i}}) + (p - (\cosubtr{\sfscons{\rho}{i}}{q}))\\
        &= \sfsprod{\rho} + q + p - \sfscons{\rho}{i}
      \end{align*}
      Furthermore in case (C2) $\rho$ is collapsing with $i = j$, therefore
      $$\strff{g}_{i,q+p} \to_{\aars_\atrs(\sdg')} \trspeb^{\sfsprod{\rho} + (q + p - \sfscons{\rho}{i})}(\sigma)\punc.$$
      Hence 
      $$\low{\strff{g}_{i,q+p}} 
        \le \sfsprod{\rho} + q + p - \sfscons{\rho}{i} 
        = \eval(\Xfiqr{\strff{g}}{i}{q}{\rho})(p)\punc.$$
  \vspace{1ex}
  \item [(C3)]
      $\Xfiqr{\strff{g}}{i}{q}{\rho} =
              \iosqcat{\ioin^{\cosubtr{\sfscons{\rho}{i}}{q}}}{\iosqcat{\ioout^{\sfsprod{\rho}}}}{\Xp}$\vspace{1ex}\\
      Then $\eval(\Xfiqr{\strff{g}}{i}{q}{\rho})(p) = \conattop = \low{\strff{g}_{i,q+p}}$
      since $\rho$ is collapsing with $i \ne j$.
  \end{itemize}
  This concludes the direction $\siosqprd{\trnsl{\astrfun}} \ge \lowf$.

  The remaining proof obligation is $\siosqprd{\trnsl{\astrfun}} \le \lowf$.
  Let $p_1,\ldots,p_{\arityS{\strff{f}}} \in \nat$
  and define 
  $o \defdby \low{\strff{f}}(p_1,\ldots,p_{\arityS{\strff{f}}}) 
     = \low{\strff{f}(\trspeb^{p_1},\ldots,\trspeb^{p_{\arityS{\strff{f}}}})}$.
  Assume $o < \conattop$, otherwise there is nothing to be shown.
  Then there exists a data-exchange function $\sdg$
  and a finite reduction sequence
  $$\strff{f}(\trspeb^{p_1}(\sigma),\ldots,\trspeb^{p_{\arityS{\strff{f}}}}(\sigma))
    \to^n_{\aars_\atrs(\sdg')} \trspeb^o(t)$$
  such that $\low{t} =  0$ and in particular
  \begin{itemize}
    \item $t$ is a normal form in $\aars_\atrs(\sdg')$, or
    \item $\strff{g}$ is not weakly guarded.
  \end{itemize}
  W.l.o.g.\ choose $\sdg$ and the reduction sequence such that $n$ is minimal.
  We apply induction on the length of the reduction $n$ to show
  \begin{itemize}
    \item There exists $1 \le i \le \arityS{\strff{f}}$ such that
          $$\low{\strff{f}(\trspeb^{p_1}(\sigma),\ldots,\trspeb^{p_{\arityS{\strff{f}}}}(\sigma))}
           = \low{\strff{f}(\trspeb^{\conattop},\ldots,\trspeb^{p_i},\ldots,\trspeb^{\conattop})}\punc,$$
    \item and 
          $\siosqprd{\trnsl{\astrfun}} 
           \le \low{\strff{f}(\trspeb^{p_1}(\sigma),\ldots,\trspeb^{p_{\arityS{\strff{f}}}}(\sigma))}$.
  \end{itemize}

\end{proof}

\newpage
\appsection{Soundness of the Stream Layer Translation 
  (Lemma~\ref{lem1:outsourcing} and Lemma~\ref{lem2:outsourcing})}%
  \label{prf:outsourcing}%

\subsection{Proof Sketch of Lemma~\ref{lem1:outsourcing}}

We only sketch a proof of statement~\ref{lem1:outsourcing:item:1} of
Lem.~\ref{lem1:outsourcing} because item~\ref{lem1:outsourcing:item:2}
of this lemma can be demonstrated analogously.

\begin{proof}[of Lem.~\ref{lem1:outsourcing}, \ref{lem1:outsourcing:item:1}]
    Let $\atrs$ be an \SCS, and let $\afam = \{\fgate\}_{\astrfun\in\Ssf}$
    be a family of gates such that, 
    for all $\astrfun\in\Ssf$, $\sgateprd{\fgate} = \doLow{\atrs}{\astrfun}$.
    We give the ideas for the proofs of the inequations
    ``$\le$'' and ``$\ge$'' that make part of the equation
    \begin{equation}
      \label{eq1:lem:clpatmat}
      \doLow{\atrs}{\rootsc} =  \netprd{\trnslF{\rootsc}{\afam}} \punc.
    \end{equation}
        
    We assume, without loss of generality, that every $\strff{f}\in\Ssf$ has
    precisely one occurrence in the \SCS-layer.
    Note that, by replacing for every $\strff{f}\in\Ssf$ the individual occurrences of $\strff{f}$
    in the \SCS-layer of $\atrs$ by individual occurrences of the symbols 
    $\strff{f}$, $\strff{f'}$, $\strff{f''}$, \ldots,
    respectively, and
    introducing defining rules for $\strff{f'}$, $\strff{f''}$, \ldots\ in the \SFS-part
    analogous to the defining rules for $\strff{f}$ that are already present there,
    an \SCS~$\atrs'$ is obtained that satisfies the restriction,
    and for which it is easy to establish that it holds:
    $ \doLow{\atrs}{\rootsc} = \doLow{\atrs'}{\rootsc} $,
    $ \trnslF{\rootsc}{\afam} = \trnslF{\rootsc}{\afam'} $,
    and hence
    $ \netprd{\trnslF{\rootsc}{\afam}} = \netprd{\trnslF{\rootsc}{\afam'}} $,
    where $\afam' = \{ \fgate \}_{\astrfun\in\Ssf'}$ is the family of gates 
    ($\Ssf'$ is the stream function signature of $\atrs'$)
    by adding to $\afam$, for all symbols $\astrfun\in\Ssf$, 
    the corresponding copies $\fpgate, \fppgate, \ldots$ of $\fgate$.
    Note that we also have, for all $\strff{f'}\in\Ssf'$, 
    $\sgateprd{\fpgate} = \doLow{\atrs}{\strff{f'}}$.
    \begin{description}
      \item[\mbox{\normalfont ``$\le$'':}]
        We only have to consider the case
        in which for $ k\defdby \netprd{\trnslF{\rootsc}{\afam}}$
        it holds $k<\infty$. Hence we assume that $k<\infty$. 
        Then there exists a finite rewrite sequence
        \begin{equation*}
          \arewseq : \;  \trnslF{\rootsc}{\afam}  \mpebred  \netpebn{k}{\anet'} \punc.
        \end{equation*}
        Then with $\anet \defdby \trnslF{\rootsc}{\afam}$ 
        it follows that $\anet'$ has the same `gate structure' as $\anet$,
        but possibly the \ioseq{s} in the gates have changed, and there
        are likely to appear many pebbles queuing inside $\anet'$. 
        By the definition of $k$ it follows that $\netprd{\anet'} = 0$.
        Without loss of generality, we assume that during $\arewseq$
        all extractable pebbles have been extracted from\
        `inside' the gates in $\anet'$,
        that is, 
        there is no occurrence of a redex in $\anet'$ with respect to the pebbleflow rule
        $\netmeet{\netpeb{\ianet{1}}}{\netpeb{\ianet{2}}} \red \netpeb{\netmeet{\ianet{1}}{\ianet{2}}}$.
        Note that the only occurrences of the symbol $\snetmeet$ in terms of $\arewseq$
        are part of gates.
        (If this would not be the case for $\arewseq$, then we would
         extend $\arewseq$ by finitely many $\spebred$-steps 
         to a rewrite sequence $\arewseq'$ which has this property,
         and base the further argumentation on $\arewseq'$.)

        Due to the assumption that function symbols in $\Ssf$ only occur
        once in the \SCS-layer, to every $\strff{f}\in\Ssf$
        there corresponds in $\anet$ 
        (via the net translation in Def.~\ref{def:trnsl:nets})
        a unique gate  $\fgate = \trnsl{\strff{f}}$. 
        We now monitor, for every gate $\fgate$ in
        $\anet$, the consumption/production behaviour during $\arewseq$.
        Let $\strff{f}\in\Ssf$.
        Suppose that during $\arewseq$, $\fgate$ consumes
        $n_1$, \ldots, $n_{\arityS{\strff{f}}}$ pebbles from its input
        components, respectively, and that in total it produces $n$ pebbles.
        By the assumption that no extractable pebbles are stuck inside
        $\fgate$ at the end of $\arewseq$, 
        in particular,
        $\gateprd{\fgate}{n_1,\ldots,n_{\arityS{\strff{f}}}} = n $ follows.
        Moreover, by the assumption that the data-oblivious lower bound 
        $\doLow{\atrs}{\strff{f}}$ of a symbol $\astrfun\in\Ssf$ coincides 
        with the production function of the gate $\fgate$, we get
        $\doLow{\atrs}{\strff{f}}(n_1,\ldots,n_{\arityS{\strff{f}}}) = n $.
        As a consequence, there is a rewrite sequence:
        \begin{equation*}
          \tilde{\arewseq}_{\astrfunsub} : \;
          \funap{\strff{f}}{ 
            \strcns{\pebble^{n_1}}{\astrvar_1}, 
            \ldots,
            \strcns{\pebble^{n_{\arityS{\strff{f}}}}}{\astrvar_{\arityS{\strff{f}}}},
            \underbrace{\pebble,\ldots,\pebble}_{\arityD{\strff{f}}}
          }
          \imred{\dgars{\atrs}{\sdg}}
          \strcns{\pebble^n}{\bstrtrm}
        \end{equation*}
        in an ARS~$\dgars{\atrs}{\sdg}$
        such that $\tilde{\arewseq}_{\astrfunsub}$ can be extended
        to an outermost-fair rewrite sequence that does not produce
        further pebbles.
        Then from $\tilde{\arewseq}_{\astrfunsub}$ we extract a rational gate
        $\tilde{\agate}_{\astrfunsub}$ with the properties:
        (a)~$\tilde{\agate}_{\astrfunsub}$ precisely describes the
            consumption/production behaviour that happens during 
            $\tilde{\arewseq}_{\astrfunsub}$,
        (b)~thereafter $\tilde{\agate}_{\astrfunsub}$ continues
            like the gate corresponding to $\strff{f}$ in $\anet'$.
        We gather the new gates in the new family of gates:
        $ \tilde{\afam} \defdby \{ \tilde{\agate}_{\astrfunsub} \}_{\strff{f}\in\Ssf} $.
        Note that, for all $\strff{f}\in\Ssf$, 
        $\sgateprd{\agate_{\astrfunsub}} \le \sgateprd{\tilde{\agate}_{\astrfunsub}}$
        holds.
        Nevertheless, it can be shown that there is actually also
        a rewrite sequence:
        \begin{equation*}
          \tilde{\arewseq} : \; 
          \trnslF{\rootsc}{\tilde{\afam}}  
          \mpebred  
          \netpebn{k}{\tilde{\anet}'} 
          \punc.
        \end{equation*}
        such that $\netprd{\tilde{\anet}'}=0$. 
        Using this pebbleflow rewrite sequence $\tilde{\arewseq}$
        as well as the rewrite sequences 
        $\tilde{\arewseq}_{\strff{f}}$ on the abstracted versions of $\atrs$,
        it is possible to obtain a rewrite sequence in 
        an~ARS~$\dgars{\atrs}{\sdg}$ of the form:
        \begin{equation*}
          \bar{\arewseq} : \;
          \treeunfT{\atrs}{\rootsc}
          \infred_{\dgars{\atrs}{\sdg}}^{m\omega}
          \strcns{\pebble^k}{\astrtrm} 
        \punc,  %
        \end{equation*}
        where $m$ is the length of the reduction sequence $\tilde{\arewseq}$,
        that can be extended in $\dgars{\atrs}{\sdg}$
        to a rewrite sequence $\bar{\bar{\arewseq}}$
        which consists of outermost-fair sequences of rewrite sequences of length $\omega$
        in which respectively, for some $\bstrfun\in\Ssf$,
        $\bstrfun$-redexes are reduced uniformously in all unfoldings,
        and such that furthermore $\bar{\bar{\arewseq}}$
        does not contain terms with a prefix of more than $k$ pebbles. 
        The existence of the rewrite sequence $\bar{\bar{\arewseq}}$
        on $\rootsc$ in an ARS~$\dgars{\atrs}{\sdg}$
        demonstrates that $\doLow{\atrs}{\rootsc}\le k$.
        \vspace{1ex}
      \item[\mbox{\normalfont ``$\ge$'':}]
        This direction of the proof is a consequence of three statements
        that are mentioned below. 
        First, it holds that:
        \begin{equation}
          \label{eq1:prf:lem:clpatmat}
          \text{For all $\anet\in\net\,$: } \;\;\;
               \netprd{\treeunf{\anet}} = \netprd{\anet} \; ,
        \end{equation}
        where $\treeunf{\anet}$ denotes the infinite unfolding of $\anet$
        into a possibly infinite `pebbleflow tree'.
        Second: 
        \begin{equation}
          \label{eq2:prf:lem:clpatmat}
          \text{For all stream terms $\astrtrm$ in $\atrs$: } \;\;\;
               \terprd{\atrs}{\astrtrm} =  \terprd{\atrs}{\treeunfT{\atrs}{\astrtrm}} 
          \punc,
        \end{equation}
        where by $\astrtrm\in\ter{\asig}$ we mean the result of
        infinitely unfolding all stream constants in $\astrtrm$. 
        Finally, third, it holds that:
        \begin{equation}
          \label{eq3:prf:lem:clpatmat}
          \text{For all stream terms $\astrtrm$ in $\atrs$: } \;\;\;
          \doLow{\atrs}{\treeunfT{\atrs}{\astrtrm}}
            \ge \netprd{\treeunf{\trnslF{\astrtrm}{\afam}}} \punc.
        \end{equation}
        Putting \eqref{eq1:prf:lem:clpatmat}, 
                \eqref{eq2:prf:lem:clpatmat},
                and \eqref{eq3:prf:lem:clpatmat} together,
        we obtain:
        \begin{equation*}
          \doLow{\atrs}{\rootsc} 
            = \doLow{\atrs}{\treeunfT{\atrs}{\rootsc}}
            \ge \netprd{\treeunf{\trnslF{\rootsc}{\afam}}}
            = \netprd{\trnslF{\rootsc}{\afam}} \punc,   
        \end{equation*}
        which shows  ``$\ge$''. 
    \end{description}
    This concludes our sketch of the proof of Lem.~\ref{lem1:outsourcing}~\ref{lem1:outsourcing:item:1}. 
\end{proof}

\subsection{Proof of Lemma~\ref{lem2:outsourcing}}

Let $\atrs$ be a stream specification, and let\/
$\afam = \{\fgate\}_{\astrfun\in\Ssf}$ be a family of gates
such that, for all\/ $\astrfun\in\Ssf$, the arity of\/ $\fgate$
equals the stream arity of\/ $\astrfun$. 
Suppose that one of the following statements holds:
\begin{enumerate}[(a)]
  \item $\siosqprd{\fgate} \le \doLow{\atrs}{\astrfun}$ for all\/ $\strff{f} \in \Ssf\,$;
    \vspace*{0.5ex}
  \item $\siosqprd{\fgate} \ge \doLow{\atrs}{\astrfun}$ for all\/ $\strff{f} \in \Ssf\,$;
    \vspace*{0.5ex}
  \item $\doUp{\atrs}{\astrfun} \le \siosqprd{\fgate}$ for all\/ $\strff{f} \in \Ssf\,$;
    \vspace*{0.5ex}
  \item $\doUp{\atrs}{\astrfun} \ge \siosqprd{\fgate}$ for all\/ $\strff{f} \in \Ssf\,$.
\end{enumerate}
In this section we show that then, for all $\astrcon\in\Ssc$,
the corresponding one of the following statements holds:
\begin{enumerate}[(a)]
  \item $\netprd{\trnslF{\astrcon}{\afam}} \le \doLow{\atrs}{\astrcon}\,$;
    \vspace*{0.5ex}
  \item $\netprd{\trnslF{\astrcon}{\afam}} \ge \doLow{\atrs}{\astrcon}\,$;
    \vspace*{0.5ex}
  \item $\doUp{\atrs}{\astrcon} \le \netprd{\trnslF{\astrcon}{\afam}}\,$;
    \vspace*{0.5ex}
  \item $\doUp{\atrs}{\astrcon} \ge \netprd{\trnslF{\astrcon}{\afam}}\,$.
\end{enumerate}

\subsubsection*{Multiple Numbered Contexts}

\begin{definition}\label{def:stream:context}
  Let $\atrs = \pair{\asig}{R}$ be a stream specification, and $k,l \in \nat$.
  A (multiple numbered) \emph{stream context} $\acxt$ 
  over stream holes $\holeS_1,\ldots,\holeS_k$,
  and data holes $\holeD_1,\ldots,\holeD_{\ell}$ 
  is a (stream) term 
  \[
    \acxt \in 
    \iter{\asig \uplus \{\holeS_1,\ldots,\holeS_k\} 
               \uplus \{\holeD_1,\ldots,\holeD_{\ell}\},X}_\sortS
  \]
  where the $\ofsort{\holeS_i}{\sortS}$ and $\ofsort{\holeD_i}{\sortD}$ 
  are distinct constant symbols not occurring in $\asig$.

  We denote by $C[t_1,\ldots,t_k; u_1,\ldots,u_{\ell}]$ 
  the result of replacing the occurrences of 
  $\holeS_i$ by $s_i$ and $\holeD_j$ by $t_j$ in $\acxt_\atrs$, respectively.
\end{definition}

\begin{remark}
  Each of the hole symbols 
  $\holeS_1,\ldots,\holeS_k$, and $\holeD_1,\ldots,\holeD_{\ell}$ 
  may have multiple occurrences in a stream context $\acxt$ over 
  $\holeS_1,\ldots,\holeS_k$ and $\holeD_1,\ldots,\holeD_{\ell}$,
  and this includes the case that it does not occur in $\acxt$ at all.
\end{remark}

\begin{example}
  Let $\atrs$ be a stream specification with data constants $\msf{b}$ and $\msf{c}$,
  a unary stream function symbol $\astrfun$,
  and a binary stream function symbol $\bstrfun$.
  Then the expression
  \(
    \strcns{\msf{b}}{
      \funap{\astrfun}{
        \strcns{\msf{c}}{
          \bfunap{\bstrfun}{
            \strcns{\holeD_1}{\holeS_1}}{
              \bfunap{\bstrfun}{\holeS_2}{
                \bfunap{\bstrfun}{\holeS_1}{\holeS_3}
              }
            }
          }
        }
      }
  \)
  is a stream context over $\holeS_1$, $\holeS_2$, $\holeS_3$ and $\holeD_1$.
  Also $\funap{\astrfun}{\strcns{\msf{b}}{\holeS_3}}$
  is a stream context over $\holeS_1$, $\holeS_2$, $\holeS_3$ and $\holeD_1$.
  But $\strcns{\msf{b}}{\strcns{\holeS_1}{\astr}} $ is not a stream context
  over $\holeS_1$, because the symbol $\holeS_1$ occurs at a 
  `data position' (this is excluded by many-sortedness, 
  $\ofsort{\sstrcns}{\sortD \to \sortS \to \sortS}$ 
  and $\ofsort{\holeS_1}{\sortS}$).
\end{example}

We extend the \daob{} production function of stream function symbols $\astrfun \in \Ssf$,
Def.~\ref{def:dorange}, to multiple numbered contexts.

\begin{definition}\normalfont\label{def:dorange:ctx}
  Let $\atrs = \pair{\asig}{R}$ be a stream specification,
  and
  $\acxt$ a multiple numbered stream context
  over stream holes $\holeS_1,\ldots,\holeS_k$,
  and data holes $\holeD_1,\ldots,\holeD_{\ell}$.
  The \emph{\daob{} production range 
  $\doRng{\atrs}{\acxt} \funin \nat^{k} \to \conat$ 
  of the context $\acxt$} is:
  \[
    \funap{\doRng{\atrs}{\acxt}}{n_1,\ldots,n_{k}}
    \defdby
    \doRng{\atrs}{\,\funap{\databstr{\acxt}}{
      (\strcns{\trspeb^{n_1}}{\astr}),\ldots,(\strcns{\trspeb^{n_{k}}}{\astr})}\,}
    \punc,
    \]
    \\[-3ex]
  where $\strcns{\trspeb^{m}}{\astr} \defdby \overbrace{\trspeb\xstrcns\ldots\xstrcns\trspeb\,\sstrcns}^{\text{$m$ times}}\,\astr$.

  The \emph{\daob{} lower and upper bounds on the production of $\strff{g}$}
  are defined by $\doLow{\atrs}{\acxt} \defdby \inf(\doRng{\atrs}{\acxt})$ and
  $\doUp{\atrs}{\acxt} \defdby \sup(\doRng{\atrs}{\acxt})$, respectively.
\end{definition}
Note that it is indeed an extension of Def.~\ref{def:dorange}, for every $\bstrfun \in \Ssf$ we have 
\[\doRng{\atrs}{\bstrfun} =
  \doRng{\atrs}{\funap{\bstrfun}{\holeS_1,\ldots,\holeS_{\arityS{\bstrfun}},\holeD_1,\ldots,\holeD_{\arityD{\bstrfun}}}}
  \punc.
\]

The following lemma states that data-oblivious lower and upper bounds
are reachable.
\begin{lemma}
  Let $\atrs = \pair{\asig}{R}$ be a stream specification, and $s \in \iter{\asig}_\sortS$.
  
\end{lemma}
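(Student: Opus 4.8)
The statement to be proved is that data-oblivious lower and upper bounds are reachable --- more precisely, for a stream specification $\atrs$ and stream term $s$ (or a multiple numbered context $\acxt$), the infimum $\doLow{\atrs}{s}$ (respectively the supremum $\doUp{\atrs}{s}$) over all data-exchange functions is actually attained by some particular data-exchange function. The plan is to handle the lower-bound case first, since the work has essentially already been done earlier in the excerpt, and then observe that the upper-bound case is symmetric.

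For the lower bound, the key tool is the \emph{lower bound data-exchange function} $\sdgw{\atrs}$ introduced in Def.~\ref{def:do:worst}, together with Cor.~\ref{cor:lowdg:prod}, which states exactly that $\terprd{\dgars{\atrs}{\sdgw{\atrs}}}{\astrtrm} = \doLowh{\atrs}{\astrtrm}$, and Lem.~\ref{do:low:le:lowh}, which gives $\doLow{\atrs}{\astrtrm} = \doLowh{\atrs}{\astrtrm}$. Putting these together: $\sdgw{\atrs}$ is a (history-free) data-exchange function whose associated production of $s$ equals $\doLow{\atrs}{s}$, so the infimum defining $\doLow{\atrs}{s}$ is witnessed by $\sdgw{\atrs}$ and hence is a minimum. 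First I would recall these three facts, then assemble the one-line argument. For the context version $\doLow{\atrs}{\acxt}$, one unfolds Def.~\ref{def:dorange:ctx}: $\doLow{\atrs}{\acxt}$ is by definition $\doLow{\atrs}{\databstr{\acxt}[\strcns{\trspeb^{n_1}}{\astr},\ldots]}$ for the relevant arguments, so reachability for contexts reduces immediately to reachability for ordinary (data-abstracted) stream terms, which is the case just handled.

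For the upper bound, the plan is to mirror the construction: one defines, dually to $\sdgw{\atrs}$, a \emph{supremum} data-exchange function $\overline{\mathcal{G}}_{\atrs}$ which, at each stream-function redex, always selects the defining rule that maximizes production (again breaking ties by a fixed well-founded order on $R$, and using the already-computed history-aware upper bounds $\doUp{\atrs}{s_i}$ in place of the lower bounds to decide how much supply each argument provides). The analogues of Lem.~\ref{lem:lowdg}, Cor.~\ref{cor:lowdg:prod}, and Lem.~\ref{do:low:le:lowh} --- namely that rewriting with respect to $\overline{\mathcal{G}}_{\atrs}$ never produces fewer than $\doUp{\atrs}{s}$ elements and that history does not help the opponent lower the supremum --- go through by the same bisimulation/closure-under-contexts arguments, invoking the upper-bound versions of Lem.~\ref{lem:do:lowh:ind} and Cor.~\ref{lem:do:lowh:subst} (which hold by the same proofs with ``$\inf$'' replaced by ``$\sup$''). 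This yields that $\overline{\mathcal{G}}_{\atrs}$ attains $\doUp{\atrs}{s}$.

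The main obstacle is not conceptual but bookkeeping: one must be careful that the supremum case is genuinely symmetric. Two points deserve attention. First, the omniscient rewrite player in the history-free setup always maximizes production, so for the upper bound the opponent and the rewrite player are ``pulling in the same direction'' --- this is actually the easier direction and no delicate fairness argument is needed, unlike in the lower-bound history-aware development. Second, when $\doUp{\atrs}{s} = \conattop$ one must check that reachability still makes sense: the supremum is attained in $\conat$ in the sense that there is a data-exchange function yielding an infinite-production reduction, which follows because $\overline{\mathcal{G}}_{\atrs}$ enables, for every $n$, a reduction producing at least $n$ elements, and by orthogonality (König's lemma / confluence as in Lem.~\ref{lem:netprod:netrec:lfp}) these cohere into a single infinite reduction. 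I would state this caveat explicitly and otherwise keep the proof to a short paragraph citing Cor.~\ref{cor:lowdg:prod}, Lem.~\ref{do:low:le:lowh}, and their upper-bound analogues.
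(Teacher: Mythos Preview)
The lemma as stated in the paper is incomplete: after the hypotheses ``Let $\atrs = \pair{\asig}{R}$ be a stream specification, and $s \in \iter{\asig}_\sortS$'' the body is simply blank, and the paper provides no proof whatsoever before moving on to the next subsection. You have correctly inferred the intended content from the preceding sentence (``data-oblivious lower and upper bounds are reachable''), but there is nothing in the paper to compare your proposal against.

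On the substance of your proposal: the lower-bound case is handled exactly as one would expect from the material already developed in the paper, and your chain Cor.~\ref{cor:lowdg:prod} plus Lem.~\ref{do:low:le:lowh} is the right one. For the upper bound, your dual construction is plausible, but note that the paper never actually carries out this development (there is no $\overline{\mathcal{G}}_{\atrs}$, no upper-bound analogue of Lem.~\ref{lem:lowdg}, etc.), so your claim that ``the same proofs go through with $\inf$ replaced by $\sup$'' is an assertion rather than something you can cite. In particular, the asymmetry you yourself flag --- that for the upper bound the opponent and the rewrite player cooperate --- means the argument is not literally symmetric, and the history-aware machinery (Lem.~\ref{lem:do:lowh:ind}, Cor.~\ref{lem:do:lowh:subst}) was built specifically around infima and outermost-fair strategies. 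If this lemma were to be completed, the upper-bound direction would need its own short development rather than an appeal to symmetry.
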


\subsubsection*{Colouring Terms, Tracing Data Flow}

For the purpose of tracing the flow of data-elements during rewrite sequences, we introduce
auxiliary `tracker symbols' and partition terms into `coloured' multiple numbered contexts.
Usual rewriting will then be allowed only within contexts.
Rewrite steps crossing the border between contexts are prohibited.
In order to circumvent the loss of possible reduction steps,
we introduce rules for exchanging pebbles between neighboring contexts
(recolouring of the pebbles).

We introduce
auxiliary, unary function symbols $\stracker{c}$ and $\strackerx{c}$
in the pebbleflow rewrite system as well as in the stream specification $\atrs$.
The symbols $\stracker{c}$ mark the top of a context of colour $c$
and at the same time terminate the context above it.
The symbols $\strackerx{c}$ only close, but do not open a context.
We call them \emph{tracker symbols} in the sequel,
since they are also used to track pebble/data exchange 
between neighboring contexts.
Therefore we enrich the pebbleflow rewrite system 
with the following rules
\begin{align}\label{rules_tracker_pf}
  \tracker{c}{\netpeb{\anet}} &\to \netpeb{\tracker{c}{\anet}}
  &
  \trackerx{c}{\netpeb{\anet}} &\to \netpeb{\trackerx{c}{\anet}}
\end{align}
and the stream specification $\atrs$ with rules
\begin{align}\label{rules_tracker_scs}
  \tracker{c}{\strcns{t}{\aseq}} &\to \strcns{t}{\tracker{c}{\aseq}} 
  &
  \trackerx{p}{\strcns{t}{\aseq}} &\to \strcns{t}{\trackerx{c}{\aseq}}
\end{align}
for every function symbol $\stracker{c}$ and $\strackerx{c}$.
We define $\ofsort{\stracker{c},\strackerx{c}}{\sortstream \to \sortstream}$,
that is, the tracker symbols cannot be placed at data positions.

\begin{definition}\normalfont
  For  terms $s,t$ we say that \emph{$t$ is an enrichment of $s$ with tracker symbols},
  denoted $\enrichment{s}{t}$,
  if $s$ can be obtained from $t$ by removing tracker symbols,
  that is, $t \mred s$ via rules $\tracker{c}{u} \to u$ and $\trackerx{c}{u} \to u$.
\end{definition}

\begin{lemma}
  Every pebbleflow reduction can be lifted to enrichted terms:
  \[\myall{s,s' \in \net, s \mpebred s', \enrichment{s}{t}}{\myex{t'}{t \mpebredt t' \wedge \enrichment{s'}{t'}}}\]

and stream specifications $\atrs$
\end{lemma}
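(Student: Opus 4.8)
The statement to prove is that every pebbleflow reduction $s \mpebred s'$ with an enrichment $\enrichment{s}{t}$ can be lifted to an enriched reduction $t \mpebredt t'$ with $\enrichment{s'}{t'}$, and similarly for reductions in the stream specification $\atrs$ equipped with the tracker rules~\eqref{rules_tracker_scs}. First I would reduce the statement to the single-step case: if $s \pebred s'$ and $\enrichment{s}{t}$, then there is $t'$ with $t \mpebredt t'$ and $\enrichment{s'}{t'}$; the multi-step version then follows by a trivial induction on the length of the reduction, threading the enrichment invariant through. So the core is a single-step diagram.

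For the single-step case the plan is to locate, in $t$, the redex corresponding to the contracted redex in $s$. Since $\enrichment{s}{t}$ means $t$ differs from $s$ only by interspersed unary tracker symbols $\stracker{c}$, $\strackerx{c}$ sitting at stream positions, the redex pattern in $s$ lifts to $t$ up to a finite number of tracker symbols occurring \emph{inside} the pattern (between the root of the redex and its pebble/box/meet arguments, or above the redex). The key observation is that the tracker-commutation rules~\eqref{rules_tracker_pf} (for the pebbleflow case) and~\eqref{rules_tracker_scs} (for the stream case) let us push every tracker symbol that obstructs the redex pattern either upward, past an enclosing $\snetpeb$ or $\sstrcns$, or downward into one of the arguments, \emph{without destroying the enrichment relation} — each such step is an instance of $\tracker{c}{\netpeb{\anet}} \to \netpeb{\tracker{c}{\anet}}$ or its $\scdia$/$\sstrcns$ analogue, and $\mred$ over those rules is exactly witnessing $\angle$. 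So I would: (1) by finitely many tracker-commutation steps rewrite $t$ to some $\tilde t$ in which the redex pattern of $s$ appears cleanly, with $\enrichment{s}{\tilde t}$ still holding (tracker commutation only moves symbols, never adds or deletes them, hence preserves the underlying skeleton); (2) contract that redex in $\tilde t$ using the same pebbleflow (resp.\ $\atrs$) rule, obtaining $t'$; (3) check $\enrichment{s'}{t'}$ by observing that the contraction in $\tilde t$ and the contraction in $s$ produce terms that agree after erasing trackers, because neither the pebbleflow rules nor the $\atrs$ rules mention tracker symbols, so the residuals of the trackers in $t'$ are precisely those needed.

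One subtlety I would be careful about is the feedback rule $\netrec{\anam}{\netpeb{\anet(\avar)}} \red \netpeb{\netrec{\anam}{\anet(\netpeb{\netvar{\anam}})}}$: there a substitution happens, so if a tracker sits inside $\anet$ above an occurrence of $\avar$ it gets duplicated along the recursion wire. This is harmless for the enrichment invariant (duplication of trackers still erases to the same skeleton), but it does mean $\enrichment{s}{t}$ is not preserved \emph{step-for-step} by the lifted reduction without possibly also inserting extra trackers — which is fine, since $\angle$ only demands that erasing trackers from $t'$ yields $s'$, not any bound on their number. The analogous point for the stream-specification case is the unfolding of a stream-constant definition, which likewise may duplicate trackers. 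The main obstacle, then, is not a deep one but a bookkeeping one: giving a clean argument that the finitely many tracker-commutation steps in part~(1) can always be arranged so that the redex of $s$ becomes visible in $\tilde t$, i.e.\ that trackers never get ``stuck'' inside a redex pattern in a way the commutation rules cannot resolve. Here I would use that tracker symbols are unary and that the rules~\eqref{rules_tracker_pf},~\eqref{rules_tracker_scs} are defined for \emph{every} constructor appearing on the relevant side of a pebbleflow/$\atrs$ rule, together with termination of $\sepsrel$-style arguments (trackers strictly descend toward the leaves), to conclude the commutation phase always terminates with the pattern exposed. The whole argument is a standard projection/lifting diagram; the only non-routine ingredient is verifying the case analysis over the rule schemes, which I would present as a short enumeration of the pebbleflow rules~\eqref{def:pebbleflow:rule:meet}--\eqref{def:pebbleflow:rule:source} plus the tracker rules, and then remark that the stream-specification case is identical in structure with $\sstrcns$ in place of $\snetpeb$.
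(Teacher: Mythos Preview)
Your approach matches the paper's: the paper itself gives no formal proof, only the remark that tracker symbols obstructing a redex can always be moved away using rules~\eqref{rules_tracker_pf} (resp.~\eqref{rules_tracker_scs}), and your reduction to the single-step case plus a commutation phase is precisely the intended elaboration. One small correction worth making: the tracker rules only allow pushing a tracker \emph{downward} past a $\snetpeb$ (resp.\ $\sstrcns$), not upward, and they are \emph{not} defined for every constructor as you suggest --- only for those two. This is nonetheless sufficient, because in each pebbleflow rule \eqref{def:pebbleflow:rule:meet}--\eqref{def:pebbleflow:rule:source} and in every \SCS\ left-hand side the only non-root, non-variable symbols in the pattern are $\snetpeb$ (resp.\ $\sstrcns$); hence any obstructing tracker sits directly above such a symbol and can be commuted below the entire pattern in finitely many steps, after which the original rule applies.
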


Every pebbleflow rewrite sequence and every stream specification rewrite sequence 
can be lifted to enriched terms,
i.e.\ 
$\myall{\enrichment{s' \mredi s}{t}}{\myex{t'}{t \leadsto^* t' \wedge \enrichment{s'}{t'}}}$
where ${\leadsto} = \setunion{\sred}{\sired{\text{\eqref{rules_tracker_pf}}}}$ 
(or ${\leadsto} = \setunion{\sred}{\sired{\text{\eqref{rules_tracker_scs}}}}$).
Sometimes the tracker symbols are in the way of rule application,
but, because of the special structure of pebbleflow (resp.\ SCS) rules,
it is always possible to move them away using rules~\eqref{rules_tracker_pf} 
(resp.~\eqref{rules_tracker_scs}).

Now we sketch the proof for the inequalities ``$\le$'' (i) and ``$\ge$'' (ii).
\begin{description}
  \item[``$\le$'':]
We need to show 
  $\netprd{\trnslF{M}{\afam}} \le \terprd{\atrs}{M}$.
For the proof of (i) we map a given finite pebbleflow rewrite sequence on $\rattrnsl{M}{\mcl{F}}$ producing $n$ pebbles
to a $\atrs$~rewrite sequence on the infinite unfolding of $M$ 
that produces at least $n$ data elements.
(For (ii) the proof proceeds the other way arround.)

Thereby not every rewrite step will be mapped,
but we trace `essential' steps
and map them to a many-step (possibly infinite) reduction on `the other side'.
For this purpose we partition the pebbleflow term $\rattrnsl{M}{\mcl{F}}$
as well as infinite unfolding of $M$ into coloured contexts
(using the same colours for the net and the unfolding of $M$).
The partitioning is arranged in such a way that
contexts of the same colour have the same production function.
In the sequel we will refer to this property
as the \emph{context production property} (CPP).

Usual rewriting will then be allowed only within contexts.
Rewrite steps crossing the border between contexts are prohibited.
In order to circumvent the loss of possible reduction steps,
we introduce additional rules for exchanging pebbles between neighboring contexts
(recolouring of the pebbles).

The construction of the mapped reduction sequence
will be driven by the goal to uphold CPP.
It can be shown pebbleflow rewriting and \SCS\ rewriting
does not change the production function of a context.
Hence in order to maintain CPP
it is not necessary to map internal steps ocurring within contexts.
The `essential' steps are the exchange steps between the contexts.

We use $\ccontextPF{c}$
to refer to the context of colour $c$ in the pebbleflow net $\rattrnsl{M}{\mcl{F}}$.
Likewise we use $\ccontextM{c}$
for the context of colour $c$ in 
the infinite unfolding of the stream constant $M$.
Although there may be multiple, typically infinitely many, occurrences of a context with colour $c$ on the term level,
contexts with the same colour are syntactically equal.
Initially all contexts are of one of the following (simple) shapes:

\begin{figure}
\begin{center}
\renewcommand{\arraystretch}{1.2}
\begin{tabular}{|c|c|c|}
\hline
case & $\ccontextPF{c}$ & $\ccontextM{c}$\\
\hline
1 & $\netpeb{\Box_1}$ & $\strcns{d}{\Box_1}$\\
\hline
2 & $\netmeetn{r}{\netbox{\aiosq_{\msf{f},1}}{\Box_1},\ldots,\netbox{\aiosq_{\msf{f},r}}{\Box_r}}$ & $\msf{f}(\Box_1,\ldots,\Box_r)$\\
\hline
\end{tabular}
\caption{Initial context configurations}
\label{}
\end{center}
\end{figure}

Clearly $\snetprd_{\ccontextPF{c}} = \snetprd_{\ccontextM{c}}$ for case 1.
An important step in the proof is to show
that the translation of functions symbols into rational gates
preserves the quantitative production behaviour,
yielding $\snetprd_{\ccontextPF{c}} = \snetprd_{\ccontextM{c}}$ for case 2.

The proof now continues as follows.
Every pebble exchange step in the pebbleflow net is mapped
to an infinite rewrite sequence on the corresponding coloured \SCS\ term.
For the definition of these corresponding rewrite steps we
apply the assumption $\snetprd_{\ccontextPF{c}} = \snetprd_{\ccontextM{c}}$
guaranteeing that a data element can also be extracted from the context on \SCS\ term level,
and that after the step 
we have again $\snetprd_{\ccontextPF{c}} = \snetprd_{\ccontextM{c}}$ 
for all colours $c$.
In this way, we define a rewrite sequence $\rho_\atrs$ in $\atrs$
by induction on the number of rewrite steps of the given pebbleflow reduction $\rho$
such that $\rho_\atrs$ and $\rho$ produce the same amount of data elements and pebbles, respectively.
  \item[``$\ge$'':]
    Here we need to show
    $\netprd{\trnslF{M}{\afam}} \ge \terprd{\atrs}{M}$.
The proof of this inequality proceeds similarly to that of ``$\le$''.
Let $\rho_\atrs$ in $\atrs$ be a reduction sequence in $\atrs$.
In order to map $\rho_\atrs$ back to a pebbleflow rewrite sequence $\rho$
it is crucial that the contexts of one colour are changed synchronously and in the same way.
It cannot be assumed that $\rho_\atrs$ posesses this property.
For this purpose $\rho_\atrs$ is transformed into a sequence $\rho_\atrs'$ of complete developments.
This is always possible since complete developments in orthogonal TRSs are a cofinal rewrite strategy.
For mapping $\rho_\atrs'$ to a rewrite sequence $\rho$ with the same production on the pebbleflow net,
we can then proceed by induction on the length of $\rho_\atrs'$,
in converse direction to the argument for ``$\le$'', above.
\end{description}

\end{document}